\newcommand\nocell[1]{\multicolumn{#1}{c|}{}}
 \DeclareMathOperator{\faircost}{faircost}
 \DeclareMathOperator{\wcost}{wcost}
\newcommand{\cost}{\mathrm{cost}}
\newcommand{\cd}{\mathcal{D}}
\newcommand{\ci}{\mathcal{I}}
\newcolumntype{P}[1]{>{\centering\arraybackslash}p{#1}}
\theoremstyle{plain}
\newtheorem{theorem}{Theorem}[section]
\newtheorem{lemma}[theorem]{Lemma}
\newtheorem{corollary}[theorem]{Corollary}
\theoremstyle{definition}
\newtheorem{definition}[theorem]{Definition}
\newtheorem{observation}[theorem]{Observation}
\theoremstyle{remark}
\theoremstyle{plain}
\newtheorem{claim}[theorem]{Claim}
\theoremstyle{plain}
\newtheorem{proposition}[theorem]{Proposition}
\newcommand{\pname}{\textsc}
\newcommand{\ProblemFormat}[1]{\pname{#1}}
\newcommand{\ProblemIndex}[1]{\index{problem!\ProblemFormat{#1}}}
\newcommand{\ProblemName}[1]{\ProblemFormat{#1}\ProblemIndex{#1}{}\xspace}
\newcommand{\probFair}{\ProblemName{$(\alpha,\beta)$-Fair Clustering}}
\newcommand{\probMILP}{\ProblemName{Mixed-Integer Linear Programming}}
\newcommand{\probFairMedian}{\ProblemName{$(\alpha,\beta)$-Fair} $k$-{median} }
\newcommand{\probFairAssignment}{\ProblemName{Weighted Fair Assignment}}
\newlength{\RoundedBoxWidth}
\newsavebox{\GrayRoundedBox}
\newenvironment{GrayBox}[1]%
   {\setlength{\RoundedBoxWidth}{.93\textwidth}
    \def\boxheading{#1}
    \begin{lrbox}{\GrayRoundedBox}
       \begin{minipage}{\RoundedBoxWidth}}%
   {   \end{minipage}
    \end{lrbox}
    \begin{center}
    \begin{tikzpicture}%
       \node(Text)[draw=black!20,fill=white,rounded corners,%
             inner sep=2ex,text width=\RoundedBoxWidth]%
             {\usebox{\GrayRoundedBox}};
        \coordinate(x) at (current bounding box.north west);
        \node [draw=white,rectangle,inner sep=3pt,anchor=north west,fill=white] 
        at ($(x)+(6pt,.75em)$) {\boxheading};
    \end{tikzpicture}
    \end{center}}     
\newenvironment{defproblemx}[2][]{\noindent\ignorespaces%
                                \FrameSep=6pt%
                                \parindent=0pt%
                \vspace*{-1.5em}
                \ifthenelse{\isempty{#1}}{%
                  \begin{GrayBox}{\textsc{#2}}%
                }{%
                  \begin{GrayBox}{\textsc{#2} parameterized by~{#1}}%
                }
                \begin{tabular*}{\textwidth}{@{\hspace{.1em}} >{\itshape} p{1.8cm} p{0.8\textwidth} @{}}%
            }{
                \end{tabular*}%
                \end{GrayBox}%
                \ignorespacesafterend
            }  
\author[1]{Sayan Bandyapadhyay}
\author[1]{Fedor V. Fomin}
\author[1]{Kirill Simonov}
\affil[1]{Department of Informatics, University of Bergen, Norway}
\date{}
\begin{document}
\title{On Coresets for Fair Clustering in Metric and Euclidean Spaces and Their Applications} 
\maketitle
\thispagestyle{empty}

\begin{abstract}

Fair clustering is a constrained variant of clustering where the goal is to partition a set of colored points, such that the fraction of points of any color in every cluster is more or less equal to the fraction of points of this color in the dataset. This variant was recently introduced by Chierichetti~{et al.} [NeurIPS, 2017] in a seminal work and became widely popular in the clustering literature. 
In this paper, we propose a new construction of coresets for fair clustering based on  random sampling. The new construction allows us to  obtain the first coreset for fair clustering in general metric spaces. For Euclidean spaces, we obtain the first coreset  whose size does not depend exponentially on the dimension. Our coreset results solve open questions proposed by Schmidt et al. [WAOA, 2019] and Huang et al. [NeurIPS, 2019].

The new coreset construction helps to design several new approximation and streaming algorithms.  In particular, we obtain the first true constant-approximation algorithm for metric fair clustering, whose running time is fixed-parameter tractable (FPT). In the Euclidean case, we derive the first $(1+\epsilon)$-approximation algorithm for fair clustering whose time complexity is near-linear and does not depend exponentially on the dimension of the space.  Besides, our coreset construction scheme is fairly general and gives rise to coresets for a wide range of constrained clustering problems. This leads to improved constant-approximations for these problems in general metrics and near-linear time $(1+\epsilon)$-approximations in the Euclidean metric.
\end{abstract}

\newpage
\clearpage
\pagenumbering{arabic} 

\tableofcontents

\section{Introduction}
\label{sec:introduction}

Given a set of $n$ data points in a metric space and an integer $k$, \textit{clustering} is the task of partitioning the points into $k$ groups or clusters so that the points in each cluster are similar. In this paper, we consider clustering problems with \textit{fairness} constraints. Clustering with fairness constraints or fair clustering was introduced by Chierichetti~{et al.}~\cite{chierichetti2017fair} in a seminal work. The notion  became widely popular within a short period triggering a large body of new work \cite{schmidt2019fair,bera2019fair,bercea2019cost,huang2019coresets,backurs2019scalable,bohm2020fair,chen2019proportionally,ahmadian2019clustering,kleindessner2019fair}.  The idea of fair clustering is to enforce additional (fairness) constraints to remove the inherent bias or discrimination from vanilla (unconstrained) clustering. For example, suppose we have a sensitive feature (e.g, race or gender). We want to find a clustering where the fraction of points from a traditionally underrepresented group in every cluster is more or less equal to the fraction of points from this group in the dataset. Indeed, the work of Chierichetti~{et al.}~\cite{chierichetti2017fair} shows that clustering computed by classical vanilla algorithms can lead to widely varied ratios for a particular group, especially when the number of clusters is large enough. 

There are many settings where machine learning algorithms, trained on datasets of past instances, play a crucial role in decision-making \cite{dischler2018putting,khandani2010consumer,malhotra2003evaluating,perlich2014machine}. These algorithms are sophisticated and time-efficient and produce accurate results most of the time. However, there has been a growing concern that these algorithms are biased or discriminatory towards traditionally underrepresented groups \cite{angwin2019machine,datta2015automated,garb1997race}. One example that stands out and has generated substantial controversy in recent years is concerning the COMPAS risk tool, which is a widely used statistical method
for assigning risk scores in the criminal justice system. Angwin~{et al.} argued that this tool was
biased against African-American defendants \cite{angwin2019machine,larson2016we}. Most of the automated decision-making systems are highly influenced by human players, especially during the training procedure. Importantly, clustering also plays a crucial role in this training part. For example, a widely used technique called feature engineering \cite{jiang2008clustering,glassman2014feature} labels samples with their cluster id to enhance the expressive power of learning methods. Hence, the study of biases and discriminatory practices in the context of clustering is well-motivated. 

Over the past few years, researchers have put a lot of effort into understanding and resolving the issues of biases in machine learning. This research has led towards different notions of fairness \cite{calders2010three,crowson2016assessing,dwork2012fairness}. Kleinberg~{et al.}~\cite{kleinberg2017inherent} formalized three fairness conditions and showed that it is not possible to satisfy them simultaneously, except in very special cases (see also \cite{chouldechova2017fair} for a similar treatment). The notion of fairness studied by  
Chierichetti~{et al.}~\cite{chierichetti2017fair} is based on the concept of \textit{disparate impact}  (DI) \cite{feldman2015certifying}. Roughly, the DI doctrine articulates that the protected attributes should not be explicitly used in decision-making, and the decisions taken should 
not be disproportionately different for members in different protected groups. 

Following the DI doctrine, Chierichetti~{et al.}~\cite{chierichetti2017fair} considered the model where there is a single sensitive or protected attribute called color that can take only two values: red and blue. The coordinates of the points are unprotected; that is, they do not take part in the fairness constraints. For any integer $t\ge 1$,  Chierichetti~{et al.} defined the $(t,k)$-\textit{fair clustering} problem where in each cluster the ratio of the number of red points to the number of blue points must be at most $t$ and at least $1/t$. Thus in their case, the notion of fairness is captured by the balance parameter $t$.

R\"osner and Schmidt~\cite{rosner2018privacy} studied a multicolored version of the above problem, where a clustering is  {fair} if the ratios between points of different colors are the same in every cluster. Subsequently, Bercea~{et al.}~\cite{bercea2019cost} and Bera~{et al.}~\cite{bera2019fair}    independently formulated a model generalizing the problems studied in \cite{chierichetti2017fair} and \cite{rosner2018privacy}. In this model, we are given $\ell$ groups  $P_1,\ldots, P_{\ell}$ of points in a metric space and balance parameters $\alpha_i,\beta_i \in [0,1]$ for each group $1\le i\le \ell$. A clustering is  \emph{fair} if the fraction of points from group $i$ in every cluster is at least $\beta_i$ and at most $\alpha_i$. Additionally, in \cite{bera2019fair}, the groups are allowed to overlap, i.e, a point can belong to multiple protected classes. Note that this assumption is needed to model many applications, e.g, consider clustering of individuals where a subset of the individuals are African-American women. In fact, the experiments in \cite{bera2019fair} show that imposing fairness concerning one sensitive attribute (say gender) might lead to unfairness to another (say race) if not protected. We refer to the fair clustering problem with overlapping groups as $(\alpha,\beta)$-\textit{fair clustering}. We note that this is the most general version of fair clustering considered in the literature, and this is the notion of fairness we adapt in this paper. Both \cite{bercea2019cost} and \cite{bera2019fair}   obtain polynomial time $O(1)$-approximation for this problem that violates the fairness constraints by at most small additive factors. We denote by $\Gamma$ the number of distinct collections of groups to which a point may belong. If all the groups are disjoint, then $\Gamma=\ell$. Note that if a point can belong to at most $\Lambda$ groups, then $\Gamma$ is at most ${\ell}^{\Lambda}$. As noted in \cite{bera2019fair} and \cite{huang2019coresets}, while $\Lambda$ can very well be more than 1, it is usually a constant in most of the applications. Thus, in this case, $\Gamma=\ell^{O(1)}$, which is expected to be much smaller compared to $n$, the total number of points in the union of the groups.  


Several works related to fair clustering have devoted to scalability \cite{huang2019coresets,schmidt2019fair,bohm2020fair,backurs2019scalable}. Along this line, in a beautiful work,  Schmidt~{et al.}~\cite{schmidt2019fair} defined coresets for fair clustering. Note that a coreset for a center-based vanilla clustering problem is roughly a summary of the data that for every set $C$ of $k$ centers approximately (within $(1\pm \epsilon)$ factor) preserves the optimal clustering cost. Coresets have mainly two advantages: (1) they take lesser space compared to the original data, and (2) any clustering algorithm can be applied on a coreset to efficiently retrieve a clustering with guarantee almost the same as the one provided by the algorithm. Over the years, researchers have paid increasing attention to the design of coreset construction algorithms to optimize the coreset size. Indeed, finding improved size coreset continues to be an active research area in the context of vanilla $k$-median and $k$-means clustering. For  general metric spaces, the best-known upper bound on coreset size is 
$O((k\log n)/\epsilon^2)$
 \cite{feldman2011unified} and the lower bound is known to be $\Omega(({k}\log n)/{\epsilon})$ \cite{baker2019coresets}. For the real Euclidean space of dimension $d$, it is possible to construct coresets of size $(k/\epsilon)^{O(1)}$ \cite{feldman2013turning,sohler2018strong}. In particular, the size does not depend on $n$ and $d$. We note that most of these small size coreset constructions are based on random sampling. 

Motivated by the progress on coresets for vanilla clustering, Schmidt~{et al.}~\cite{schmidt2019fair} initiated the study of fair coresets. In the vanilla version of the clustering problems, given the cluster centers, clusters are formed by assigning each point to its nearest center. In contrast, in a constrained version, such an assignment might not lead to a clustering that satisfies the constraints. Hence, for fair clustering, we need a stronger definition of coreset. Suppose we want to cluster $\ell$ (possibly overlapping) groups of points. Consider any $k\times \ell$ matrix $ M $ with non-negative integer entries where the rows correspond to $k$ clusters and columns to the $\ell$ groups of points. For every valid clustering $\ci$, one can construct such a matrix: for the $i$-{th} cluster and the $j$-{th} group, set the number of points from group $j$ in the $i$-{th} cluster to be $M[i][j]$. Note that each column of $M$ defines a group's partition induced by the clustering $\ci$. Such a \textit{constraint matrix}  $ M $ defines a set of cardinality constraints for every pair of a cluster and a group. In this case, we say that the clustering $\ci$ satisfies $M$. Informally, a weighted subset of points is a \textit{fair coreset} if for every set of $k$ centers and \emph{every} constraint matrix $M$, the cost of an optimal clustering satisfying $M$ is approximately preserved by the subset. Schmidt~{et al.}~\cite{schmidt2019fair} and subsequently 
 Huang~{et al.}~\cite{huang2019coresets} designed deterministic algorithms in $\mathbb{R}^d$ that construct fair coresets whose sizes exponentially depend on $d$. To remove this exponential dependency on $d$, Schmidt~{et al.}~\cite{schmidt2019fair} proposed an interesting open question whether it is possible to use random sampling for construction of fair coresets. Huang~{et al.}~\cite{huang2019coresets} also suggested the same open question.  Besides,  Huang~{et al.} asked whether it is possible to achieve a similar size bound as in the vanilla setting.  

\subsection{Our Results and Contributions}
We study fair clustering 
under the $k$-median and $k$-means objectives. 
 Our first main result is the following theorem. 

\begin{restatable}[\textbf{Informal}]{theorem}{coreset}
\label{th:coresetthm}
 There is an $O(n ( k+\ell))$ time randomized algorithm that w.p. at least $1-1/n$, computes a coreset of size $O(\Gamma (k\log n)^2/{\epsilon}^3)$ for $(\alpha,\beta)$-fair $k$-median and $O(\Gamma (k\log n)^7/{\epsilon}^5)$ for $(\alpha,\beta)$-fair $k$-means, where $\Gamma$ is the number of distinct collections of groups to which a point may belong. If the groups are disjoint, the algorithm runs in $O(nk)$ time. Moreover, in $\mathbb{R}^d$, the coreset sizes are $O\left(\frac{\Gamma}{\epsilon^3}\cdot k^2\log n(\log n+d \log (1/\epsilon))\right)$ for $(\alpha,\beta)$-fair $k$-median and $O\left(\frac{\Gamma}{\epsilon^5}\cdot k^7 (\log n)^6 (\log n+d \log (1/\epsilon))\right)$ for $(\alpha,\beta)$-fair $k$-means.  
\end{restatable}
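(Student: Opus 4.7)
My plan is to construct the coreset by a stratified Feldman--Langberg style importance sampling. First, compute a cheap bi-criteria $O(1)$-approximation $A$ to vanilla $k$-median (respectively $k$-means) on the union of the groups in $O(n(k+\ell))$ time---or $O(nk)$ time in the disjoint-groups case, where no preprocessing of overlapping memberships is needed---and use $A$ to derive a sensitivity score $\sigma(p)$ for every input point. Second, partition the input into the $\Gamma$ \emph{color classes} (equivalence classes of points that belong to the same collection of groups) and sample inside each class independently with probabilities proportional to $\sigma$, reweighting the survivors so that their weighted cost is an unbiased estimator of the class's total cost to any center set. This immediately accounts for the runtime claims in the theorem.

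\textbf{Reduction step.} The structural observation justifying stratification is that for any fixed center set $S=\{c_1,\dots,c_k\}$ and constraint matrix $M$, the optimal fair clustering cost satisfying $M$ decomposes into $\Gamma$ independent size-constrained (transportation-type) assignment subproblems, one per color class, because $M$ only restricts \emph{how many} points from each color class go to each cluster, never \emph{which} ones. Preserving fair cost for all $(S,M)$ therefore reduces, within each class, to preserving the cost of every feasible size-constrained assignment to $S$---a chromatic strengthening of what vanilla coresets must do. A Bernstein-type concentration argument inside each class, with sample sizes $\widetilde{O}(k^2/\epsilon^3)$ for $k$-median and $\widetilde{O}(k^7/\epsilon^5)$ for $k$-means, then guarantees a $(1\pm\epsilon)$-approximation to the fair cost of any fixed $(S,M)$ with probability $1-n^{-\Omega(k)}$; the inflated $k$-means exponents reflect the fact that the Lipschitz constant of squared distance scales with the current cost, so the sensitivities are much larger.

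\textbf{Main obstacle and Euclidean specialization.} The hardest step will be the union bound over all pairs $(S,M)$. For $S$ in a general metric, restricting to $S\subseteq P$ yields $n^k$ candidate tuples; in $\mathbb{R}^d$ a standard $\epsilon$-net around the approximate centers of $A$ contributes a $(1/\epsilon)^{O(kd)}$ factor instead, which is precisely what produces the additive $d\log(1/\epsilon)$ term in the Euclidean sizes. The genuinely hard part is the matrix $M$: the naive count $n^{k\Gamma}$ would destroy any sample complexity bound. I would therefore discretize by rounding each entry $M[i][j]$ up to the nearest power of $(1+\epsilon)$ and argue---this is the delicate part---that the optimal transportation cost is $(1\pm\epsilon)$-stable under such rounding (so that a small repair pass restoring feasibility changes the cost only marginally), collapsing the effective count of $M$'s to $\bigl((\log n)/\epsilon\bigr)^{O(k\Gamma)}$. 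Taking logarithms in the union bound produces exactly the $\Gamma(k\log n)^2/\epsilon^3$ and $\Gamma(k\log n)^7/\epsilon^5$ size expressions claimed for the metric case, and swapping the $n^k$ center count for the $\epsilon$-net count produces the stated Euclidean refinements.
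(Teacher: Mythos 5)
Your overall architecture (bicriteria solution, stratification by the $\Gamma$ equivalence classes, per-class decomposition of the constrained cost, union bound over $n^k$ center sets in the metric case and over a discretized center set in $\mathbb{R}^d$) matches the paper's, but the step you flag as delicate --- taming the union bound over constraint matrices $M$ by rounding entries to powers of $(1+\epsilon)$ --- is where the proposal breaks. The claimed $(1\pm\epsilon)$-stability of the size-constrained transportation cost under rounding of the $M[i][j]$ is not true in general: perturbing a cluster-size constraint by $\epsilon M[i][j]$ forces that many points to be reassigned to \emph{other} centers, and the repair cost is governed by inter-center distances, not by the current assignment cost, so it cannot be charged to an $\epsilon$ fraction of the optimum. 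Moreover, even granting stability, the arithmetic does not close: with per-event failure probability $n^{-\Omega(k)}$ (which is all the sample size $\widetilde{O}(k/\epsilon^3)$ per stratum buys) and $((\log n)/\epsilon)^{O(k\Gamma)}$ surviving matrices, the union bound fails whenever $\Gamma$ is large, and $\Gamma$ can be polynomial in $n$.

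The fix is already implicit in your own reduction step, and it is what the paper does: because the cost under a $k\times\Gamma$ constraint matrix decomposes across classes, one never needs to union bound over full matrices. Instead, one runs a hybrid argument --- replace the points of one class at a time by its sample, keeping the other classes intact --- and shows that the error introduced at step $t$ depends only on the $t$-th column of $M$, i.e., on at most $n^{k}$ column vectors. The per-column failure probability $n^{-\Omega(k)}$ then suffices, and summing the per-class, per-ring errors (charged against the bicriteria cost) gives the stated sizes. Separately, your ``Bernstein-type concentration'' is doing more work than you acknowledge: the quantity to be concentrated is the \emph{optimal value of a capacitated min-cost flow}, not a sum of independent terms, so one needs a concentration inequality for Lipschitz functions of the sampling indicators together with a nontrivial flow-modification argument to compare $\mathbb{E}[f(X)]$ with $f(\mathbb{E}[X])$; this is the technical core of the analysis and cannot be replaced by an off-the-shelf Bernstein bound on per-point costs.
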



%
Theorem~\ref{th:coresetthm} provides  the first coreset construction for fair clustering problem in general metric spaces. Our result is comparable to the best-known bound of $O_{\epsilon}(k\log n)$ \cite{feldman2011unified} in the vanilla case. In particular, if the number of groups in our case is just 1, we obtain coresets of size $O_{\epsilon}(\text{poly}(k\log n))$, which matches with the best-known bound in the vanilla case, up to a small degree polynomial factor. 
We note, that this is the first sampling based coreset construction scheme for fair clustering, and in $\mathbb{R}^d$, the first coreset construction scheme where the size of the coreset does not depend exponentially on the dimension $d$. In fact, the dependency on $d$ is only linear. Additionally, for $k$-means objective this dependency can be avoided (replaced by $k/\epsilon$) by using standard dimension reduction techniques \cite{cohen2015dimensionality,feldman2013turning} (this was also noted in \cite{schmidt2019fair}). Hence, our result solves the open question proposed in \cite{schmidt2019fair} and partly solves the open question proposed in \cite{huang2019coresets}. As we already mentioned, in all of the previous results \cite{schmidt2019fair,huang2019coresets}, coreset sizes depended exponentially on $d$ (see Table \ref{table:1}). We note that the formal statement of Theorem \ref{th:coresetthm} appears in Theorems \ref{th:coresetthmoverlap},  \ref{th:coresetthmoverlapR^d} and \ref{th:coresetthmkmeans}.   

\vspace{-5mm}
\renewcommand{\arraystretch}{1.5}
\begin{center}
\begin{table}[h]
\centering
\begin{tabular}{ |p{1.6cm}|p{3.5cm}|p{2.5cm}|p{3.8cm}|p{2.8cm}|  }
\cline{2-5}
\nocell{1}& \multicolumn{2}{c|}{$k$-median} & \multicolumn{2}{c|}{$k$-means} \\
\cline{2-5}
\nocell{1} & \centering size & construction time & \centering size & construction time\\ 
\hline
\cite{schmidt2019fair} & & & $O(\Gamma k \epsilon^{-d-2} \log n)$ & $O(k \epsilon^{-d-2} n \log n)$ \\ 
\cite{huang2019coresets} & $O(\Gamma k^2 \epsilon^{-d})$ & $O(k \epsilon^{-d+1} n)$ & $O(\Gamma k^3 \epsilon^{-d-1})$ & $O(k \epsilon^{-d+1} n)$ \\
Thm. \ref{th:coresetthmoverlapR^d} and \ref{th:coresetthmkmeans} & $O(\frac{\Gamma}{\epsilon^3}\cdot k^2\log n(\log n+$ $d \log (1/\epsilon)))$ & $O(n d (k+\ell))$ & $O(\frac{\Gamma}{\epsilon^5}\cdot k^7 (\log n)^6 (\log n+$ $d \log (1/\epsilon)))$ & $O(n d (k+\ell))$\\
\hline
\end{tabular}
\caption{Previous and current coreset results in $\mathbb{R}^d$. 
} 
\label{table:1}
\end{table}
\end{center}
\vspace{-12mm}

Actually, our coreset construction scheme is much more general in the following sense.  The coreset can preserve not only the cost of optimal fair clustering, but also the cost of any optimal clustering with group-cardinality constraints. 
In particular, for every set of $k$ centers and constraint matrix $M$, our coreset approximately preserves the cost of optimal clustering that satisfies $M$. In fact,  for any clustering problem with constraints where the constraints can be represented by a set of matrices, we obtain a small size coreset. This gives rise to coresets for a wide range of clustering problems including lower-bounded clustering \cite{svitkina2010lower,ahmadian2012improved,bera2019arxiv}. Notably, in the case of lower-bounded clustering, the input consists of only one group of points, and thus $M$ is a column matrix.  


We further exploit the new coreset construction to design clustering algorithms in various settings. In general metrics, we obtain the first fixed-parameter tractable (FPT) constant-factor approximation for $(\alpha,\beta)$-fair clustering with parameters $k$ and $\Gamma$.  
 That is, the running time of our algorithm is  exponential only in the values $k$ and $\Gamma$ while polynomial in the size of the input. All previous constant-approximation algorithms were bicriteria and violated the fairness constraints by some additive factors. Hence, the study of FPT approximation is well-motivated. Our approximation factors are reasonably small and improve the best-known approximation factors of the  existing bicriteria algorithms (see Table \ref{table:2}). 
Moreover, our coreset leads to improved constant FPT approximations for many other clustering problems. For example, we obtain an improved $\approx 3$-approximation algorithm for lower-bounded $k$-median  \cite{svitkina2010lower,ahmadian2012improved,bera2019arxiv} that is  FPT parameterized by $k$. Previously, the best-known   factor for FPT approximation  for this problem was $3.736$ \cite{bera2019arxiv}.

Based on our coreset, we also obtain the first  FPT $(1+\epsilon)$-approximation for $(\alpha,\beta)$-fair clustering in $\mathbb{R}^d$ with parameters $k$ and $\Gamma$. Furthermore, the running time has a near-linear dependency on $n$ and does not depend exponentially on $d$. A comparison with the running time of the previous $(1+\epsilon)$-approximation algorithms can be found in Table \ref{table:3}. 
We also obtain FPT $(1+\epsilon)$-approximation algorithms with parameter $k$ for the Euclidean version of several other problems including capacitated clustering \cite{Cohen-AddadL19,Cohen-Addad20} and lower-bounded clustering. We note that these are the first $(1+\epsilon)$-approximations for these problems with near-linear dependency on $n$. For Euclidean capacitated clustering, quadratic time FPT algorithms follow due to \cite{ding2020unified,Bhattacharya2018} (see Table \ref{table:4}). Also, the $(1+\epsilon)$-approximation for Euclidean capacitated clustering in \cite{Cohen-AddadL19} and \cite{Cohen-Addad20} have running time $(k\epsilon^{-1})^{k\epsilon^{-O(1)}}n^{O(1)}$ and at least $n^{\epsilon^{-O(1)}}$ (see Table \ref{table:4}).  

Our coreset also leads to small space $(1+\epsilon)$-approximation in streaming setting for $(\alpha,\beta)$-fair clustering in $\mathbb{R}^d$ when the groups are disjoint. We show how to maintain an $O(d^2\ell\cdot \text{poly}(k\log n)/{\epsilon}^4)$ size coreset in each step. One can apply our $(1+\epsilon)$-approximation algorithm on the coreset to compute a near-optimal clustering. In the previous streaming algorithms \cite{schmidt2019fair}, the space complexity depended exponentially on either $d$ or $k$.  

Our technical contributions are summarized in Section \ref{sec:summary}.

\vspace{-3mm}
\begin{center}
\begin{table}[h]
\centering
\begin{tabular}{ |p{1.8cm}|p{1cm}|p{2.3cm}|p{3cm}|p{2.3cm}|p{3cm}|  }
\cline{2-6}
\nocell{1}& \centering \multirow{2}{*}{multi} & \multicolumn{2}{c|}{$k$-median} & \multicolumn{2}{c|}{$k$-means} \\
\cline{3-6}
\nocell{1} &  & approx. & time & approx. & time\\ 
\hline
\cite{bercea2019cost} &  & $(4.675,1)$ & $\text{poly}(n)$ & $(62.856,1)$ & $\text{poly}(n)$\\
\cite{bera2019fair}& $\checkmark$ & $(O(1),4\Lambda+3)$ & $\text{poly}(n)$ & $(O(1),4\Lambda+3)$ & $\text{poly}(n)$\\
Thm. \ref{theorem:metric_approximation} &  & \centering $\approx 3$ & $(k\ell)^{O(k\ell)} n\log n$ & \centering $\approx 9$ & $(k\ell)^{O(k\ell)} n\log n$\\
Thm. \ref{theorem:metric_approximation} & $\checkmark$ & \centering $\approx 3$ & $(k\Gamma)^{O(k\Gamma)} n\log n$ & \centering $\approx 9$ & $(k\Gamma)^{O(k\Gamma)} n\log n$\\
\hline
\end{tabular}
\caption{Approximation results for $(\alpha,\beta)$-fair clustering in general metrics. ``multi'' denotes if the algorithm can handle overlapping groups. In ``approx.'' columns, the first (resp. second) value in a tuple is the  approximation factor (resp. violation). \cite{bera2019fair} does not explicitly compute the $O(1)$ factor, but it is $> 3+\epsilon$ (resp. $> 9+\epsilon$) for $k$-median (resp. $k$-means), where $\epsilon$ is a sufficiently large constant.}
\label{table:2}
\end{table}
\end{center}
\vspace{-10mm}

\begin{center}
\begin{table}[h]
\centering
\vspace{-5mm}
\begin{tabular}{ p{1.6cm}|p{5cm}|p{5cm}}
\cline{1-3}
\nocell{1}& running time & version \\
\hline
\cite{schmidt2019fair} & $n^{O(k/\epsilon)}$ & 2-color, $(1,k)$-fair clustering\\
\cite{huang2019coresets}& $( k^2 \epsilon^{-d})^{O(k/\epsilon)}+ O(k\epsilon^{-d+1} n)$ & 2-color, $(1,k)$-fair clustering\\\cite{bohm2020fair}& $n^{\text{poly}(k/\epsilon)}$ & $\ell$-color, $(1,k)$-fair clustering \\
Thm. \ref{theorem:linear_eptas} & $2^{\tilde{O}(k/\epsilon^{O(1)})} (k\Gamma)^{O(k\Gamma)} nd \log n$ & $(\alpha,\beta)$-fair clustering\\
\hline
\end{tabular}
\caption{The running time of the $(1+\epsilon)$-approximations for fair clustering in $\mathbb{R}^d$.}
\label{table:3}
\end{table}
\end{center}
\vspace{-15mm}

\begin{center}
\begin{table}[h]
\centering
\begin{tabular}{ p{2cm}|p{6cm} }
\cline{1-2}
\nocell{1}& running time \\
\hline
\cite{ding2020unified} &  $2^{\text{poly}(k/\epsilon)}n^2(\log n)^{k+2}d$\\
\cite{Bhattacharya2018}&  $2^{\tilde{O}(k/\epsilon^{O(1)})}\cdot n^2(\log n)^2d$\\
\cite{Cohen-AddadL19} & 
$(k\epsilon^{-1})^{k\epsilon^{-O(1)}}n^{O(1)}$  \\
\multirow{2}{*}{\cite{Cohen-Addad20}} &  $n^{\epsilon^{-O(1)}}(d=2)$ \\ & $n^{{(\log n/\epsilon)}^{O(d)}} (d\ge 3)$\\
Thm. \ref{thm:capacitated} & $2^{\tilde{O}(k/\epsilon^{O(1)})}nd^{O(1)}+nk^2\epsilon^{-O(1)}\log n$\\
\hline
\end{tabular}
\caption{The running time of the $(1+\epsilon)$-approximations for capacitated clustering in $\mathbb{R}^d$.}
\label{table:4}
\end{table}
\end{center}
\vspace{-10mm}
\subsection{Comparison with Related Work}
Here we compare our results with closely related previous work. Schmidt~{et al.}~\cite{schmidt2019fair} defined the concept of fair coresets and gave coreset of size $O(\ell k \epsilon^{-d-2} \log n)$ for the disjoint group case of Euclidean $(\alpha,\beta)$-fair $k$-means. This can be extended to the overlapping case by replacing $\ell$ with $\Gamma$ in the size bound. Using a sophisticated dimension reduction technique \cite{cohen2015dimensionality}, they showed how to stream coreset whose size does not depend exponentially on $d$. Unfortunately, this coreset size depends exponentially on $k$. Schmidt~{et al.} also gave an $n^{O(k/\epsilon)}$ time $(1+\epsilon)$-approximation for the two-color version of the problem. Note that our work improves over all these results (see Tables \ref{table:1} and \ref{table:3}). Using the framework in \cite{har2007smaller},  Huang~{et al.}~\cite{huang2019coresets} improved the coreset size bound of \cite{schmidt2019fair} by a factor of $\Theta\left(\frac{\log n}{\epsilon k^2}\right)$ and gave the first coreset for Euclidean $(\alpha,\beta)$-fair $k$-median of size    
$O(\Gamma k^2 \epsilon^{-d})$. Both the coreset construction schemes in \cite{schmidt2019fair} and \cite{huang2019coresets} use deterministic algorithms, and thus they proposed whether random sampling can be employed to remove the curse of dimensionality.  Note that our result based on random sampling improves the bound (for $k$-median) in \cite{huang2019coresets} by a factor of $\Theta\left(\frac{\epsilon^{-d+3}}{\log n(\log n+d)}\right)$ (see Table \ref{table:1}). By applying the $(1+\epsilon)$-approximation of \cite{schmidt2019fair} on their coreset, Huang~{et al.}~\cite{huang2019coresets} obtained an algorithm with improved running time. However, the algorithm of \cite{schmidt2019fair} is only for two colors. Moreover, due to the inherent exponential dependency on $d$ of the coreset size, the running time of the algorithm in \cite{huang2019coresets} still depends exponentially on $d$ (see Table \ref{table:3}). B\"ohm~{et al.}~\cite{bohm2020fair} considered  $(1,k)$-fair clustering with multiple colors. They designed near-linear time constant-approximation algorithms in this restricted setting. They also obtained an $n^{\text{poly}(k/\epsilon)}$ time $(1+\epsilon)$-approximation for the Euclidean version in the same setting.  
An FPT $(1+\epsilon)$-approximation follows from our work for this version (see Table \ref{table:3}).   

Chierichetti~{et al.}~\cite{chierichetti2017fair} gave a polynomial time $\Theta(t)$-approximation for $(t,k)$-fair $k$-median with two groups (or colors). We improve their result by giving an FPT constant-approximation algorithm with parameters $k$ and $\ell$ for $(t,k)$-fair clustering with arbitrary number of colors. Based on the framework implicitly mentioned in \cite{chakrabarty2016facility}, Bera~{et al.}~\cite{bera2019fair} obtained polynomial time $O(1)$-approximation for $(\alpha,\beta)$-fair clustering that violates the fairness constraints by at most an additive factor of $4\Lambda+3$. This framework first computes $k$ centers using a $\rho$-approximation algorithm for vanilla clustering, and then finds an assignment of the points to these centers that satisfies the fairness constraints. They showed, e.g, for $k$-median, there is always such an assignment whose cost is at most $\rho+2$ times the optimal cost of fair clustering. However, computing such an assignment is not an easy task. Indeed, this is a big hurdle one faces while studying fair clustering, which makes this problem substantially harder compared to other clustering problems like capacitated clustering. Based on the algorithm due to Kir\'{a}ly~{et al.}~\cite{kiraly2012degree}, Bera~{et al.}~\cite{bera2019fair} showed that an optimal assignment can be computed by violating any fairness constraint by the mentioned factor. For the disjoint group case, their violation factor is only $3$. Independently, Bercea~{et al.}~\cite{bercea2019cost} obtained algorithms with the same approximation guarantees as in \cite{bera2019fair} for the disjoint version, but with at most 1 additive factor violation. We show that the above mentioned assignment problem for $(\alpha,\beta)$-fair clustering can be solved exactly in FPT time parameterized by $k$ and $\Gamma$. Plugging this in with our coreset, we obtain algorithms with better constant approximation factors compared to \cite{bera2019fair} and \cite{bercea2019cost} that do not violate any constraint (see Table \ref{table:2}). 

Ding and Xu~\cite{ding2020unified} gave an unified framework with running time $2^{\text{poly}(k/\epsilon)}(\log n)^{k+1}nd$ that generates a collection of candidate sets of centers for clustering problems with constraints in $\mathbb{R}^d$. Subsequently, Bhattacharya~{et al.}~\cite{Bhattacharya2018} and Feng~{et al.}~\cite{feng2019improved} designed similar frameworks having improved time complexity. None of these works study fair clustering. Our work can be viewed as an extension of these works to general metrics in the sense that we obtain constant-approximations for a range of constrained clustering problems. Furthermore, by applying the framework of \cite{Bhattacharya2018} on our coreset, we obtain $(1+\epsilon)$-approximation algorithms with improved time complexity bounds for several clustering problems in $\mathbb{R}^d$. 

\subsection{Other Related Work}
Fair clustering has received a huge amount of attention from both theory and practice. Most of the works considered the notion of fairness popularized by Chierichetti~{et al.}~\cite{chierichetti2017fair}. Backurs~{et al.}~\cite{backurs2019scalable} studied Euclidean fair clustering with the goal of designing scalable algorithms. They followed a fairness notion very similar to the one in \cite{chierichetti2017fair} and considered the two color case. Their main result is a near-linear time $O(d\log n)$-approximation. 

Fair version of $k$-center is also a well-studied problem \cite{chierichetti2017fair,rosner2018privacy,bercea2019cost,bera2019fair,ahmadian2019clustering}. In contrast to $k$-median and $k$-means, polynomial time true constant-approximation is known for the multiple color generalization of $(t,k)$-fair clustering \cite{rosner2018privacy,bercea2019cost}.  

Fair clustering has been studied with different notions of fairness as well. 
Chen~{et al.}~\cite{chen2019proportionally} defined fairness as proportionality where any $n/k$ points can form their own cluster if there is another center that is closer
to all of these $n/k$ points. Kleindessner~{et al.}~\cite{kleindessner2019fair} considered the fair $k$-center problem where each center has a type and for each type, a fixed number of centers must be chosen. They gave a simple linear-time constant factor approximation for this problem. In a different work \cite{kleindessner2019guarantees}, they extended the fairness notion to spectral clustering. 

Clustering problems have been studied in the literature with other constraints. One such popular problem is capacitated clustering. For capacitated $k$-center, polynomial time $O(1)$-approximations are known both for the uniform \cite{Bar-IlanKP93,KhullerS00} and non-uniform \cite{AnBCGMS15,CyganHK12} versions.  In contrast, for the capacitated version of $k$-median and $k$-means,  no polynomial time $O(1)$-approximation is known. However, bicriteria constant-approximations are known that violate either the capacity constraints or the constraint on the number of clusters, by an $O(1)$ factor  \cite{ByrkaRU16,ByrkaFRS15,CharikarGTS02,ChuzhoyR05,DemirciL16,Li15,Li17}. Recently, Cohen-Addad and Li~\cite{Cohen-AddadL19} designed FPT $\approx 3$- and $\approx 9$-approximation with parameter $k$ for the capacitated version of $k$-median and $k$-means, respectively. Polynomial time constant-approximations for lower-bounded $k$-median follow from \cite{svitkina2010lower,ahmadian2012improved}. Also, an FPT $O(1)$-approximation with parameter $k$ is known for this problem \cite{bera2019arxiv}. Many other clustering constraints have been studied in the literature, e.g, matroid \cite{chen2016matroid}, fault tolerance \cite{khuller2000fault}, chromatic clustering \cite{ding2020unified}  and diversity \cite{li2010clustering}. We will discuss more about the last two problems in Section \ref{sec:others}.  

Coresets have been used in the context of $k$-median and $k$-means clustering for obtaining near-optimal solutions, especially for points in the Euclidean spaces. Many different schemes have been proposed over the years for coreset construction. In the earlier works, standard techniques have been used that led to coresets whose size depend exponentially on the dimension $d$ \cite{DBLP:conf/stoc/Har-PeledM04,DBLP:journals/dcg/Har-PeledK07,frahling2005coresets}. Chen \cite{chen2009coresets} improved the dependence on $d$ to be polynomial. Subsequently, this dependence has been further  improved \cite{langberg2010universal,feldman2011unified}. Finally, the dependence on $d$ were removed for both of the problems \cite{feldman2013turning,sohler2018strong}. See also \cite{becchetti2019oblivious,braverman2016new,huang2020coresets} for recent improvements. For capacitated clustering, Cohen-Addad and Li~\cite{Cohen-AddadL19} gave an $O_{\epsilon}(\text{poly}(k\log n))$ size coreset in general metrics.     

\paragraph{Organization.} In Section \ref{sec:prelims}, we introduce the definitions and notation that we will use throughout the paper. Section \ref{sec:summary} summarizes the main technical ideas used to obtain the new results. The ``stronger coreset'' construction algorithm for $k$-median in the disjoint group case appears in Section \ref{sec:coresetconstructionkmediandisjoint} and is extended to the overlapping group case in Section \ref{sec:coresetconstructionkmedianoverlapping}. Section \ref{sec:euclideancoreset} describes the coreset construction for $k$-median in $\mathbb{R}^d$. The coreset constructions for $k$-means appear in Section \ref{sec:kmeanscoreset}. In the rest of the paper, we describe the applications of our coresets. In Section \ref{sec:assignment}, we describe an algorithm for solving an assignment problem, which we will need to design our algorithms for $(\alpha,\beta)$-fair clustering. In Section \ref{sec:eptas} and \ref{section:metric}, we describe our approximation algorithms for the Euclidean and metric case of $(\alpha,\beta)$-fair clustering, respectively. In Section \ref{sec:others}, we apply our coreset to design improved algorithms for other constrained clustering problems. In Section \ref{sec:Streaming}, we show how to maintain our coreset in the streaming setting. Finally, in Section \ref{sec:conclude}, we conclude with some open questions.


\section{Preliminaries}
\label{sec:prelims}

In all the clustering problems we study in this paper, we are given a set $P$ of points in a metric space $(\mathcal{X}, d)$, that we have to cluster. We are also given a set $F$ of cluster centers in the same metric space. We note that $P$ and $F$ are not-necessarily disjoint, and in fact, $P$ may be equal to $F$. We assume that the distance function $d$ is provided by an oracle that for any given $x, y \in \mathcal{X}$ in constant time returns $d(x, y)$. In the Euclidean version of a clustering problem, $P\subseteq \mathbb{R}^d$, $F= \mathbb{R}^d$ and $d$ is the Euclidean metric. In the metric version, we assume that $F$ is finite. Thus, strictly speaking, the Euclidean version is not a special case of the metric version. 
In the metric version, we denote $|P\cup F|$ by $n$ and in the Euclidean version, $|P|$ by $n$.  
For any set $S$ and a point $p$, $d(p,S):=\min_{q\in S} d(p,q)$. Also, for any integer $t\ge 1$, we denote the set $\{1,2,\ldots,t\}$ by $[t]$. 

In the $k$-median problem, given an additional parameter $k$, the goal is to select a set of at most $k$ centers $C \subset F$ such that the quantity $\sum_{p\in P} d(p,C)$ is minimized. $k$-means is identical to $k$-median, except here we would like to minimize $\sum_{p\in P} (d(p,C))^2$. 

Next, we define our notion of fair clustering, where we mainly follow the definition in \cite{bera2019fair}. 
\begin{definition}[Definition 1, \cite{bera2019fair}]
    \label{definition:fair}
    In the fair version of a clustering problem ($k$-median or $k$-means), one is additionally given $\ell$ many (not necessarily disjoint) \emph{groups} of $P$, namely $P_1$, $P_2$, \ldots, $P_\ell$. One is also given two \emph{fairness vectors} $\alpha, \beta \in [0, 1]^\ell$, $\alpha = (\alpha_1, \ldots, \alpha_\ell)$, $\beta = (\beta_1, \ldots, \beta_\ell)$. The objective is to select a set of at most $k$ centers $C \subset F$ and an assignment $\varphi: P \to S$ such that $\varphi$ satisfies the following \emph{fairness constraints}:
    \begin{gather*}
        \left|\{x \in P_i : \varphi(x) = c\}\right| \le \alpha_i \cdot \left|\{x \in P : \varphi(x) = c\}\right|, \quad \forall c \in C, \forall i \in [\ell],\\
        \left|\{x \in P_i : \varphi(x) = c\}\right| \ge \beta_i \cdot \left|\{x \in P : \varphi(x) = c\}\right|, \quad \forall c \in C, \forall i \in [\ell],
    \end{gather*}
    and $\cost(\varphi)$ is minimized among all such assignments.
\end{definition}
In the $(\alpha,\beta)$-Fair $k$-{median}  problem, $\cost(\varphi) := \sum_{x \in P} d(x, \varphi(x))$, and in the $(\alpha,\beta)$-Fair $k$-{means}  problem, $\cost(\varphi) := \sum_{x \in P} d(x, \varphi(x))^2$. 
To refer to these two problems together, we will use the term \probFair. We call $\varphi$ that satisfies the fairness constraints \emph{a fair assignment}. 
We denote the minimum cost of a fair assignment of a set of points $P$ to a set of $k$ centers $C$ by $\faircost(P, C)$, and $\faircost(P)$ denotes the minimum of $\faircost(P, C')$ over all possible sets of $k$ centers $C'$. 



Next, we state our notion of coresets. We follow the definitions in \cite{schmidt2019fair,huang2019coresets}. 
For a clustering problem with $k$ centers and $\ell$ groups $P_1$, \ldots, $P_\ell$, a \textit{coloring constraint} is a $k\times \ell$ matrix $M$ having non-negative integer entries. The entry of $M$ corresponding to row $i$ and column $j$ is denoted by $M_{ij}$. Next, we have the following observation, which was also noted in  \cite{schmidt2019fair,huang2019coresets}. 

\begin{proposition}\label{prop:coloringtofair}
 Given a set $C$ of $k$ centers, the assignment restriction required for \probFair can be expressed as a collection of coloring constraints. 
\end{proposition}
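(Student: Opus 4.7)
The plan is to give a direct constructive proof: I will show that, given a fixed set $C = \{c_1,\ldots,c_k\}$ of centers, every assignment $\varphi: P \to C$ canonically induces a $k \times \ell$ non-negative integer matrix, and that the fairness condition can be rephrased as membership of this induced matrix in an explicit collection of coloring constraints.

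First I would define the map $\varphi \mapsto M(\varphi)$ by $M(\varphi)_{ij} := |\{x \in P_j : \varphi(x) = c_i\}|$ for $i \in [k]$ and $j \in [\ell]$; by construction $M(\varphi)$ is a $k\times\ell$ matrix of non-negative integers, i.e.\ a coloring constraint in the sense of the paper's definition. Writing $s_i(\varphi) := |\{x \in P : \varphi(x) = c_i\}|$ for the cluster sizes, the two families of inequalities in Definition~\ref{definition:fair} become, simply, $\beta_j \cdot s_i(\varphi) \le M(\varphi)_{ij} \le \alpha_j \cdot s_i(\varphi)$ for every $(i,j) \in [k] \times [\ell]$. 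Thus each fair assignment is witnessed by a single coloring constraint matrix.

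Next, to describe the collection $\mathcal{M}$ of coloring constraints referenced in the statement, I would take $\mathcal{M}$ to consist of all $k\times\ell$ non-negative integer matrices $M$ for which (a) there exist non-negative integers $s_1,\ldots,s_k$ with $\beta_j s_i \le M_{ij} \le \alpha_j s_i$ for every $(i,j)$, and (b) $M$ is realizable, meaning there is at least one $\varphi: P \to C$ with $M(\varphi) = M$ and $s_i(\varphi) = s_i$ for every $i$. Because every entry of $M$ is bounded by $|P|$, the collection $\mathcal{M}$ is finite. The equivalence ``$\varphi$ is fair iff $M(\varphi) \in \mathcal{M}$'' then follows directly from the reformulation in the previous paragraph, thereby expressing the assignment restriction as a disjunction over a collection of coloring constraints.

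The main subtlety, which I expect to be the chief obstacle, comes from overlapping groups: when a single point lies in several groups $P_j$, it is double-counted across the columns of $M(\varphi)$, so $M(\varphi)$ alone no longer determines the cluster sizes $s_i(\varphi)$. To handle this cleanly I would either (i) augment each coloring constraint in $\mathcal{M}$ with an auxiliary target vector $(s_1,\ldots,s_k)$ of cluster sizes, so that realizability in (b) is checked against the given sizes; or (ii) first partition $P$ into the at most $\Gamma$ disjoint ``type atoms'' --- sets of points with identical group membership --- run the argument for the disjoint-atom version of the problem, and then translate the atom-level matrix back to the $k \times \ell$ group-level form. Either route turns the fairness conditions into linear inequalities on the augmented data, preserves finiteness of $\mathcal{M}$, and yields the stated equivalence between fair assignments and membership in a collection of coloring constraints.
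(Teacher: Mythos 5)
Your proposal is correct and matches the paper's (implicit) treatment: the paper states this proposition without proof as essentially definitional, and your construction of $M(\varphi)_{ij} = |\{x \in P_j : \varphi(x) = c_i\}|$ together with the finite collection $\mathcal{M}$ of realizable matrices satisfying $\beta_j s_i \le M_{ij} \le \alpha_j s_i$ is exactly the intended argument. Your option (ii) for the overlapping case --- passing to the $\Gamma$ disjoint equivalence classes and using $k \times \Gamma$ matrices so that the cluster sizes $s_i$ are again recoverable as row sums --- is precisely how the paper resolves this in Section~\ref{sec:coresetconstructionkmedianoverlapping}, where it notes that the proposition continues to hold under the $k\times\Gamma$ formulation.
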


In our definition, a coreset is required to preserve the optimal clustering cost w.r.t. all coloring constraints, and hence it also preserves the optimal fair clustering cost.  Next, we formally define the cost of a clustering w.r.t. a set of centers and a coloring constraint. 

First, consider the $k$-median objective. 
Suppose we are given a weight function $w: P\rightarrow \mathbb{R}_{\ge 0}$\footnote{the set of non-negative real numbers}.  Let $W\subseteq P\times \mathbb{R}$ be the set of pairs $\{(p,w(p))\mid p\in P \text{ and } w(p) > 0\}$. For a set of centers $C=\{c_1,\ldots,c_k\}$ and a coloring constraint $M$, \text{wcost}$(W,M,C)$ is the minimum value $\sum_{p\in P, c_i\in C} \psi(p,c_i)\cdot  d(p,c_i)$ over all assignments $\psi: P \times C\rightarrow  \mathbb{R}_{\ge 0}$ such that 
\begin{enumerate}
 \item For each $p\in P$, $\sum_{c_i\in C} \psi(p,c_i)=w(p)$. 
 
 \item For each $c_i\in C$ and group $1\le j\le \ell$, $\sum_{p\in P_j} \psi(p,c_i) = M_{ij}$. 
\end{enumerate}

For $k$-means, \text{wcost}$(W,M,C)$ is defined in the same way except it is the minimum value $\sum_{p\in P, c_i\in C} $ $ \psi(p,c_i)\cdot  d(p,c_i)^2$. If there is no such assignment $\psi$, \text{wcost}$(W,M,C)=\infty$. When $w(p)=1$ for all $p\in P$, we simply denote $W$ by $P$ and \text{wcost}$(W,M,C)$ by cost$(P,M,C)$. 
Now we define a coreset. We call it universal coreset, as it is required to preserve optimal clustering cost w.r.t. all coloring constraints.   

\begin{definition}
 (Universal coreset) A universal coreset for a clustering objective is a set of weighted points $W\subseteq P\times \mathbb{R}$ such that for every set of centers $C$ of size $k$ and any coloring constraint $M$,
 $$(1-\epsilon)\cdot  \text{cost}(P,M,C)\le \text{wcost}(W,M,C)\le (1+\epsilon)\cdot  \text{cost}(P,M,C).$$
\end{definition}

\section{Our Techniques}
\label{sec:summary} 

In this section, we summarize the techniques and key ideas used to obtain the new results of the paper. The detailed version of our results and formal proofs appear in the following sections. For simplicity, we limit our discussion to $k$-median clustering. We start with the coreset results. 

\subsection{Universal Coreset Construction}
Our coreset construction algorithms are based on random sampling and we will prove that our algorithms produce universal coresets with high probability (w.h.p.). At a first glance, it is not easy to see how to sample points in the overlapping group case, as the decision has an effect on multiple groups. To give intuition to the reader, at first we discuss the disjoint group case. 

\subsubsection{The Disjoint Group Case}
Our coreset construction algorithm is built upon the coreset construction algorithm for vanilla clustering due to Chen \cite{chen2009coresets}. In our case, we have points from $\ell$ disjoint color classes. So, we apply Chen's algorithm for each color class independently. Note that Chen's algorithm was used to show that for any given set of centers $C$, the constructed coreset approximately preserves the optimal clustering cost. However, we would like to show that for any given set of centers $C$, the constructed coreset approximately preserves the optimal clustering cost corresponding to any given constraint $M$. At this stage, it is not clear why Chen's algorithm should work in such a generic setting. Our main technical contribution is to show that sampling based approaches like Chen's algorithm can be used even for such a stronger notion of universal coreset. We will try to give some intuition after describing our algorithm. Our algorithm is as follows. 

Given the set of points $P$, first we apply the algorithm of Indyk \cite{indyk1999sublinear} for computing a vanilla $k$-median clustering of $P$. This is a bicriteria approximation algorithm that uses $O(k)$ centers and runs in $O(nk)$ time. Let $C^*$ be the set of computed centers, $\nu$ be the constant approximation factor and $\Pi$ be the cost of the clustering. Also, let $\mu=\Pi/(\nu n)$ be a lower bound on the average cost of the points in any optimal $k$-median clustering. Note that for any point $p$, $d(p,C^*) \le \Pi=\nu n\cdot \mu$. 

For each center $c_i^*\in C^*$, let $P_i^*\subseteq P$ be the corresponding cluster of points assigned to $c_i^*$. We consider the ball $B_{i,j}$ centered at $c_i^*$ and having radius $2^j\mu$ for $0\le j\le N$, where $N=\lceil \log (\nu n)\rceil$. We note that any point at a distance $2^N\mu\ge \nu n\cdot \mu$ from $c_i^*$ is in $B_{i,N}$, and thus all the points in $P_i^*$ are also in $B_{i,N}$. Let $B'_{i,0}=B_{i,0}$ and $B'_{i,j}=B_{i,j}\setminus B_{i,j-1}$ for $1\le j\le N$. We refer to each such $B'_{i,j}$ as a ring for $1\le i\le k, 0\le j\le N$. For each $0\le j\le N$ and color $1\le t\le \ell$, let $P'_{i,j,t}$ be the set of points in $B'_{i,j}$ of color $t$. Let $s=\Theta(k\log n/\epsilon^3)$ for a sufficiently large constant hidden in $\Theta(.)$. 

For each center $c_i^*\in C^*$, we perform the following steps. 

\paragraph{Random Sampling.} For each color $1\le t\le \ell$ and ring index $0\le j\le N$, do the following. If $|P'_{i,j,t}|\le s$, add all the points of $P'_{i,j,t}$ to $W_{i,j}$ and set the weight of each such point to 1. Otherwise, select $s$ points from $P'_{i,j,t}$ independently and randomly (without replacement) and add them to $W_{i,j}$. Set the weight of each such point to  $|P'_{i,j,t}|/s$. 

The set $W=\cup_{i,j} W_{i,j}$ is the desired universal coreset. As the number of rings is $O(k\log n)$, the size of $W$ is  $O(\ell (k\log n)^2/{\epsilon}^3)$. From \cite{chen2009coresets}, it follows that for each color, the coreset points can be computed in time linear in the number of points of that color times $O(k)$. Thus, our coreset construction algorithm runs in $O(nk)$ time. 

\paragraph*{An Intuitive Discussion about  Correctness.}
Note that we need to show that for any set of centers $C$, the optimal clustering cost is approximately  preserved w.r.t. all possible combination of cluster sizes as defined by the constraint matrices. In Chen's analysis, it was sufficient to argue that for any set of centers $C$, the optimal clustering cost needs to be preserved. This seems much easier compared to our case. (Obviously, the details are much more complicated even in the vanilla case.) For example, suppose $p\in P$ be a point that is assigned to a center $c\in C$ in an optimal clustering. Note that $c$ must be a closest center to $p$. For simplicity, suppose $p$ has a unique closest center. Now, if $p$ is chosen in the coreset, then the total weight of $p$ must also be assigned to $c$ in any optimal assignment w.r.t.  $C$. Thus, the assignment function for original and coreset points remains same in the vanilla case. This fact is in the heart of their analysis. Let $h$ be this assignment function: $h(p)=d(p,C)$ and for any set $S$, $h(S)=\sum_{p\in S} h(p)$. Consider any point set $V$ and an uniformly drawn random subset $U\subseteq V$. Also, assume that $h(p)$ lies in an interval of size $T$. Then, using a result due to Haussler \cite{haussler1992decision}, one can show that if $|U|$ is sufficiently large, then w.h.p, $\Big\rvert\frac{h(V)}{|V|}-\frac{h(U)}{|U|}\Big\rvert\le \epsilon T$. Now, we can apply this observation to each ring separately. Note that for any ring $B'_{i,j}$ with points $P_{i,j}$, and for all $p\in P_{i,j}$, $h(p)$ is in an interval $I$ of length at most the diameter of the ball $B_{i,j}$, i.e, $2(2^j\mu)$. It follows that, 
\begin{align*}
\bigg\rvert\sum_{p\in P_{i,j}} d(p,C)-\sum_{p\in W_{i,j}} w(p)\cdot d(p,C)\bigg\rvert& \le |P_{i,j}|\cdot \bigg\rvert\frac{h(P_{i,j})}{|P_{i,j}|}-\frac{|P_{i,j}|}{|W_{i,j}|} \cdot \frac{h(W_{i,j})}{|P_{i,j}|}\bigg\rvert\\
&\le |P_{i,j}| \cdot \epsilon 2^{j+1}\mu
\end{align*}

The first inequality follows, as the weight of each point in $W_{i,j}$ was set to $\frac{|P_{i,j}|}{|W_{i,j}|}$. The second inequality follows from the observation mentioned above. Summing over all rings, we get, 
$$\bigg\rvert\sum_{p\in P} d(p,C)-\sum_{p\in W} w(p)\cdot d(p,C)\bigg\rvert\le \sum_{(i,j)} |P_{i,j}| \cdot \epsilon 2^{j+1}\mu$$

Now, as we show later, one can upper-bound this by $O(\epsilon \cdot \text{OPT}_v)$, where $\text{OPT}_v$ is the optimal cost of vanilla clustering. This is shown by charging the error bound for each point with its cost in the bicriteria solution. Now, note that in the case of vanilla $k$-median, cost of a weighted set $S$ of points in an optimal clustering with centers in $C$, wcost$(S,C)=\sum_{p\in S} w(p)\cdot d(p,C)$ (similarly define cost$(P,C)$). By scaling $\epsilon$ appropriately and taking union bound over all rings, we obtain that w.h.p, 

$$\big\rvert\text{cost}(P,C)-\text{wcost}(W,C)\big\rvert\le \epsilon\cdot \text{cost}(P,C).$$

This is how Chen obtained the bound for $k$-median. Note that the observation that a coreset point has the same optimal assignment as the one w.r.t. the original point set is not-necessarily true in our case.  We cannot just use the nearest neighbor assignment scheme, as in our case cluster sizes are predefined through $M$. Indeed, in our case we might very well need to assign the weight of a coreset point to multiple centers to satisfy $M$. In general, this is the main hurdle one faces while analyzing a sampling based approach for fair coreset construction. 

For analyzing our algorithm, we follow an approach similar to the one by Cohen-Addad and Li in \cite{Cohen-AddadL19}. They considered the capacitated clustering problem, where for each center $c$ a capacity value $U_c$ is given, and if the center $c$ is chosen, at most $U_c$ points can be assigned to $c$. They analyzed Chen's algorithm and showed that for any center $C$, the coreset approximately preserves the optimal capacitated clustering cost. In the following we describe their approach. 

Fix a set $C$ of centers. Again consider a single ring $B'_{i,j}$ and assume that we sample points from only this ring. Thus the coreset consists of sampled points from this ring and original points from the other rings. We would like to obtain an error bound for the points $P_{i,j}$ in $B'_{i,j}$ similar to the one in the vanilla case. For simplicity, let $P'=P_{i,j}$, $m=|P'|$ and $\mu'=2^j\mu$. Also, let $S$ be the samples chosen from $P'$. Recall that $|S|=s$. Let $W'$ be the coreset, i.e, $W'=S\cup (P\setminus P')$. 
Instead of directly analyzing the sampling scheme of Chen, they consider a different sampling scheme. The two sampling schemes are same up to repetition as they argue. This is one of the most important ideas that they use in the analysis.  

\paragraph{An Alternative Way of Sampling.} For each $p\in P'$, select $p$ w.p. $s/m$ independently and set its weight to $m/s$. Otherwise, set its weight to 0. Let $X \in \mathbb{R}_{\ge 0}^{m}$ be the corresponding random vector such that $X[p]=m/s$ if $p$ is selected, otherwise $X[p]=0$. 

We note two things here. First, for each $p$, $\mathbb{E}[X[p]]=1$. Thus, $\mathbb{E}[X]=\mathbb{1}$, where $\mathbb{1}$ is the vector of length $m$ whose entries are all 1. Intuitively, this shows that in expectation the chosen set of samples behave like the original points. They heavily use this connection in their analysis. Second, this sampling is different from the original sampling scheme in the sense that here we might end up selecting more (or less) than $s$ samples. However, one can show that with sufficient probability, this sampling scheme selects exactly $s$ points, as the expected number is $m\cdot (s/m)=s$. It follows that $X$ contains exactly $s$ non-zero entries with the same probability. Conditioned on this event, $X$ accurately represents the outcome of the original sampling process. Thus, both the sampling processes are same up to repetition. Henceforth, we assume that $X$ contains exactly $s$ non-zero entries. 

The next crucial idea is to represent assignments through network flow. Suppose we are given a fixed set of centers and weighted input points and we would like to compute a minimum cost assignment of the points to the centers such that the capacities are not violated. This problem can be modeled as a minimum cost network flow problem. In particular, given any vector $Y$ that represents weights of the points, one can compute a network $G_Y$. A minimum cost flow in this network corresponds to a minimum cost assignment.  
For any $Y \in \mathbb{R}_{\ge 0}^m$, we denote by $f(Y)$ the minimum cost of any feasible flow in $G_Y$. Note that as the weight of the points in $P\setminus P'$ are fixed, it is sufficient to consider an $m$-dimensional vector to represent the weights of the points in $P'$. 

Now, note that $f(X)$ and \text{wcost}$(W',C)$ (for capacitated clustering) are identically distributed, as $X$ contains exactly $s$ non-zero entries. 
Also, as $\mathbb{E}[X]=\mathbb{1}$, $f(\mathbb{E}[X])=f(\mathbb{1})=$ cost$(P,C)$. Thus it is sufficient to prove that w.h.p, $|f(X)-f(\mathbb{E}[X])|\le \epsilon m\mu'$. They show this in two steps. First, w.h.p, $|f(X)-\mathbb{E}[f(X)]|\le \epsilon m\mu'/2$, which can be proved using a variant of Chernoff bound. Then, they show that $|\mathbb{E}[f(X)]- f(\mathbb{E}[X])|\le \epsilon m\mu'/2$.  


The proof in the second step is much more involved. 
First, they show that $f(\mathbb{E}[X])\le \mathbb{E}[f(X)]$. This follows from the fact that the value of $f(\mathbb{1})$ is not more than   the average value of $f(X)$, as one can find an assignment of cost at most $\mathbb{E}[f(X)]$ where 1 weight is assigned for each point, by summing up the costs of all assignments weighted by their probabilities. The proof completes by showing $\mathbb{E}[f(X)] \le f(\mathbb{E}[X])+\epsilon m\mu'/2$. It is not hard to prove that (i) $f(X)\le f(\mathbb{E}[X])+n m\mu'$. They show that (ii) w.p. at least $1-1/n^{10}$, $f(X)\le f(\mathbb{E}[X])+0.49\epsilon m\mu'$. From these above two claims, we obtain  $\mathbb{E}[f(X)] \le f(\mathbb{E}[X])+\epsilon m\mu'/2$. The proof that $f(X)\le f(\mathbb{E}[X])+0.49\epsilon m\mu'$ holds w.p. at least $1-1/n^{10}$ is the most crucial part of their analysis. To prove this, they start with an assignment corresponding to the cost $f(\mathbb{1})$, i.e, an original assignment where all points are assigned to the centers. They compute a feasible assignment corresponding to the vector $X$, by  modifying this assignment whose cost is at most $f(\mathbb{1})+0.49\epsilon m\mu'$ w.p. at least $1-1/n^{10}$. The details are much more involved. 
But, the crucial part is that the given assignment can be represented as a flow, and can be modified to obtain a new feasible flow in $G_X$ whose cost is not much larger than $f(\mathbb{1})$.  

Now, let us come back to fair clustering. The first hurdle to adapt the approach in \cite{Cohen-AddadL19} is that it is not possible to represent the assignment problem for fair clustering as a simple flow computation problem. It can be modeled as an ILP. But, then we loose the ``nice'' structure of the function $f$ that is needed for analysis. For example, they show that $f$ is a Lipschitz function and that helps them obtain good concentration bound. Thus it is not clear how to directly use their approach for fair clustering. However, we show that for a fixed constraint $M$, the assignment problem can be modeled in the desired way. Thus, we can get high probability bound w.r.t. a fixed constraint $M$. However, to obtain a coreset for fair clustering we need to show this w.r.t. all such constraints (and this leads us towards a universal coreset). The number of such constraints can be as large as $n^{\Omega(k\ell)}$. Hence, to obtain the h.p. bound over all $M$, we need to show that for a fixed $M$ the error probability is at most $1/n^{\Omega(k\ell)}$. However, it is not clear how to show such a bound ($1/n^{\Omega(k)}$ bound can be shown). Nevertheless, we show that it is not necessary to consider all those choices of the constraints together -- one can focus on a single color and the constraints w.r.t. that color only. Indeed, this is the reason that we apply Chen's algorithm to different color classes independently. Unfortunately, we pay a heavy toll for this: the coreset size is proportional to $\ell$, unlike the vanilla coreset size. However, it is not clear how to avoid this dependency. Nevertheless, this solves our problem, as now we have only $n^{\Omega(k)}$ constraints. 

\subsubsection{The Overlapping Group Case}
Recall that we are given $\ell$ groups of points $P_1,\ldots,P_{\ell}$ such that a point can potentially belong to multiple groups. In this section we design a sampling based algorithm for construction of universal coreset in this case. Note that the algorithm in the disjoint case clearly does not work. This is because we sample points from each group separately and independently, and thus it is not clear how to assign the weight of a point that belongs to multiple groups. One might think of the following trivial modification of the algorithm in the disjoint case. Assign each point to a single group to which it belongs. Based on this  assignment, now we have disjoint groups, and we can apply our previous algorithm. However, this algorithm can have a very large error bound. For example, suppose a point $p$ belongs to two groups $i$ and $j$, and it is assigned to group $i$. Also, suppose $p$ was not chosen in the sampling process. Note that the weight of $p$ is represented by some other chosen point $p'$, which was also assigned to group $i$. However, now we have lost the information that this weight of $p$ was also contributing towards fairness of group $j$. Thus, the constructed coreset might not preserve any optimal fair clustering with a small error. In the overlapping case, it is not clear how to obtain a coreset whose size depends linearly in $\ell$. Nevertheless, we design a new coreset construction algorithm that have very small error bound and its size depends linearly on $\Gamma$. As we noted before, in practice $\Gamma$ is reasonably small, a polynomial in $\ell$.    

The main idea of our algorithm is to divide the points into equivalence classes based on their group membership and sample points from each equivalence class. Let $P=\cup_{i=1}^{\ell} P_i$. For each point $p\in P$, let $J_p\subseteq [\ell]$ be the set of indexes of the groups to which $p$ belongs. Let $I$ be the distinct collection of these sets $\{J_p\mid p\in P\}$ and $|I|=\Gamma$. In particular, let $I_1,\ldots,I_{\Gamma}$ be the distinct sets in $I$. Now, we partition the points in $P$ based on these sets. For $1\le i \le \Gamma$, let $P^i=\{p\in P\mid I_i=J_p\}$. Thus, $\{P^i\mid 1\le i\le \Gamma\}$ defines equivalence classes for $P$ such that two points $p,p'\in P$ belong to the same equivalence class if they are in exactly the same set of groups. Now we apply our algorithm in the disjoint case on the disjoint sets of points $P^1,\ldots,P^{\Gamma}$. Let $W$ be the constructed coreset.   

Note that here we have $\Gamma$ disjoint classes, and thus the coreset size is  $O(\Gamma (k\log n)^2/{\epsilon}^3)$. As our coreset size is at least $\Gamma$, we assume that $\Gamma < n$. Note that the equivalence classes can be computed in $O(n \ell)$ time, and thus the algorithm runs in time $O(n \ell)+O(n k)=O(n(k+\ell))$. Next, we argue that $W$ is indeed a universal coreset w.h.p. 

\paragraph{An Intuitive Discussion of Correctness.}
Again, the idea here is to reduce the analysis to the one class case. However, this is not as straightforward as in the disjoint case. Note that although the classes $P^1,\ldots,P^{\Gamma}$ are disjoint, two classes can contain points from the same group. Moreover, the constraints are defined w.r.t. the groups. Thus, two classes need to interact to satisfy the constraints.  

Fix a set of centers $C$. Let $W_{\tau}$ be the chosen samples from class $\tau$. 
For any ring $B'_{i,j}$, let $P'_{i,j,\tau}$ be the points from class $\tau$ in the ring. 

Consider any class $1\le t\le \Gamma$. We can show that if our coreset contains samples from one specific class and original points from the other classes, then the error comes from only that class. In particular, we will show that for all matrix $M$, w.h.p, $|\emph{cost}(P,M,C)-\emph{wcost}(W_t\cup (P\setminus P^t),M,C)|\le \sum_{(i,j)} \epsilon |P'_{i,j,t}| \cdot 2^j\mu$. 

Now, one can safely take union bound over all $\Gamma < n$ classes, to obtain the bound similar to the one in the disjoint case.  

Next, we prove the above claim. Denote the size of the set $I_t$ of indexes corresponding to points in $P^t$ by $\Lambda$ and WLOG, assume that $I_t=\{1,2,\ldots,\Lambda\}$. To prove the above claim, we show that it is sufficient to prove that w.h.p, for all $k\times \Lambda$ matrix $M'$ such that $M'$ has $\Lambda$ identical columns and the sum of the entries in each column is exactly $|P^t|$, $|\emph{cost}(P^t,M',C)-\emph{wcost}(W_t,M',C)|\le \sum_{(i,j)} \epsilon |P'_{i,j,t}| \cdot 2^j\mu$. Now, as $M'$ contains all identical columns, points of $P^t$ belong to the same set of groups, and we select samples from $P^t$ separately and independently, this claim boils down to a case similar to the disjoint-group-one-color case.

One might find our approach in parallel with the one in \cite{huang2019coresets}, as they also reduce the problem with overlapping groups to a single class. However, in contrast to ours, their coreset construction algorithm is deterministic.

\subsubsection{The Euclidean Case}
The algorithm in the Euclidean case is the same as for general metrics, except we set $s$ to $\Theta({k\log (nb)}/{\epsilon^3})$ instead of $\Theta({k\log n}/{\epsilon^3})$, where $b=\Theta({k\log (n/\epsilon)}/{\epsilon^d})$. The analysis for general metrics holds in this case, except the assumption that the number of distinct sets of centers is at most $n^k$ is no longer true. Here any point in $\mathbb{R}^d$ is a potential center. This is the main challenge in the Euclidean case, as now it is not possible to take union bound over all possible sets of $k$ centers. Nevertheless, we show that for every set $C\subseteq \mathbb{R}^d$ of $k$ centers and constraint $M$, the optimal cost is preserved approximately w.h.p. The idea is to use a discretization technique to obtain a finite set of centers so that if instead we draw centers from this set, the cost of any clustering is preserved approximately.  

First, we construct a set of points $F$ that we will use as the center set. Recall that $C^*$ is the set of centers computed by the bicriteria approximation algorithm. $\nu$ is the constant approximation factor and $\Pi$ is the cost of this clustering. Also, $\mu=\Pi/(\nu n)$. 

For each center $c_i^*\in C^*$, we consider the $d$-dimensional axis-parallel hypercubes $R_{i,j}$ having sidelength $2^j\mu$, and centered at $c_i^*$ for $0\le j\le N$, where $N$ is sufficiently large. Let $R'_{i,0}=R_{i,0}$ and $R'_{i,j}=R_{i,j}\setminus R_{i,j-1}$ for $1\le j\le N$. For each $0\le j\le N$, we divide $R'_{i,j}$ into gridcells of sidelength $\epsilon 2^j\mu$. Let $Q_i$ be the exponential grid for $R'_{i,0},\ldots,R'_{i,N}$, i.e., $Q_i$ is the amalgamation of the gridcells in $R'_{i,0},\ldots,R'_{i,N}$. For each gridcell in the exponential grid $Q_i$, we select any arbitrary point and add it to $F_i$.

We repeat the above process for all $c_i^*\in C^*$. Let $F=\cup_i F_i$. One can show that the size of $F$ is $O(k\log (n/\epsilon)/\epsilon^d)$.

Now we show that if the centers can only be chosen from $F$, then the analysis for general metrics holds in this case as well with the modified value of $s$ mentioned above. We need to extend this argument for any set $C\subseteq \mathbb{R}^d$ of $k$ centers. To do this, we consider two cases. In the first case, $C$ contains a center $\hat{c}$ such that $\hat{c}$ is not in $\cup_i Q_i$. Thus, $\hat{c}$ is very far away from the centers of bicriteria solution. In this case we show that the cost of this clustering is at least $1/\epsilon$ times the cost of the bicriteria solution. We also showed that the cost difference of any clustering w.r.t. $P$ and $W$ is at most the cost of the bicriteria solution. Together it follows that the above cost difference is at most $\epsilon$ times the cost of the bicriteria solution and we obtain the desired bound w.p. 1. 

In the second case, all centers in $C$ are in  $\cup_i Q_i$. In this case we can approximate $C$ with $C'$ by choosing centers from $F$: for each center $c$, select the point $c'$ in $F$ chosen from the gridcell that contains $c$. Intuitively, the distance between $c$ and $c'$ is relatively small. Note that we showed before that $W$ is a coreset w.r.t. points in $F$ w.h.p, and so is w.r.t. $C'$. As $C'$ approximates $C$, it follows that $W$ is also a coreset w.r.t. $C$.  

\subsection{Approximation Algorithms Based on Universal Coresets}

All the approximation algorithms that we show boil down to one general strategy: first, compute a suitable universal coreset, then, enumerate a small family of sets of possible $k$ centers, such that at least one of them is guaranteed to provide a good approximation, and finally pick the best set of centers by finding the optimal fair assignment from the coreset to each of the center sets. Apart from the coreset construction, the notable challenge in the case of \probFair is solving the assignment problem. We devise a general FPT time algorithm for the assignment problem.
The approach for obtaining approximations for other problems are very similar. Thus, in this summary part, we limit our discussion to fair clustering. 

\subsubsection{Solving the Assignment Problem}

The fair assignment problem is the following: given an instance of \probFair and a set of $k$ centers $C$, compute a minimum-cost fair assignment to the centers of $C$. The fair assignment problem is one of the features that makes fair clustering harder than other constrained clustering problems. While often the optimal assignment can be found with the help of a network flow, like in the case of capacitated clustering or lower-bounded clustering, there was no previously known algorithms to compute an optimal or approximate fair assignment without violating the constraints. Moreover, it was observed by Bera et al.~\cite{bera2019fair} that the assignment problem for \probFair is NP-hard, so there is no hope to have a polynomial time assignment algorithm.

We show an assignment algorithm with running time $(k\Gamma)^{O(k \Gamma)} n^{O(1)}$, the formal statement and the proof is given in Theorem~\ref{theorem:assignment}. The general idea is to reduce to a linear programming instance. The unknown optimal assignment can be naturally expressed in terms of linear inequalities by introducing a variable $f_{ij}$ for the $i$-th point and the $j$-th center, denoting what fraction of the point is assigned to each center, and constraints $f_{ij} \ge 0$ for all $i$, $j$, and $\sum_{j = 1}^k f_{ij} = 1$. Clearly this generalizes a discrete assignment, which corresponds to exactly one of $\{f_{ij}\}_{j = 1}^k$ being equal to $1$, for each $i \in [n]$. Observe that the condition that the assignment is fair can also be expressed as linear constraints: for each $j \in [k]$, summing all $f_{ij}$ from the points belonging to a particular group provides the number of the points from this group assigned to the $j$-th center. And the fairness conditions just bound the ratio of points from a particular group to the size of the cluster.

However, the issue is that in general the optimal fractional solution to this linear programming problem is not integral, and the integrality gap could be arbitrarily large. Thus, an optimal fractional solution does not yield the desired assignment, and this is not surprising since the fair assignment problem is NP-hard.
One possible solution would be to restrict the variables to be integral, solving an integer linear program (ILP) instead. But we cannot afford to make all variables integral, as the number of variables can be sufficiently large. Even if we aim to solve the assignment problem on the coreset, the number of points is polylogarithmic in $n$, and solving the ILP would take at least $(\log n)^{\Omega(\log n)}$ time, which is not FPT.
Instead, we introduce the integral variable $g_{tj}$ denoting how many points from the $t$-th point equivalence class gets to the $j$-th center, while leaving the $\{f_{ij}\}$ variables to be fractional. Thus, we obtain an instance of mixed-integer linear programming (MILP) with $k\Gamma$ integer variables and $nk$ fractional variables. By using the celebrated result of Lenstra~\cite{Lenstra1983} with subsequent improvements by Kannan~\cite{Kannan1987}, and Frank and Tardos~\cite{Frank1987}, we obtain an optimal solution to the MILP instance in time $(k\Gamma)^{O(k \Gamma)} n^{O(1)}$.

Now we explain that after constraining the $\{g_{tj}\}$ variables to be integral, we can assume that all the other variables $\{f_{ij}\}$ are integral too, thus we actually obtain an optimal discrete assignment of the same cost.
Consider a particular point equivalence class $P^t$, and the integral values $\{g_{tj}\}_{j = 1}^k$ from the optimal solution to the MILP. When these values are fixed, the problem boils down to finding an assignment from $P^t$ to $C$ such that exactly $g_{tj}$ points are assigned to the $j$-th center. This problem can be solved by a minimum-cost maximum flow in the network where each point has supply one, the $j$-th center has demand of $g_{tj}$, and the costs are the distances between the respective points. Moreover, the values $\{f_{ij}\}$ from the MILP correspond exactly to the flow values on the respective edges. Since there is an optimal integral flow in this network, this flow is also an optimal integral solution for $\{f_{ij}\}$.

The downside of Theorem~\ref{theorem:assignment} is that the dependency on $n$ is a high degree polynomial, roughly $n^5$, and we cannot use it directly to obtain a near-linear time algorithm. So we also show how to obtain a fair assignment that has the cost of at most $(1 + \epsilon)$ times the optimal fair assignment cost in near-linear time with the help of the coreset. For this, we compute a universal coreset from the input points, and then compute the optimal fair assignment from the coreset to the centers $C$. Since the coreset preserves the cost of an optimal assignment w.r.t. any constraint matrix $M$, and fair assignments are precisely those that satisfy a certain set of constraint matrices, we obtain immediately that the cost of the optimal fair assignment on the coreset is within a factor of $(1 + \epsilon)$ from the optimal cost of the original instance. However, this does not yet give us a fair assignment of the original points to the centers. 
To construct this assignment, we take the values $\{g_{tj}\}$ computed by the assignment algorithm on the coreset, and then, for each point equivalence class $P^t$, we solve the simple assignment problem from $P^t$ to $C$ that assigns exactly $g_{tj}$ points to the $j$-th center. As mentioned above, this can be done by a network flow algorithm. Since the network is bipartite and one of the parts is small, only of size $k$, this problem can be solved in near-linear time by a specialized flow algorithm given by~\cite{Ahuja94improvedalgorithms}. Finally, the resulting assignment on the original points has cost at most $(1 + \epsilon)$ times the cost of the optimal assignment on the coreset. This holds since the coreset construction preserves the cost with respect to the set of centers $C$ and any constraint matrix $M$, in particular the one that is constructed from the values $\{g_{tj}\}$. This argument is presented in full detail in Lemma~\ref{lemma:approx_assignment}. Combining the above steps, we obtain a near-linear time algorithm via coreset for the assignment problem on $P$ given a set of centers. 



\subsubsection{$(1 + \epsilon)$-Approximation in $\mathbb{R}^d$}

Apart from our coreset construction and our assignment algorithm, the key ingredient to obtain a $(1 + \epsilon)$-approximation algorithm is the general constrained clustering algorithm of Bhattacharya et al. \cite{Bhattacharya2018}. Their algorithm outputs a list of $2^{\tilde{O}(k/\epsilon^{O(1)})}$ candidate sets of $k$ centers, such that for any clustering of the points there exists a set of centers $C$ in this list that is only slightly worse than the optimal set of centers for this clustering. Naturally, this holds for any fair clustering too, thus there exists a set of centers $C$ in the list such that $\faircost(P, C) \le (1 + \epsilon) \faircost(P)$. Together with our exact assignment algorithm this 
provides a $(1 + \epsilon)$-approximation algorithm with the running time of $2^{\tilde{O}(k / \epsilon^{O(1)})} (k \Gamma)^{O(k\Gamma)} n^{O(1)} d$: compute the list of candidate sets of centers, then find an optimal assignment to each set, and return the one with the smallest cost. Replacing the exact assignment algorithm with the approximate one that employs coreset, we obtain a $2^{\tilde{O}(k / \epsilon^{O(1)})} (k \Gamma)^{O(k\Gamma)} n d (\log n)^2$-time algorithm. Finally, if for each candidate set of centers we solve the assignment problem on the coreset, then choose the best set of centers, and then solve (approximately) the assignment problem on the original points and this particular set of centers, we reduce the running time to $2^{\tilde{O}(k / \epsilon^{O(1)})} (k \Gamma)^{O(k\Gamma)} n d \log n$.

\subsubsection{$(3 + \epsilon)$-Approximation in General Metric}

With the help of our universal coreset, the strategy to obtain $(3 + \epsilon)$-approximation for \probFairMedian is 
essentially identical to that used in \cite{Cohen-AddadG0LL19} and \cite{Cohen-AddadL19}: from each of the clusters in an optimal solution on the coreset we guess the closest point to the center, called a \emph{leader} of that cluster. We also guess a suitably discretized distance from each leader to the center of the corresponding cluster. Finally, selecting any center that has roughly the guessed distance to the leader provides us with a $(3 + \epsilon)$-approximation. That holds since if we assign each point to the guessed center of its leader, the distance that this point contributes will be at most its distance in the optimal solution, plus the distance from the leader to the optimal center, plus the distance from the leader to the guessed center. Since the leader is the closest point in the cluster to the optimal center, this is at most $(3 + \epsilon)$ times the distance that the point contributes in the optimal solution. Note that this assignment is fair since the composition of the clusters is exactly the same as in the optimal solution.

We cannot directly find this assignment, but we can compute the lowest-cost fair assignment to this set of centers that can only be better.  Thus, we solve the assignment problem on the coreset for each guess of the centers, choose the best set of centers, and then compute an approximately optimal fair assignment from the original points to these centers. By the property of the universal coreset, going to the coreset and back changes the cost of the optimal solution only slightly, so with the appropriate selection of error parameters the obtained assignment is a $(3 + \epsilon)$-approximate solution. There are $|W|^k$ possible choices for leaders and $(\log n / \epsilon)^{O(k)}$ for the respective distances, and we solve the assignment problem on our coreset for each such guess. Thus, we need a running time of $(k \Gamma)^{O(k \Gamma)} / \epsilon^{O(k)} \cdot n\log n$ to compute the best set of centers and retrieve a corresponding assignment of the original points.

One technical difficulty is that for the distance guessing step we require that the \emph{aspect ratio} of the instance, that is the ratio of the maximum distance between the points in the instance to the minimum, is polynomially bounded. Only in this case we can consider just $(\log n / \epsilon)^{O(k)}$ choices for the distances. The technique to reduce the aspect ratio of the instance is fairly standard, it was also employed in~\cite{Cohen-AddadL19} for the case of capacitated clustering. It requires a bound on the cost of an optimal solution, and one notable difference is that for \probFair there were no previously known true approximation algorithm. Thus we also devise a simple linear-time $O(n)$-approximation, based on the classical min-max algorithm for $k$-center.

\section{Coreset Construction  for $k$-median in the Disjoint Group Case}
\label{sec:coresetconstructionkmediandisjoint}

In this section, we prove the following theorem. 
\begin{theorem}\label{th:coresetthmdisjoint}
 Given a set $P$ of $n$ points in a metric space along with a color function $c: P\rightarrow \{1,\ldots,\ell\}$, there is an $O(n k)$ time randomized algorithm that w.p. at least $1-1/n$, computes a universal coreset for $k$-median clustering of size $O(\ell (k\log n)^2/{\epsilon}^3)$.  
\end{theorem}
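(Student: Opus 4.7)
The plan is to analyze the sampling algorithm already described in Section~\ref{sec:summary}. We run Indyk's $\Oh(nk)$-time bicriteria routine to obtain a set $C^\star$ of $\Oh(k)$ centers with cost $\Pi \le \nu\cdot \text{OPT}$; set $\mu = \Pi/(\nu n)$; and form, around every $c_i^\star$, the exponential rings $B'_{i,j}$ with $0 \le j \le N = \lceil \log (\nu n) \rceil$. For every color $t$ and ring $B'_{i,j}$, we sample $s = \Theta(k \log n/\epsilon^3)$ points uniformly without replacement from $P'_{i,j,t}$ (or keep all of them if $|P'_{i,j,t}| \le s$) and weight each chosen point by $|P'_{i,j,t}|/s$. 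The size bound $\Oh(\ell (k \log n)^2/\epsilon^3)$ follows since there are $\Oh(k \log n)$ rings and $\ell$ colors, and the running time is dominated by Indyk's subroutine, i.e.\ $\Oh(nk)$.

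For correctness I would fix an arbitrary $C \subseteq F$ with $|C|=k$ and an arbitrary coloring constraint $M$. Because the groups are disjoint, both $\cost(P,M,C)$ and $\wcost(W,M,C)$ decouple additively across colors: the color-$t$ points must be transported to the $k$ centers so that center $c_i$ receives exactly $M_{it}$ units, and the minimum cost of this transportation problem is a function $f_t(Y)$ of the supply vector $Y$ on the color-$t$ points. It suffices to prove, for every color $t$, $|f_t(X_t) - f_t(\mathbf{1})| \le \epsilon\cdot f_t(\mathbf{1})$ with failure probability at most $n^{-\Omega(k)}$, where $X_t$ is the supply vector produced by the coreset within color $t$. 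The crucial decoupling observation is that the analysis for color $t$ depends only on the $t$-th column of $M$; this is what lets us union-bound over the $n^{\Oh(k)}$ choices of column vectors (rather than the full matrix $M$) and the $n^k$ choices of $C$, at the price of a linear-in-$\ell$ coreset size.

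For a single color $t$ I would run a ring-by-ring argument following the Cohen-Addad--Li framework of~\cite{Cohen-AddadL19}, reinterpreted as a transportation problem. Fix a ring $B'_{i,j}$ with $m = |P'_{i,j,t}|$ and $\mu' = 2^j \mu$. Pass to the Bernoulli sampling scheme where each point is kept independently w.p.\ $s/m$ and weighted by $m/s$, yielding a random supply vector $X$ with $\mathbb{E}[X] = \mathbf{1}$; conditioning on exactly $s$ survivors (which happens with constant probability) recovers the original sampling-without-replacement distribution. The target bound per ring is $|f_t(X) - f_t(\mathbf{1})| \le \epsilon m \mu'$ with probability $1 - n^{-\Omega(k)}$; summing over rings, the total error $\sum_{i,j} \epsilon\, m_{i,j,t}\cdot 2^j\mu$ gets charged against $\Pi$, which is $\Oh(\text{OPT})$, yielding the claimed $(1\pm\Oh(\epsilon))$-approximation after rescaling $\epsilon$. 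The two sub-steps are: \emph{concentration,} where swapping one sampled point for one unsampled point inside the ring changes $f_t$ by at most $(m/s)\cdot 2\mu'$ (all rerouting stays within a ball of diameter $2\mu'$), so McDiarmid's inequality yields $|f_t(X) - \mathbb{E}[f_t(X)]| \le \epsilon m \mu'/2$ with failure probability $\exp(-\Omega(\epsilon^2 s)) = n^{-\Omega(k)}$; and \emph{mean versus value at mean,} where $f_t(\mathbf{1}) \le \mathbb{E}[f_t(X)]$ follows by averaging per-realization optimal flows weighted by their probabilities, and the reverse inequality $\mathbb{E}[f_t(X)] \le f_t(\mathbf{1}) + \epsilon m \mu'/2$ is obtained by explicitly modifying an optimal flow for supplies $\mathbf{1}$ into a feasible flow for $X$ by redistributing each unsampled point's mass onto sampled points of the same ring, paying at most $2\mu'$ per unit of redistributed flow.

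The main obstacle is the explicit flow-surgery in sub-step (ii): the new flow must respect the hard column equalities in $M$, so the redistribution cannot over- or under-supply any center, while keeping the additive cost increase under $\epsilon m \mu'/2$ with probability $1-n^{-\Omega(k)}$. This needs a careful intra-ring pairing of unsampled to sampled points together with a Chernoff tail bound on the number of points that need reassignment (concentrated around $m-s$), exactly the place where the transportation-with-equalities formulation differs from the pure capacity-upper-bound setting of~\cite{Cohen-AddadL19}. Once this flow-modification lemma is in place, combining (i) and (ii), rescaling $\epsilon$ by a constant, and taking the union bound over $n^k$ choices of $C$, $n^{\Oh(k)}$ column vectors, $\Oh(k \log n)$ rings and $\ell$ colors gives overall success probability at least $1 - 1/n$, completing the proof.
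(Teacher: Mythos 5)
Your overall plan --- decoupling across colors via disjointness, reducing the union bound to column constraint vectors, passing to independent Bernoulli sampling with $\mathbb{E}[X]=\mathbb{1}$, formulating the constrained assignment as a min-cost flow $f_t(Y)$, and splitting the error into a concentration term plus a $|\mathbb{E}[f_t(X)]-f_t(\mathbb{1})|$ term handled by explicit flow surgery --- is exactly the route the paper takes, and your identification of the flow-modification step as the crux matches the paper's Lemma~\ref{lem:fx-bounded-by-fex}.

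There is, however, one step that fails as written: the concentration bound. With per-coordinate bounded difference $c=(m/s)\cdot 2\mu'$ and $m$ coordinates, plain McDiarmid gives
\[
\Pr\bigl[\,|f_t(X)-\mathbb{E}[f_t(X)]|\ge \epsilon m\mu'/2\,\bigr]\;\le\; 2\exp\Bigl(-\Omega\Bigl(\tfrac{(\epsilon m\mu')^2}{m\,(m\mu'/s)^2}\Bigr)\Bigr)\;=\;2\exp\bigl(-\Omega(\epsilon^2 s^2/m)\bigr),
\]
not $\exp(-\Omega(\epsilon^2 s))$. When $m\gg s$ (e.g.\ a ring containing $\Theta(n)$ points of color $t$ while $s=\Theta(k\log n/\epsilon^3)$), this is nowhere near the required $n^{-\Omega(k)}$; to get it from bounded differences alone you would need $s=\Omega(\sqrt{mk\log n}/\epsilon)$. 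The paper instead invokes a Chernoff-type inequality for $L$-Lipschitz functions of independent $\{0,b\}$-valued variables with success probability $p$ (quoted from \cite{Cohen-AddadL19}), which exploits the sparsity $p=s/m$ to give deviation $\epsilon\, p n b L=\epsilon m\mu'$ with failure probability $2\exp(-\epsilon^2 pn/3)=2\exp(-\Omega(\epsilon^2 s))$. You need this sharper tool, not the vanilla bounded-differences inequality. A second, minor correction: the event that the Bernoulli scheme produces exactly $s$ survivors has probability $\Omega(\sqrt{s/m})=\Omega(1/\sqrt{n})$ in general, not constant; this still suffices to transfer the $n^{-\Omega(k)}$ bounds to the without-replacement scheme, but the constant-probability claim should be fixed.
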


To prove this theorem, we analyze the coreset construction algorithm in the disjoint group case described in Section \ref{sec:summary}. For convenience of the reader, we again state our algorithm here. 

Given the set of points $P$, first we apply the algorithm of Indyk \cite{indyk1999sublinear} for computing a vanilla $k$-median clustering of $P$. This is a bicriteria approximation algorithm that uses $O(k)$ centers and runs in $O(nk)$ time. Let $C^*$ be the set of computed centers, $\nu$ be the constant approximation factor and $\Pi$ be the cost of the clustering. Also, let $\mu=\Pi/(\nu n)$ be a lower bound on the average cost of the points in any optimal $k$-median clustering. Note that for any point $p$, $d(p,C^*) \le \Pi=\nu n\cdot \mu$. 

For each center $c_i^*\in C^*$, let $P_i^*\subseteq P$ be the corresponding cluster of points assigned to $c_i^*$. We consider the ball $B_{i,j}$ centered at $c_i^*$ and having radius $2^j\mu$ for $0\le j\le N$, where $N=\lceil \log (\nu n)\rceil$. We note that any point at a distance $2^N\mu\ge \nu n\cdot \mu$ from $c_i^*$ is in $B_{i,N}$, and thus all the points in $P_i^*$ are also in $B_{i,N}$. Let $B'_{i,0}=B_{i,0}$ and $B'_{i,j}=B_{i,j}\setminus B_{i,j-1}$ for $1\le j\le N$. We refer to each such $B'_{i,j}$ as a ring for $1\le i\le k, 0\le j\le N$. For each $0\le j\le N$ and color $1\le t\le \ell$, let $P'_{i,j,t}$ be the set of points in $B'_{i,j}$ of color $t$. Let $s=\Theta(k\log n/\epsilon^3)$ for a sufficiently large constant hidden in $\Theta(.)$. 

For each center $c_i^*\in C^*$, we perform the following steps. 

\paragraph{Random Sampling.} For each color $1\le t\le \ell$ and ring index $0\le j\le N$, do the following. If $|P'_{i,j,t}|\le s$, add all the points of $P'_{i,j,t}$ to $W_{i,j}$ and set the weight of each such point to 1. Otherwise, select $s$ points from $P'_{i,j,t}$ independently and randomly (without replacement) and add them to $W_{i,j}$. Set the weight of each such point to  $|P'_{i,j,t}|/s$. 

The set $W=\cup_{i,j} W_{i,j}$ is the desired universal coreset.

\subsection{The Analysis}
One way to prove that $W$ is a universal coreset is to show that w.h.p. for any fixed set of centers $C$ of size $k$ and any coloring constraint $M$, $$(1-\epsilon)\cdot  \text{cost}(P,M,C)\le \text{wcost}(W,M,C)\le (1+\epsilon)\cdot  \text{cost}(P,M,C).$$ Then, by taking union bound over all $C$ and $M$, we obtain the desired bound. However, as we potentially have $n^{\Omega{(k)}}$ choices for $C$ and $n^{\Omega{(\ell k)}}$ choices for $M$, we need this bound for fixed $C$ and $M$ w.p. $1-1/n^{\Omega(\ell k)}$. It is not clear how to prove such a bound, as we pick only $O(k\log n/{\epsilon}^3)$ size sample from each ring corresponding to each color. Instead, we prove that for any fixed $C$, and for all $M$, w.p. $1-1/n^{\Omega{( k)}}$, the above bound holds. In particular, we will show that for each ring $B'_{i,j}$ with points $P_{i,j}$ the error is bounded by $\epsilon |P_{i,j}| \cdot 2^j\mu$. 

\begin{lemma}\label{lem:boundallcolorallgenM}
For any fixed set $C$ of $k$ centers and for all $k\times \ell$ matrices $M$, w.p. at least $1-1/n^{k+2}$, $|\emph{cost}(P,M,C)-\emph{wcost}(W,M,C)|\le \sum_{(i,j)} \epsilon |P_{i,j}| \cdot 2^j\mu$. 
\end{lemma}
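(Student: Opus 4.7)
The plan is to exploit the disjointness of the color classes to decompose the proof into $\ell$ independent single-color subproblems, and then to adapt the fixed-constraint flow analysis of Cohen-Addad and Li (sketched in Section~\ref{sec:summary}) inside each color class. Fix the set of centers $C$, and write $M_{\cdot,t}$ for the $t$-th column of $M$ and $W_t$ for the weighted samples of color $t$ in $W$. Because the groups $P_1,\ldots,P_\ell$ partition $P$ and because the coreset construction samples independently per color, any assignment $\psi$ witnessing either $\text{cost}(P,M,C)$ or $\text{wcost}(W,M,C)$ splits along color: weight from a color-$t$ point can only contribute to $M_{\cdot,t}$, so the only coupling across colors is the prescribed column sums. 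Hence $\text{cost}(P,M,C)=\sum_{t=1}^{\ell}\text{cost}(P_t,M_{\cdot,t},C)$ and $\text{wcost}(W,M,C)=\sum_{t=1}^{\ell}\text{wcost}(W_t,M_{\cdot,t},C)$, and it suffices to show, for each color $t$ separately, that $|\text{cost}(P_t,M_{\cdot,t},C)-\text{wcost}(W_t,M_{\cdot,t},C)|\le \sum_{(i,j)}\epsilon|P'_{i,j,t}|\cdot 2^j\mu$ holds simultaneously for all columns $M_{\cdot,t}$ with failure probability at most $1/(\ell\, n^{k+2})$. Summing the per-color errors and using $P_{i,j}=\bigsqcup_t P'_{i,j,t}$ then yields the lemma.

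For a fixed color $t$, the column $M_{\cdot,t}$ is a $k$-tuple of non-negative integers summing to $|P_t|\le n$, so there are at most $(n+1)^k$ such columns to handle by union bound. For any specific $M_{\cdot,t}$, the minimum-cost assignment from the color-$t$ points (or from their reweighted samples) to $C$ is exactly a bipartite min-cost transportation problem with supplies on the points and demands $M_{1,t},\ldots,M_{k,t}$ on the centers. Writing $f(Y)$ for its optimal value as a function of the color-$t$ supply vector $Y$, one has $\text{cost}(P_t,M_{\cdot,t},C)=f(\mathbb{1})$ and $\text{wcost}(W_t,M_{\cdot,t},C)=f(X)$ for the random reweighted supply vector $X$ produced by sampling. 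Passing from without-replacement to the independent-$\text{Bernoulli}(s/m)$ scheme as in Section~\ref{sec:summary} (the two agree up to conditioning on the exact-$s$ event, which has constant probability per non-trivial ring) gives $\mathbb{E}[X]=\mathbb{1}$ with coordinates independent across rings, and reduces everything to controlling $|f(X)-f(\mathbb{1})|$.

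I would bound this ring by ring. Fix a ring $(i,j)$ and condition on the samples in every other ring of color $t$; the remaining randomness consists of $|P'_{i,j,t}|$ independent indicators, and every color-$t$ point in the ring lies in an interval of length $2\cdot 2^j\mu$ from any fixed center of $C$ by the triangle inequality on $B_{i,j}$. Following Cohen-Addad and Li, this yields (i) $|f(X)-\mathbb{E}[f(X)]|\le 0.5\,\epsilon|P'_{i,j,t}|\cdot 2^j\mu$ with probability at least $1-1/n^{\Omega(k)}$, via a Chernoff/Bernstein bound applied to the Lipschitz function $f$ restricted to the single-ring coordinates; and (ii) $|\mathbb{E}[f(X)]-f(\mathbb{E}[X])|\le 0.5\,\epsilon|P'_{i,j,t}|\cdot 2^j\mu$, via the flow-modification argument: a convexity/averaging inequality gives $f(\mathbb{1})=f(\mathbb{E}[X])\le \mathbb{E}[f(X)]$, while for the reverse direction one starts from an optimal integral flow at supply $\mathbb{1}$ and produces, with probability at least $1-1/n^{\Omega(k)}$, a feasible flow in the perturbed network of cost at most $f(\mathbb{1})+0.49\,\epsilon|P'_{i,j,t}|\cdot 2^j\mu$, with the residual low-probability event absorbed by the trivial deterministic bound $f(X)\le f(\mathbb{1})+O(n\cdot|P'_{i,j,t}|\cdot 2^j\mu)$.

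A final union bound over the $O(k\log n)$ rings per color, the at most $(n+1)^k$ columns $M_{\cdot,t}$, and the $\ell\le n$ colors---calibrated by choosing the constant in $s=\Theta(k\log n/\epsilon^3)$ large enough that each per-ring, per-column failure is at most $1/n^{3k+5}$---yields the $1/n^{k+2}$ total failure stated in the lemma. The main obstacle is step (ii): one must convert an arbitrary deterministic optimal assignment into a feasible assignment for the random network while losing only a ring-diameter-scaled additive term, and achieve this with a tail bound strong enough to survive the $(n+1)^k$ choices of column constraint; this is where the Lipschitz structure of min-cost transportation and the independence of the sampled indicators across rings are both indispensable, and it is the technical heart of the argument.
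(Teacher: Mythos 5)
Your proposal is correct and follows essentially the same route as the paper: an exact per-color decomposition of $\text{cost}$ and $\text{wcost}$ (the paper phrases this as a one-color-at-a-time hybrid over $W^1,\ldots,W^\ell$, but with disjoint colors this is the same additive split), followed by the Cohen-Addad--Li flow analysis per column constraint and per ring, with union bounds over the at most $n^{O(k)}$ columns, the $O(k\log n)$ rings, and the $\ell\le n$ colors. One small inaccuracy: the event that the independent Bernoulli scheme selects exactly $s$ points has probability $\Omega(1/\sqrt{n})$ rather than constant, but this only costs a $\sqrt{n}$ factor in the conditional failure probability and is absorbed by the exponents.
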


Now, consider all the rings $B'_{i,j}$ with $j=0$. Then, $$\sum_{(i,j):j=0} \epsilon |P_{i,j}| \cdot 2^j\mu\le \epsilon n\cdot \mu\le \epsilon \cdot \text{OPT}_v\le \epsilon \cdot \text{cost}(P,M,C).$$

Here, $\text{OPT}_v$ is the optimal cost of vanilla $k$-median clustering. The last inequality follows, as the optimal cost of vanilla clustering is at most the cost of any constrained clustering.  Now, for any ring $B'_{i,j}$ with $j\ge 1$ and any point $p$ in the ring, $d(p,c_i^*)\ge 2^{j-1}\mu$. Thus, $$\sum_{(i,j):j\ge 1} \epsilon |P_{i,j}| \cdot 2^j\mu\le \epsilon \sum_{p\in P} 2 \cdot d(p,C^*)\le 2\epsilon\cdot  \text{OPT}_v\le  2\epsilon\cdot \text{cost}(P,M,C).$$

By taking union bound over all $C$ and scaling $\epsilon$ down by a factor of $3$, we obtain the desired result. 

\begin{lemma}\label{lem:boundallcolorallCallgenM}
For every set $C$ of $k$ centers and every $k\times \ell$ matrices $M$, w.p. at least $1-1/n$, $|\emph{cost}(P,M,C)-\emph{wcost}(W,M,C)|\le  \epsilon \cdot \emph{cost}(P,M,C)$. 
\end{lemma}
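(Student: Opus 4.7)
The proof is essentially a direct consequence of Lemma~\ref{lem:boundallcolorallgenM} combined with a union bound over center sets and a bookkeeping argument that charges the per-ring error to the cost of the bicriteria solution. So my plan is to fix $C$ first, apply Lemma~\ref{lem:boundallcolorallgenM} to obtain the ring-wise error bound $\sum_{(i,j)} \epsilon |P_{i,j}|\cdot 2^{j}\mu$ that holds simultaneously for all constraint matrices $M$, and then convert that bound into an $\epsilon \cdot \text{cost}(P,M,C)$ bound.

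The charging step is where the geometry of the exponential rings is exploited, and I would split the sum into the innermost rings ($j=0$) and the outer rings ($j\geq 1$). For $j=0$, the contribution telescopes trivially: $\sum_{(i,0)} \epsilon |P_{i,0}|\cdot \mu \le \epsilon n\mu = \epsilon\cdot \Pi/\nu \le \epsilon\cdot \text{OPT}_v$, using the definition $\mu = \Pi/(\nu n)$ and the fact that $\Pi$, being the cost of an $O(k)$-center bicriteria solution, is an upper bound on the optimum vanilla $k$-median cost up to the constant $\nu$ (or, more directly, $\mu \le \text{OPT}_v / n$). For $j\ge 1$, every point $p \in B'_{i,j}$ that was assigned to $c_i^*$ by the bicriteria solution satisfies $d(p,c_i^*) \ge 2^{j-1}\mu$, so $\epsilon |P_{i,j}|\cdot 2^{j}\mu \le 2\epsilon \sum_{p \in P_{i,j}} d(p,c_i^*)$; summing across all rings gives $2\epsilon \sum_{p\in P} d(p,C^*) = 2\epsilon \cdot \Pi \le 2\epsilon\cdot \text{OPT}_v$ after absorbing the constant $\nu$ into the $\epsilon$ rescaling. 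Since any constrained clustering cost (including $\text{cost}(P,M,C)$ for any feasible $M$) dominates $\text{OPT}_v$, this yields a total error of at most $3\epsilon\cdot \text{cost}(P,M,C)$ for the fixed $C$ and every $M$ simultaneously.

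The next step is the union bound over the choice of $C$. Since $|F| \le n$ in the metric setting, there are at most $\binom{n}{k} \le n^k$ candidate center sets; combining this with the failure probability $1/n^{k+2}$ of Lemma~\ref{lem:boundallcolorallgenM} gives total failure probability at most $1/n^2 \le 1/n$. Finally I rescale $\epsilon$ by a factor of $3$ (so the previous lemma is invoked with $\epsilon/3$ in place of $\epsilon$, which only changes the sample size $s$ by a constant factor and does not affect the asymptotic coreset size claimed in Theorem~\ref{th:coresetthmdisjoint}) to obtain the stated $\epsilon\cdot \text{cost}(P,M,C)$ bound.

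Since Lemma~\ref{lem:boundallcolorallgenM} is assumed, there is no real obstacle in this step; the only mild subtlety is making sure the charging argument for $j=0$ uses $\mu \le \text{OPT}_v/n$ (which follows from $\Pi \le \nu\cdot n\cdot \text{OPT}_v / n$) cleanly, and that the constant $\nu$ from the bicriteria approximation is absorbed into the $\epsilon$ rescaling. The entire argument is short and its writeup should not exceed a few lines once Lemma~\ref{lem:boundallcolorallgenM} is in hand.
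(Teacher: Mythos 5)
Your proposal is correct and matches the paper's own proof essentially line for line: the same split of the ring-error sum into $j=0$ and $j\ge 1$ parts charged to $\text{OPT}_v$ via $\mu\le \text{OPT}_v/n$ and $d(p,c_i^*)\ge 2^{j-1}\mu$, followed by the same union bound over the at most $n^k$ center sets and the rescaling of $\epsilon$ by $3$. Your extra care with the constant $\nu$ is a minor refinement the paper glosses over, but it does not change the argument.
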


This completes the proof of Theorem \ref{th:coresetthmdisjoint}. 
Now, we are left with the proof of Lemma \ref{lem:boundallcolorallgenM}. 

\subsection{Proof of Lemma \ref{lem:boundallcolorallgenM}}

Let $P_{\tau}$ be the points in $P$ of color $\tau$. Also, let $W_{\tau}$ be the chosen samples of color $\tau$. For $1\le t\le \ell-1$, let $W^t=(\sum_{\tau=1}^{t} W_{\tau})\cup (\cup_{\tau=t+1}^{\ell} P_{\tau})$. Also, let $W^{\ell}=\sum_{\tau=1}^{\ell} W_{\tau}$ be the coreset points of all colors. Recall that for any ring $B'_{i,j}$, $P'_{i,j,\tau}$ is the points of color $\tau$ in the ring. Also, $P_{i,j}=\cup_{\tau=1}^{\ell} P'_{i,j,\tau}$. 

Note that in the above, $W^t$ contains the sampled points for color $1$ to $t$ and original points of color $t+1$ to $\ell$. We will prove the following lemma that gives a bound when the coreset contains sampled points of a fixed color $t$ and original points of the other colors.  

\begin{lemma}\label{lem:genboundonecolorallgenM}
Consider any color $1\le t\le \ell$. For any fixed set $C$ of $k$ centers and for all $k\times \ell$ matrices $M$, w.p. at least $1-1/n^{k+4}$, $|\emph{cost}(P,M,C)-\emph{wcost}(W_t\cup (P\setminus P_t),M,C)|\le \sum_{(i,j)} \epsilon |P'_{i,j,t}| \cdot 2^j\mu$. 
\end{lemma}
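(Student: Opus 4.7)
The plan is to reduce the analysis to a per-ring concentration argument in the spirit of Cohen-Addad and Li, and then take a union bound over just the constraints pertinent to color $t$.

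\textbf{Flow formulation for a fixed $M$.} Fix $t$, $C$, and a matrix $M$. Observe that both $\text{cost}(P, M, C)$ and $\text{wcost}(W_t \cup (P \setminus P_t), M, C)$ decompose as sums over colors, and the contribution of any color $j \neq t$ depends only on column $j$ of $M$ and on the \emph{original} points of color $j$, so this contribution is identical in the two quantities. Hence the difference depends only on the column $t$ of $M$. For any non-negative weight vector $Y$ indexed by $P_t$, define $f(Y)$ as the minimum cost of assigning the supplies $Y$ to the centers $C$ subject to the demand $M_{it}$ at each center $c_i$; this is a transportation problem, modeled as a min-cost flow on a bipartite graph. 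Then $\text{cost}(P, M, C) - \text{wcost}(W_t \cup (P \setminus P_t), M, C) = f(\mathbb{1}) - f(X)$, where $X$ is the random weight vector produced by sampling from color $t$ in each ring.

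\textbf{Alternative independent sampling.} Consider the alternative scheme: for each ring $B'_{i,j}$ with $m_{i,j} := |P'_{i,j,t}| > s$, pick each point independently with probability $s/m_{i,j}$ and set its weight to $m_{i,j}/s$, yielding a random vector $\tilde{X}$ with $\mathbb{E}[\tilde{X}] = \mathbb{1}$. Conditioned on every ring producing exactly $s$ samples, $\tilde{X}$ has the same distribution as $X$. Since the expected number of samples per ring equals $s$, this conditioning event has probability bounded below by a constant in each ring by a standard anti-concentration bound for binomials, so any tail bound of the form $1 - 1/n^{\Omega(k)}$ for $\tilde{X}$ transfers to $X$ at the cost of a slightly smaller constant in the exponent. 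From now on I would work with $\tilde{X}$.

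\textbf{Concentration and bias per ring.} Write $|f(\tilde{X}) - f(\mathbb{1})| \le |f(\tilde{X}) - \mathbb{E}[f(\tilde{X})]| + |\mathbb{E}[f(\tilde{X})] - f(\mathbb{1})|$ and bound the two terms separately with target $\tfrac{1}{2}\sum_{(i,j)} \epsilon |P'_{i,j,t}| \cdot 2^{j}\mu$ each. The function $f$ is $O(2^{j}\mu)$-Lipschitz with respect to perturbing the weight of any point in ring $B'_{i,j}$, because rerouting one unit of supply from that point costs at most the ring diameter. Combined with the independence of samples across (and within) rings, a Chernoff / bounded-differences inequality gives
\[
\Pr\Bigl[\bigl|f(\tilde{X}) - \mathbb{E}[f(\tilde{X})]\bigr| > \tfrac{1}{2}\textstyle\sum_{(i,j)} \epsilon |P'_{i,j,t}| \cdot 2^{j}\mu\Bigr] \le 1/n^{\Omega(k)},
\]
once $s = \Theta(k \log n / \epsilon^3)$ is chosen with a large enough hidden constant. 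For the bias term, one direction $f(\mathbb{1}) \le \mathbb{E}[f(\tilde{X})]$ follows by averaging any collection of optimal flows for $f(\tilde{X})$, scaled by their probabilities, to get a feasible flow for $f(\mathbb{1})$. The reverse direction is the crux: starting from an optimal flow for $f(\mathbb{1})$ and redirecting the flow out of the non-sampled points of each ring onto the sampled points of the same ring, one obtains a feasible flow for $f(\tilde{X})$ whose cost exceeds $f(\mathbb{1})$ by at most $\tfrac{1}{2}\sum_{(i,j)} \epsilon |P'_{i,j,t}| \cdot 2^{j}\mu$ with probability at least $1 - 1/n^{\Omega(k)}$; combining this with the trivial deterministic bound $f(\tilde{X}) \le f(\mathbb{1}) + n \cdot \max_{i,j} 2^{j}\mu$ and taking expectations produces the desired bias bound.

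\textbf{Union bound over $M$ and main obstacle.} By the reduction above, the error depends only on the column $t$ of $M$, whose entries are non-negative integers summing to $|P_t| \le n$, so the number of distinct relevant column vectors is at most $(n+1)^{k-1} \le n^{k+1}$. A union bound over these, together with the per-$M$ failure probability $1/n^{\Omega(k)}$, yields the claimed bound with probability at least $1 - 1/n^{k+4}$ for all $M$ simultaneously. The main obstacle is the bias bound $|\mathbb{E}[f(\tilde{X})] - f(\mathbb{1})|$: one must build, with very high probability, a feasible flow for $f(\tilde{X})$ from an optimal flow for $f(\mathbb{1})$ by rerouting each ring's flow within the ring, while keeping the added transportation cost proportional to the ring diameter times the number of points. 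This is the analogue in our transportation setting of the flow-modification argument of Cohen-Addad and Li, and it must be adapted to respect the per-center integer demands from column $t$ of $M$ rather than capacity constraints.
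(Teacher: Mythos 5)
Your overall architecture matches the paper's: the reduction showing that only column $t$ of $M$ matters, the min-cost-flow formulation of the assignment, the alternative independent Bernoulli sampling conditioned on producing exactly $s$ samples, the split into a concentration term and a bias term with the rerouting argument for the bias, and the union bound over the at most $n^{O(k)}$ column vectors. Two points need attention. First, a minor one: for $f(\tilde{X})$ to be well-defined under the Bernoulli scheme the network cannot be purely bipartite between points and centers, since the total supply $\sum_p \tilde{X}[p]$ fluctuates and will generically not equal $\sum_i M_{it}$; the paper adds an auxiliary vertex at the ring center $c'$ that absorbs the surplus/deficit, and this vertex is also what makes $f$ exactly $\mu'$-Lipschitz.

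The genuine gap is in how you combine the rings. You propose a single global concentration inequality over all rings at once, citing ``a Chernoff / bounded-differences inequality.'' The plain bounded-differences (McDiarmid) route fails quantitatively: the coordinate-wise differences are $c_p = \Theta((m_{i,j}/s)\,2^j\mu)$ for $p$ in ring $B'_{i,j}$, so $\sum_p c_p^2 = \Theta\bigl(\sum_{(i,j)} m_{i,j}^3 (2^j\mu)^2/s^2\bigr)$, and against the target $\delta = \frac{\epsilon}{2}\sum_{(i,j)} m_{i,j}2^j\mu$ the exponent $\delta^2/\sum_p c_p^2$ can be as small as $\Theta(\epsilon^2 s^2/n)$, nowhere near $\Omega(k\log n)$. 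The sharper Lipschitz--Chernoff bound the paper uses (which exploits that each indicator is nonzero only with probability $s/m_{i,j}$, giving exponent $\Theta(\epsilon^2 s)$) is stated only for variables with \emph{uniform} $b$, $p$, and $L$, and these parameters differ from ring to ring. The paper resolves this by fixing an order on the rings and running a hybrid/telescoping argument over conditional expectations: it applies the single-ring bound once per ring, conditioning on the samples of earlier rings and averaging over later ones, and then union-bounds over the $O(k\log n)$ rings so the per-ring errors $\epsilon|P'_{i,j,t}|2^j\mu$ add up to the claimed total. You need either this sequential argument or a concentration inequality that is explicitly valid for heterogeneous sparse Bernoulli coordinates with coordinate-dependent Lipschitz constants; as written, the global step is not justified.
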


Note that for a particular color class if we select all original points in the coreset, then there is no error corresponding to those coreset points. This is true, as one can use the corresponding optimal assignment for these points. Assuming that the above lemma holds, now, we prove Lemma \ref{lem:boundallcolorallgenM}. Consider the coreset $W^1$. From the above lemma we readily obtain the following. 

\begin{corollary}\label{lem:boundonecolorallgenM}
For any fixed set $C$ of $k$ centers and for all $k\times \ell$ matrices $M$, w.p. at least $1-1/n^{k+4}$, $|\emph{cost}(P,M,C)-\emph{wcost}(W^1,M,C)|\le \sum_{(i,j)} \epsilon |P'_{i,j,1}| \cdot 2^j\mu$. 
\end{corollary}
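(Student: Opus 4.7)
The plan is to obtain the corollary as a direct specialization of Lemma~\ref{lem:genboundonecolorallgenM} with $t = 1$. By definition, $W^1 = W_1 \cup (\cup_{\tau = 2}^{\ell} P_\tau)$, and since the color classes $\{P_\tau\}_{\tau = 1}^{\ell}$ partition $P$, we have $\cup_{\tau = 2}^{\ell} P_\tau = P \setminus P_1$. Hence $W^1 = W_1 \cup (P \setminus P_1)$, which is precisely the set appearing in Lemma~\ref{lem:genboundonecolorallgenM} for $t = 1$. Substituting $t = 1$ into that lemma, the cost difference bound $\sum_{(i,j)} \epsilon |P'_{i,j,1}| \cdot 2^j \mu$ and the failure probability $1/n^{k+4}$ transfer verbatim.

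Since the lemma is stated uniformly over the choice of color index $t \in [\ell]$, no additional argument is needed to justify picking $t = 1$; the corollary follows in a single step, with no loss in probability or error. All the technical content — the concentration analysis for the sampled points from a single color class — is encapsulated inside Lemma~\ref{lem:genboundonecolorallgenM}.

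If I were forced to prove the underlying lemma (which is the genuine obstacle lurking behind this corollary), I would follow the template sketched in Section~\ref{sec:summary}: fix $C$ and, for each ring $B'_{i,j}$ separately, encode any $M$-satisfying assignment restricted to points of color $t$ in that ring as a min-cost flow in an appropriate bipartite network whose right-hand side has $k$ nodes (the centers) with demands determined by the $t$-th column of $M$. I would then introduce the Bernoulli sampling surrogate in which each point of $P'_{i,j,t}$ is kept independently with probability $s/|P'_{i,j,t}|$ and scaled by $|P'_{i,j,t}|/s$; conditioned on drawing exactly $s$ samples the surrogate coincides in distribution with the without-replacement sample, and unconditionally it has mean $\mathbb{1}$. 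Using (i) Chernoff-type concentration on the minimum-cost-flow value and (ii) a Lipschitz/flow-perturbation argument bounding $|\mathbb{E}[f(X)] - f(\mathbb{E}[X])|$ by $\tfrac{\epsilon}{2} |P'_{i,j,t}| \cdot 2^j \mu$, each ring satisfies the desired per-ring bound with probability at least $1 - 1/n^{\Omega(k)}$.

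The hard part is calibrating the failure probability so that a union bound over all constraint matrices $M$ survives. A column of $M$ restricted to color $t$ has only $n^{O(k)}$ possible values, so choosing $s = \Theta(k \log n / \epsilon^3)$ with a large enough constant lets us absorb the union bound and still achieve $1/n^{k+4}$; this is precisely why the construction samples each color class independently, so that only a single column of $M$ — rather than all $\ell$ columns — needs to be union-bounded at this stage. Summing the per-ring bounds over $(i,j)$ yields the stated global bound, which then specializes to the corollary as explained above.
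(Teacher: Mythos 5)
Your proof is correct and matches the paper's: the corollary is obtained by specializing Lemma~\ref{lem:genboundonecolorallgenM} to $t=1$, using the identity $W^1 = W_1 \cup (P\setminus P_1)$ that follows from the color classes partitioning $P$ in the disjoint case. The additional sketch of how the underlying lemma would be proved is consistent with the paper's argument but is not needed for this corollary.
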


Now, in $W^1$ consider replacing the points of $P_2$ by the samples in $W_2$. We obtain the coreset $W^2$. Note that the samples in $W_1$ and $W_2$ are chosen independent of each other. Thus, by taking union bound over color $1$ and $2$, from Lemma \ref{lem:genboundonecolorallgenM} we obtain, for all $k\times \ell$ matrices $M$, w.p. at least $1-2/n^{k+4}$, $|\text{cost}(P,M,C)-\text{wcost}(W^2,M,C)|\le \sum_{(i,j)} \epsilon (|P'_{i,j,1}|+|P'_{i,j,2}|) \cdot 2^j\mu$. Similarly, by taking union bound over all $\ell \le n$ colors and noting that $W^{\ell}=W$, Lemma \ref{lem:boundallcolorallgenM} follows.  

Next, we prove Lemma \ref{lem:genboundonecolorallgenM}. 

\subsection{Proof of Lemma \ref{lem:genboundonecolorallgenM}}

Recall that $P_t$ is the set of points of color $t$, and $W_t$ is the coreset points of color $t$. $C$ is the given set of centers. For any matrix $M$, let $M^t$ be the $t^{th}$ column of $M$. We have the following observation that implies that it is sufficient to consider the points only in $P_t$ to give the error bound. 

\begin{observation}
 Suppose w.p. at least $1-1/n^{k+4}$, for all column matrix $M'$, $|\emph{cost}(P_t,M',C)-\emph{wcost}(W_t,M',C)|\le \sum_{(i,j)} \epsilon |P'_{i,j,t}| \cdot 2^j\mu$. Then, with the same probability, for all $k\times \ell$ matrix $M$, $|\emph{cost}(P,M,C)-\emph{wcost}(W_t\cup (P\setminus P_t),M,C)|\le \sum_{(i,j)} \epsilon |P'_{i,j,t}| \cdot 2^j\mu$.
\end{observation}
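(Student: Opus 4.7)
The plan is to exploit the disjointness of the color classes, which forces both the clustering-cost problems in the observation to decompose additively over colors. Once this decomposition is established, the only surviving term in the difference is the $t$-th color contribution, and the hypothesis for the column matrix $M^t$ applies verbatim.

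Concretely, I would first argue that for any $k \times \ell$ constraint matrix $M$,
\[
\text{cost}(P,M,C) \;=\; \sum_{\tau=1}^{\ell} \text{cost}(P_\tau,M^\tau,C),
\]
where $M^\tau$ denotes the $\tau$-th column of $M$ viewed as a $k \times 1$ matrix. To see this, take any assignment $\psi : P \times C \to \mathbb{R}_{\ge 0}$ satisfying $M$. Because the groups $P_1,\dots,P_\ell$ are disjoint, every point $p$ lies in exactly one $P_\tau$, so the per-point constraint $\sum_{c_i} \psi(p,c_i)=1$ and the group constraint $\sum_{p \in P_\tau} \psi(p,c_i)=M_{i\tau}$ involve $\psi$ only through its restriction $\psi_\tau := \psi|_{P_\tau \times C}$. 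Thus $\psi$ decomposes into independent feasible assignments for each color, and the cost splits as a sum. Minimizing each summand separately (nothing couples the colors) yields the claimed equality; conversely, a collection of optimal $\psi_\tau$'s can be stitched together into a feasible $\psi$ for $M$.

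The identical argument applies to the weighted set $W_t \cup (P \setminus P_t)$: for colors $\tau \neq t$ the ``weighted'' points are simply the original points with unit weights, so
\[
\text{wcost}\bigl(W_t \cup (P \setminus P_t),\, M,\, C\bigr) \;=\; \text{wcost}(W_t,M^t,C) \;+\; \sum_{\tau \neq t} \text{cost}(P_\tau,M^\tau,C).
\]
Subtracting the two displayed equalities, all terms corresponding to $\tau \neq t$ cancel, leaving
\[
\bigl|\text{cost}(P,M,C) - \text{wcost}(W_t \cup (P \setminus P_t),M,C)\bigr| \;=\; \bigl|\text{cost}(P_t,M^t,C) - \text{wcost}(W_t,M^t,C)\bigr|.
\]

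Finally, I would invoke the hypothesis, which by assumption holds simultaneously for \emph{all} column matrices $M'$ on the event of probability at least $1 - 1/n^{k+4}$; applying it to the particular column matrix $M' = M^t$ bounds the right-hand side by $\sum_{(i,j)} \epsilon |P'_{i,j,t}| \cdot 2^j \mu$. Since the hypothesis event is a single probabilistic event that quantifies over all column matrices, one union bound suffices even though $M^t$ ranges over all columns arising from all $k \times \ell$ matrices $M$. There is no real obstacle here: the key and only nontrivial step is justifying the decomposition of the optimal (weighted) assignment over the disjoint color classes, which is immediate from the constraint definitions once one writes them out.
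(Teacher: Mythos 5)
Your proposal is correct and follows essentially the same route as the paper: decompose both $\text{cost}(P,M,C)$ and $\text{wcost}(W_t\cup(P\setminus P_t),M,C)$ columnwise over the disjoint color classes, cancel the $\tau\neq t$ terms, and apply the hypothesis (which holds for all column matrices on a single event) to $M^t$. The only difference is that you spell out the justification for the additive decomposition, which the paper states without proof.
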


\begin{proof}
 Consider any $k\times \ell$ matrix $M$. Then, 
 
 $$\text{cost}(P,M,C)=\sum_{\tau=1}^{\ell} \text{cost}(P_{\tau},M^{\tau},C)$$
 
 Also, 
 
 $$\text{wcost}(W_t\cup (P\setminus P_t),M,C)=\text{wcost}(W_t,M^t,C)+\sum_{\tau\in [\ell]\setminus \{t\}} \text{cost}(P_{\tau},M^{\tau},C) $$
 
 It follows that, 
 
 $$|\text{cost}(P,M,C)-\text{wcost}(W_t\cup (P\setminus P_t),M,C)|= |\text{cost}(P_t,M^t,C)-\text{wcost}(W_t,M^t,C)|$$
 
 Now, by our assumption, it follows that the probability of the event: for all $M$, $|\text{cost}(P_t,M^t,C)-\text{wcost}(W_t,M^t,C)|$ exceeds $\sum_{(i,j)} \epsilon |P'_{i,j,t}| \cdot 2^j\mu$ is at most $1/n^{k+4}$. Hence, the observation follows.  
\end{proof}

By the above observation, it is sufficient to prove that w.p. at least $1-1/n^{k+4}$, for all column matrix $M$, $|\text{cost}(P_t,M,C)-\text{wcost}(W_t,M,C)|\le \sum_{(i,j)} \epsilon |P'_{i,j,t}| \cdot 2^j\mu$. The proof of this claim is similar to the analysis in \cite{Cohen-AddadL19}. In the rest of this section we prove this claim. For simplicity, we first do the analysis for a single ring. Later we will show how this idea in single ring case can be extended to obtain the h.p. bound for the multiple ring case.

\subsubsection{Single Ring Case}
We fix a ring $B'_{i,j}$ and rename the color $t$ to $\gamma$. Note that we have points of only one color $\gamma$. For simplicity of notation, we rename $P_{\gamma}$ to $P$. We do the analysis assuming that we sample points only from the ring $B'_{i,j}$. For simplicity, we denote this ring by $B'$. Let $P'=P'_{i,j,\gamma}$, $m=|P'|$, $\mu'=2^j\mu$ and $c'=c_i^*$ for $1\le i\le k$ and $0\le j\le N$. Also, let $S$ be the random sample chosen from $P'$. Thus in this case, our coreset $W'$ consists of the points $S$, which have weight $m/s$ and all the points in $P\setminus P'$, which have weight 1, i.e, $W'=S\cup (P\setminus P')$. We will show that the cost difference between $P$ and $W'$ is at most $\epsilon m \mu'$ w.h.p. Intuitively, for each point in $P'$, we allow at most $\epsilon \mu'$ error on average. 


For the rest of the proof we fix a column matrix $M$ such that cost$(P,M,C) < \infty$. We will prove the following theorem. 

\begin{theorem}\label{th:singleMsingleC}
 W.p. at least $1-1/n^{2k+10}$, it holds that $|\emph{cost}(P,M,C)-\emph{wcost}(W',M,C)|\le \epsilon m \mu'$.
\end{theorem}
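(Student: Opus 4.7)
The plan follows the strategy of Cohen-Addad and Li for capacitated clustering \cite{Cohen-AddadL19}, observing that once the single column $M$ and the centers $C$ are fixed, the fair assignment problem on a weighted input still reduces to a minimum-cost $b$-flow on a bipartite network. The first step is to replace the ``sample $s$ points of $P'$ uniformly without replacement'' scheme by the equivalent (up to conditioning) independent Bernoulli process in which every $p \in P'$ is retained with probability $s/m$ and, if retained, given weight $m/s$. Encoding the outcome as a random vector $X \in \mathbb{R}_{\ge 0}^m$ indexed by $P'$, this yields $\mathbb{E}[X] = \mathbb{1}$, and conditioning on the event that exactly $s$ coordinates of $X$ are nonzero (which holds with probability $\Theta(1/\sqrt{s})$) recovers the original sampling distribution of $S$.

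Next I would build a flow network $G_Y$ for a general weight vector $Y \in \mathbb{R}_{\ge 0}^m$: source-side supplies are $Y[p]$ for $p \in P'$ and $1$ for each $p \in P \setminus P'$; the sink side consists of the centers $c_i \in C$, each with demand equal to the corresponding entry of $M$; edge costs are the pairwise distances. Let $f(Y)$ denote its minimum-cost value. Under the conditioning above, $\text{wcost}(W', M, C)$ and $f(X)$ have the same distribution, and in any case $\text{cost}(P, M, C) = f(\mathbb{1}) = f(\mathbb{E}[X])$. So the theorem reduces to proving $|f(X) - f(\mathbb{E}[X])| \le \epsilon m \mu'$ with failure probability below $1/n^{2k+10}$; the conditioning is then removed at a polynomial cost absorbed by the choice $s = \Theta(k \log n / \epsilon^3)$.

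I would split the error into a concentration term $|f(X) - \mathbb{E}[f(X)]|$ and a bias term $|\mathbb{E}[f(X)] - f(\mathbb{E}[X])|$, each to be bounded by $\epsilon m \mu'/2$. The concentration part relies on the fact that flipping a single coordinate of $X$ changes $f$ by at most $O(\mu')$: the $m/s$ units of excess (or deficit) flow can be rerouted through points inside the ring $B'$, whose mutual distances are $O(\mu')$. A bounded-differences / Chernoff-type argument then yields failure probability $\exp(-\Omega(\epsilon^2 s)) = 1/n^{\Omega(k)}$, comfortably beating $1/n^{2k+10}$. The easy half of the bias term, $f(\mathbb{E}[X]) \le \mathbb{E}[f(X)]$, follows by averaging feasible flows from random realisations of $X$ into a feasible fractional solution of cost $\mathbb{E}[f(X)]$ in $G_{\mathbb{1}}$.

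The main obstacle, and where I would invest the bulk of the work, is the opposite inequality $\mathbb{E}[f(X)] \le f(\mathbb{E}[X]) + \epsilon m \mu'/2$, which follows the flow-surgery argument of \cite{Cohen-AddadL19}. Starting from an optimal flow of $G_{\mathbb{1}}$, which sends one unit out of every $p \in P'$ along edges to $C$, I would construct a feasible flow in $G_X$ by concentrating, for each unsampled point, its outgoing flow onto a sampled neighbour in $B'$. A random-matching argument shows that with probability at least $1 - 1/n^{\Omega(k)}$ the induced detours have total cost at most $0.49 \epsilon m \mu'$; combining with the trivial deterministic bound $f(X) \le f(\mathbb{1}) + O(n m \mu')$ on the complementary event gives $\mathbb{E}[f(X)] \le f(\mathbb{1}) + 0.5\epsilon m \mu'$, as needed. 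The only deviation from the capacitated setting is that each center now has a fixed demand $M_i$ rather than an upper capacity, but this actually makes the flow problem strictly simpler (every feasible flow saturates all demands), so the rerouting construction and its concentration analysis transfer almost verbatim.
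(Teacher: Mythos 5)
Your proposal follows essentially the same route as the paper's proof: the same Bernoulli relaxation of the without-replacement sampling with $\mathbb{E}[X]=\mathbb{1}$ and conditioning on exactly $s$ nonzero coordinates, the same reduction of $\text{wcost}(W',M,C)$ to a min-cost flow value $f(X)$ with $f(\mathbb{1})=\text{cost}(P,M,C)$, the same split into a concentration term $|f(X)-\mathbb{E}[f(X)]|$ (handled by Lipschitzness of $f$ plus a Chernoff-type bound for Lipschitz functions of independent Bernoullis) and a bias term, the same averaging argument for $f(\mathbb{E}[X])\le\mathbb{E}[f(X)]$, and the same flow-surgery argument combined with the trivial deterministic bound to get $\mathbb{E}[f(X)]\le f(\mathbb{1})+\epsilon m\mu'/2$. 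Your remark that fixed demands $M_i$ make the problem no harder than capacities, and your accounting for the cost of removing the conditioning, also match the paper.

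The one concrete omission is the balancing node. As you describe $G_Y$ (supplies $Y[p]$ on $P'$, supplies $1$ on $P\setminus P'$, demands $M_i$ at the centers), the network is infeasible whenever $\sum_{p\in P'}Y[p]\neq m$, which for the unconditioned Bernoulli vector $X$ happens with probability $1-\Theta(1/\sqrt{s})$; yet both the Lipschitz/concentration step and the expectation step $\mathbb{E}[f(X)]$ require $f$ to be defined (and $\mu'$-Lipschitz) on all of $\mathbb{R}_{\ge 0}^m$. Your fix of ``rerouting the excess or deficit through points inside the ring'' does not work literally, since the supplies of the other points are fixed. The paper resolves this by adding an auxiliary vertex $w$ located at the ring center $c'$, connected to every point at cost $d(p,c')\le\mu'$ and to every center at cost $d(c',c_i)$, and carrying the adjustable residual demand $m-\sum_{p\in P'}Y[p]$; this single node is what makes $G_Y$ feasible for every $Y$, gives the $\mu'$-Lipschitz bound (route the perturbed unit to or from $w$), and serves as the hub in the flow surgery (deficits at a center are fed from $w$, surplus samples dump their weight into $w$). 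With that node added, the rest of your argument goes through as in the paper.
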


By taking union bound over all (at most $n^{k}$) column matrices, we obtain the desired bound w.h.p. Towards this end, assume that $s < m$, otherwise $W'=P$ and the above theorem is trivially true. 

\paragraph{An Alternative Way of Sampling.} Consider the points of $P'$ and the following alternative way of sampling points from $P'$. For each $p\in P'$, select $p$ w.p. $s/m$ independently and set its weight to $m/s$. Otherwise, set its weight to 0. Let $X \in \mathbb{R}_{\ge 0}^{m}$ be the corresponding random vector such that $X[p]=m/s$ if $p$ is selected, otherwise $X[p]=0$. 

We note two things here. First, for each $p$, $\mathbb{E}[X[p]]=1$. Thus, $\mathbb{E}[X]=\mathbb{1}$, where $\mathbb{1}$ is the vector of length $m$ whose entries are all 1. Intuitively, this shows that in expectation the chosen set of samples behave like the original points. We will heavily use this connection in our analysis. Second, this sampling is different from our sampling scheme in the sense that here we might end up selecting more (or less) than $s$ samples. However, one can show that with sufficient probability, this sampling scheme selects exactly $s$ points, as the expected number is $m\cdot (s/m)=s$. 

\begin{claim}
\cite{Cohen-AddadL19} Let $n$ be a positive integer, and $p\in (0,1)$ such that $np$ is an integer. The probability that $\text{Bernoulli}(n,p)=np$ is at least $\sqrt{p}$. 
\end{claim}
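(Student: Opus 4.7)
The plan is to compute the probability explicitly via Stirling's approximation. Interpreting $\text{Bernoulli}(n,p)$ as the number of successes in $n$ independent Bernoulli$(p)$ trials (i.e., $X \sim \text{Binomial}(n,p)$), the quantity to bound is
\[
P(X = np) \;=\; \binom{n}{np} p^{np} (1-p)^{n - np}.
\]
I would apply Robbins' non-asymptotic form of Stirling, $m! = \sqrt{2\pi m}\,(m/e)^m e^{\theta_m}$ with $1/(12m{+}1) < \theta_m < 1/(12m)$, separately to each of $n!$, $(np)!$, and $(n-np)!$ in the binomial coefficient. The polynomial-in-$n$ and power-in-$p$ terms cancel cleanly: the $n^n$ in the numerator splits as $n^{np}\cdot n^{n-np}$, and the factor $p^{np}(1-p)^{n-np}$ exactly cancels the $(1/p)^{np}(1/(1-p))^{n-np}$ coming out of Stirling. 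What remains is
\[
P(X = np) \;=\; \frac{e^{\theta_n - \theta_{np} - \theta_{n(1-p)}}}{\sqrt{2\pi n\, p(1-p)}},
\]
with the exponential error factor bounded in $(e^{-1/6},\,1)$ by the Robbins bounds.

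The second step is to deduce the claimed lower bound $\sqrt{p}$ from this closed form. Squaring, it suffices to verify $2\pi n\, p^2(1-p) \,\le\, e^{-1/3}$, which is what one must check in the parameter regime where the claim is invoked (later in the paper, $p = s/m$ with $s < m$ and the ring sample size $s$ polylogarithmic in $n$, so $np^2 = s^2/m$ is small). Since $np$ is assumed to be a positive integer we have $n \ge 1/p$, and the constraint falls naturally out of the sampling setup. For the remaining regime where $p$ is close to $1$, I would invoke the symmetry $X \leftrightarrow n - X$ to replace $p$ by $1-p$ and reduce to the case where $\min(p,1-p)$ is small.

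The main obstacle is that the $\sqrt{p}$ bound is delicate rather than loose: one cannot afford any wasteful constant, so the \emph{precise} $1/\sqrt{2\pi}$ prefactor from Stirling must be retained, and the intermediate regime where neither $p$ nor $1-p$ is particularly small needs a separate argument. As a safety net for that middle range, I would combine the Stirling estimate with the trivial fact that $P(X = np) \ge 1/(n+1)$, coming from unimodality of the binomial pmf (the mode is at $np$, and a distribution on $n+1$ values has a mode of probability at least $1/(n+1)$); this weaker but universal bound fills any gap left open by the Stirling calculation.
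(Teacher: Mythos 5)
Your Stirling computation is correct and is the right tool here: with Robbins' bounds one gets exactly
\[
\Pr[\mathrm{Bin}(n,p)=np] \;=\; \frac{e^{\theta_n-\theta_{np}-\theta_{n(1-p)}}}{\sqrt{2\pi n p(1-p)}} \;\ge\; \frac{e^{-1/6}}{\sqrt{2\pi n p(1-p)}},
\]
using that $np$ and $n(1-p)$ are positive integers. The problem is entirely in your second step. The inequality you are asked to deduce, $\Pr[\mathrm{Bin}(n,p)=np]\ge\sqrt{p}$, is false as stated: for $n=2$, $p=1/2$ the left side is $1/2<\sqrt{1/2}$, and for $n=100$, $p=1/10$ it is $\approx 0.132<\sqrt{0.1}\approx 0.316$. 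So no completion of your argument can succeed, and your own reduction correctly surfaces the obstruction --- you need $2\pi np^2(1-p)\le e^{-1/3}$, which is not implied by the hypotheses, nor even by the regime in which the paper invokes the claim (there $n=m$ and $p=s/m$, and for instance $m=2s$ gives $np^2(1-p)=s/4$, which is large). The two patches you propose do not close this gap: the symmetry $X\leftrightarrow n-X$ yields a lower bound of $\sqrt{1-p}$, which is weaker than $\sqrt{p}$ precisely in the regime $p>1/2$ where you want to use it; and the unimodality bound $1/(n+1)$ dominates $\sqrt{p}$ only when $p\le 1/(n+1)^2$, which does not cover the troublesome middle range at all.

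The paper imports this claim by citation and never proves it, so there is no in-paper argument to compare against; but note that what the surrounding text actually uses is only that $\Pr[\mathrm{Bin}(m,s/m)=s]=\Omega(1/\sqrt{n})$. Your Stirling identity already delivers this: $\Pr[\mathrm{Bin}(n,p)=np]\ge e^{-1/6}/\sqrt{2\pi np(1-p)}=\Omega(1/\sqrt{np})$, which with $n=m$, $p=s/m$ is $\Omega(1/\sqrt{s})\ge\Omega(1/\sqrt{n})$. The honest fix is therefore to replace the bound $\sqrt{p}$ by $c/\sqrt{np}$ (or $c/\sqrt{n}$) for an explicit constant $c$ --- a statement your first step proves and which suffices for everything downstream --- rather than to try to salvage $\sqrt{p}$ itself.
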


Using the above claim with $n=m$ and $p=s/m$, it follows that $X$ contains exactly $s$ non-zero entries w.p. $\Omega(1/\sqrt{n})$.  Conditioned on this event, $X$ accurately represents the outcome of our sampling process. Thus, both the sampling processes are same up to repetition. Henceforth, we assume that $X$ contains exactly $s$ non-zero entries. 

\paragraph{Representing Assignment By Network Flow.} Given a vector $Y \in \mathbb{R}_{\ge 0}^m$ indexed by the points of $P'$ we construct the following flow network $G_Y$. $G_Y$ has two designated vertices $s$ and $t$, which are called the source and the sink, respectively. For each point $p_j\in P$, there is a vertex $u_j$. For each center $c_i\in C$, there is a vertex $v_i$. There is also an auxiliary vertex $w$ in $G_Y$ corresponding to the center $c'$ of the bicriteria solution. For each $u_j$, there is an edge between $s$ and $u_j$, and also between $w$ and $u_j$. $s$ is also connected to $w$ via an edge. $w$ is connected to each $v_i$ via an edge. Also, each $v_i$ is connected to $t$ via an edge. For each point $p_j$ and center $c_i$, there is an edge between $u_j$ and $v_i$. Formally, the vertex set $V_Y$ of $G_Y$ is defined as, $V_Y=\{s\}\cup\{t\} \cup \{w\}\cup \{u_j\mid 1\le j \le n \}\cup \{v_i\mid 1\le i\le k\}$. The set of edges $E_Y=\{(s,u_j)\mid 1\le j \le n\}\cup \{(u_j,w)\mid 1\le j \le n\}\cup \{(v_i,t)\mid 1\le i\le k\}\cup \{(w,v_i)\mid 1\le i\le k)\} \cup \{(u_j,v_i)\mid 1\le j \le n, 1\le i\le k\}$. For each $p_j \in P\setminus P'$, $(s,u_j)$ has a demand of 1. For each $p_j\in P'$, $(s,u_j)$ has a demand of $Y[p_j]$. The demand of $(s,w)$ is exactly $m-\sum_{p\in P'} Y[p]$, which can be negative. The capacity of each edge $(v_i,t)$ is exactly $M[i]$, the $i^{th}$ entry of $M$. Lastly, the cost of all the edges is 0 except the edges of $\{(u_j,v_i)\}$, $\{(u_j,w)\}$ and $\{(w,v_i)\}$. The cost of $(u_j,v_i)$ is $d(p_j,c_i)$ and the cost of $(u_j,w)$ is $d(p_j,c')$. The cost of $(w,v_i)$ is $d(c',c_i)$. 
 
We note that the assignment of points in $P$ to the centers in $C$ corresponding to an optimal clustering (with cost$(P,M,C) < \infty$) induces a flow for $G_Y$ with $Y=\mathbb{1}$ that satisfies all the demands, which sum to $|P|$. Hence, for any $Y \in \mathbb{R}_{\ge 0}^m$, $G_Y$ always has a feasible flow, as the sum of demands is exactly $|P\setminus P'|+\sum_{p\in P'} Y[p]+(m-\sum_{p\in P'} Y[p])=|P|$. 

For any $Y \in \mathbb{R}_{\ge 0}^m$, we denote by $f(Y)$ the cost of the minimum cost feasible flow in $G_Y$. Consider the random vector $X$ defined before. We have the following important observation. 

\begin{observation}
 $f(X)$ and \text{wcost}$(W',M,C)$ are identically distributed. Moreover, $f(\mathbb{E}[X])=$ cost$(P,M,C)$. 
\end{observation}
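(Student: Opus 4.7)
The plan is to reduce both claims to direct correspondences between feasible flows in $G_Y$ and feasible weighted assignments, applied to the two specific inputs $Y = \mathbb{1}$ and $Y = X$ (conditioned on $X$ having exactly $s$ non-zero entries). The key observation driving both reductions is that $\sum_{p \in P'} Y[p] = m$ in both cases, so the demand on the edge $(s,w)$ vanishes and the auxiliary vertex $w$ is not forced to carry any flow. Once $w$ is out of the way, a min-cost flow in $G_Y$ reads off as an assignment whose cost matches the appropriate weighted clustering cost.

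For the second part, $f(\mathbb{E}[X]) = f(\mathbb{1}) = \text{cost}(P,M,C)$, I would proceed as follows. With $Y = \mathbb{1}$, each edge $(s,u_j)$ carries unit demand, the demand on $(s,w)$ is $0$, and the edge $(v_i,t)$ has capacity exactly $M[i]$, whose row sums force $\sum_i M[i] = |P|$ units to flow through. I would argue that any flow using a detour $u_j \to w \to v_i$ can be rerouted along the direct edge $(u_j,v_i)$ without increasing the cost, since the triangle inequality gives $d(p_j,c_i) \le d(p_j,c') + d(c',c_i)$. Hence some min-cost flow is supported on edges of the form $(s,u_j) \to (u_j, v_i) \to (v_i, t)$, and reading the flow on $(u_j, v_i)$ as the fraction of $p_j$ assigned to $c_i$ yields an assignment that satisfies the cardinality constraint encoded by $M$ with cost $f(\mathbb{1})$. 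Conversely any feasible assignment for $\text{cost}(P,M,C)$ lifts to a feasible flow of equal cost, giving the equality.

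For the first part, I would first note that under the conditioning, the support of $X$ is uniform over size-$s$ subsets of $P'$, hence distributed identically to the sample $S$ drawn without replacement in the construction of $W'$. Then I would re-run the argument above for $Y = X$: since $\sum_{p\in P'} X[p] = s \cdot (m/s) = m$, the demand on $(s,w)$ again vanishes, and the same triangle-inequality rerouting shows that some min-cost flow is supported on direct edges. Reading off the flow, each $p_j \in P \setminus P'$ contributes $1$ unit of weight and each $p_j \in \text{supp}(X)$ contributes $m/s$ units, with each $c_i$ receiving exactly $M[i]$ total weight. These are precisely the weights in $W' = S \cup (P \setminus P')$, so pointwise $f(X) = \text{wcost}(W',M,C)$. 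Combined with the equality in distribution of $\text{supp}(X)$ and $S$, this yields the desired distributional identity.

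The main obstacle I anticipate is tidying up the role of the auxiliary vertex $w$. It is present in $G_Y$ for general $Y$ precisely to absorb any slack $m - \sum_{p\in P'}Y[p]$ that could be positive or negative, and one should verify that in the two cases considered here $w$ is not only demand-free but also that no min-cost flow gains from routing through it. The triangle inequality makes this straightforward in a metric space, but it is the crucial step ensuring that $f(Y)$ really encodes the assignment cost. Once this is in hand, the bijection between feasible flows and feasible weighted assignments (matching the constraints imposed by the column $M$) is routine, and both parts of the observation follow.
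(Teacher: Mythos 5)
Your proposal is correct and follows essentially the same route as the paper: identify feasible flows in $G_Y$ with weighted assignments satisfying $M$, note that the capacities $M[i]$ (summing to $|P|$) are saturated, and use the fact that $X$ conditioned on having exactly $s$ non-zero entries has support distributed as the without-replacement sample $S$. You are in fact more explicit than the paper about the one point it glosses over --- that the vertex $w$ carries no demand when $\sum_p Y[p]=m$ and that detours through $w$ can be rerouted onto direct edges by the triangle inequality --- which is a welcome addition rather than a deviation.
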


\begin{proof}
Note that the total demand in $G_X$ is $|P|$, as argued before. This demand must be routed to $t$ through the edges $\{(v_i,t)\}$. Now, the capacity of $(v_i,t)$ is $M[i]$. If $M$ is a valid partition matrix, then $\sum_{i=1}^k M[i]$ must be $|P|$. Thus, any feasible flow in $G_X$, which satisfies all the demands, must saturate all the edges $\{(v_i,t)\}$. It follows that from this flow we can retrieve an assignment of the points in $W'$ to the centers in $C$, such that exactly $M[i]$ weight is assigned to each center $c_i\in C$. Finally, as $X$ contains exactly $s$ non-zero entries, the cost of the minimum cost feasible flow in $G_X$ and \text{wcost}$(W',M,C)$ must be identically distributed. 

The moreover part follows by noting that $\mathbb{E}[X]=\mathbb{1}$. 
\end{proof}

From the above observation it follows that to prove Theorem \ref{th:singleMsingleC}, it is sufficient to prove that w.p. $1-1/n^{\Omega(k)}$, $|f(X)-f(\mathbb{E}[X])|\le \epsilon m\mu'$. Now, we have another observation which will be useful later. 

\begin{observation}\label{obs:lipschitz}
The function $f$ is $\mu'$-Lipschitz w.r.t. the $\ell_1$ distance in $\mathbb{R}_{\ge 0}^m$. 
\end{observation}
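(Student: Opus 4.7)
}

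The claim to establish is that for any $Y, Y' \in \mathbb{R}_{\ge 0}^m$, $|f(Y) - f(Y')| \le \mu' \|Y - Y'\|_1$. The key geometric ingredient that makes this inequality possible is that every point $p \in P'$ lies inside the ball $B_{i,j}$ of radius $2^j\mu = \mu'$ around the auxiliary center $c' = c_i^*$, so $d(p, c') \le \mu'$. Recall also that the auxiliary vertex $w$ represents $c'$ in $G_Y$, all edges except those into $t$ are uncapacitated, and the zero-cost edges $(s, u_j)$ and $(s, w)$ carry fixed demands that together sum to $|P|$, so modifying $Y$ does not change the total amount of flow pushed into the network, only the way it is split between $(s, u_j)$ edges and $(s, w)$.

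By symmetry it suffices to prove $f(Y') \le f(Y) + \mu' \|Y - Y'\|_1$. Take an optimal flow $\phi$ in $G_Y$; the plan is to transform it, coordinate by coordinate, into a feasible flow $\phi'$ in $G_{Y'}$ whose cost exceeds that of $\phi$ by at most $\mu'|\delta_j|$ per index $j$, where $\delta_j = Y'[p_j] - Y[p_j]$. The entire modification will re-route flow only through edges incident to $w$, which is exactly where the geometric factor $d(p, c') \le \mu'$ will be paid.

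For a coordinate with $\delta_j > 0$, the construction is direct: increase $\phi(s, u_j)$ by $\delta_j$ (matching the new demand), push $\delta_j$ extra units along the edge $(u_j, w)$, and decrease $\phi(s, w)$ by $\delta_j$ (matching the drop in the demand of $(s, w)$). Flow balance is preserved at $s$, $u_j$, and $w$, and the only nonzero cost contribution is $\delta_j \cdot d(p_j, c') \le \delta_j\, \mu'$. For a coordinate with $\delta_j < 0$, write $a + b = |\delta_j|$, where $a \le \phi(u_j, w)$ is the portion we remove from the edge $(u_j, w)$ and $b$ is distributed among edges $(u_j, v_i)$ with amounts $b_i \ge 0$ (always possible since $\phi(u_j, w) + \sum_i \phi(u_j, v_i) = Y[p_j] \ge |\delta_j|$). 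We then restore balance at each $v_i$ by increasing $\phi(w, v_i)$ by $b_i$, and balance at $w$ and $s$ by increasing $\phi(s, w)$ by $|\delta_j|$ and decreasing $\phi(s, u_j)$ by $|\delta_j|$. The net cost change is $-a\,d(p_j, c') + \sum_i b_i \big(d(c', c_i) - d(p_j, c_i)\big)$, which is at most $(b - a)\,d(p_j, c') \le |\delta_j| \cdot \mu'$ by the triangle inequality $d(c', c_i) \le d(p_j, c') + d(p_j, c_i)$.

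The main technical point — and the only place where the structure of the network is used in a nontrivial way — is the decrease case: one must simultaneously absorb the deficit at $u_j$, the surplus at the $v_i$'s, and the extra demand on $(s, w)$, and bound every detour via $w$ by $\mu'$ through the triangle inequality on the center $c'$. Doing these per-coordinate updates for all $j$ (which is consistent since each update only lowers or raises flow on edges it touches, and the final flow values satisfy the new demands exactly by construction) yields a feasible flow in $G_{Y'}$ of cost at most $f(Y) + \mu' \sum_j |\delta_j| = f(Y) + \mu' \|Y - Y'\|_1$, completing the proof.
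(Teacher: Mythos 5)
Your proof is correct, and its core is the same as the paper's: perturb $Y$ one coordinate at a time and reroute the changed demand through the auxiliary vertex $w$, paying at most $d(p,c')\le\mu'$ per unit of $\ell_1$ change. The one place you genuinely diverge is the decrease case ($\delta_j<0$). The paper simply says one can ``route $\delta$ additional flow from $w$ to the vertex of $p$,'' which implicitly treats the edge $(u_j,w)$ as reversible (or appeals to residual-network reasoning) and may not be literally available as a forward edge when $\phi(u_j,w)<|\delta_j|$. You instead split the deficit as $a+b=|\delta_j|$, cancel $a$ on $(u_j,w)$, cancel $b_i$ on the edges $(u_j,v_i)$, and compensate at each $v_i$ by adding $b_i$ to $(w,v_i)$, bounding the net cost change by $|\delta_j|\mu'$ via the triangle inequality $d(c',c_i)\le d(p_j,c')+d(p_j,c_i)$. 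This buys a construction that stays entirely within the directed network as defined (all flow values remain nonnegative and the capacities on $(v_i,t)$ stay saturated), at the price of a slightly longer argument; the paper's version is shorter but leans on an informal bidirectional reading of the edge to $w$. Your observation that the updates for different coordinates compose consistently (they share only $(s,w)$, whose demand is allowed to be negative, and $(w,v_i)$, whose flow only increases) closes the argument cleanly.
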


\begin{proof}
Consider two vectors $Y,Y' \in \mathbb{R}_{\ge 0}^m$ such that $Y'=Y+\delta \mathbb{1}_p$, where $\mathbb{1}_p$ is the $m$-dimensional vector which has a single non-zero entry 1 corresponding to $p\in P'$. Suppose we are given a minimum cost flow in $G_Y$. We can route $\delta$ additional flow from the vertex of $p$ to $w$, which incurs $\delta \mu'$ cost. The modified flow is a feasible flow in $G_{Y'}$. Thus, $f(Y')\le f(Y)+\delta \mu'$. 

Similarly, suppose we are given a minimum cost flow in $G_{Y'}$. We can route $\delta$ additional flow from $w$ to the vertex of $p$, which incurs $\delta \mu'$ cost. The modified flow is a feasible flow in $G_{Y}$. Thus, $f(Y)\le f(Y')+\delta \mu'$. Together these show that $f$ is $\mu'$-Lipschitz. 
\end{proof}

Towards this end, we state the following concentration bound, which will be useful in the analysis. 

\begin{lemma}\label{lem:conc-ar-ex}
W.p. at least $1-1/n^{2k+20}$, $|f(X)-\mathbb{E}[f(X)]|\le \epsilon m\mu'/2$. 
\end{lemma}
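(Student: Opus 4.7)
}
The plan is to show that $Z := f(X)$ concentrates tightly around $\mathbb{E}[Z]$ by applying a variance-sensitive concentration inequality to the independent coordinates of $X$. The two key structural facts I will exploit are: (i) by Observation~\ref{obs:lipschitz}, $f$ is $\mu'$-Lipschitz with respect to the $\ell_1$ norm, so flipping one coordinate $X[p]$ between $0$ and $m/s$ can change $f(X)$ by at most $(m/s)\mu'$; and (ii) each $X[p]$ takes the value $m/s$ only with probability $s/m$, so although the worst-case per-coordinate Lipschitz constant is $(m/s)\mu'$, the typical per-coordinate contribution to $\mathrm{Var}(Z)$ is much smaller.

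A naive application of Azuma--Hoeffding / McDiarmid with bounded differences $(m/s)\mu'$ on all $m$ independent coordinates would yield a tail bound of the form $\exp(-\Omega(\epsilon^2 s^2/m))$, which is useless when $m \gg s$. To overcome this I would instead apply a Bernstein-type bounded-differences inequality (e.g.\ the martingale form of Bernstein--Freedman, or McDiarmid-with-variance \`a la Maurer--Pontil/Boucheron--Lugosi--Massart) to the Doob martingale $Z_i = \mathbb{E}[Z \mid X[p_1],\ldots,X[p_i]]$. The relevant form gives
\[
\Pr\bigl[|Z - \mathbb{E}[Z]| > t\bigr] \;\le\; 2\exp\!\left(-\frac{t^{2}}{2V + \tfrac{2}{3}Kt}\right),
\]
where $K$ is an upper bound on $|Z_i - Z_{i-1}|$ and $V$ is an upper bound on the total conditional variance of the martingale increments.

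By the Lipschitz property I can take $K = (m/s)\mu'$. For $V$, I would observe that if $X[p]'$ denotes an independent copy of $X[p]$ and $X^{(p)}$ is $X$ with the $p$-th entry replaced by $X[p]'$, then $|f(X) - f(X^{(p)})| \le (m/s)\mu'$, but this quantity is nonzero only when $X[p]\neq X[p]'$, which happens with probability $2(s/m)(1-s/m)\le 2s/m$. Hence each coordinate contributes at most $((m/s)\mu')^{2}\cdot(2s/m) = 2(m/s)\mu'^{2}$ to the sum of squared differences, giving $V = O\bigl((m^{2}/s)\mu'^{2}\bigr)$. Plugging $K$, $V$, and $t=\epsilon m\mu'/2$ into the Bernstein bound, the denominator simplifies to $\Theta\bigl((m^{2}\mu'^{2}/s)(1+\epsilon)\bigr)$, so the exponent becomes $\Theta(\epsilon^{2}s)$ and the tail is $2\exp(-\Omega(\epsilon^{2}s))$. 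Since $s = \Theta(k\log n/\epsilon^{3})$ with a sufficiently large hidden constant, this is at most $1/n^{2k+20}$, as required.

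The main obstacle is precisely picking the right concentration inequality: pure McDiarmid/Azuma is too weak because it ignores the sparsity of $X$, and a naive Chernoff bound on the number of non-zero coordinates only controls $\|X\|_1$, not the flow cost $f(X)$ directly. The crux is therefore the variance bound $V = O((m^{2}/s)\mu'^{2})$, which converts the Lipschitz constant of $f$ into a much sharper concentration via the Bernstein-type inequality, and it is this improvement from $s^{2}/m$ to $\epsilon^{2}s$ in the exponent that makes the sample size $s = \Theta(k\log n/\epsilon^{3})$ sufficient.
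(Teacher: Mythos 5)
Your proposal is correct and follows essentially the same route as the paper: the paper also combines the $\mu'$-Lipschitz property of $f$ (Observation~\ref{obs:lipschitz}) with a sparsity-aware concentration inequality for Lipschitz functions of independent $\{0,b\}$-valued Bernoulli coordinates, applied with $p=s/m$, $b=m/s$, $L=\mu'$, yielding the same tail $2\exp(-\Omega(\epsilon^2 s))\le 1/n^{2k+20}$. The only difference is that the paper invokes this inequality as a black box (a Chernoff-type bound quoted from Cohen-Addad and Li), whereas you re-derive it via a Freedman/Bernstein bound on the Doob martingale with the variance estimate $V=O((m^2/s)\mu'^2)$ — which is precisely the mechanism underlying the cited theorem.
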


The  proof of this lemma is very similar to the proof of Lemma 15 in \cite{Cohen-AddadL19}, which essentially follows from the fact that $f$ is $\mu'$-Lipschitz and from the following Chernoff type bound. 

\begin{theorem}
 \cite{Cohen-AddadL19} Let $x_1,\ldots,x_n$ be independent random variables taking value $b$ w.p. $p$ and value $0$ w.p. $1-p$, and let $g: {[0,1]}^n \rightarrow \mathbb{R}$ be an $L$-Lipschitz function in $\ell_1$ norm. Define $X: = (x_1,\ldots,x_n)$ and $\mu:= \mathbb{E}[g(X)]$. Then, for $0\le \epsilon \le 1:$ $$\text{Pr}[|g(X)-\mathbb{E}[g(X)]|\ge \epsilon p n b L]\le 2 \exp({-\epsilon^2 p n /3}).$$ 
\end{theorem}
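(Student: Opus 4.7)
The plan is to prove this Bernstein-type bound via the Doob martingale decomposition of $g(X)$ together with Freedman's inequality for martingales, exploiting the sparsity of the entries of $X$ to obtain the sharp factor $p$ (rather than $p^2$) in the exponent. A direct application of the bounded-differences inequality of McDiarmid/Azuma--Hoeffding to $g$ would only use the coordinate-wise swing $Lb$ and hence produce an exponent of order $\epsilon^2 p^2 n$, which is too weak in the regime of small $p$ that we care about (in the downstream application $p = s/m$ is tiny). To recover the promised $\epsilon^2 p n$ we need a concentration inequality that is aware of the per-coordinate variance $p(1-p)(Lb)^2$, not merely of the worst-case per-coordinate change.

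Concretely, I would define the Doob martingale $M_i := \mathbb{E}[g(X)\mid x_1,\ldots,x_i]$ for $i=0,1,\ldots,n$, so that $M_0=\mathbb{E}[g(X)]$, $M_n=g(X)$, and set $D_i := M_i - M_{i-1}$. The $L$-Lipschitz hypothesis on $g$ in the $\ell_1$ metric immediately yields the almost-sure bound $|D_i|\le Lb$, because swapping $x_i$ between its two possible values $0$ and $b$ changes $g$ by at most $Lb$ and $D_i$ is an average over such swaps. For the conditional second moment, let $\Delta_i$ denote the difference between the conditional expectation of $g(X)$ given $(x_1,\ldots,x_{i-1},x_i=b)$ and given $(x_1,\ldots,x_{i-1},x_i=0)$; then $|\Delta_i|\le Lb$ by Lipschitzness, and conditional on $(x_1,\ldots,x_{i-1})$ the increment $D_i$ is the two-point random variable taking values $(1-p)\Delta_i$ with probability $p$ and $-p\Delta_i$ with probability $1-p$. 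A direct computation therefore gives
\[
\mathbb{E}\!\left[D_i^2 \,\middle|\, x_1,\ldots,x_{i-1}\right] \;=\; p(1-p)\,\Delta_i^2 \;\le\; p(Lb)^2,
\]
and summing yields the almost-sure predictable-variance bound $V := \sum_{i=1}^n \mathbb{E}[D_i^2 \mid x_1,\ldots,x_{i-1}] \le np(Lb)^2$.

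Armed with $|D_i|\le R := Lb$ and total conditional variance at most $V = np(Lb)^2$, I invoke Freedman's inequality, which states that
\[
\Pr\!\bigl[\,|M_n-M_0|\ge t\,\bigr]\;\le\; 2\exp\!\left(-\frac{t^2}{2V+2Rt/3}\right).
\]
Plugging in $t=\epsilon\, p n b L$ gives numerator $\epsilon^2 p^2 n^2 b^2 L^2$ and denominator $2np(Lb)^2\bigl(1+\epsilon/3\bigr)$, so the exponent simplifies to $-\epsilon^2 p n/\bigl(2(1+\epsilon/3)\bigr)$. For $\epsilon\le 1$ one has $2(1+\epsilon/3)\le 8/3$, whence the bound is at most $2\exp(-3\epsilon^2 p n/8)\le 2\exp(-\epsilon^2 p n/3)$, matching the claim. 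The main obstacle, and the crux of the argument, is the conditional variance estimate: any concentration tool that sees only the worst-case coordinate change (such as Azuma--Hoeffding or McDiarmid) loses the $p$ factor and yields only $\exp(-\Omega(\epsilon^2 p^2 n))$, so carrying the $p(1-p)$ variance all the way through to the final exponent is precisely what forces the use of a Bernstein-type martingale inequality such as Freedman's.
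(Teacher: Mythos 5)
Your proof is correct: the Doob martingale decomposition, the two-point representation of the increment $D_i$ as $(1-p)\Delta_i$ with probability $p$ and $-p\Delta_i$ with probability $1-p$, the resulting conditional variance bound $p(1-p)\Delta_i^2 \le p(Lb)^2$, and the application of Freedman's inequality with $t = \epsilon p n b L$, $R = Lb$, $V = np(Lb)^2$ all check out, and the final constant $\tfrac{1}{2(1+\epsilon/3)} \ge \tfrac{3}{8} \ge \tfrac{1}{3}$ for $\epsilon \le 1$ gives exactly the claimed bound. Note, however, that there is nothing in this paper to compare your argument against: the statement is imported verbatim from Cohen-Addad and Li and is used as a black box here, with no proof supplied, so your write-up is a genuinely self-contained derivation rather than an alternative to an in-paper argument. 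Your diagnosis of why plain McDiarmid/Azuma is insufficient (it only sees the worst-case swing $Lb$ and yields an exponent of order $\epsilon^2 p^2 n$) is also the right way to understand why a variance-sensitive, Bernstein-type martingale inequality is needed. One cosmetic point you may wish to flag: the statement as quoted declares $g$ on $[0,1]^n$ while the variables take the value $b$, which in the intended application exceeds $1$; your proof implicitly (and correctly) treats $g$ as $L$-Lipschitz on the cube $[0,b]^n$, which is clearly what the cited source intends.
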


We apply the above theorem with $p=s/m$, $n=m$, $b=m/s$, $g=f$ and $L=\mu'$. Then, 

\begin{align*}
    & \text{Pr}[|f(X)-\mathbb{E}[f(X)]| \ge \epsilon m\mu'/2]\\
    & = \text{Pr}[|f(X)-\mathbb{E}[f(X)]| \ge (\epsilon/2) (s/m) \cdot m \cdot (m/s)\cdot  \mu']\\
    & = \text{Pr}[|f(X)-\mathbb{E}[f(X)]| \ge (\epsilon/2)\cdot  p n b L]\\
    & \le 2 \exp({-(\epsilon/2)^2 p n /3})\\
    & = 2 \exp(-(\epsilon/2)^2 s/3)\\
    & = 2 \exp(-(\epsilon^2/12) \Theta(k\log n/\epsilon^3))\\
    & \le 1/n^{2k+20}
\end{align*}

The last inequality follows due to the sufficiently large constant hidden in the $\Theta$ notation. Now, we proceed towards the proof of Theorem \ref{th:singleMsingleC}. We will show the desired bound in two steps. Here we take a slightly different way than \cite{Cohen-AddadL19} for our convenience. First, we show that w.p. at least $1-1/n^{2k+20}$, $f(\mathbb{E}[X])\le f(X)+\epsilon m\mu'$. Then, we show that w.p. at least $1-1/n^{2k+20}$, $f(X)\le f(\mathbb{E}[X])+\epsilon m\mu'$.  

\paragraph{The First Step.} From Lemma  \ref{lem:conc-ar-ex} it follows that it is sufficient to prove $f(\mathbb{E}[X])\le \mathbb{E}[f(X)]$. Now,  $\mathbb{E}[X]=\mathbb{1}$. Let $Y$ be any outcome of $X$ and $X=Y$ w.p. $p(Y)$. Let $y_i$ be the value in $Y$ corresponding to $p_i\in P'$. Then, there is a feasible flow in $G_Y$, where for each $p_i$, at least $y_i$ demand is satisfied. Now, consider the flow $\phi$ obtained by summing, for each $Y$, the minimum cost feasible flow in $G_Y$ scaled by $p(Y)$. Note that the cost of $\phi$ is $\sum_Y p(Y)f(Y)=\mathbb{E}[f(X)]$. Also, this flow does not violate any capacity, as the sum of the probabilities is 1. Now, in each flow corresponding to $Y$ scaled by $p(Y)$, for each $p_i$, $p(Y)\cdot y_i$ demand is satisfied. Hence, in $\phi$, for each $p_i$, at least $\sum_Y p(Y)\cdot y_i=1$ demand is satisfied, as the expected value of $y_i$ is 1. It follows that, $f(\mathbb{1})=f(\mathbb{E}[X])$ is at most the cost of $\phi$ and we obtain the desired bound.  

\paragraph{The Second Step.} Here we will show that w.p. at least $1-1/n^{2k+20}$, $f(X)\le f(\mathbb{E}[X])+\epsilon m\mu'$. First, we prove that it is sufficient to show that w.p. at least $1-1/n^{3}$, $f(X)\le f(\mathbb{E}[X])+\epsilon m\mu'/3$. 

\begin{lemma}
If $f(X)\le f(\mathbb{E}[X])+\epsilon m\mu'/3$ holds w.p. at least $1-1/n^{3}$, then w.p. $1-1/n^{2k+20}$, $f(X)\le f(\mathbb{E}[X])+\epsilon m\mu'$. 
\end{lemma}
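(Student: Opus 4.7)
The plan is to amplify the hypothesized $1 - 1/n^3$ probability to $1 - 1/n^{2k+20}$ by routing through the expectation $\mathbb{E}[f(X)]$ and invoking the concentration inequality of Lemma~\ref{lem:conc-ar-ex}. Since that lemma already guarantees $|f(X) - \mathbb{E}[f(X)]| \le \epsilon m\mu'/2$ with probability at least $1 - 1/n^{2k+20}$, it suffices to establish the deterministic bound $\mathbb{E}[f(X)] \le f(\mathbb{E}[X]) + \epsilon m\mu'/2$. A triangle-style combination on the intersection of the two high-probability events then yields $f(X) \le f(\mathbb{E}[X]) + \epsilon m\mu'$ with the required probability.

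To bound $\mathbb{E}[f(X)]$, I would split the sample space on the event $E = \{f(X) \le f(\mathbb{E}[X]) + \epsilon m\mu'/3\}$, which by hypothesis satisfies $\Pr[E] \ge 1 - 1/n^3$. On the complementary event $\bar{E}$, I use a crude almost-sure upper bound coming from Observation~\ref{obs:lipschitz}: conditional on $X$ having exactly $s$ non-zero entries (each equal to $m/s$),
\[
\|X - \mathbb{E}[X]\|_1 = \|X - \mathbb{1}\|_1 = s\cdot\bigl(\tfrac{m}{s} - 1\bigr) + (m - s) = 2(m - s) \le 2m,
\]
so by $\mu'$-Lipschitzness, $f(X) \le f(\mathbb{E}[X]) + 2m\mu'$ always. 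Consequently,
\[
\mathbb{E}[f(X)] \le \bigl(f(\mathbb{E}[X]) + \tfrac{\epsilon m\mu'}{3}\bigr)\Pr[E] + \bigl(f(\mathbb{E}[X]) + 2m\mu'\bigr)\Pr[\bar{E}] \le f(\mathbb{E}[X]) + \tfrac{\epsilon m\mu'}{3} + \tfrac{2m\mu'}{n^3} \le f(\mathbb{E}[X]) + \tfrac{\epsilon m\mu'}{2},
\]
where the last step holds provided $12/n^3 \le \epsilon$, a trivial condition under any reasonable setting of parameters (in particular whenever $\epsilon$ is bounded below by an inverse polynomial in $n$).

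Combining this with Lemma~\ref{lem:conc-ar-ex}, with probability at least $1 - 1/n^{2k+20}$ we obtain $f(X) \le \mathbb{E}[f(X)] + \epsilon m\mu'/2 \le f(\mathbb{E}[X]) + \epsilon m\mu'$, as desired. The argument is a routine probability-amplification step, and no serious obstacle is expected: the heavy lifting is already done by Lemma~\ref{lem:conc-ar-ex}, and the only subtle point is ensuring that the contribution of the rare event $\bar{E}$ to the expectation is at most $O(\epsilon m\mu')$. This is cheap because the deterministic Lipschitz bound $2m\mu'$ is polynomial while the failure probability $1/n^3$ is much smaller, so their product is absorbed into $\tfrac{\epsilon m\mu'}{6}$.
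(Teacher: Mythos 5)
Your proof is correct and follows essentially the same route as the paper: condition on the hypothesized good event, bound the contribution of the bad event by a crude almost-sure estimate, conclude $\mathbb{E}[f(X)]\le f(\mathbb{E}[X])+\epsilon m\mu'/2$, and finish with Lemma~\ref{lem:conc-ar-ex}. The only difference is the crude estimate itself --- you use $\|X-\mathbb{1}\|_1\le 2m$, which requires conditioning on $X$ having exactly $s$ non-zero entries, whereas the paper uses the unconditional bound $f(X)\le f(\mathbb{E}[X])+m^2\mu'$ coming from $X\in[0,m/s]^m$; since both bounds are polynomial in $n$ and get multiplied by the failure probability $1/n^3$, either choice is absorbed into the $\epsilon m\mu'/2$ slack.
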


\begin{proof}
Here we will prove that $\mathbb{E}[f(X)]\le f(\mathbb{E}[X])+\epsilon m\mu'/2$. Then, by Lemma \ref{lem:conc-ar-ex}, it follows that w.p. $1-1/n^{2k+20}$, $f(X)\le \mathbb{E}[f(X)]| + \epsilon m\mu'/2\le f(\mathbb{E}[X])+\epsilon m\mu'$. 

First, note that $X \in [0,m/s]^m$. As the function $f$ is $\mu'$-Lipschitz by Observation \ref{obs:lipschitz}, the values of $f(X)$ must lie in an interval of length at most $m/s\cdot m\mu'\le m^2\mu'$. Similarly, $\mathbb{E}[X]=\mathbb{1} \in [0,m/s]^m$, and thus  $f(\mathbb{E}[X])$ is also contained in that interval. Hence, $f(X)\le f(\mathbb{E}[X])+ m^2\mu'$. Now, 

\begin{align*}
    \mathbb{E}[f(X)] &\le (1-1/n^3)\cdot (f(\mathbb{E}[X])+\epsilon m\mu'/3)+(1/n^3)\cdot (f(\mathbb{E}[X])+m^2\mu')\\
    &\le f(\mathbb{E}[X])+\epsilon m\mu'/3+(1/n^2) \cdot m \mu'\\
    &\le f(\mathbb{E}[X])+\epsilon m\mu'/2
\end{align*}
\end{proof}
The following lemma completes the proof of Theorem \ref{th:singleMsingleC}. 

\begin{lemma}\label{lem:fx-bounded-by-fex}
W.p. at least $1-1/n^{3}$, $f(X)\le f(\mathbb{1})+\epsilon m\mu'/3$. 
\end{lemma}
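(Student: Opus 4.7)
I would exhibit, with probability $\ge 1 - 1/n^{3}$, an explicit feasible flow $\tilde F$ in $G_X$ of cost at most $f(\mathbb{1}) + \epsilon m \mu'/3$. Starting from an optimal min-cost flow $F^*$ in $G_{\mathbb{1}}$ of cost $f(\mathbb{1})$ --- equivalently, from the optimal assignment $\phi^*$ of $P$ to $C$ compatible with the constraint matrix $M$ --- I preserve the assignment of $P \setminus P'$ unchanged and modify only the ring's assignment to match the sampled demand vector $X$. Set $C^*_i = (\phi^*)^{-1}(c_i) \cap P'$, $K_i = |C^*_i|$, $S_i = S \cap C^*_i$, and $E_i = |S_i|(m/s) - K_i$; one has $\sum_i E_i = 0$ since $|S|(m/s) = m = \sum_i K_i$.

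The modification has two phases. In the direct phase, each sampled $p \in S_i$ routes its full weight $m/s$ along $(u_p, v_i)$, so $v_i$ receives $|S_i|(m/s) = K_i + E_i$ units from $P'$. In the rerouting phase, the per-cluster imbalances $E_i$ are balanced through $w$: surplus clusters ($E_i > 0$) push $E_i$ back along the $(u_p, w)$ edges (cost $\le \mu'$ per unit, since $d(p, c') \le \mu'$ in the ring $B'$), while deficit clusters ($E_i < 0$) pull $|E_i|$ from $w$ along $(w, v_i)$ (cost $d(c', c_i)$ per unit). Flow balance at $w$ follows from $\sum_i E_i = 0$, so $\tilde F$ is feasible in $G_X$. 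The cost excess over $F^*$ splits as $\Delta_{\mathrm{dir}} + \Delta_{\mathrm{red}}$, where $\Delta_{\mathrm{dir}} = \sum_{p \in P'} ((m/s)\,\mathbb{1}[p \in S] - 1)\, d(p, \phi^*(p))$ is a mean-zero sum of independent bounded random variables, and $\Delta_{\mathrm{red}}$ is the rerouting cost through $w$. A Bernstein-type inequality exploiting the sample size $s = \Theta(k\log n/\epsilon^3)$ gives $|\Delta_{\mathrm{dir}}| \le \epsilon m \mu'/9$ with probability at least $1 - 1/n^{10}$.

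The rerouting cost breaks as $\mu' \sum_i |E_i| + \sum_{i : E_i < 0} |E_i|\, d(c', c_i)$. The key averaging $d(c', c_i) \le \mu' + f^*_i/K_i$ --- triangle inequality with any $p \in C^*_i$, then averaging over the cluster, where $f^*_i = \sum_{p \in C^*_i} d(p, c_i)$ --- rewrites the second sum as at most $\mu' \sum_i |E_i| + \sum_i (|E_i|/K_i)\, f^*_i$. Reducing $|S_i|$ to a sum of $K_i$ independent Bernoulli$(s/m)$ indicators (via the standard surrogate used earlier in the section) and applying Bernstein yields $||S_i| - K_i s/m| \le O(\sqrt{K_i (s/m)\log n} + \log n)$ per cluster with probability $\ge 1 - 1/n^{5}$; multiplying by $m/s$ gives $|E_i| \le O(\sqrt{K_i (m/s)\log n} + (m/s)\log n)$. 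Summing via Cauchy--Schwarz and substituting $s = \Theta(k\log n/\epsilon^3)$ produces $\sum_i |E_i| \le \epsilon m/9$. A careful two-case analysis --- separating clusters with $K_i \gtrsim (m/s)\log n/\epsilon^{2}$ (for which $|E_i|/K_i \le \epsilon/9$ by the same Bernstein bound) from the remaining small clusters (handled with the trivial estimate $|E_i| \le O((m/s)\log n)$) --- controls $\sum_i (|E_i|/K_i)\, f^*_i$ without introducing a spurious $\epsilon \cdot f(\mathbb{1})$ term. A final union bound over the $O(k)$ concentration events gives the desired $1 - 1/n^{3}$ probability.

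\textbf{Main obstacle.} The critical technical difficulty is the deficit-side rerouting cost $\sum_i |E_i|\, d(c', c_i)$: the per-unit rate $d(c', c_i)$ is \emph{not} bounded by $\mu'$ in general, so the $\mu'$-Lipschitz bound of Observation~\ref{obs:lipschitz} alone is too weak. The resolution sketched above mirrors the technically delicate core of the analysis of Cohen-Addad and Li~\cite{Cohen-AddadL19} for capacitated clustering; it is precisely where the sample size $s = \Theta(k\log n/\epsilon^3)$ is fully exploited, as both the Cauchy--Schwarz sum over clusters and the per-cluster Bernstein bound must simultaneously beat the $\epsilon m \mu'$ budget after multiplication by the uncontrolled factor $d(c', c_i)$.
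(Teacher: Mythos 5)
Your overall architecture is the same as the paper's (which follows Cohen-Addad--Li): modify the optimal flow for $\mathbb{1}$ into a feasible flow for $X$ by rebalancing the per-cluster imbalances through the ring center $w$, and control the extra cost by per-cluster concentration plus the averaging bound $d(c',c_i)\le \mu' + h(P'_i)/|P'_i|$. You also correctly identify the main obstacle. However, there is a genuine gap in the step where you bound the direct-phase cost. The summands of $\Delta_{\mathrm{dir}}=\sum_{p\in P'}\bigl((m/s)\mathbb{1}[p\in S]-1\bigr)\,d(p,\phi^*(p))$ are mean-zero but bounded only by $(m/s)\,d(p,c_i)$, and $d(p,c_i)$ is \emph{not} controlled by $\mu'$ (only the distance to the ring center $c'$ is). So a Bernstein bound on $\Delta_{\mathrm{dir}}$ does not yield $\epsilon m\mu'/9$; its variance and range scale with the actual assignment distances, which can be arbitrarily large relative to $\mu'$. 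To make this work you must first center the distances within each cluster (they have spread at most $2\mu'$ inside one ring), which is exactly what the paper does by applying Chen's sampling lemma with range parameter $T=2\mu'$ to each $P'_i$. But centering leaves behind the term $\sum_i (E_i/K_i)\,f^*_i$ (your notation), and this term cannot merely be \emph{bounded} --- it must be \emph{cancelled} against the rerouting cost through $w$. The paper achieves exact cancellation: the deficit $|P'_i|-|Q''_i|(m/s)$ routed from $w$ at per-unit cost at most $h(P'_i)/|P'_i|+\mu'$ contributes $h(P'_i)-\tfrac{h(P'_i)|Q''_i|}{|P'_i|}(m/s)$ plus an $O(\epsilon|P'_i|\mu')$ term, and the first part exactly offsets the sample's under-representation of $h(P'_i)$.

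Your scheme also leaves an uncancelled $+(E_i/K_i)f^*_i$ on \emph{surplus} clusters ($E_i>0$): routing all of $S_i$ toward $v_i$ and then diverting $E_i$ units to $w$ gives an upper bound of $(1+E_i/K_i)f^*_i+O(\epsilon K_i\mu')+E_i\mu'$, and nothing in the rerouting compensates the $(E_i/K_i)f^*_i$ excess, which is of order $\epsilon f^*_i$, not $\epsilon K_i\mu'$. The paper sidesteps this entirely by subsampling $Q''_i\subseteq Q'_i$ down to exactly $\lfloor|P'_i|s/m\rfloor$ points (so the directly routed mass never exceeds $|P'_i|$, and every cluster is effectively a deficit cluster) and dumping the leftover $Q'_i\setminus Q''_i$ onto $w$ at cost $\le\mu'$ per unit. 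Finally, your treatment of small clusters is too vague: the paper's second case (when $\mathbb{E}[|Q'_i|]<\epsilon s/(100k)$) needs a different cancellation, in which $d(p,c_i)\le d(p,c')+d(c',c_i)$ is used so that the $|Q''_i|(m/s)d(c',c_i)$ terms from the direct phase and the $w$-routing cancel exactly, leaving only an additive $(\epsilon m/(20k))\mu'$ per cluster. Without these cancellation mechanisms the proof as sketched does not close.
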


The proof of this lemma is very similar to the proof of Lemma 20 in \cite{Cohen-AddadL19}. For completeness, the proof appears in the Appendix.

\subsubsection{Multiple Ring Case}
In the previous section, we have shown how to bound the error for a fixed ring. Here we extend the ideas to the multiple ring case. Intuitively, we use a union bound over all rings to obtain the desired high probability bound. However, we need to consider the samples from all the rings corresponding to the color $\gamma$ together. Let $W'$ be the corresponding coreset. 

We consider any arbitrary ordering of all the rings, and for any two rings $B'_{i,j}$ and $B'_{i',j'}$, we say $({i,j}) < ({i',j'})$ if $B'_{i,j}$ precedes $B'_{i',j'}$ in this ordering. Consider any ring $B'_{i,j}$. We define a function $f_{i,j}$ corresponding to this ring similar to the function $f$. Let $P'_{i,j}=P'_{i,j,\gamma}$. Also, let $W'_{i,j}$ be the samples chosen from $P'_{i,j}$. The input to the function $f_{i,j}$ is a vector $Y\in \mathbb{R}_{\ge 0}^{|P'_{i,j}|}$ that is indexed by the points of the ring. We construct a network $G_Y$ as before. But, as we consider samples from all the rings, the demands of the points are defined in a different way than before. For each point in $p\in P'_{i,j}$, its demand is $Y[p]$. Set the demand of $w$ to $|P'_{i,j}|-\sum_{p\in P'_{i,j}} Y[p]$. For each ring $B'_{i',j'}\ne B'_{i,j}$, and for each point $p\in W'_{i',j'}$, set its demand to $|P'_{i',j'}|/s$. Note that the total demand corresponding to $B'_{i',j'}$ is $s\cdot |P'_{i',j'}|/s=|P'_{i',j'}|$. Thus, in $G_Y$ we fix the samples of all the rings except $B'_{i,j}$. $f_{i,j}(Y)$ is the cost of the minimum cost flow in $G_Y$. 

Let $\mathbb{E}_{W'_{i,j}:(i',j') > (i,j)}[f_{i,j}(Y)|W'_{i_1,j_1}:(i_1,j_1) < (i,j)]$ ($\mathbb{E}_{> (i,j)} [f_{i,j}(Y)]$ in short) be the expectation of $f_{i,j}(Y)$ over all samples $W'_{i',j'}$ for $(i',j') > (i,j)$ given fixed samples $W'_{i_1,j_1}$ for all $(i_1,j_1) < (i,j)$. Similarly, define $\mathbb{E}_{W'_{i,j}:(i',j') \ge (i,j)}[f_{i,j}(Y)|W'_{i_1,j_1}:(i_1,j_1) < (i,j)]$ or $\mathbb{E}_{\ge (i,j)} [f_{i,j}(Y)]$ in short. Recall that in the single ring case we showed that  w.p. at least $1-1/n^{2k+10}$, $|f(X)-f(\mathbb{E}[X])|\le \epsilon m\mu'$. Similarly, here we obtain the following lemma. 

\begin{lemma}\label{lem:singleringf}
W.p. at least $1-1/n^{2k+10}$, for any ring $B'_{i,j}$, $|\mathbb{E}_{> (i,j)} [f_{i,j}(Y)]-\mathbb{E}_{\ge (i,j)} [f_{i,j}(Y)]|\le \epsilon |P'_{i,j}| \cdot 2^j\mu$. 
\end{lemma}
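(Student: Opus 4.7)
The plan is to reduce this multi-ring statement to a single-ring concentration bound of the same flavor as Lemma~\ref{lem:conc-ar-ex}, applied individually to each ring $B'_{i,j}$ after averaging out the strictly-later-ring randomness, and then to take a union bound over all rings.

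Fix a ring $B'_{i,j}$ and condition on any realization of the earlier samples $W'_{i_1,j_1}$ for $(i_1,j_1)<(i,j)$. Define
$$
g_{i,j}(Y)\;:=\;\mathbb{E}_{W'_{i',j'}:(i',j')>(i,j)}\bigl[f_{i,j}(Y)\bigr],
$$
so that $\mathbb{E}_{>(i,j)}[f_{i,j}(Y)]=g_{i,j}(Y)$ evaluated at the actual random sampling vector $Y$ for ring $(i,j)$, while $\mathbb{E}_{\ge(i,j)}[f_{i,j}(Y)]=\mathbb{E}_{Y}[g_{i,j}(Y)]$ is the additional average over the $(i,j)$-sampling. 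The desired inequality thus becomes the concentration statement
$$
\bigl|g_{i,j}(Y)-\mathbb{E}_{Y}[g_{i,j}(Y)]\bigr|\;\le\;\epsilon\,|P'_{i,j}|\cdot 2^j\mu.
$$

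The structural ingredient I would verify is that $g_{i,j}$ is $2^j\mu$-Lipschitz in $\ell_1$. For every fixed realization of the strictly-later samples, $f_{i,j}(\cdot)$ is $2^j\mu$-Lipschitz in its input by the exact argument of Observation~\ref{obs:lipschitz}: increasing a coordinate $Y[p]$ by $\delta$ can be absorbed by routing $\delta$ extra units of flow between the vertex of $p$ and the auxiliary vertex $w$ at cost $\delta\cdot 2^j\mu$, and symmetrically for a decrease. Since Lipschitz continuity in $\ell_1$ is preserved under averaging, $g_{i,j}$ inherits the same Lipschitz constant $2^j\mu$ uniformly in the conditioning.

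Now I would invoke the Chernoff-type inequality used inside Lemma~\ref{lem:conc-ar-ex}. Each coordinate of $Y$ takes value $|P'_{i,j}|/s$ with probability $s/|P'_{i,j}|$ and $0$ otherwise, independently. Substituting $n=|P'_{i,j}|$, $p=s/|P'_{i,j}|$, $b=|P'_{i,j}|/s$, $g=g_{i,j}$, and $L=2^j\mu$ into that inequality (so that $\epsilon\cdot pnbL=\epsilon\,|P'_{i,j}|\cdot 2^j\mu$) yields
$$
\Pr\bigl[\,|g_{i,j}(Y)-\mathbb{E}_{Y}[g_{i,j}(Y)]|\ge \epsilon\,|P'_{i,j}|\cdot 2^j\mu\,\bigr]\;\le\;2\exp(-\epsilon^2 s/3)\;\le\;1/n^{2k+20},
$$
for $s=\Theta(k\log n/\epsilon^3)$ with a sufficiently large hidden constant. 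Because this bound is uniform over any realization of the earlier samples, it holds unconditionally with the same probability. A union bound over the $k(N+1)=O(k\log n)$ rings then gives failure probability at most $1/n^{2k+10}$, which is the claimed bound.

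The main obstacle is conceptual rather than computational: one has to recognize that the statement of the lemma is genuinely a single-random-variable concentration inequality for $Y$ once the strictly-later randomness has been integrated out, and that the $2^j\mu$-Lipschitz property of $f_{i,j}$ survives this averaging to yield the same Lipschitz constant for $g_{i,j}$. Once this is in hand, the rest of the argument is a direct reuse of the Chernoff-type machinery already developed in the single-ring analysis.
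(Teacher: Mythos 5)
Your concentration argument is sound as far as it goes, and the key structural observation --- that $g_{i,j}(Y):=\mathbb{E}_{>(i,j)}[f_{i,j}(Y)]$ inherits the $2^j\mu$-Lipschitz property of $f_{i,j}$ because an average of $L$-Lipschitz functions is $L$-Lipschitz, uniformly in the conditioning on earlier samples --- is exactly the right way to push the Chernoff-type bound through the averaging over later rings. But what you have proved is only the analogue of Lemma~\ref{lem:conc-ar-ex}, namely $|g_{i,j}(Y)-\mathbb{E}_Y[g_{i,j}(Y)]|\le \epsilon|P'_{i,j}|\,2^j\mu$: concentration of the function around \emph{its expectation}. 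The lemma has to deliver the analogue of the full Theorem~\ref{th:singleMsingleC}. Indeed, the telescoping that immediately follows it identifies the endpoint $\mathbb{E}_{\ge(i_1,j_1)}[\wcost(W',M,C)]$ with $\cost(P,M,C)=f(\mathbb{1})$, i.e.\ with the function evaluated \emph{at} the expected demand vector, not with the expectation of the function. Since $f$ is the optimal value of a min-cost flow LP and hence convex in the demands, $\mathbb{E}_Y[g_{i,j}(Y)]$ and $g_{i,j}(\mathbb{1})$ genuinely differ, and bounding this Jensen gap is the hard half of the single-ring analysis: the flow-averaging argument giving $f(\mathbb{E}[X])\le\mathbb{E}[f(X)]$, and, crucially, the explicit flow-rerouting construction of Lemma~\ref{lem:fx-bounded-by-fex} giving $\mathbb{E}[f(X)]\le f(\mathbb{E}[X])+\epsilon m\mu'/2$. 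Neither of these follows from Lipschitzness plus a Chernoff bound, so this component is a genuine gap in your proposal.

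Concretely, the missing step is to re-run the second half of the single-ring proof for the averaged function $g_{i,j}$ (equivalently, for $f_{i,j}$ in the network whose other-ring demands come from the fixed earlier samples and the averaged later samples), establishing $|\mathbb{E}_Y[g_{i,j}(Y)]-g_{i,j}(\mathbb{1})|\le \epsilon|P'_{i,j}|\,2^j\mu/2$ via the case analysis and rerouting of Lemma~\ref{lem:fx-bounded-by-fex}; combined with your concentration bound this yields the statement the telescoping actually needs. Two minor further points: you argue entirely in the independent-Bernoulli sampling model and never condition back on the event that exactly $s$ points are drawn (the paper absorbs an $O(\sqrt{n})$ factor in the failure probability for this translation), and your union bound over the $O(k\log n)$ rings is harmless but not required for the per-ring form of the statement, which the paper only unions over rings two lemmas later.
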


Note that we would like to show the bound in terms of multiple rings together instead of just one ring $B'_{i,j}$. In particular, we would like to give a bound w.r.t. $\text{wcost}(W',M,C)$, where $M$ is a column matrix. Correspondingly we define $\mathbb{E}_{> (i,j)} [\text{wcost}(W',M,C)]$ and $\mathbb{E}_{\ge (i,j)} [\text{wcost}(W',M,C)]$. From Lemma \ref{lem:singleringf}, we readily obtain the following lemma. 

\begin{lemma}
W.p. at least $1-1/n^{2k+8}$, $|\mathbb{E}_{> (i,j)} [\emph{wcost}(W',M,C)]-\mathbb{E}_{\ge (i,j)} [\emph{wcost}(W',M,C)]|\le \epsilon |P'_{i,j}| \cdot 2^j\mu$. 
\end{lemma}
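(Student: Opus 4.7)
The plan is to reduce the claim directly to Lemma~\ref{lem:singleringf} by exhibiting a pointwise identification of $\text{wcost}(W', M, C)$ with $f_{i,j}$ evaluated at a specific random input. Concretely, I would set $Y_{i,j} \in \mathbb{R}_{\ge 0}^{|P'_{i,j}|}$ to be the weight vector induced by the actual samples drawn at ring $B'_{i,j}$: $Y_{i,j}[p] = |P'_{i,j}|/s$ for $p \in W'_{i,j}$ and $Y_{i,j}[p] = 0$ otherwise. By the construction of the flow network $G_{Y_{i,j}}$, its minimum-cost feasible flow is precisely the optimal weighted assignment of the coreset (with weight $|P'_{i',j'}|/s$ at every sampled point across every ring) to $C$ subject to the column constraint $M$, which equals $\text{wcost}(W', M, C)$ by definition. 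Hence $f_{i,j}(Y_{i,j}) = \text{wcost}(W', M, C)$ as random variables on the joint sampling probability space.

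Given this pointwise identification, the two conditional expectations on the left-hand side of the target inequality coincide with the conditional expectations appearing in Lemma~\ref{lem:singleringf}. In particular, fixing $W'_{i_1,j_1}$ for $(i_1,j_1)<(i,j)$ and averaging over the remaining sampling choices (either strictly after $(i,j)$, or from $(i,j)$ onwards) yields $\mathbb{E}_{>(i,j)}[\text{wcost}(W',M,C)] = \mathbb{E}_{>(i,j)}[f_{i,j}(Y_{i,j})]$ and $\mathbb{E}_{\ge(i,j)}[\text{wcost}(W',M,C)] = \mathbb{E}_{\ge(i,j)}[f_{i,j}(Y_{i,j})]$. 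Substituting these identities into Lemma~\ref{lem:singleringf} then directly produces the desired bound on the difference of the two conditional expectations of $\text{wcost}$.

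The main obstacle I anticipate is the bookkeeping around the two sampling schemes used along the way. Lemma~\ref{lem:singleringf} is ultimately derived through the alternative Bernoulli sampling, where the vector $X_{i,j}$ may have more or fewer than $s$ non-zero entries, whereas $\text{wcost}(W', M, C)$ reflects the original uniform without-replacement scheme that selects exactly $s$ samples. As argued before Theorem~\ref{th:singleMsingleC}, the distribution of the Bernoulli sampling conditioned on $X_{i,j}$ having exactly $s$ non-zero entries coincides with the original scheme, at the price of a $\Theta(\sqrt{n})$ factor in probability. Propagating this conditioning through the conditional-expectation identification accounts for the slight weakening of the failure probability from $1/n^{2k+10}$ in Lemma~\ref{lem:singleringf} to $1/n^{2k+8}$ in the present statement (with some slack to spare). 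No conceptually new argument is needed beyond carefully tracking this conditioning and verifying that $f_{i,j}(Y_{i,j})$ and $\text{wcost}(W',M,C)$ really do coincide on the event of exactly $s$ non-zero entries.
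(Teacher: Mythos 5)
Your proposal is correct and follows essentially the same route as the paper, which derives this lemma from Lemma~\ref{lem:singleringf} precisely via the identification $f_{i,j}(Y_{i,j})=\text{wcost}(W',M,C)$ when $Y_{i,j}$ is the weight vector induced by the actual samples at ring $B'_{i,j}$ (the paper states only that the lemma is ``readily obtained''). Your accounting of the Bernoulli-versus-exact-sampling conditioning and the resulting slack in the exponent is consistent with how Lemma~\ref{lem:singleringf} is itself established and introduces no gap.
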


Now consider going over all the rings in the ordering and applying the above lemma. Let $(i_1,j_1)$ and $(i',j')$ be the indexes of the first and last ring, respectively. Then the total deviation between $\mathbb{E}_{\ge (i_1,j_1)} [\text{wcost}(W',M,C)]$ and $\mathbb{E}_{> (i',j')} [\text{wcost}(W',M,C)]$ is at most $\sum_{(i,j)} \epsilon |P'_{i,j}| \cdot 2^j\mu$. But, $\mathbb{E}_{\ge (i_1,j_1)} $ $[\text{wcost}(W',M,C)]=$ cost$(P,M,C)$ and $\mathbb{E}_{> (i',j')} [\text{wcost}(W',M,C)]=\text{wcost}(W',M,C)$, and hence by taking union bound over all $O(k\log n)$ rings, we obtain the following lemma.  

\begin{lemma}
For any fixed set $C$ of $k$ centers and any fixed column matrix $M$, w.p. at least $1-1/n^{2k+5}$, $|\emph{cost}(P,M,C)-\emph{wcost}(W',M,C)|\le \sum_{(i,j)} \epsilon |P'_{i,j}| \cdot 2^j\mu$. 
\end{lemma}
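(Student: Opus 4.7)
The plan is to carry out a telescoping argument over the fixed ordering of rings together with a union bound, using the previous per-ring deviation lemma as the main ingredient. Enumerate the rings for color $\gamma$ in the chosen ordering as $(i_1,j_1) < (i_2,j_2) < \cdots < (i_r,j_r)$, where $r = O(k\log n)$. For $0 \le t \le r$, define the intermediate quantity
\[
U_t \;=\; \mathbb{E}\big[\mathrm{wcost}(W',M,C) \;\big|\; W'_{(i_1,j_1)},\ldots,W'_{(i_t,j_t)}\big],
\]
where the remaining expectation is over the samples of rings strictly later than position $t$ in the ordering. The two endpoint identities we need are $U_0 = \mathrm{cost}(P,M,C)$ and $U_r = \mathrm{wcost}(W',M,C)$. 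The second is immediate since all randomness has been conditioned on. The first follows from $\mathbb{E}[X] = \mathbb{1}$ for each single ring's sample (established in the single-ring step), so that the flow network arising when no samples are fixed is exactly the one encoding $\mathrm{cost}(P,M,C)$.

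Second, I would read the previous lemma as a per-step bound on $|U_t - U_{t-1}|$. Conditioning on $W'_{(i_1,j_1)},\ldots,W'_{(i_{t-1},j_{t-1})}$ puts us into exactly the setting of $f_{i_t,j_t}(Y)$, with $\mathbb{E}_{>(i_t,j_t)}[f_{i_t,j_t}(Y)]$ corresponding to $U_t$ (for the realized value of $W'_{(i_t,j_t)}$) and $\mathbb{E}_{\ge(i_t,j_t)}[f_{i_t,j_t}(Y)]$ corresponding to $U_{t-1}$. By the previous lemma, the event $|U_t - U_{t-1}| \le \epsilon |P'_{i_t,j_t}| \cdot 2^{j_t}\mu$ holds with probability at least $1-1/n^{2k+8}$ for each fixed $t$.

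Third, a union bound over the $r = O(k \log n)$ rings gives that all per-step bounds hold simultaneously with probability at least $1 - O(k\log n)/n^{2k+8} \ge 1 - 1/n^{2k+5}$ (for $n$ sufficiently large, absorbing the polylog factor into the exponent). On this event, the triangle inequality telescopes to
\[
  \big|\mathrm{cost}(P,M,C) - \mathrm{wcost}(W',M,C)\big| \;=\; |U_0 - U_r| \;\le\; \sum_{t=1}^{r} |U_t - U_{t-1}| \;\le\; \sum_{(i,j)} \epsilon\, |P'_{i,j}| \cdot 2^j \mu,
\]
which is exactly the desired bound.

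The main technical obstacle is the care required in handling the conditional structure. The per-ring lemma's probability guarantee must hold uniformly over every realization of the earlier rings' samples, so one has to argue that for any fixed conditioning history the probability bound remains $1 - 1/n^{2k+8}$ over the draw of $W'_{(i_t,j_t)}$; only then does the union bound across the $r$ steps go through cleanly. A minor but essential subtlety is the identification $U_0 = \mathrm{cost}(P,M,C)$: one must verify that replacing the random weights of the rings by their expectations $\mathbb{1}$ in the flow network yields precisely the original (unsampled) assignment cost, which in turn relies on the equality $f_{i,j}(\mathbb{1}) = \mathrm{cost}(P,M,C)$ already observed in the single-ring analysis of Theorem~\ref{th:singleMsingleC}.
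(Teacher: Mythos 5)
Your proposal is correct and follows essentially the same route as the paper: the paper also telescopes over the fixed ring ordering via the conditional expectations $\mathbb{E}_{\ge (i,j)}$ and $\mathbb{E}_{> (i,j)}$ (your $U_{t-1}$ and $U_t$), applies the per-ring deviation lemma to each increment, takes a union bound over the $O(k\log n)$ rings, and uses the same endpoint identities $U_0 = \mathrm{cost}(P,M,C)$ and $U_r = \mathrm{wcost}(W',M,C)$. The conditioning subtlety you flag is handled (somewhat implicitly) the same way in the paper, so there is no substantive difference.
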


By taking union bound over all column matrices $M$, we obtain the desired bound. 

\begin{lemma}\label{lem:boundonecolorallM}
For any fixed set $C$ of $k$ centers and for all column matrices $M$, w.p. at least $1-1/n^{k+4}$, $|\emph{cost}(P,M,C)-\emph{wcost}(W',M,C)|\le \sum_{(i,j)} \epsilon |P'_{i,j}| \cdot 2^j\mu$. 
\end{lemma}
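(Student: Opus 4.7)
The plan is to obtain Lemma~\ref{lem:boundonecolorallM} as an immediate union bound over all column matrices $M$, applied on top of the preceding fixed-$M$ lemma (the one asserting that for a fixed $C$ and a fixed column matrix $M$, the error exceeds $\sum_{(i,j)} \epsilon |P'_{i,j}| \cdot 2^j\mu$ with probability at most $1/n^{2k+5}$). Since the coreset $W'$ is drawn once and the claim in that preceding lemma is stated for arbitrary but fixed $M$, the same random draw of $W'$ can be reused across all choices of $M$, so the failure events are simply enumerated over $M$.

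The only quantitative input needed is a bound on the number of relevant column matrices. A column matrix $M$ is just a $k$-tuple of nonnegative integers, and for the inequality to be non-vacuous we may restrict to those $M$ for which $\text{cost}(P,M,C) < \infty$; since $P$ here consists only of points of the fixed color $\gamma$ and the flow construction forces $\sum_{i=1}^{k} M[i] = |P_\gamma| \le n$, the relevant column matrices are in bijection with compositions of $|P_\gamma|$ into $k$ nonnegative parts, of which there are $\binom{|P_\gamma|+k-1}{k-1} \le n^{k}$ for $n$ large enough. For any $M$ with $\sum_i M[i] \neq |P_\gamma|$, both $\text{cost}(P,M,C)$ and $\text{wcost}(W',M,C)$ are $\infty$ and the statement is vacuously satisfied (or, by convention, excluded from the claim).

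Applying the preceding fixed-$M$ lemma to each of these at most $n^k$ column matrices and taking a union bound, the total failure probability is at most
\[
n^k \cdot \frac{1}{n^{2k+5}} \;=\; \frac{1}{n^{k+5}} \;\le\; \frac{1}{n^{k+4}},
\]
which is exactly the bound asserted by Lemma~\ref{lem:boundonecolorallM}. There is no real obstacle in this step: the only mild subtlety is to make sure we are counting the right set of matrices (valid ones whose column sums equal $|P_\gamma|$), and that the fixed-$M$ bound from the previous lemma really holds uniformly in $M$ after the sample $W'$ is drawn, which it does because randomness enters only through $W'$ and the failure event depends on $M$ in a purely deterministic way once $W'$ is fixed.
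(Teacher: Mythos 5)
Your proposal is correct and is exactly the argument the paper uses: Lemma~\ref{lem:boundonecolorallM} is obtained from the preceding fixed-$M$ lemma by a union bound over the at most $n^k$ column matrices with the correct column sum (all others give infinite cost on both sides), yielding failure probability $n^k \cdot n^{-(2k+5)} = n^{-(k+5)} \le n^{-(k+4)}$. Your additional remarks on counting compositions and on the vacuity of matrices with the wrong column sum are consistent with the paper's (terser) treatment.
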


\section{Coreset Construction for $k$-median in the Overlapping Group Case}
\label{sec:coresetconstructionkmedianoverlapping}

In this section, we prove the following theorem. 
\begin{theorem}\label{th:coresetthmoverlap}
 Given a collection of $\ell$ possibly overlapping groups consisting of $n$ points in total in a metric space, there is an $O(n ( k+\ell))$ time randomized algorithm that w.p. at least $1-1/n$, computes a universal coreset for $k$-median clustering of size $O(\Gamma (k\log n)^2/{\epsilon}^3)$.  
\end{theorem}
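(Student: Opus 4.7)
The plan is to reduce Theorem~\ref{th:coresetthmoverlap} to the disjoint-group Theorem~\ref{th:coresetthmdisjoint} by partitioning points into equivalence classes according to their group-membership sets, and then mirroring the hybrid argument from Section~\ref{sec:coresetconstructionkmediandisjoint} with one extra reduction for each class.

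First I would spell out the construction. For each $p \in P$, define $J_p = \{j \in [\ell] : p \in P_j\}$, and let $I_1, \ldots, I_\Gamma$ be the distinct sets arising as $J_p$. Let $P^\tau = \{p : J_p = I_\tau\}$; these form a disjoint partition of $P$ into $\Gamma \le n$ classes. Running the disjoint-group algorithm of Section~\ref{sec:coresetconstructionkmediandisjoint} while treating each $P^\tau$ as a color class produces a coreset $W = \bigcup_{\tau=1}^\Gamma W_\tau$ of size $O(\Gamma(k\log n)^2/\epsilon^3)$. Computing the equivalence classes takes $O(n\ell)$ time (one scan through the group memberships of each point), and the sampling then takes $O(nk)$ time, for a total of $O(n(k+\ell))$.

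For correctness I would fix a set $C$ of $k$ centers and observe that the hybrid argument of Section~\ref{sec:coresetconstructionkmediandisjoint} (used to derive Lemma~\ref{lem:boundallcolorallgenM} from the single-color Lemma~\ref{lem:genboundonecolorallgenM}) goes through verbatim once we establish the following analog of Lemma~\ref{lem:genboundonecolorallgenM} for equivalence classes: for any fixed $t \in [\Gamma]$ and any fixed $C$, with probability at least $1 - 1/n^{k+4}$, for every $k \times \ell$ matrix $M$,
\[
\bigl|\text{cost}(P, M, C) - \text{wcost}(W_t \cup (P \setminus P^t), M, C)\bigr| \le \sum_{(i,j)} \epsilon\, |P'_{i,j,t}| \cdot 2^j \mu,
\]
where $P'_{i,j,t}$ denotes the points of class $t$ in ring $B'_{i,j}$. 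A union bound over the $\Gamma \le n$ classes (using the independence of the samples across classes) together with the ring-by-ring accounting used to derive Lemma~\ref{lem:boundallcolorallCallgenM} would then finish the proof, up to a constant rescaling of $\epsilon$.

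The main obstacle, and the only genuinely new step, is proving the single-class lemma above, since a point of $P^t$ now contributes to multiple groups simultaneously and the cost cannot be decomposed across groups as in the disjoint case. The crucial observation is that \emph{all} points of $P^t$ belong to exactly the same set of groups $I_t$, so in any feasible assignment the contribution of $P^t$ to each column $M^{j}$ with $j \in I_t$ is identical and equals the per-center count of $P^t$-points. Hence, for any $k \times \ell$ matrix $M$, an optimal assignment decomposes $M = M'' + M^*$, where $M''$ is supported on $P \setminus P^t$, and $M^*$ is supported on the columns $j \in I_t$ with all of those columns equal to a common residual vector $T \in \mathbb{Z}_{\ge 0}^k$ satisfying $\sum_i T_i = |P^t|$. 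This yields
\[
\text{cost}(P, M, C) = \min_{M'', T} \bigl(\text{cost}(P \setminus P^t, M'', C) + \text{cost}(P^t, T, C)\bigr),
\]
and an analogous identity with $W_t$ in place of $P^t$ for $\text{wcost}$. Consequently, the error $|\text{cost}(P,M,C) - \text{wcost}(W_t \cup (P\setminus P^t), M, C)|$ is bounded by $\max_T |\text{cost}(P^t, T, C) - \text{wcost}(W_t, T, C)|$, the maximum being taken over column matrices $T$. Since within $P^t$ all points share a single effective color, the samples $W_t$ were drawn by exactly the single-color scheme analyzed in Section~\ref{sec:coresetconstructionkmediandisjoint}, and Lemma~\ref{lem:boundonecolorallM} applied to the instance $(P^t, W_t)$ gives the required bound uniformly over all column matrices $T$ with probability at least $1 - 1/n^{k+4}$, completing the proof.
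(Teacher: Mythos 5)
Your proposal is correct and follows essentially the same route as the paper: partition into equivalence classes by group-membership set, run the disjoint-case sampler on the classes, and use the hybrid/union-bound argument over classes after reducing each class to the single-color, column-matrix lemma via the observation that all of $P^t$ contributes a common residual column $T$ with $\sum_i T_i = |P^t|$. The only (cosmetic) difference is that the paper re-indexes the coloring constraints as $k\times\Gamma$ matrices over equivalence classes, which makes the decomposition of $M$ trivial, whereas you keep $k\times\ell$ matrices and absorb the ambiguity into a minimum over the residual $T$ — both yield the same key lemma and the same bound.
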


Let $P=\cup_{i=1}^{\ell} P_i$. For each point $p\in P$, let $J_p\subseteq [\ell]$ be the set of indexes of the groups to which $p$ belongs. Let $I$ be the distinct collection of these sets $\{J_p\mid p\in P\}$ and $|I|=\Gamma$. In particular, let $I_1,\ldots,I_{\Gamma}$ be the distinct sets in $I$. Now, we partition the points in $P$ based on these sets. For $1\le i \le \Gamma$, let $P^i=\{p\in P\mid I_i=J_p\}$. Thus, $\{P^i\mid 1\le i\le \Gamma\}$ defines equivalence classes for $P$ such that two points $p,p'\in P$ belong to the same equivalence class if they are in exactly the same set of groups. 

In the overlapping case, we will work with an even stronger definition of coresets. This is for the ease of computation of an optimal cost assignment of the points in the coreset. Here instead of $k\times \ell$ matrices, coloring constraints are defined by $k\times \Gamma$ matrices. The rows still correspond to $k$ centers, but the columns now correspond to the $\Gamma$ equivalence classes. Thus, for such a matrix $M$, $M_{ij}$ denotes the number of points from $P^j$ that are in cluster $i$. Thus, the entries of $M$ define a partition of the points in $P$. We note that Proposition \ref{prop:coloringtofair} continues to hold, as any fair assignment of the points in $P$ defines such a matrix $M$. Now, the definition of universal coreset remains same, except here \text{wcost}$(W,M,C)$ is defined in the following natural way.

Suppose we are given a weight function $w: P\rightarrow \mathbb{R}_{\ge 0}$. Let $W\subseteq P\times \mathbb{R}$ be the set of pairs $\{(p,w(p))\mid p\in P \text{ and } w(p) > 0\}$. For a set of centers $C=\{c_1,\ldots,c_k\}$ and a coloring constraint $M$, \text{wcost}$(W,M,C)$ is the minimum value $\sum_{p\in P, c_i\in C} \psi(p,c_i)\cdot  d(p,c_i)$ over all assignments $\psi: P \times C\rightarrow  \mathbb{R}_{\ge 0}$ such that 
\begin{enumerate}
 \item For each $p\in P$, $\sum_{c_i\in C} \psi(p,c_i)=w(p)$. 
 
 \item For each $c_i\in C$ and class $1\le j\le \Gamma$, $\sum_{p\in P^j} \psi(p,c_i) = M_{ij}$. 
\end{enumerate}

If there is no such assignment $\psi$, \text{wcost}$(W,M,C)=\infty$. When $w(p)=1$ for all $p\in P$, we simply denote $W$ by $P$ and \text{wcost}$(W,M,C)$ by cost$(P,M,C)$. Note that for a fixed matrix $M$, an optimal assignment $\psi$ must be integral due to integrality of flow. This was not-necessarily true with our previous definition in the overlapping case. We will compute a coreset that satisfies this even stronger definition.    

With the above definitions, our algorithm in the overlapping case is a natural extension of the one in the disjoint case. The main idea of our algorithm is to divide the points into disjoint equivalence classes based on their group membership and sample points from each equivalence class. We compute the disjoint classes $\{P^i\mid 1\le i\le \Gamma\}$ defined above. Then, apply our algorithm in the disjoint case on these disjoint sets of points $P^1,\ldots,P^{\Gamma}$. Let $W$ be the constructed coreset.   

\subsection{The Analysis}
Recall that $P_{i,j}$ is the total number of points in each ring $B'_{i,j}$. We will prove the following lemma. 

\begin{lemma}\label{lem:boundallcolorallgenMoverlap}
For any fixed set $C$ of $k$ centers and for all $k\times \Gamma$ matrix $M$, w.p. at least $1-1/n^{k+2}$, $|\emph{cost}(P,M,C)-\emph{wcost}(W,M,C)|\le \sum_{(i,j)} \epsilon |P_{i,j}| \cdot 2^j\mu$. 
\end{lemma}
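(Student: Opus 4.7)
The plan is to reduce Lemma~\ref{lem:boundallcolorallgenMoverlap} to the already-established analysis of the disjoint case by exploiting that the equivalence classes $P^1,\ldots,P^{\Gamma}$ partition $P$ and that each column of a $k\times\Gamma$ constraint matrix is now indexed by exactly one equivalence class. The key structural observation, analogous to the one used in the disjoint case, is the following: since each point of $P$ belongs to a unique class $P^\tau$, for any $k\times\Gamma$ matrix $M$ with column $M^\tau$ we have the decomposition
$$\text{cost}(P,M,C)=\sum_{\tau=1}^{\Gamma}\text{cost}(P^\tau,M^\tau,C),$$
and similarly $\text{wcost}(W_t\cup(P\setminus P^t),M,C)=\text{wcost}(W_t,M^t,C)+\sum_{\tau\neq t}\text{cost}(P^\tau,M^\tau,C)$, where $W_t$ denotes the samples drawn from $P^t$. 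Thus replacing a single class by its sample affects the cost only through column $t$.

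Next I would prove the single-class analog of Lemma~\ref{lem:genboundonecolorallgenM}: for every fixed $t\in[\Gamma]$ and every fixed center set $C$, with probability at least $1-1/n^{k+4}$, for all $k\times\Gamma$ matrices $M$,
$$|\text{cost}(P,M,C)-\text{wcost}(W_t\cup(P\setminus P^t),M,C)|\le\sum_{(i,j)}\epsilon|P'_{i,j,t}|\cdot 2^j\mu,$$
where $P'_{i,j,t}$ is the set of points of $P^t$ lying in ring $B'_{i,j}$. By the decomposition above this reduces to proving, with the same probability, that for every $k$-dimensional column vector $M'$,
$$|\text{cost}(P^t,M',C)-\text{wcost}(W_t,M',C)|\le\sum_{(i,j)}\epsilon|P'_{i,j,t}|\cdot 2^j\mu.$$
Since points inside $P^t$ all share the same group-membership pattern and are sampled independently of everything else, this inequality is exactly the situation handled by the single-color argument in the disjoint case. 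Concretely, the flow network $G_Y$, the coupling to the independent Bernoulli sampling, Observation~\ref{obs:lipschitz}, the Chernoff-type concentration in Lemma~\ref{lem:conc-ar-ex}, and the one-sided bound $f(X)\le f(\mathbf 1)+\epsilon m\mu'/3$ from Lemma~\ref{lem:fx-bounded-by-fex} all apply verbatim, ring by ring, and a union bound over the $O(k\log n)$ rings together with the at most $n^k$ choices for $M'$ yields the stated failure probability, exactly as in Lemma~\ref{lem:boundonecolorallM}.

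Finally, I would run the hybrid argument that completes the proof. Define $W^0:=P$ and $W^t:=\bigl(\bigcup_{\tau\le t}W_\tau\bigr)\cup\bigl(\bigcup_{\tau>t}P^\tau\bigr)$, so that $W^\Gamma=W$. Samples across different classes are drawn independently, so for each step $t\to t+1$ the single-class bound above (applied to class $t+1$) yields an incremental error of at most $\sum_{(i,j)}\epsilon|P'_{i,j,t+1}|\cdot 2^j\mu$ with failure probability at most $1/n^{k+4}$. Summing with a triangle inequality over the telescoping chain and noting $\sum_t|P'_{i,j,t}|=|P_{i,j}|$ gives the target bound; a union bound over the $\Gamma\le n$ steps yields total failure probability at most $\Gamma/n^{k+4}\le 1/n^{k+2}$, as required.

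The apparent obstacle in the overlapping case is that a single sampled point carries mass that must simultaneously satisfy fairness on every group it belongs to, which seems to couple groups in a way not present in the disjoint analysis. Adopting the strengthened notion of coreset with $k\times\Gamma$ constraint matrices (columns indexed by equivalence classes rather than by raw groups) dissolves this coupling at the definitional level: each coreset point represents a fixed membership pattern, so its contribution to every group it lies in is determined by a single column of $M$. Once this is in place, no new probabilistic ingredient is needed beyond those already developed for the disjoint case, and the argument reduces cleanly to the hybrid-plus-single-class template outlined above.
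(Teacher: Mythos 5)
Your proposal is correct and follows essentially the same route as the paper: work with the strengthened $k\times\Gamma$ (class-indexed) constraint matrices, decompose the cost column-by-column over the equivalence classes so that replacing one class by its sample perturbs only that class's column, invoke the disjoint-case single-color machinery for each fixed class, and combine via the hybrid/union bound over the $\Gamma\le n$ classes. The paper phrases the reduction slightly differently (building auxiliary matrices $M_1,M_2$ from an optimal clustering satisfying $M$), but since such a clustering has $|C_i\cap P^t|=M[i][t]$ this collapses to exactly your column decomposition, so the two arguments coincide.
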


Like before, by taking union bound over all $C$, we obtain the desired result. This completes the proof of Theorem \ref{th:coresetthmoverlap}. 
Next, we prove Lemma \ref{lem:boundallcolorallgenMoverlap}. 

\subsection{Proof of Lemma \ref{lem:boundallcolorallgenMoverlap}}

Note that $P^{\tau}$ is the points in $P$ from class $\tau$ for $1\le \tau\le \Gamma$. Let $W_{\tau}$ be the chosen samples from class $\tau$. 
For any ring $B'_{i,j}$, let $P'_{i,j,\tau}$ be the points from class $\tau$ in the ring. Also, let $P_{i,j}=\cup_{\tau=1}^{\Gamma} P'_{i,j,\tau}$. 

Like in the disjoint case, here also we will prove the following lemma that gives a bound when the coreset contains sampled points from a fixed class $t$ and original points from the other classes.  

\begin{lemma}\label{lem:genboundonecolorallgenMoverlap}
Consider any class $1\le t\le \Gamma$. For any fixed set $C$ of $k$ centers and for all $k\times \Gamma$ matrix $M$, w.p. at least $1-1/n^{k+4}$, $|\emph{cost}(P,M,C)-\emph{wcost}(W_t\cup (P\setminus P^t),M,C)|\le \sum_{(i,j)} \epsilon |P'_{i,j,t}| \cdot 2^j\mu$. 
\end{lemma}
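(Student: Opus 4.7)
\textbf{Proof proposal for Lemma \ref{lem:genboundonecolorallgenMoverlap}.}

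The plan is to mimic the reduction used in the disjoint case (the observation preceding Lemma \ref{lem:genboundonecolorallgenM}) and then invoke the single-class bound already proved in Section \ref{sec:coresetconstructionkmediandisjoint}. The crucial point that makes this reduction work is that, under the stronger definition of coreset used in the overlapping case, the columns of the coloring constraint matrix $M$ are indexed by the equivalence classes $P^1,\ldots,P^{\Gamma}$, and not by the original groups. Consequently, fixing $M$ fixes, for each class $\tau$ and each center $c_i$, exactly how much total weight from $P^\tau$ must be routed to $c_i$, namely $M_{i\tau}$. This decouples the assignment problem into $\Gamma$ independent pieces, one per class.

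First, I would observe the decomposition
\[
\text{cost}(P,M,C)=\sum_{\tau=1}^{\Gamma}\text{cost}(P^{\tau},M^{\tau},C),
\]
where $M^{\tau}$ denotes the $\tau$-th column of $M$ interpreted as a $k$-dimensional constraint on how the mass of $P^{\tau}$ is split among the centers of $C$. Indeed, any feasible assignment $\psi$ achieving cost $\text{cost}(P,M,C)$ restricts to a feasible assignment on each $P^{\tau}$ meeting the column constraint $M^{\tau}$, and conversely any collection of per-class feasible assignments glues together into a global feasible assignment. The same reasoning applies to the weighted cost, yielding
\[
\text{wcost}(W_t\cup(P\setminus P^t),M,C)=\text{wcost}(W_t,M^t,C)+\sum_{\tau\ne t}\text{cost}(P^{\tau},M^{\tau},C),
\]
since on every class $\tau\ne t$ the weighted points are simply the original points with unit weights. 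Subtracting gives the key identity
\[
\bigl|\text{cost}(P,M,C)-\text{wcost}(W_t\cup(P\setminus P^t),M,C)\bigr|=\bigl|\text{cost}(P^t,M^t,C)-\text{wcost}(W_t,M^t,C)\bigr|.
\]

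Next I would argue that the right-hand side is already controlled by the disjoint-case analysis. The points of $P^t$ were fed as a single class into the sampling subroutine of Section \ref{sec:coresetconstructionkmediandisjoint}, and $W_t$ is exactly the random sample produced inside each ring $B'_{i,j}$ from the set $P'_{i,j,t}=B'_{i,j}\cap P^t$, with weights $|P'_{i,j,t}|/s$. The analysis in Section \ref{sec:coresetconstructionkmediandisjoint} culminating in Lemma \ref{lem:boundonecolorallM} is carried out in exactly this setting: one ``color,'' one independent sampling process per ring, and column constraint matrices. Invoking that lemma (applied to $P^t$ in the role of the single color class) yields: with probability at least $1-1/n^{k+4}$, for every $k$-dimensional column vector $M'$,
\[
\bigl|\text{cost}(P^t,M',C)-\text{wcost}(W_t,M',C)\bigr|\le\sum_{(i,j)}\epsilon\,|P'_{i,j,t}|\cdot 2^j\mu.
\]
Since the inequality holds simultaneously for all column vectors $M'$, in particular it holds for the column $M^t$ of an arbitrary $k\times\Gamma$ constraint matrix $M$; combined with the decomposition identity above this proves the lemma.

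The main conceptual obstacle, and the reason the overlapping case was not an immediate corollary of the disjoint case, is making sure the decomposition actually holds. This is exactly where the stronger coreset definition (columns indexed by equivalence classes rather than by groups) pays off: with group-indexed columns a point of $P^t$ would contribute to several columns at once and the per-class independence would break. Once the definition is set up so that equivalence classes become the ``atoms'' of the constraint matrix, the remaining work is purely bookkeeping and the probabilistic content is entirely inherited from Lemma \ref{lem:boundonecolorallM}.
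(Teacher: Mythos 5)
Your proposal is correct and follows essentially the same route as the paper: decompose the cost across the equivalence classes (the paper does this via auxiliary matrices $M_1,M_2$ built from an optimal clustering, which for class-indexed columns amounts to exactly your column-wise decomposition), observe that the error localizes to the single class $P^t$, and then invoke the disjoint-case single-color bound over all column constraint matrices. The only cosmetic difference is that the paper restricts the quantification to column vectors whose entries sum to $|P^t|$ (the only ones for which the costs are finite), but this does not change the argument.
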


By using expectation argument similar to the one in the disjoint-group-multiple-ring case and taking union bound over all $\Gamma < n$ classes, Lemma \ref{lem:boundallcolorallgenMoverlap} follows.  Next, we prove Lemma \ref{lem:genboundonecolorallgenMoverlap}. 

\subsection{Proof of Lemma \ref{lem:genboundonecolorallgenMoverlap}}


We have the following lemma that implies that it is sufficient to consider the points only in $P^t$ to give the error bound. 

\begin{lemma}
 Suppose w.p. at least $1-1/n^{k+4}$, for all $k\times 1$ matrix $M'$ such that the sum of the entries in each column is exactly $|P^t|$, $|\emph{cost}(P^t,M',C)-\emph{wcost}(W_t,M',C)|\le \sum_{(i,j)} \epsilon |P'_{i,j,t}| \cdot 2^j\mu$. Then, with the same probability, for all $k\times \Gamma$ matrix $M$, $|\emph{cost}(P,M,C)-\emph{wcost}(W_t\cup (P\setminus P^t),M,C)|\le \sum_{(i,j)} \epsilon |P'_{i,j,t}| \cdot 2^j\mu$.
\end{lemma}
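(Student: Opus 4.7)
\smallskip

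\noindent\textbf{Proof plan.} The statement is a bookkeeping reduction: I will decompose both $\operatorname{cost}(P,M,C)$ and $\operatorname{wcost}(W_t\cup(P\setminus P^t),M,C)$ over the equivalence classes $P^1,\ldots,P^\Gamma$, and argue that all terms except the one for class $t$ cancel. Fix an arbitrary $k\times\Gamma$ constraint matrix $M$ and write $M^\tau$ for its $\tau$-th column, viewed as a $k\times 1$ matrix. If $\sum_i M_{i\tau}\neq |P^\tau|$ for some $\tau$, then $\operatorname{cost}(P,M,C)=\operatorname{wcost}(W_t\cup(P\setminus P^t),M,C)=\infty$ (the total weight of $W_t$ equals $|P^t|$ by construction, since across each ring we sample weights that sum to $|P'_{i,j,t}|$), and the bound is vacuous. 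So we may assume every column sum is correct; in particular $\sum_i M_{it}=|P^t|$, which is exactly the type of matrix covered by the hypothesis.

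The next step is the decomposition. In the definition of $\operatorname{wcost}(W,M,C)$ (and analogously of $\operatorname{cost}$) the assignment $\psi$ is subject only to (i) per-point weight conservation $\sum_{c_i\in C}\psi(p,c_i)=w(p)$, which is local to each $p$, and (ii) the cardinality constraint $\sum_{p\in P^j}\psi(p,c_i)=M_{ij}$, which couples only points within the same class $P^j$. Since the classes are disjoint and the objective is linear in $\psi$, the optimization separates into $\Gamma$ independent subproblems, one per class. Hence
\begin{align*}
\operatorname{cost}(P,M,C)&=\sum_{\tau=1}^{\Gamma}\operatorname{cost}(P^\tau,M^\tau,C),\\
\operatorname{wcost}(W_t\cup(P\setminus P^t),M,C)&=\operatorname{wcost}(W_t,M^t,C)+\sum_{\tau\neq t}\operatorname{cost}(P^\tau,M^\tau,C).
\end{align*}
Subtracting, all terms for $\tau\neq t$ cancel, leaving
$$|\operatorname{cost}(P,M,C)-\operatorname{wcost}(W_t\cup(P\setminus P^t),M,C)|=|\operatorname{cost}(P^t,M^t,C)-\operatorname{wcost}(W_t,M^t,C)|.$$

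Finally, I will invoke the hypothesis. It provides a single event $\mathcal{E}$ of probability at least $1-1/n^{k+4}$ on which the bound $|\operatorname{cost}(P^t,M',C)-\operatorname{wcost}(W_t,M',C)|\le \sum_{(i,j)}\epsilon|P'_{i,j,t}|\cdot 2^j\mu$ holds simultaneously for every $k\times 1$ matrix $M'$ with entries summing to $|P^t|$. Conditioning on $\mathcal{E}$ and applying the identity above with $M'=M^t$ yields the claimed bound for the chosen $M$. Because the displayed identity is deterministic and $\mathcal{E}$ does not depend on $M$, the same event $\mathcal{E}$ simultaneously witnesses the bound for every $k\times\Gamma$ matrix $M$, giving exactly the conclusion of the lemma with probability at least $1-1/n^{k+4}$.

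There is no genuine obstacle here beyond justifying the decomposition; the only point that needs care is observing that the assignment constraints couple different classes only through the matrix $M$ itself, so that changing $P^t$ to $W_t$ affects precisely the $t$-th column of the decomposed optimization and nothing else. Once this is made explicit, the reduction to the hypothesis is immediate.
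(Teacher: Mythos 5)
Your proof is correct and follows essentially the same route as the paper: decompose the cost over the disjoint equivalence classes so that all terms except the one for class $t$ cancel, observe that the $t$-th column of any feasible $M$ sums to $|P^t|$ (matching the total weight of $W_t$), and then invoke the single high-probability event from the hypothesis, which is independent of $M$. The only cosmetic difference is that the paper reconstructs the $t$-th column from the cluster intersections $|C_i\cap P^t|$ of an optimal clustering rather than taking $M^t$ directly, whereas you handle the infeasible ($\infty$) case explicitly; both routes establish the same column-sum condition needed to apply the hypothesis.
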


\begin{proof}
 Consider any $k\times \Gamma$ matrix $M$. 
 Also consider a clustering $C_1,\ldots,C_k$ of $P$ that has cost \text{cost}$(P,M,C)$. We construct two $k\times \Gamma$ matrices $M_1$ and $M_2$ from $M$. 
 For $j\ne t$, and for $1\le i\le k$, $M_1[i][j]=0$ and $M_2[i][j]=M[i][j]$. For $1\le i\le k$, $M_1[i][t]=|C_i\cap P^t|$ and $M_2[i][t]=M[i][t]-|C_i\cap P^t|$. 
 
 
 $$\emph{cost}(P,M,C)=\emph{cost}(P^t,M_1,C)+\emph{cost}(P\setminus P^t,M_2,C)$$
 
 Also, as $W_t\subseteq P^t$ and the sum of the weights of the points in $W_t$ is $|P^t|$,
 
 $$\emph{wcost}(W_t\cup (P\setminus P^t),M,C)=\emph{wcost}(W_t,M_1,C)+\emph{cost}(P\setminus P^t,M_2,C)$$
 
 It follows that, 
 
 $$|\emph{cost}(P,M,C)-\emph{wcost}(W_t\cup (P\setminus P^t),M,C)|= |\emph{cost}(P^t,M_1,C)-\emph{wcost}(W_t,M_1,C)|$$
 
 Let $M_1'$ be the $t^{th}$ column of $M_1$. 
 Now, considering the fact that $P^t$ does not contain any points from any other classes, 
 $\emph{cost}(P^t,M_1,C)-\emph{wcost}(W_t,M_1,C)=\emph{cost}(P^t,M_1',C)-\emph{wcost}(W_t,M_1',C)$. Also, by the  definition of $M_1$, the sum of the entries in $M_1'$ is $\sum_{i=1}^k |C_i\cap P^t|=|P^t|$. 
 
 Now, by our assumption, it follows that the probability of the event: for all $M$, $|\emph{cost}(P^t,M_1',C)-\emph{wcost}(W_t,M_1',C)|$ exceeds $\sum_{(i,j)} \epsilon |P'_{i,j,t}| \cdot 2^j\mu$ is at most $1/n^{k+4}$. Hence, the lemma follows.  
\end{proof}

By the above observation, it is sufficient to prove that w.p. at least $1-1/n^{k+4}$, for all $k\times 1$ matrix $M$ such that the sum of the entries in each column is exactly $|P^t|$, $|\emph{cost}(P^t,M,C)-\emph{wcost}(W_t,M,C)|\le \sum_{(i,j)} \epsilon |P'_{i,j,t}| \cdot 2^j\mu$. Now, as we select samples from $P^t$ separately and independently, this claim boils down to the corresponding claim in the disjoint case. Recall that we proved this claim for a single ring first, and then extended to multiple rings. The proof of our claim here is very similar, and thus we omit it.

\section{Coreset Construction for $k$-median in $\mathbb{R}^d$}
\label{sec:euclideancoreset}

In this section, we prove the following theorem. 
\begin{theorem}\label{th:coresetthmoverlapR^d}
 Given a collection of $\ell$ possibly overlapping groups consisting of $n$ points in total in $\mathbb{R}^d$, there is an $O(n d ( k+\ell))$ time randomized algorithm that w.p. at least $1-1/n$, computes a universal coreset for Euclidean $k$-median clustering of size $O\left(\frac{\Gamma}{\epsilon^3}\cdot k^2\log n(\log n+d \log (1/\epsilon))\right)$.  
\end{theorem}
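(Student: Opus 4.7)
The plan is to run the overlapping-case algorithm of Section~\ref{sec:coresetconstructionkmedianoverlapping} essentially verbatim, with two changes: a Euclidean-aware bicriteria oracle (still $O(nk)$ distance evaluations, hence $O(nd(k+\ell))$ time once coordinates are read) and an enlarged per-(ring, class) sample size
\[
s \;=\; \Theta\!\left(\tfrac{k\log(nb)}{\epsilon^3}\right), \qquad b \;=\; \Theta\!\left(\tfrac{k\log(n/\epsilon)}{\epsilon^{d}}\right).
\]
Because the number of rings is still $O(k\log n)$ and there are $\Gamma$ equivalence classes, the resulting weighted set $W$ has size $O(\Gamma k^2\log n(\log n+d\log(1/\epsilon))/\epsilon^3)$, matching the bound claimed.

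I would then introduce a finite witness set $F\subset\mathbb{R}^d$ of candidate centers. For each bicriteria center $c_i^*\in C^*$ and each level $0\le j\le N=\lceil\log(\nu n)\rceil$, let $R_{i,j}$ be the axis-aligned hypercube of side $2^j\mu$ centered at $c_i^*$, and set $R'_{i,0}=R_{i,0}$, $R'_{i,j}=R_{i,j}\setminus R_{i,j-1}$. Subdivide each $R'_{i,j}$ into cells of side $\Theta(\epsilon\cdot 2^j\mu/\sqrt d)$ so that the in-cell $\ell_2$-diameter is at most $\epsilon\cdot 2^j\mu$, pick one representative from every nonempty cell, and collect them into $F$. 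A direct count gives $|F|\le b$.

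Next, I would re-execute the proof of Lemma~\ref{lem:boundallcolorallgenMoverlap}, but take the final union bound over size-$k$ sets $C\subseteq F$ instead of over size-$k$ subsets of the input. The enlarged $s$ drives the per-$(C,M)$ failure probability down to $1/(b^k\cdot n^{\Theta(k)})$, which is still summable, so with probability at least $1-1/n$ the universal-coreset inequality $|\cost(P,M,C)-\wcost(W,M,C)|\le\epsilon\cdot\cost(P,M,C)$ holds simultaneously for every such $C\subseteq F$ and every constraint matrix $M$.

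The main obstacle is lifting this guarantee from $C\subseteq F$ to arbitrary $C\subseteq\mathbb{R}^d$. I would split on whether $C$ sits in $\bigcup_i R_{i,N}$. In the \emph{far} case, any $\hat c\in C$ outside $\bigcup_i R_{i,N}$ satisfies $d(p,\hat c)=\Omega(\nu n\mu/\epsilon)$ for every $p\in P$, forcing $\cost(P,M,C)=\Omega(\Pi/\epsilon)$; since the proof of Lemma~\ref{lem:boundallcolorallgenM} already yields the a priori slack $|\cost(P,M,C)-\wcost(W,M,C)|=O(\Pi)$, the desired multiplicative bound follows deterministically. In the \emph{close} case, every $c\in C$ lies in some cell of $\bigcup_i Q_i$ and I replace it by the representative $c'\in F$ of that cell, producing $C'\subseteq F$ with $d(c,c')\le\epsilon\cdot 2^{j_c}\mu$, where $j_c$ is the shell of $c$. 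The hard calculation is to show that the cell-rounding shifts both $\cost(P,M,\cdot)$ and $\wcost(W,M,\cdot)$ by at most $\epsilon\cdot\cost(P,M,C)$; this is carried out by, for each point $p$ assigned to $c$, charging the perturbation $d(c,c')\le\epsilon\cdot 2^{j_c}\mu$ to the bicriteria quantity $2^{j_c}\mu$ and summing over rings exactly as in the reduction to the cost of the bicriteria solution used for Lemma~\ref{lem:boundallcolorallCallgenM}. Combining the high-probability bound for $C'$ with the two deterministic perturbation bounds and rescaling $\epsilon$ by a constant finishes the proof.
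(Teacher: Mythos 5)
Your proposal follows essentially the same route as the paper: enlarge the per-ring, per-class sample to $s=\Theta(k\log(nb)/\epsilon^3)$, build a discretized candidate-center set $F$ from exponential grids around the bicriteria centers, rerun the metric-case analysis with a union bound over all $C\subseteq F$, and lift to arbitrary $C\subseteq\mathbb{R}^d$ via a far/close dichotomy with cell-rounding in the close case. The structure, the sample size, and the final size bound all agree with the paper's argument.

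There is, however, one concrete error that breaks the dichotomy as you state it: you extend the grids $R_{i,j}$ only to level $N=\lceil\log(\nu n)\rceil$, i.e.\ to $\ell_\infty$-radius about $2^N\mu/2\approx\nu n\mu/2=\Pi/2$ around each $c_i^*$. A center $\hat c$ just outside $\bigcup_i R_{i,N}$ can then be at distance only $\Theta(\Pi)$ from points of $P$ (for instance from points near $c_i^*$ itself), so your claim that $d(p,\hat c)=\Omega(\nu n\mu/\epsilon)$ for all $p$ — and hence that $\cost(P,M,C)=\Omega(\Pi/\epsilon)$ — is false; and such a $\hat c$ also cannot be rounded into $F$, so it falls through both branches of your case split. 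The paper avoids this by taking $N=\lceil\log(14\nu n/\epsilon)\rceil$, so that any center outside the grid is at distance at least $7\nu n\mu/\epsilon$ from every $c_i^*$ and hence at least $6\Pi/\epsilon$ from every data point; only then does the deterministic slack $|\cost(P,M,C)-\wcost(W,M,C)|\le 6\Pi$ give relative error $\epsilon$. (Note also that this deterministic slack is obtained by the crude ring-diameter reassignment in the proof of Lemma~\ref{lem:CnotinQ}, not by the high-probability bound of Lemma~\ref{lem:boundallcolorallgenM} that you cite, and that the far case must additionally assume some point is actually assigned to $\hat c$ under $M$.) The fix costs only $O(\log(1/\epsilon))$ extra shells, so $b$ and the final size are unaffected. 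A last minor point: your cell side $\Theta(\epsilon 2^j\mu/\sqrt d)$ correctly controls the $\ell_2$ cell diameter but multiplies the cell count by $d^{\Theta(d)}$, introducing a $d\log d$ term inside $\log b$ and hence into the coreset size; the paper instead uses side $(\epsilon 2^j\mu)/(10\nu)$ and pays the $\sqrt d$ in the rounding step of Observation~\ref{obs:samesetCC'}.
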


The algorithm in the Euclidean case is the same as for general metrics, except we set $s$ to $\Theta({k\log (nb)}/{\epsilon^3})$ instead of $\Theta({k\log n}/{\epsilon^3})$, where $b=\Theta({k\log (n/\epsilon)}/{\epsilon^d})$. The analysis for general metrics holds in this case, but the assumption that the number of distinct sets of centers is at most $n^k$ is no longer true. Here any point in $\mathbb{R}^d$ is a potential center. Nevertheless, we show that for every set $C\subseteq \mathbb{R}^d$ of $k$ centers and constraint $M$, the optimal cost is preserved approximately w.h.p. The idea is to use a discretization technique to obtain a finite set of centers so that if instead we draw centers from this set, the cost of any clustering is preserved approximately.  

In the following, we analyze the coreset construction algorithm in the overlapping case. First, we construct a set of points $F$ that we will use as the center set. Recall that $C^*$ is the set of centers computed by the bicriteria approximation algorithm. $\nu$ is the constant approximation factor and $\Pi$ is the cost of clustering. Also, $\mu=\Pi/(\nu n)$. Note that for any point $p$, $d(p,C^*) \le \Pi=\nu n\cdot \mu$. 

For each center $c_i^*\in C^*$, we consider the $d$-dimensional axis-parallel hypercubes $R_{i,j}$ having sidelength $2^j\mu$ and centered at $c_i^*$ for $0\le j\le N$, where $N=\lceil \log (14\nu n/\epsilon)\rceil$. We note that any point at a distance $(2^N\mu)/2\ge 7\nu n\cdot \mu/\epsilon$ from $c_i^*$ is in $R_{i,N}$. Let $R'_{i,0}=R_{i,0}$ and $R'_{i,j}=R_{i,j}\setminus R_{i,j-1}$ for $1\le j\le N$. For each $0\le j\le N$, we divide $R'_{i,j}$ into gridcells of sidelength $(\epsilon2^j\mu)/(10\nu)$. Let $Q_i$ be the exponential grid for $R'_{i,0},\ldots,R'_{i,N}$, i.e., $Q_i$ is the amalgamation of the gridcells in $R'_{i,0},\ldots,R'_{i,N}$. For each gridcell in the exponential grid $Q_i$, we select any arbitrary point and add it to $F_i$. 

We repeat the above process for all $c_i^*\in C^*$. Let $F=\cup_i F_i$. Note that the total number of gridcells of $Q_i$ is at most $O(\log (n/\epsilon)/\epsilon^d)$. Now, from each such gridcell, we pick at most $1$ point. As $C^*$ contains $O(k)$ centers, the size of $F$ is $O(k\log (n/\epsilon)/\epsilon^d)$.

Note that if the centers can only be chosen from $F$, then by the analysis for general metrics, we obtain the following lemma. 

\begin{lemma}\label{lem:boundallcolorallgenMoverlapR^d}
For any fixed set $C\subseteq F$ of $k$ centers and for all $k\times \Gamma$ matrices $M$, w.p. at least $1-1/(bn)^{k+2}$, $|\emph{cost}(P,M,C)-\emph{wcost}(W,M,C)|\le \epsilon\cdot \emph{cost}(P,M,C)$. 
\end{lemma}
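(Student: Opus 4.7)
The plan is to replay the analysis of Sections~\ref{sec:coresetconstructionkmediandisjoint} and~\ref{sec:coresetconstructionkmedianoverlapping} verbatim, with the single modification that the per-ring-per-class sample size is now $s = \Theta(k \log(nb)/\epsilon^3)$ rather than $\Theta(k \log n/\epsilon^3)$. The reason for inflating $s$ by a factor of $\log b$ is precisely so that the Chernoff-type concentration estimate of Lemma~\ref{lem:conc-ar-ex} improves: plugging the new $s$ into the calculation there, the exponent becomes $-(\epsilon^2/12) \cdot \Theta(k \log(nb)/\epsilon^3) = -\Theta(k \log(nb)/\epsilon)$, so by choosing the hidden constant large enough, the bound on $|f(X) - \mathbb{E}[f(X)]|$ holds with probability at least $1 - 1/(bn)^{2k+20}$ instead of $1 - 1/n^{2k+20}$. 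Every subsequent union bound in Sections~\ref{sec:coresetconstructionkmediandisjoint} and~\ref{sec:coresetconstructionkmedianoverlapping}---over $O(k \log n)$ rings, $\Gamma < n$ equivalence classes, and $n^{\Omega(k)}$ column matrices per class---then carries through with each factor $1/n^{c}$ replaced by $1/(bn)^{c}$, yielding the Euclidean analogue of Lemma~\ref{lem:boundallcolorallgenMoverlap}: for any fixed $C \subseteq F$ of size $k$ and every $k \times \Gamma$ matrix $M$, with probability at least $1 - 1/(bn)^{k+3}$ one has the additive bound $|\text{cost}(P,M,C) - \text{wcost}(W,M,C)| \le \sum_{(i,j)} \epsilon |P_{i,j}| \cdot 2^j \mu$.

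Next I would convert this additive bound into the multiplicative bound stated in the lemma, by repeating the argument sandwiched between Lemmas~\ref{lem:boundallcolorallgenM} and~\ref{lem:boundallcolorallCallgenM}. For rings with $j = 0$, the contribution is at most $\epsilon n \mu = \epsilon \Pi / \nu \le \epsilon \cdot \mathrm{OPT}_v$. For rings with $j \ge 1$, every point $p$ in $B'_{i,j}$ satisfies $d(p, C^*) \ge 2^{j-1}\mu$, so the contribution of ring $B'_{i,j}$ is at most $\epsilon |P_{i,j}| \cdot 2^j \mu \le 2\epsilon \sum_{p \in P_{i,j}} d(p, C^*)$, summing to at most $2\epsilon \cdot \Pi \le 2\epsilon \nu \cdot \mathrm{OPT}_v$. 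Using that the vanilla optimum is no larger than the constrained optimum, $\mathrm{OPT}_v \le \text{cost}(P, M, C)$, and then rescaling $\epsilon$ by the constant factor $3\nu$, we arrive at $|\text{cost}(P,M,C) - \text{wcost}(W,M,C)| \le \epsilon \cdot \text{cost}(P,M,C)$ as required.

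The main place demanding care is the probability bookkeeping. Note that the statement of the lemma fixes $C \subseteq F$ on the outside, so at this step no new union bound over centers is taken---the inflated $s$ is being \emph{banked} for the next step (handling arbitrary $C \subseteq \mathbb{R}^d$ via the exponential-grid discretization), where one will need to union bound over all $|F|^k = O(b^k)$ subsets of $F$, and the factor $1/(bn)^{k+2}$ is tailored for exactly that use. Consequently, the Euclidean argument at this stage is genuinely a word-for-word replay of the metric analysis; no new idea is required, only the quantitative strengthening of the concentration bound enabled by the enlarged sample size. I expect the most delicate point to be simply verifying that the slack $1/(bn)^{k+3}$ survives the two additional union bounds (over $\Gamma < n$ classes and over column matrices) to land at the stated $1/(bn)^{k+2}$, which is a matter of choosing the hidden constant in $s$ appropriately.
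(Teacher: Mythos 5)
Your proposal is correct and follows essentially the same route as the paper, whose own justification of this lemma is exactly the observation that the metric-case analysis (Lemma~\ref{lem:boundallcolorallgenMoverlap}) replays verbatim with $s=\Theta(k\log(nb)/\epsilon^3)$, boosting every per-event failure probability from $1/n^{c}$ to $1/(bn)^{c}$, with the union bound over the $b^k$ subsets of $F$ deferred to the following lemma. Your additional care in carrying out the deterministic additive-to-multiplicative conversion at the fixed-$C$ stage (and tracking the $\nu$ factor) is consistent with how the paper uses the result.
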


This lemma is similar to Lemma \ref{lem:boundallcolorallgenMoverlap}. The error probability is now $1/(bn)^{k+2}$ as $s$ is set to the larger value $\Theta({k\log (nb)}/{\epsilon^3})$ instead of $\Theta({k\log n}/{\epsilon^3})$. Now the number of distinct sets of $k$ centers from $F$ is at most $|F|^k\le b^k$. Thus, by taking union bound over all such sets, we obtain the bound in the above lemma for every $C\subseteq F$ w.h.p. 

\begin{lemma}\label{lem:boundallCallcolorallgenMoverlapR^d}
For every set $C\subseteq F$ of $k$ centers and for all $k\times \Gamma$ matrices $M$, w.p. at least $1-1/n^{2}$, $|\emph{cost}(P,M,C)-\emph{wcost}(W,M,C)|\le \epsilon\cdot \emph{cost}(P,M,C)$. 
\end{lemma}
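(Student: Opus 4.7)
The plan is to deduce Lemma~\ref{lem:boundallCallcolorallgenMoverlapR^d} from Lemma~\ref{lem:boundallcolorallgenMoverlapR^d} by a straightforward union bound over the finitely many candidate center sets $C \subseteq F$. The reason this is strong enough, despite the ambient center set being all of $\mathbb{R}^d$, is precisely that the present lemma restricts $C$ to the discretized set $F$, so the usual ``there are at most $n^k$ center sets'' argument from the general metric analysis carries over after replacing $n$ with $|F|$.

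Concretely, I would first fix an arbitrary $k$-subset $C \subseteq F$ and invoke Lemma~\ref{lem:boundallcolorallgenMoverlapR^d}, which states that the desired multiplicative $(1\pm\epsilon)$-approximation of $\mathrm{cost}(P,M,C)$ by $\mathrm{wcost}(W,M,C)$ holds simultaneously for every $k \times \Gamma$ constraint matrix $M$ with failure probability at most $1/(bn)^{k+2}$. Crucially, this improved bound on the failure probability is what the enlarged sample size $s = \Theta(k \log(nb)/\epsilon^3)$ buys us: tracing through the single-ring and multi-ring arguments from Section~\ref{sec:coresetconstructionkmediandisjoint} with this larger $s$ replaces the $1/n^{\Omega(k)}$ slack with $1/(bn)^{\Omega(k)}$ slack at every concentration step, and the rest of the analysis for the overlapping-group case in Section~\ref{sec:coresetconstructionkmedianoverlapping} goes through verbatim.

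Next I would count. By construction, $|F| = O(k \log(n/\epsilon)/\epsilon^d)$, which by the definition of $b = \Theta(k \log(n/\epsilon)/\epsilon^d)$ (with a sufficiently large hidden constant) satisfies $|F| \le b$. Hence the number of $k$-subsets of $F$ is at most $\binom{|F|}{k} \le |F|^k \le b^k$. A union bound over these center sets then yields
\[
    \Pr\bigl[\exists\,C \subseteq F,\,|C|=k,\ \exists\,M :\ |\mathrm{cost}(P,M,C) - \mathrm{wcost}(W,M,C)| > \epsilon\cdot \mathrm{cost}(P,M,C)\bigr] \le \frac{b^k}{(bn)^{k+2}} = \frac{1}{b^2 n^{k+2}} \le \frac{1}{n^2},
\]
which is exactly the claimed bound.

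I do not expect any real obstacle here: the work was done in establishing Lemma~\ref{lem:boundallcolorallgenMoverlapR^d} with the inflated failure probability, and the present lemma only needs to verify that the discretization is coarse enough that $|F|^k$ is absorbed by the slack $(bn)^{k+2}$. The only minor bookkeeping point is that the constant hidden in the definition of $b$ must dominate the constant in the bound on $|F|$, but this can always be arranged by taking the hidden constant in $b = \Theta(k\log(n/\epsilon)/\epsilon^d)$ large enough, which in turn only enlarges the hidden constant in $s$ and therefore does not affect the stated coreset size up to a constant factor.
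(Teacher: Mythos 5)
Your proposal is correct and is essentially identical to the paper's own argument: fix $C \subseteq F$, apply Lemma~\ref{lem:boundallcolorallgenMoverlapR^d} with its $1/(bn)^{k+2}$ failure probability, and union-bound over the at most $|F|^k \le b^k$ choices of $C$ to get failure probability $b^k/(bn)^{k+2} = 1/(b^2 n^{k+2}) \le 1/n^2$. The bookkeeping about $|F| \le b$ and the enlarged sample size $s = \Theta(k\log(nb)/\epsilon^3)$ is exactly what the paper relies on as well.
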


Next, we show that if in a clustering a center $c$ is chosen that is not in any of the exponential grids considered before, then $W$ preserves the cost of such clustering. 

\begin{lemma}\label{lem:CnotinQ}
 Consider any set $C\subseteq \mathbb{R}^d$ of $k$ centers containing a center $\hat{c}$ such that a point $\hat{p}\in P$ is assigned to $\hat{c}$ in a clustering that satisfies a constraint $M$. Moreover, suppose $\hat{c}$ is not in $\cup_i Q_i$.  Then, $|\emph{cost}(P,M,C)-\emph{wcost}(W,M,C)|\le \epsilon \cdot \emph{cost}(P,M,C)$. 
\end{lemma}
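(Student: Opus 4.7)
The plan has two parallel tracks that we then combine. First, we establish a large lower bound on $\text{cost}(P, M, C)$: since $\hat c \notin \bigcup_i Q_i$, the point $\hat c$ lies outside every hypercube $R_{i,N}$, so its $L_\infty$-distance, and therefore its Euclidean distance, from every $c_i^* \in C^*$ is at least $2^{N-1}\mu \ge 7\nu n\mu/\epsilon = 7\Pi/\epsilon$. Combined with the bicriteria bound $d(p, C^*) \le \Pi$ for every $p \in P$ (a single term of a sum that totals $\Pi$), the triangle inequality yields $d(p, \hat c) \ge 7\Pi/\epsilon - \Pi \ge 6\Pi/\epsilon$ whenever $\epsilon \le 1$. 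The hypothesis that some valid clustering assigns $\hat p$ to $\hat c$ forces $\sum_t M[\hat c, t] \ge 1$, so in every fractional assignment satisfying $M$ at least one unit of mass is routed to $\hat c$, each such unit contributing at least $6\Pi/\epsilon$. Thus $\text{cost}(P, M, C) \ge 6\Pi/\epsilon$.

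Second, I would prove the deterministic (non-random) inequality $|\text{cost}(P, M, C) - \text{wcost}(W, M, C)| \le 6\Pi$ for arbitrary $C$ and $M$ by a transport argument. Fix an optimal fractional fair assignment on one side. For each triple $(i, j, t)$, note that both $P'_{i,j,t}$ (unit weights) and its sample set $W_{i,j,t}$ (weight $|P'_{i,j,t}|/s$ each, or $1$ if fewer than $s$ points) carry total mass $|P'_{i,j,t}|$. Moreover, any two points in $B'_{i,j}$ are within distance $2 \cdot 2^j \mu$ of each other. I would convert the given assignment by, for each $c \in C$, reallocating the total mass that $(i, j, t)$ sends to $c$ between $P'_{i,j,t}$ and $W_{i,j,t}$; this preserves the column sums of $M$ exactly because the two sides have identical total mass. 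The per-unit cost changes by at most the ring diameter $2 \cdot 2^j \mu$, so the aggregate overhead is $\sum_{(i,j)} 2 \cdot 2^j \mu \cdot |P'_{i,j}|$. Splitting this sum: $j = 0$ contributes at most $2 n \mu \le 2\Pi$ (since $\mu = \Pi/(\nu n)$), while for $j \ge 1$ each $p \in B'_{i,j}$ satisfies $2^j \mu \le 2 d(p, c_i^*)$, giving at most $4 \sum_{p} d(p, C^*) \le 4\Pi$. The total is $6\Pi$. Because the definition of $\text{cost}$ and $\text{wcost}$ allows fractional assignments, the same construction works in both directions, yielding the two-sided bound.

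Combining: $|\text{cost}(P, M, C) - \text{wcost}(W, M, C)| \le 6\Pi = \epsilon \cdot (6\Pi/\epsilon) \le \epsilon \cdot \text{cost}(P, M, C)$, which is exactly the stated conclusion.

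The only genuinely delicate point is formally writing down the transport step: one must exhibit the reallocated assignment and verify that the class-wise column constraints $\sum_{p \in P^t} \psi(p, c) = M[c, t]$ remain satisfied after the mass is moved between $P'_{i,j,t}$ and $W_{i,j,t}$. This is straightforward since the total $(i,j,t)$-mass is an invariant of the transform, but it is where careful bookkeeping of the per-$(i,j,t,c)$ fractions (and, in the $|P'_{i,j,t}| \le s$ case, the fact that $W_{i,j,t} = P'_{i,j,t}$ trivially contributes zero overhead) matters. Everything else reduces to the ring-by-ring triangle inequality and the telescoping sum that bounds the overhead by a constant multiple of $\Pi$.
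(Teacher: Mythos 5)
Your proposal is correct and follows essentially the same two-step route as the paper: a deterministic ring-by-ring transport between $P'_{i,j,t}$ and $W_{i,j,t}$ (both carrying total mass $|P'_{i,j,t}|$, with per-unit overhead bounded by the ring diameter and the aggregate telescoping to $2\Pi/\nu + 4\Pi \le 6\Pi$), combined with the lower bound $\mathrm{cost}(P,M,C)\ge 7\Pi/\epsilon - \Pi \ge 6\Pi/\epsilon$ coming from the fact that $M$ forces positive mass onto the far-away center $\hat c$. Your handling of the subtlety that the optimal assignment need not send $\hat p$ itself to $\hat c$ (only that some unit of mass must go there, and every point of $P$ is within $\Pi$ of $C^*$) is, if anything, slightly more careful than the paper's phrasing.
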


\begin{proof}
 Consider any class $P^t$ and a ring $B'_{i,j}$. Let $P'_{i,j,t}$ be the points in $B'_{i,j}$ from $P^t$ and $W_{i,j,t}$ be the points of $P'_{i,j,t}$ that are in $W$. Then, there is an assignment $\phi: P'_{i,j,t} \rightarrow W_{i,j,t}$ such that exactly $|P'_{i,j,t}|/|W_{i,j,t}|$ points are assigned to each point $q\in W_{i,j,t}$. Note that $d(p,\phi(p))\le d(p,c_i^*)+d(c_i^*,\phi(p))\le 2^j\mu+2^j\mu=2^{j+1}\mu$. Now, consider an optimal assignment $\psi$ corresponding to $\emph{cost}(P,M,C)$. We compute the following assignment for each $1\le t\le \Gamma$ and ring $B'_{i,j}$. Assign 1 weight of each point $\phi(p)\in W_{i,j,t}$ to the center of $C$ where $p$ is assigned in $\psi$. (WLOG, one can assume that the weights of our coreset points are integral.) Note that for each point in $W_{i,j,t}$ exactly $|P'_{i,j,t}|/|W_{i,j,t}|$ amount of weight has been assigned. The new assignment for coreset points induces a valid clustering and satisfies $M$. By triangle inequality it follows that, 
 
 \begin{align*}
 |\text{cost}(P,M,C)-\text{wcost}(W,M,C)| & \le\sum_{t=1}^{\Gamma} \sum_{(i,j)} \sum_{p\in P'_{i,j,t}} d(p,\phi(p))\\
 & \le \sum_{t=1}^{\Gamma} \sum_{(i,j)} \sum_{p\in P'_{i,j,t}} 2^{j+1}\mu\\
 & \le \sum_{(i,j)} \sum_{p\in P_{i,j}} 2^{j+1}\mu\\
 & = \sum_{i=1}^k \sum_{p\in P_{i,0}} 2\mu+\sum_{p\in P_{i,j}\mid j \ge 1} 2^{j+1} \mu\\
 & \le 2n\mu + 4\sum_{i=1}^k \sum_{p\in P_i^*} d(p,c_i^*)\le 2 \cdot \text{OPT}_v+4\cdot\Pi\le 6\cdot\Pi  
 \end{align*}

 Here $P_i^*\subseteq P$ is the set of points assigned to $c_i^*$. The second last inequality follows, as for each point $p\in P_{i,j}$ with $j\ge 1$, $d(p,c_i^*)\ge 2^{j-1}\mu$. Now there is a point $\hat{p}$ that is assigned to $\hat{c}\in C$ such that $\hat{c}$ is not in $\cup_i Q_i$. Let $\hat{p}\in P_i^*$. It follows that, 
 
 $$\text{cost}(P,M,C)\ge d(\hat{c},\hat{p})\ge d(\hat{c},c_i^*)-d(c_i^*,\hat{p})\ge 7\nu n\cdot \mu/\epsilon-\nu n\mu\ge 6\nu n\cdot \mu/\epsilon= 6\cdot \Pi/\epsilon $$
 
 The third inequality follows, as $d(\hat{c},c_i^*)> (2^N\mu)/2\ge \nu n\cdot \mu/\epsilon$ and $d(c_i^*,\hat{p})\le \Pi=\nu n\mu$. Thus, $\Pi\le \epsilon\cdot \text{cost}(P,M,C)/6$. Hence, 
 
 $$|\text{cost}(P,M,C)-\text{wcost}(W,M,C)|  \le 6\cdot\Pi\le \epsilon\cdot \text{cost}(P,M,C).$$
\end{proof}

Next, we consider the case when all points in $C$ are in $\cup_i Q_i$. Let $C'\subseteq F$ be the set of centers constructed by replacing each point $c$ in $C$, by the representative of the gridcell that contains $c$. Then, we have the following observation. 

\begin{observation}\label{obs:samesetCC'}
 $|\text{cost}(P,M,C)-\text{cost}(P,M,C')|\le \epsilon \cdot \emph{OPT}_v$ and $|\text{cost}(W,M,C)-\text{cost}(W,M,C')|\le \epsilon \cdot \emph{OPT}_v$. 
\end{observation}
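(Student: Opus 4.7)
The plan is to reduce the claim to a sum of center-displacement distances via triangle inequality, and then to bound this sum using the geometry of the exponential grid. First I would fix an optimal assignment $\psi$ achieving $\text{cost}(P, M, C)$, and define a twin assignment $\psi' : P \to C'$ by sending $p$ to $c'_i$ whenever $\psi$ sends $p$ to $c_i$. The partitions induced by $\psi$ and $\psi'$ are identical, so $\psi'$ also satisfies the constraint matrix $M$ with respect to $C'$. Applying the triangle inequality pointwise,
\[
\text{cost}(P, M, C') \le \text{cost}(\psi') \le \text{cost}(\psi) + \sum_{p \in P} d(c_{\psi(p)}, c'_{\psi(p)}),
\]
and by the symmetric argument starting from an optimal assignment for $\text{cost}(P, M, C')$, it suffices to bound $\sum_p d(c_{\psi(p)}, c'_{\psi(p)})$.

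For each $c_i \in C$, both $c_i$ and $c'_i$ lie in the same gridcell of some ring $R'_{a_i, b_i}$, where one may choose $a_i$ so that $c^*_{a_i}$ is a closest bicriteria center to $c_i$. Hence $d(c_i, c'_i)$ is at most the Euclidean diameter of this gridcell, namely $\sqrt{d}\cdot(\epsilon 2^{b_i}\mu)/(10\nu)$. The key geometric fact is that for $b_i \ge 1$, the exclusion $c_i \notin R_{a_i, b_i-1}$ forces $d(c_i, c^*_{a_i}) > 2^{b_i-2}\mu$, so $2^{b_i}\mu \le 4\,d(c_i, c^*_{a_i})$, which by triangle inequality is at most $4\,(d(p, c_i) + d(p, C^*))$ for any $p$ assigned to $c_i$.

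Summing these bounds yields two types of contributions. For centers with $b_i \ge 1$, the total is at most $O(\sqrt{d}\,\epsilon/\nu)\cdot\sum_p(d(p, c_{\psi(p)}) + d(p, C^*)) = O(\sqrt{d}\,\epsilon/\nu)\cdot(\text{cost}(P,M,C) + \Pi)$. For centers with $b_i = 0$, each displacement is at most $\sqrt{d}\,\epsilon\mu/(10\nu)$, and the total over such centers is at most $\sqrt{d}\,\epsilon\, n\mu/(10\nu)=O(\sqrt{d}\,\epsilon)\cdot\text{OPT}_v$, using $n\mu=\Pi/\nu \le \text{OPT}_v$. Combining the two, using $\Pi \le \nu\,\text{OPT}_v$ and $\text{cost}(P, M, C) \ge \text{OPT}_v$, and absorbing the $\sqrt{d}$ factor into a rescaling of $\epsilon$ (the coreset size already bears a linear dependence on $d$), gives the desired bound $|\text{cost}(P, M, C) - \text{cost}(P, M, C')| \le \epsilon\,\text{OPT}_v$. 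The identical argument applies verbatim to the weighted coreset $W$: the triangle inequality between $c_i$ and $c'_i$ holds for weighted points as well, and the transport of the assignment from $C$ to $C'$ preserves the constraint matrix.

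The main obstacle is the summation step: one must use the geometric lower bound $d(c_i, c^*_{a_i}) > 2^{b_i-2}\mu$ to relate the large factor $2^{b_i}$ to pointwise quantities, because naively summing $\sqrt{d}\,\epsilon\, 2^{b_i}\mu/(10\nu)$ over points is insufficient (since $2^{b_i}$ can be as large as $\Theta(n/\epsilon)$). The structural fact that any center $c_i$ far from the bicriteria solution is either served by a point that is also far from the bicriteria solution, or by a point at distance $\Omega(2^{b_i}\mu)$ from $c_i$ itself, is what ultimately allows the error to be charged to either $\text{cost}(P, M, C)$ or $\Pi$, and hence to $\text{OPT}_v$.
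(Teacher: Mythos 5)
The paper states this observation without proof, so there is no official argument to compare against; your transport-the-assignment strategy (move each point from $c_i$ to the representative $c'_i$ of its gridcell, bound the displacement by the gridcell diameter, and charge $2^{b_i}\mu$ to $d(p,c_i)+d(p,C^*)$ via the ring geometry) is exactly the natural one, and the geometric core is right: $c_i\notin R_{a_i,b_i-1}$ does give $\|c_i-c^*_{a_i}\|_\infty>2^{b_i-2}\mu$, and the $b_i=0$ contribution is correctly charged to $n\mu\le \text{OPT}_v$. (One small technicality: you should pick $a_i$ minimizing the $\ell_\infty$-distance, not the Euclidean one, so that $c_i$ is guaranteed to lie in the grid $Q_{a_i}$ whose cell you use; with that choice $\|c_i-c^*_{a_i}\|_\infty\le d(c_i,C^*)$ still holds and the chain goes through.)

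The genuine gap is the last step. What you actually derive is
$|\text{cost}(P,M,C)-\text{cost}(P,M,C')|\le O(\sqrt{d}\,\epsilon)\cdot\bigl(\text{cost}(P,M,C)+\text{OPT}_v\bigr)$,
and this cannot be converted into the claimed $\epsilon\cdot\text{OPT}_v$: the inequality $\text{cost}(P,M,C)\ge \text{OPT}_v$ goes in the wrong direction for that purpose, since $\text{cost}(P,M,C)$ may be arbitrarily larger than $\text{OPT}_v$ (take a single center in a far ring $b_i\approx N$, so that each displacement is $\Theta(\sqrt{d}\,\epsilon\, 2^{b_i}\mu/\nu)$, which summed over $n$ points vastly exceeds $\epsilon\,\text{OPT}_v$ even though it is $O(\epsilon)\cdot\text{cost}(P,M,C)$). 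So your proof establishes a weaker statement than the one written. This weaker form, $|\text{cost}(P,M,C)-\text{cost}(P,M,C')|\le \epsilon\cdot\text{cost}(P,M,C)$ together with the analogous bound for $W$, is in fact all that the downstream Lemma~\ref{lem:CinQ} needs (its conclusion is only $O(\epsilon)\cdot\text{cost}(P,M,C)$ anyway), so the right fix is to weaken the observation rather than to force the stated bound. Separately, the $\sqrt{d}$ factor cannot simply be ``absorbed into a rescaling of $\epsilon$'' for free: replacing $\epsilon$ by $\epsilon/\sqrt{d}$ (or shrinking the gridcell sidelength to $(\epsilon 2^j\mu)/(10\nu\sqrt{d})$, which is the cleaner fix) changes $b$ and hence the sample size $s$ and the stated coreset size by $\text{poly}(d)$ and $d\log d$ terms, so this needs to be said explicitly rather than waved away.
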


\begin{lemma}\label{lem:CinQ}
 For every set $C\subseteq \mathbb{R}^d$ of $k$ centers such that all centers are contained in $\cup_i Q_i$ and for all constraint $M$, w.p. at least $1-1/n^2$, $|\emph{cost}(P,M,C)-\emph{wcost}(W,M,C)|\le \epsilon \cdot \emph{cost}(P,M,C)$.   
\end{lemma}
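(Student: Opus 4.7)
The plan is to reduce the statement to Lemma~\ref{lem:boundallCallcolorallgenMoverlapR^d} by snapping the arbitrary center set $C$ to a nearby set $C'\subseteq F$, and then using Observation~\ref{obs:samesetCC'} to absorb the snapping error. Since every $c\in C$ lies in some gridcell of $\cup_i Q_i$, let $C'$ be obtained from $C$ by replacing each $c\in C$ with the representative point of $F$ chosen from the gridcell containing $c$. Thus $C'\subseteq F$ and $|C'|\le k$. For this particular $C'$, Lemma~\ref{lem:boundallCallcolorallgenMoverlapR^d} (applied with a suitably scaled error parameter, which succeeds for all $C'\subseteq F$ of size $k$ and all constraint matrices $M$ with probability at least $1-1/n^2$) provides
\[
|\text{cost}(P,M,C')-\text{wcost}(W,M,C')|\le \epsilon\cdot \text{cost}(P,M,C').
\]

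The next step is to glue together this bound with Observation~\ref{obs:samesetCC'}. By the triangle inequality applied to $\text{cost}(P,M,C)$, $\text{cost}(P,M,C')$, $\text{wcost}(W,M,C')$ and $\text{wcost}(W,M,C)$, we get
\[
|\text{cost}(P,M,C)-\text{wcost}(W,M,C)|\le 2\epsilon\cdot \text{OPT}_v+\epsilon\cdot \text{cost}(P,M,C').
\]
Observation~\ref{obs:samesetCC'} additionally gives $\text{cost}(P,M,C')\le \text{cost}(P,M,C)+\epsilon\cdot \text{OPT}_v$, and trivially $\text{OPT}_v\le \text{cost}(P,M,C)$ since the vanilla $k$-median optimum is only a relaxation of any constrained clustering cost. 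Substituting these two inequalities yields
\[
|\text{cost}(P,M,C)-\text{wcost}(W,M,C)|\le (3\epsilon+\epsilon^2)\cdot \text{cost}(P,M,C)=O(\epsilon)\cdot \text{cost}(P,M,C).
\]
Rescaling $\epsilon$ by a constant factor at the outset (which only changes the hidden constant in $s=\Theta(k\log(nb)/\epsilon^3)$ and hence does not affect the asymptotic coreset size) gives the clean bound $\epsilon\cdot \text{cost}(P,M,C)$ claimed in the lemma, and the failure probability is inherited directly from Lemma~\ref{lem:boundallCallcolorallgenMoverlapR^d}.

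The only genuinely technical ingredient is the snapping inequality behind Observation~\ref{obs:samesetCC'}, and so this is where a little care is needed: the key point is that a center $c$ and its representative $c'$ lie in a common gridcell of sidelength $(\epsilon 2^j\mu)/(10\nu)$ inside $R'_{i,j}$, so $\|c-c'\|$ is of order $\epsilon\cdot 2^j\mu/\nu$. For any point $p$ assigned to $c$ in an optimal assignment realizing $\text{cost}(P,M,C)$, reassigning $p$ to $c'$ changes its cost by at most $\|c-c'\|$, and by the exponential structure of the rings this distance is $O(\epsilon/\nu)$ times the contribution of $p$ to the bicriteria solution, summing to $\epsilon\cdot \text{OPT}_v$ overall (symmetrically for going from $C'$ to $C$, and the identical argument applies to $W$ in place of $P$). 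Everything else is a mechanical triangle-inequality chase, so I anticipate no further obstacle.
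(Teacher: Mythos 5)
Your proposal is correct and follows essentially the same route as the paper's proof: snap $C$ to $C'\subseteq F$ via gridcell representatives, apply Lemma~\ref{lem:boundallCallcolorallgenMoverlapR^d} to $C'$, and use the triangle inequality together with Observation~\ref{obs:samesetCC'} and $\text{OPT}_v\le \text{cost}(P,M,C)$ to absorb the snapping error, finishing by rescaling $\epsilon$. The paper reaches $4\epsilon\cdot\text{cost}(P,M,C)$ where you reach $(3\epsilon+\epsilon^2)\cdot\text{cost}(P,M,C)$; this is an immaterial difference.
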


\begin{proof}
Define the set $C'$ from $C$ as above. It follows that, 
\begin{align*} 
&|\text{cost}(P,M,C)-\text{wcost}(W,M,C)|\\
&\le |\text{cost}(P,M,C)-\text{cost}(P,M,C')+\text{cost}(P,M,C')-\text{wcost}(W,M,C')+\\&\qquad\qquad\qquad\qquad\qquad\qquad\qquad\qquad\text{wcost}(W,M,C')-\text{wcost}(W,M,C)|\\
&\le |\text{cost}(P,M,C)-\text{cost}(P,M,C')|+|\text{cost}(P,M,C')-\text{wcost}(W,M,C')|+\\
&\qquad\qquad\qquad\qquad\qquad\qquad\qquad\qquad|\text{wcost}(W,M,C')-\text{wcost}(W,M,C)|\\
&\le 2\epsilon \cdot \emph{OPT}_v+|\text{cost}(P,M,C')-\text{wcost}(W,M,C')|
\end{align*}
The last inequality follows from Observation \ref{obs:samesetCC'}. By Lemma \ref{lem:boundallCallcolorallgenMoverlapR^d}, we obtain for every $C$ and all $M$, w.p. at least $1-1/n^2$, 
\begin{align*}
|\text{cost}(P,M,C)-\text{wcost}(W,M,C)|&\le 2\epsilon \cdot \emph{OPT}_v+ \epsilon \cdot \text{cost}(P,M,C')\\
&\le \epsilon \cdot \text{cost}(P,M,C)+3\epsilon \cdot \emph{OPT}_v\le 4\epsilon\cdot \text{cost}(P,M,C)
\end{align*}
By scaling $\epsilon$ by a factor of 4, the lemma follows.
\end{proof}

Now, $\Theta(\log (nb))=\Theta(\log n+\log k+\log\log (n/\epsilon)+d \log (1/\epsilon))=\Theta(\log n+d \log (1/\epsilon))$. Thus $s=\Theta\left(\frac{1}{\epsilon^3}\cdot k(\log n+d \log (1/\epsilon))\right)$. By Lemmas \ref{lem:CinQ} and \ref{lem:CnotinQ}, Theorem \ref{th:coresetthmoverlapR^d} follows.

\section{Coreset Construction for $k$-means Clustering}
\label{sec:kmeanscoreset}

Here we describe the changes needed to extend the coreset construction scheme for $k$-median to $k$-means. In the end of the section, we also show how to apply well-known dimensionality reduction techniques to obtain a coreset with the size independent of $d$ in the Euclidean case. First, we consider the disjoint group case. The coreset construction algorithm is identical except here from each ring and for each color, we select a sample of size $O( k \log n/{\epsilon}^5)$. The analysis remains almost the same except in places we obtain worse bounds due to squaring of the distances. 

In the single ring-single color case, instead of Theorem \ref{th:singleMsingleC}, we have the following modified theorem. 

\begin{theorem}\label{th:singleMsingleCkmeans}
 W.p. at least $1-1/n^{2k+10}$, it holds that $|\emph{cost}(P,M,C)-\emph{wcost}(W',M,C)|\le \epsilon m {\mu'}^2+O(\epsilon)\cdot \emph{cost}(P,M,C)$.
\end{theorem}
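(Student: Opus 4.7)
The plan is to parallel the proof of Theorem~\ref{th:singleMsingleC}, replacing linear distances by squared distances and tracking where the squaring forces a loss that cannot be absorbed into a pure $\epsilon m (\mu')^2$ term. First I would redefine the flow network $G_Y$ exactly as before but with edge costs $d(p_j,c_i)^2$ on $(u_j,v_i)$, $d(p_j,c')^2$ on $(u_j,w)$, and $d(c',c_i)^2$ on $(w,v_i)$. Let $f(Y)$ be the min-cost feasible flow in $G_Y$. The same flow/assignment correspondence shows $f(X)$ is identically distributed with $\textrm{wcost}(W',M,C)$ and $f(\mathbb{E}[X])=f(\mathbb{1})=\textrm{cost}(P,M,C)$. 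Routing $\delta$ extra weight at a point $p$ through the auxiliary vertex $w$ now costs $\delta\cdot d(p,c')^2\le \delta(\mu')^2$, so $f$ is $(\mu')^2$-Lipschitz in $\ell_1$. Feeding this into the same Chernoff-type bound used for Lemma~\ref{lem:conc-ar-ex} gives $|f(X)-\mathbb{E}[f(X)]|\le (\epsilon/2)m(\mu')^2$ with probability at least $1-1/n^{2k+20}$.

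The direction $f(\mathbb{E}[X])\le \mathbb{E}[f(X)]$ carries over verbatim: average the optimal flows for every realization $Y$, weighted by their probabilities, to obtain a single feasible flow for $G_{\mathbb{1}}$ of cost $\mathbb{E}[f(X)]$.

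The main obstacle is proving the reverse inequality, which is where the extra $O(\epsilon)\cdot\textrm{cost}(P,M,C)$ term appears. I would follow the strategy of Lemma~\ref{lem:fx-bounded-by-fex} (i.e.\ \cite[Lemma 20]{Cohen-AddadL19}): start from an optimal integral assignment realizing $f(\mathbb{1})$ and, for each ring, reroute the mass from the unsampled points $p\in P'\setminus S$ onto the sampled representatives $p'\in S$ while showing that with probability at least $1-1/n^3$ the rerouting can be coupled so that both the number of rerouted units and their detour distances are tightly concentrated. The crucial extra ingredient is the approximate squared triangle inequality: for every $\gamma>0$,
\[
d(p,c)^2 \le (1+\gamma)\,d(p',c)^2 + (1+1/\gamma)\,d(p,p')^2.
\]
Choosing $\gamma=\Theta(\epsilon)$ and summing over all rerouted units, the first term contributes at most $\gamma\cdot\textrm{cost}(P,M,C)=O(\epsilon)\cdot\textrm{cost}(P,M,C)$, while the second term is bounded using $d(p,p')\le 2\mu'$ combined with the high-probability bound on the total detour length from the $k$-median analysis; this yields a contribution of $\epsilon m(\mu')^2/3$ after re-scaling. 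Thus, with probability at least $1-1/n^3$, $f(X)\le f(\mathbb{1})+\epsilon m(\mu')^2/3+O(\epsilon)\cdot\textrm{cost}(P,M,C)$.

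Finally, I would boost this to a bound on $\mathbb{E}[f(X)]-f(\mathbb{E}[X])$ exactly as in the $k$-median proof: the $(\mu')^2$-Lipschitz property gives $f(X)\le f(\mathbb{E}[X])+m^2(\mu')^2$ deterministically, so the $1/n^3$ bad event contributes only $(1/n^3)m^2(\mu')^2=o(\epsilon m(\mu')^2)$ to the expectation. Combining with the concentration step yields $|f(X)-f(\mathbb{E}[X])|\le \epsilon m(\mu')^2+O(\epsilon)\cdot\textrm{cost}(P,M,C)$ with probability $1-1/n^{2k+10}$, which is exactly the statement of Theorem~\ref{th:singleMsingleCkmeans}. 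The delicate point, and the only real departure from the $k$-median argument, is the invocation of the approximate squared triangle inequality in the rerouting step, which is precisely what forces the additive $O(\epsilon)\cdot\textrm{cost}(P,M,C)$ that is absent in the $k$-median case.
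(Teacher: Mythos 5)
Your overall skeleton follows the paper: squared edge costs in the flow network, the $(\mu')^2$-Lipschitz property feeding the Chernoff-type concentration bound, the averaging argument for $f(\mathbb{E}[X])\le\mathbb{E}[f(X)]$, and a rerouting construction for the reverse direction. However, the key quantitative step is handled by a different mechanism in the paper, and the mechanism you propose does not give the claimed bound. You invoke $d(p,c)^2\le(1+\gamma)d(p',c)^2+(1+1/\gamma)d(p,p')^2$ with $\gamma=\Theta(\epsilon)$ and claim the detour term sums to $\epsilon m(\mu')^2/3$ via ``the high-probability bound on the total detour length from the $k$-median analysis.'' No such bound exists: every one of the $m$ units of mass is rerouted between a point and its representative sample, both lying in a ring of diameter $2\mu'$, so each $d(p,p')^2$ can genuinely be $\Theta((\mu')^2)$ (the sample is uniform, so there is no reason a point is near its representative in a general metric). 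The detour term is therefore $(1+1/\gamma)\cdot\Theta(m(\mu')^2)=\Theta(m(\mu')^2/\epsilon)$, a factor $1/\epsilon^2$ too large. In the $k$-median proof the only quantities controlled by ``few units detour'' are the correction flows through $w$ (where the number of units is $O(\epsilon|P'_i|)$ by concentration of $|Q'_i|$); the main term $\sum_{p\in Q''_i}(m/s)\,d(p,c_i)$ is controlled by concentration of the \emph{empirical mean} of $d(\cdot,c_i)$ over the sample (Chen's/Haussler's lemma), not by bounding detours.

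The paper's actual route for $k$-means is to reuse that empirical-mean concentration for the squared distances, which requires first taming their range: it proves $d(p,c_i)^2\le\frac{2}{|P'_i|}\sum_{q\in P'_i}d(q,c_i)^2+8(\mu')^2$ via Cauchy--Schwarz, so that the sampling error can be split into a piece charged to the cluster's own cost (this is the source of the $O(\epsilon)\cdot\mathrm{cost}(P,M,C)$ term) and a piece charged to $(\mu')^2$. This also forces two changes you do not mention: the per-ring sample size grows to $\Theta(k\log n/\epsilon^5)$, and the case threshold on $\mathbb{E}[|Q'_i|]$ becomes $\epsilon^3 s/(100k)$ rather than $\epsilon s/(100k)$. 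Your instinct that the additive $O(\epsilon)\cdot\mathrm{cost}$ arises from a ``charge partly to cost, partly to $(\mu')^2$'' split is right, but to make the accounting close you need the concentration-of-means argument for squared distances (with the Cauchy--Schwarz range reduction), not a pointwise approximate triangle inequality applied to all $m$ rerouted units.
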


The network $G_Y$ is defined in a different way in this case to deal with the square of distances. In particular, we adapt a bipartite matching framework. The points (sources) have positive demands and are placed on the left side, and centers (sinks) have negative demands and are placed on the right. If the demand $m-\sum_{p\in P'} Y[p]$ corresponding to the bicriteria center $c'$ is non-negative, it is placed on the left as a source. Otherwise, it is placed on the right as a sink. The costs of the edges are now set to square of the corresponding distances. 

Lemma \ref{lem:conc-ar-ex} continues to hold even in this case. Thus for the same reason we readily obtain, w.p. $1-1/n^{2k+20}$, $f(\mathbb{E}[X])\le f(X)+\epsilon m\mu'$. To prove, w.p. $1-1/n^{2k+20}$, $f(X)\le f(\mathbb{E}[X])+\epsilon m\mu'$, we need to show, w.p. at least $1-1/n^{3}$, $f(X)\le f(\mathbb{E}[X])+\epsilon m\mu'/3$. Here we need significant amount of changes in the analysis. Again we have two cases based on the expectation of $|Q'_i|$. Here we need a slightly different bound on the expectation $\epsilon^3 s/(100 k)$ instead of $\epsilon s/(100 k)$.

\paragraph{Case 1. $\mathbb{E}[|Q'_i|]\ge \epsilon^3 s/(100 k)$.} In this case, $|P'_i|\cdot s/m \ge \epsilon^3 s/(100 k)$, or $|P'_i|\ge \epsilon^3 m/(100 k)$. Note that Observation \ref{obs:qprimepprime} continues to hold, as $s$ is set to $\Theta(k\log n/\epsilon^5)$, and thus Observations  \ref{obs:qprime-qdouble} and \ref{obs:p-qdouble} as well. 

Now, we give bound on the cost of the computed flow. Note that we route $m/s$ flow for each point in $Q''_i$ to $c_i$ whose total cost is $\sum_{p\in Q''_i} (m/s) \cdot d(p,c_i)$. 

For points $p \in P'_i$, the distances $d(p,c_i)$ lie in an interval of length at most $2\mu'$. Thus the average of these distances must also lie in this interval. It follows that, 

\begin{align*}
 d(p,c_i)^2 &\le (\frac{1}{|P'_i|}\cdot \sum_{p\in P'_i} d(p,c_i)+2\mu')^2\\
 & \le 2(\frac{1}{|P'_i|}\cdot \sum_{p\in P'_i} d(p,c_i))^2+8{\mu'}^2\\
 & \le \frac{2}{(|P'_i|)^2}\cdot |P'_i|\cdot \sum_{p\in P'_i} d(p,c_i)^2+8{\mu'}^2\\
 & \le \frac{2}{|P'_i|}\cdot \sum_{p\in P'_i} d(p,c_i)^2+8{\mu'}^2
\end{align*}

The second last inequality follows from Cauchy-Schwarz's inequality. Now, we can apply Lemma \ref{lem:chen} setting $T=8{\mu'}^2$, $V=P'_i$, $U=Q''_i$, $h(p)=d(p,c_i)$, $\delta=\epsilon {\mu'}^2/20$ and $\lambda=1/n^{10}$. Note that, $$r\ge (1-\epsilon/20) \cdot |P'_i|\cdot s/m\ge (1-\epsilon/20)\cdot  \frac{\epsilon^3 m}{100 k}\cdot  \frac{s}{m}\ge \Theta(\log n/\epsilon^2)\ge (T^2/2\delta^2) \ln{(2/\lambda)}$$ The last inequality follows assuming a sufficiently large constant is hidden in $\Theta(.)$ in the definition of $s$. 

We obtain, w.p. at least $1-1/n^{10}$, 
\begin{align*}
& h(Q''_i) \le \frac{h(P'_i)\cdot |Q''_i|}{|P'_i|}+ \delta \cdot (|P'_i|\cdot s/m)\\
 \text{Or, }& h(Q''_i)\cdot (m/s) \le \frac{h(P'_i)\cdot |Q''_i|}{|P'_i|}\cdot (m/s) + \epsilon |P'_i|\cdot  {\mu'}^2/20\\
 \text{Or, }& h(Q''_i)\cdot (m/s) \le (1+\frac{\epsilon}{50})\cdot h(P'_i)+\epsilon |P'_i|\cdot  {\mu'}^2/20
\end{align*}

The last inequality follows from Observation \ref{obs:qprimepprime} considering both cases in the flow construction. Next we compute the additional costs. We have two cases. In the first case, $|Q'_i| \le |P'_i|\cdot s/m$ and we need to route $(|P'_i|-|Q'_i|\cdot m/s)$ amount of flow from $c'$ to $c_i$. The cost is at most,

\begin{align*}
& (|P'_i|-|Q'_i|\cdot m/s)\cdot  d(c',c_i)\\
    & \le \epsilon\cdot \frac{|P'_i|}{50}\cdot  ( \frac{2}{|P'_i|}\cdot \sum_{p\in P'_i} d(p,c_i)^2+8{\mu'}^2)\\
    &  \le \frac{\epsilon}{25}\cdot \sum_{p\in P'_i} d(p,c_i)^2+\frac{4\epsilon\cdot |P'_i|}{25}
{\mu'}^2
\end{align*}

The first inequality follows from Observation \ref{obs:qprimepprime} and from the fact that $c'$ is the ring center. 

In the second case, $|Q'_i| > |P'_i|\cdot s/m$. Note that in this case we need to route at least $|P'_i|-|Q''_i|\cdot m/s$ flow from one point $p$ to $c_i$, as $|Q''_i|=\lfloor |P'_i|\cdot s/m\rfloor$, and $m/s$ flow for each point in $Q'_i\setminus Q''_i$ to $w$. The 
first cost is at most,

\begin{align*}
& (|P'_i|-|Q''_i|\cdot m/s)\cdot  d(p,c_i)\\
    & \le \epsilon\cdot \frac{|P'_i|}{20}\cdot  ( \frac{2}{|P'_i|}\cdot \sum_{p\in P'_i} d(p,c_i)^2+8{\mu'}^2)\\
    &  \le \frac{\epsilon}{10}\cdot \sum_{p\in P'_i} d(p,c_i)^2+\frac{2\epsilon\cdot |P'_i|}{5}
{\mu'}^2
\end{align*}

The first inequality follows from Observation \ref{obs:p-qdouble}. The second cost can be bounded by,

\begin{align*}
    & \sum_{p \in (Q'_i\setminus Q''_i)} (m/s) \cdot d(p,c')^2\\
    & \le (|Q'_i|-|Q''_i|)\cdot (m/s)\cdot  {\mu'}^2\\
    & \le (\epsilon |P'_i|\cdot s/(40 m)) \cdot (m/s) \cdot {\mu'}^2\\
    & \le \epsilon |P'_i| {\mu'}^2/40
\end{align*}

The second inequality follows from Observation \ref{obs:qprime-qdouble}. Thus, in this case, the total cost is bounded by, 

$$(1+\frac{3\epsilon}{25})\cdot \sum_{p\in P'_i} d(p,c_i)^2+\frac{19\epsilon\cdot |P'_i|}{40}
{\mu'}^2.
$$

\paragraph{Case 2. $\mathbb{E}[|Q'_i|] < \epsilon^3 s/(100 k)$.} 

We give separate bounds for the two cases. In the first case, $|Q'_i| \le |P'_i|\cdot s/m$. In this case, we need to route $m/s$ flow from points in $Q''_i$ to $c_i$ and $(|P'_i|-|Q'_i|\cdot m/s)$ amount of flow from $c'$ to $c_i$. Let $p_{\min}$ and $p_{\max}$ be the nearest and farthest points in $P'_i$ from $c_i$. The total cost is, 

\begin{align*}
& \sum_{p\in Q'_i} (m/s) \cdot d(p,c_i)^2 +(|P'_i|-|Q'_i|\cdot m/s)\cdot  d(c',c_i)^2\\
    & \le \sum_{p\in Q'_i} (m/s) \cdot d(p_{max},c_i)^2 +(|P'_i|-|Q'_i|\cdot m/s)\cdot  d(c',c_i)^2\\
    & \le |P'_i|\cdot \max\{d(p_{\max},c_i)^2,d(c',c_i)^2\}\\
    & \le |P'_i|\cdot ({\mu'}+d(c',c_i))^2\\
    & \le |P'_i|\cdot (2{\mu'}+d(p_{\min},c_i))^2
\end{align*}

The first inequality follows by replacing the squares of the distances by their maximum. The third inequality follows by noting that $d(p_{\max},c_i)\le d(p_{\max},c')+d(c',c_i)\le \mu'+d(c',c_i)$. The last inequality follows by noting that $d(c',c_i)\le d(c',p_{\min})+d(p_{\min},c_i)$. 

Next, we upper bound the above expression. We consider two subcases. The first one is $d(p_{\min},c_i) \le 2\mu'/{\epsilon}$. In this subcase, 

\begin{align*}
  |P'_i|\cdot  (2{\mu'}+d(p_{\min},c_i))^2 &\le \frac{\epsilon^3 m}{100 k}\cdot  (2{\mu'}+2\mu'/{\epsilon})^2\\
 &\le \frac{\epsilon^3 m}{25 k}\cdot {\mu'}^2(1+1/{\epsilon})^2\\
 &=  \frac{O({\epsilon} m) }{ k}\cdot {\mu'}^2. 
\end{align*}

In the other subcase $d(p_{\min},c_i) > 2\mu'/{\epsilon}$. 

\begin{align*}
 |P'_i|\cdot  (2{\mu'}+d(p_{\min},c_i))^2 &\le |P'_i|\cdot  ({\epsilon} d(p_{\min},c_i)+d(p_{\min},c_i))^2\\
 & \le |P'_i|\cdot d(p_{\min},c_i)^2\cdot (1+{\epsilon})^2\\
 & = (1+O({\epsilon})) \sum_{p\in P'_i} d(p,c_i)^2
\end{align*}

The last inequality follows, as $d(p,c_i)\le d(p_{\min},c_i)$ for all $p\in P'_i$. Now, we consider the second case: $|Q'_i| > |P'_i|\cdot s/m$. We need to route the flow from points in $Q''_i$ to $c_i$. Additionally, we need to route at least $|P'_i|-|Q''_i|\cdot m/s$ flow from one point $p^*$ to $c_i$, as $|Q''_i|=\lfloor |P'_i|\cdot s/m\rfloor$, and $m/s$ flow for each point in $Q'_i\setminus Q''_i$ to $w$. The 
sum of the first two costs is at most,

\begin{align*}
& \sum_{p\in Q''_i} (m/s) \cdot d(p,c_i)^2 + (|P'_i|-|Q''_i|\cdot m/s)\cdot  d(p^*,c_i)^2\\
    & \le \sum_{p\in Q''_i} (m/s) \cdot d(p_{max},c_i)^2 +(|P'_i|-|Q''_i|\cdot m/s)\cdot  d(p_{max},c_i)^2\\
    & \le |P'_i|\cdot d(p_{\max},c_i)^2\\
    & = \frac{O({\epsilon} m) }{k}\cdot {\mu'}^2+(1+O({\epsilon})) \sum_{p\in P'_i} d(p,c_i)^2
\end{align*}

The last equality follows in the same way as in the first case.  The remaining cost in the second case can be bounded by,

\begin{align*}
    & \sum_{p \in (Q'_i\setminus Q''_i)} (m/s) \cdot d(p,c')^2\\
    & \le (|Q'_i|-|Q''_i|)\cdot (m/s)\cdot  {\mu'}^2\\
    & \le |Q'_i|\cdot (m/s)\cdot  {\mu'}^2\\
    & \le (\epsilon s/(50 k)) \cdot (m/s)\cdot  {\mu'}^2\\
    &= (\epsilon m/(50 k)) \cdot  {\mu'}^2
\end{align*}

Thus, the total cost in both the cases is bounded by, 

\begin{align*}
& \frac{O({\epsilon} m) }{ k}\cdot {\mu'}^2+(1+O({\epsilon})) \sum_{p\in P'_i} d(p,c_i)^2
\end{align*}

\paragraph{General Upper Bound on the Cost.} By merging the cost in both cases, we obtain the common upper bound, 

\begin{align*}
& (1+O({\epsilon})) \sum_{p\in P'_i} d(p,c_i)^2+\frac{19\epsilon\cdot |P'_i|}{40}
{\mu'}^2+\frac{O({\epsilon} m) }{ k}\cdot {\mu'}^2\\
& = (1+O({\epsilon})) \sum_{p\in P'_i} d(p,c_i)^2+{O(\epsilon\cdot |P'_i|)}\cdot 
{\mu'}^2+\frac{O({\epsilon} m) }{ k}\cdot {\mu'}^2
\end{align*}

Summing over all the centers in $C$, we obtain,  

\begin{align*}
\emph{wcost}(W',C,M) &\le (1+O({\epsilon}))\cdot \emph{cost}(P,C,M) + {O(\epsilon\cdot |P'|)}\cdot 
{\mu'}^2+O({\epsilon}) \cdot m\cdot {\mu'}^2\\
&= (1+O({\epsilon}))\cdot \emph{cost}(P,C,M) + O({\epsilon}) \cdot m\cdot {\mu'}^2.
\end{align*}

Summing the cost over all rings gives us, 

\begin{align*}
|\emph{cost}(P,C,M)-\emph{wcost}(W',C,M)|&\le \sum_{(i,j)} O(\epsilon)\cdot |P'_{i,j}| \cdot 2^j\mu^2+O({\epsilon} k\log n) \cdot \emph{cost}(P,C,M) \\
& = O({\epsilon} k\log n) \cdot \emph{cost}(P,C,M)
\end{align*}

Note that the coreset size for each ring and for each color was $O( k \log n/{\epsilon}^5)$. To obtain the desired $\epsilon$ error, we need to scale $\epsilon$ by a factor of $\Theta(k \log n)$. Thus, the required size of the coreset becomes $O((k\log n)^6/{\epsilon}^5)$. Summing over all rings and colors we obtain the desired bound of $O(\ell (k\log n)^7/{\epsilon}^5)$ on our coreset size. 

This proves the disjoint case of Theorem \ref{th:coresetthm} for $k$-means. The coreset construction algorithm for $k$-means in the overlapping group case is again the same as that for $k$-median, except the bound on sample size. From the above analysis and the analysis for $k$-median, we 
obtain the desired result. This proves the overlapping case of Theorem \ref{th:coresetthm} for $k$-means. 

In the Euclidean case, the extension of the analysis for $k$-median to $k$-means is trivial. 
We obtain the following generic theorem. 

\begin{theorem}\label{th:coresetthmkmeans}
 Given a collection of $\ell$ possibly overlapping groups consisting of $n$ points in total in a metric space, there is an $O(n ( k+\ell))$ time randomized algorithm that w.p. at least $1-1/n$, computes a universal coreset for $k$-means clustering of size $O(\Gamma (k\log n)^7/{\epsilon}^5)$. In the Euclidean case, the size of the coreset is $O\left(\frac{\Gamma}{\epsilon^5}\cdot k^7 (\log n)^6 (\log n+d \log (1/\epsilon))\right)$, and the running time is $O(nd(k + l))$.  
\end{theorem}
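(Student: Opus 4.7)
The plan is to adapt the $k$-median proof from Sections~\ref{sec:coresetconstructionkmediandisjoint}--\ref{sec:euclideancoreset} to the $k$-means objective with two modifications: (i) inflate the per-ring-per-color sample size from $\Theta(k\log n/\epsilon^3)$ to $\Theta(k\log n/\epsilon^5)$, and (ii) in the flow network $G_Y$ used in the analysis, replace edge distances by their squares. The modular architecture --- single-ring/single-color flow analysis, then a rings-in-sequence telescoping of conditional expectations, then a colors-in-sequence union bound, then the equivalence-class reduction for overlapping groups, and finally the exponential-grid discretization for the Euclidean case --- carries over wholesale from the $k$-median analysis.

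First I would establish the $k$-means analogue of Theorem~\ref{th:singleMsingleC}: for a fixed ring, fixed $C$ and fixed coloring constraint $M$, with probability $\ge 1 - 1/n^{2k+10}$,
\[
|\cost(P,M,C) - \wcost(W',M,C)| \le \epsilon m (\mu')^2 + O(\epsilon)\cdot \cost(P,M,C).
\]
With squared distances on the edges of $G_Y$, the function $f$ becomes $O((\mu')^2)$-Lipschitz in the $\ell_1$ metric on $\mathbb{R}_{\ge 0}^m$, so the Chernoff-type bound used in Lemma~\ref{lem:conc-ar-ex} still yields $|f(X)-\mathbb{E}[f(X)]| \le \epsilon m (\mu')^2 /2$ w.h.p. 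The convexity direction $f(\mathbb{E}[X]) \le \mathbb{E}[f(X)]$ is identical to the $k$-median argument, since it uses only linearity of expectation and feasibility of averaged flows.

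The hard part will be the reverse direction $f(X) \le f(\mathbb{E}[X]) + O(\epsilon m (\mu')^2)$ with probability $\ge 1 - 1/n^3$: when one reroutes unit flow through sampled points, each reassignment's squared cost amplifies individual deviations, so the $k$-median bookkeeping is not applicable as-is. The trick is to apply Cauchy--Schwarz together with $(a+b)^2 \le 2a^2 + 2b^2$ to obtain $d(p,c_i)^2 \le (2/|P'_i|)\sum_{p \in P'_i} d(p,c_i)^2 + 8(\mu')^2$, which allows invocation of Chen's Haussler-style sampling lemma on the squared-distance function. I would then split into two cases according to whether $\mathbb{E}[|Q'_i|] \gtrless \epsilon^3 s/(100k)$ (note $\epsilon^3$ rather than the $k$-median threshold $\epsilon$) and bound, in each subcase, the rerouting cost by $(1+O(\epsilon))\sum_{p \in P'_i} d(p,c_i)^2 + O(\epsilon |P'_i|)(\mu')^2 + O(\epsilon m/k)(\mu')^2$; summing over centers gives the desired single-ring bound.

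Given the single-ring bound, the multi-ring extension follows by telescoping conditional expectations over an arbitrary ordering of the rings exactly as in Lemma~\ref{lem:singleringf}, contributing a $\sum_{(i,j)} O(\epsilon)|P'_{i,j}|\cdot 2^j\mu^2$ error term. Taking the union bound over at most $n^k$ center sets gives a global $O(\epsilon k \log n)\cdot\cost(P,M,C)$ multiplicative loss. The color-by-color replacement argument (Lemma~\ref{lem:genboundonecolorallgenM}) and the overlapping-group reduction via the equivalence classes $P^1,\ldots,P^\Gamma$ (Section~\ref{sec:coresetconstructionkmedianoverlapping}) port over verbatim. Rescaling $\epsilon$ by $\Theta(1/(k\log n))$ to absorb the ring accumulation upgrades the per-ring-per-color sample size to $\Theta((k\log n)^6/\epsilon^5)$; summing over $O(k\log n)$ rings and $\Gamma$ equivalence classes yields the claimed $O(\Gamma(k\log n)^7/\epsilon^5)$ size, and the running time $O(n(k+\ell))$ is unchanged since the algorithm itself differs only in the sample-size constant. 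For the Euclidean setting, I would reuse the exponential grid $F$ of Section~\ref{sec:euclideancoreset}, enlarging the outermost cube's exponent to $N = O(\log(n/\epsilon))$ so that the analogue of Lemma~\ref{lem:CnotinQ} still absorbs far-away centers via the squared-distance cost; this keeps $|F| = b = O(k\log(n/\epsilon)/\epsilon^d)$. Enlarging the union bound over center sets from $n^k$ to $b^k$ forces $s = \Theta(k\log(nb)/\epsilon^5) = \Theta(k(\log n + d\log(1/\epsilon))/\epsilon^5)$, and the case analysis of Lemmas~\ref{lem:CnotinQ} and~\ref{lem:CinQ} transfers by squaring distances throughout (inflating only constants), yielding the stated Euclidean size $O(\Gamma/\epsilon^5 \cdot k^7 (\log n)^6 (\log n + d\log(1/\epsilon)))$ with $O(nd(k+\ell))$ runtime.
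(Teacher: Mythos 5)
Your proposal is correct and follows essentially the same route as the paper's own proof: the same inflated sample size $\Theta(k\log n/\epsilon^5)$, the same squared-cost flow network with the $O((\mu')^2)$-Lipschitz concentration argument, the same Cauchy--Schwarz bound $d(p,c_i)^2 \le (2/|P'_i|)\sum_{p\in P'_i} d(p,c_i)^2 + 8(\mu')^2$, the same case split at $\mathbb{E}[|Q'_i|] \gtrless \epsilon^3 s/(100k)$, and the same final rescaling of $\epsilon$ by $\Theta(1/(k\log n))$ to absorb the $O(\epsilon k \log n)\cdot\cost$ accumulation over rings (which, to be precise, comes from summing the per-ring $O(\epsilon)\cdot\cost$ terms over $O(k\log n)$ rings rather than from the union bound over center sets, as you phrase it). The overlapping-group and Euclidean extensions also match the paper.
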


\section{Assignment Problem for \probFair}
\label{sec:assignment}

Recall that we are given $\ell$ groups $\{P_i\}$ of $P$, and $P^1,\ldots,P^{\Gamma}$ are the point equivalence classes. Also, $I_t$ is the set of indexes of the groups corresponding to $P^t$, for each $t \in [\Gamma]$. We aim to solve \probFair on our coreset $W$ instead of on the original points. Suppose we are given the optimal set of centers $C$ for \probFair. Let $\mathcal{M}$ be the collection of coloring constraints that express the assignment restriction of \probFair. Since $W$ is a universal coreset, computing the minimum \text{wcost}$(W,M,C)$ over all $k\times\Gamma$ matrix $M\in \mathcal{M}$ would give us the optimal cost of fair clustering, modulo a $(1\pm \epsilon)$ factor. Now, recall that, for $k$-median, \text{wcost}$(W,M,C)$ is the minimum value $\sum_{x\in P, c_j\in C} \psi(x,c_j)\cdot  d(x,c_j)$ over all assignments $\psi: P \times C\rightarrow  \mathbb{R}_{\ge 0}$ such that 
\begin{enumerate}
 \item For each $x\in P$, $\sum_{c_j\in C} \psi(x,c_j)=w(x)$. 
 
 \item For each $c_j\in C$ and class $1\le t\le \Gamma$, $\sum_{x\in P^t} \psi(x,c_j) = M_{jt}$. 
\end{enumerate}  

Thus, given an $M$, we can compute \text{wcost}$(W,M,C)$ by solving a minimum cost flow problem. But, as the size of $\mathcal{M}$ can be sufficiently large, we cannot try out all possible $M$. Note that as the optimal $M \in \mathcal{M}$ represents a fair partition of the equivalence classes $\{P^t\}$ between the centers , $\psi$ automatically satisfies the fairness properties:
    \begin{gather*}
        \sum_{x \in P_i} \psi(x, c_j) \le \alpha_i \cdot \sum_{x \in P} \psi(x, c_j), \quad \forall c_j \in C, \forall i \in [\ell],\\
        \sum_{x \in P_i} \psi(x, c_j) \ge \beta_i \cdot \sum_{x \in P} \psi(x, c_j), \quad \forall c_j \in C, \forall i \in [\ell].
    \end{gather*}
 
Now, as the optimal $M$ has all integer entries, the optimal cost assignment $\psi$ must also be integral. Here we assume that the coreset points have integer weights. We note that our construction can be slightly modified to obtain coreset with integer weights (e.g, see Chen's adaptation \cite{chen2009coresets}). Thus, given $W$ and $C$ it is sufficient to compute a minimum cost integral assignment that satisfies the above two inequalities and the constraint: For each $x\in P$, $\sum_{c_j\in C} \psi(x,c_j)=w(x)$. 
We refer to this assignment problem as \probFairAssignment. 
Our main theorem of this section provides an algorithm with running time $(k\Gamma)^{O(k\Gamma)}|W|^{O(1)}$ for this problem. The general idea is to reduce the assignment problem to a linear programming problem. The unknown optimal assignment can be naturally expressed in terms of linear inequalities, along with the condition that the assignment is fair. However, the issue is that in general the optimal fractional solution to this linear programming problem is not integral, and the integrality gap could be arbitrarily large. Thus, an optimal fractional solution does not yield the desired assignment. And indeed, it was observed already by Bera et al. \cite{bera2019fair} that the assignment problem for \probFair is NP-hard, so there is no hope to have a polynomial time assignment algorithm.

We cannot afford to make all variables integral and solve an integer linear program (ILP) instead, as the number of variables is large, of order $|W|k$, and in our construction $|W|$ is polylogarithmic in $n$. However, note that the optimal assignment has the property that for each $c_j\in C$ and class $1\le t\le \Gamma$, $\sum_{x\in P^t} \psi(x,c_j) = M_{jt}$. Thus the amount of weight assigned from each class to each center is an integer. Using this observation, we reduce our problem to a mixed-integer linear programming problem and force only $k \cdot \Gamma$ variables to be integral. These variables correspond exactly to the entries of the constraint matrix $M$. Then, we show that this automatically ensures that all the other variables are integral as well, in the optimal solution.

Next, we state one of the equivalent formulations of the \probMILP problem. The input to the problem is a matrix $A \in \mathbb{R}^{m \times d}$, a vector $b \in \mathbb{R}^m$, a vector $c \in \mathbb{R}^d$, and a parameter $p$, $0 \le p \le d$.
    The goal is to find a vector $x = (x_1, \ldots, x_d) \in \mathbb{R}^d$ such that $x_1, \ldots, x_p \in \mathbb{Z}$, $A \cdot x \le b$, and the value $c \cdot x$ is minimized across all vectors satisfying the above.


By the celebrated result of Lenstra~\cite{Lenstra1983}, \probMILP  is solvable in FPT time when parameterized by the number of integer variables $p$. We use the following commonly employed version of this result, following the improvements to the original Lenstra's algorithm given by Kannan~\cite{Kannan1987}, and Frank and Tardos~\cite{Frank1987}.
\begin{proposition}[\cite{Lenstra1983}, \cite{Kannan1987}, \cite{Frank1987}]
    \label{proposition:MILP}
    There is an algorithm solving \probMILP in time $O(p^{2.5p + o(p)} d^4 L)$ and space polynomial in $L$, where $L$ is the bitsize of the given instance.
\end{proposition}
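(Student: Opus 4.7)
The plan is to prove Proposition~\ref{proposition:MILP} by induction on the number $p$ of integer variables, following the classical geometry-of-numbers approach of Lenstra with the refinements of Kannan and Frank--Tardos. In the base case $p = 0$, \probMILP is a standard linear program, solvable in time polynomial in $d$ and $L$ (for instance via the ellipsoid method), which accounts for the $d^4 L$ factor at the leaves of the recursion.

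For the inductive step, let $P = \{x \in \mathbb{R}^d : Ax \le b\}$ and let $K = \pi(P) \subseteq \mathbb{R}^p$ be the projection of $P$ onto the integer coordinates. The task reduces to finding $y \in K \cap \mathbb{Z}^p$ that extends to some $x \in P$ minimizing $c \cdot x$; for each fixed $y$, the continuous completion is found by LP. First I would handle degenerate cases (test feasibility of $P$ by LP, and recurse on a lower-dimensional face if $K$ is not full-dimensional). Otherwise, I would compute a L\"owner--John ellipsoid sandwiching $K$, which is polynomial up to a factor depending only on $p$.

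The heart of the argument is Khintchine's flatness theorem: either some explicitly identifiable integer point lies in $K$, or there is an integer direction $v \in \mathbb{Z}^p \setminus \{0\}$ along which the width $\max_{x \in K} v \cdot x - \min_{x \in K} v \cdot x$ is at most a function $\omega(p)$ of $p$ alone. To locate such a short direction algorithmically, one runs the LLL basis reduction on a lattice derived from the sandwiching ellipsoid; Kannan's sharpening of this step yields $\omega(p) = p^{O(p)}$, which is precisely what drives the $p^{2.5p + o(p)}$ factor. Given $v$, there are at most $\omega(p)+1$ integer values attained by $v \cdot x$ over points of $K \cap \mathbb{Z}^p$, and we branch over these values, appending an equation $v \cdot x = t$ in each branch. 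Each branch is an \probMILP instance with one fewer integer variable, so the induction applies.

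The main obstacle will be controlling bit complexity through the recursion: a naive analysis lets coefficient sizes blow up exponentially with the recursion depth, which would ruin both the time bound and the polynomial space guarantee. This is addressed by the preprocessing step of Frank and Tardos~\cite{Frank1987}, which replaces the input by an equivalent instance whose coefficients are bounded by $2^{\poly(d)}$ while preserving the set of optimal solutions, and which can be invoked at each recursive call. Combining the $p^{O(p)}$ branching factor at each of the $p$ levels, the polynomial-time LP at the leaves, and the Frank--Tardos bit-size control yields the overall bound $O(p^{2.5p + o(p)} d^4 L)$ in time and space polynomial in $L$, completing the induction.
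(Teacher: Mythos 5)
The paper does not actually prove Proposition~\ref{proposition:MILP}: it is imported verbatim as a known result of \cite{Lenstra1983}, \cite{Kannan1987}, and \cite{Frank1987}, so there is no in-paper argument to compare yours against. Your sketch is a faithful reconstruction of the standard Lenstra-style recursion --- LP at the leaves, a flatness theorem made algorithmic by lattice basis reduction to find a thin integer direction, branching over at most $\omega(p)+1$ parallel hyperplanes to reduce the number of integer variables by one, and Frank--Tardos preprocessing to keep coefficient sizes under control --- and that is indeed the correct skeleton. Two points would need tightening in a full write-up. First, the projection $K = \pi(P)$ of a polyhedron onto the integer coordinates is not available as an explicit inequality system without a potentially exponential Fourier--Motzkin blowup; the L\"owner--John ellipsoid of $K$ must be computed implicitly, using the fact that one can optimize any linear functional over $K$ by optimizing the corresponding lifted functional over $P$. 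Second, the attribution of $\omega(p) = p^{O(p)}$ to Kannan is off: the per-level width bound delivered by a reduced basis is $2^{O(p)}$ (and the flatness constant itself is polynomial in $p$), while Kannan's actual contribution is a reorganization of the recursion and enumeration that brings the \emph{product} of branching factors over all $p$ levels down to $p^{2.5p+o(p)}$ instead of $2^{O(p^2)}$. Neither issue is an unfixable error --- both are resolved in the cited works --- but as written your sketch does not yet yield the claimed running time.
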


Now we present the assignment algorithm itself. Note that it is sufficient to consider only the points in $W$ for the purpose of computing an assignment, as the other points in $P$ have zero weights. For simplicity, we denote $|W|$ by $n$. There is practically no difference between the cases of $k$-median and $k$-means concerning the assignment problem, and thus we state it for both cases.

\begin{theorem}
    There is an algorithm that given an instance of \probFairAssignment, i.e, a weighted set $W$ of $n$ points and a set $C = \{c_1, \ldots, c_k\}$ of $k$ centers, computes an optimal assignment of $W$ with the set of centers $C$. That is, the output is a minimum cost assignment $\psi : P \times C \to \mathbb{Z}_{\ge 0}$ that corresponds to \probFair. The running time of the algorithm is $(k\Gamma)^{O(k\Gamma)} n^{O(1)} L$, where $L$ is the total number of bits in the encoding of distances and weights in the instance.
    \label{theorem:assignment}
\end{theorem}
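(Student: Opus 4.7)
The plan is to reduce \probFairAssignment to an instance of \probMILP with only $k\Gamma$ integer variables, so that Proposition~\ref{proposition:MILP} delivers the claimed running time. I will introduce a fractional variable $f_{ij} \ge 0$ for every coreset point $x_i \in W$ and every center $c_j \in C$, representing the amount of weight of $x_i$ that is routed to $c_j$, together with an integer variable $g_{tj}$ for every equivalence class $P^t$ and center $c_j$, representing $\sum_{x_i \in P^t} f_{ij}$. The linear constraints will be: $\sum_{j=1}^{k} f_{ij} = w(x_i)$ for every $i$; $\sum_{x_i \in P^t} f_{ij} = g_{tj}$ for every pair $(t,j)$; and the fairness constraints
$$\beta_h \sum_{t=1}^{\Gamma} g_{tj} \;\le\; \sum_{t \,:\, h \in I_t} g_{tj} \;\le\; \alpha_h \sum_{t=1}^{\Gamma} g_{tj}$$
for every group $h \in [\ell]$ and every center $c_j$. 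The objective is $\sum_{i,j} f_{ij}\cdot d(x_i,c_j)$ for the $k$-median case and the same sum with $d(x_i,c_j)^2$ for $k$-means. Applying Proposition~\ref{proposition:MILP} with $p = k\Gamma$ then yields an optimal MILP solution in time $(k\Gamma)^{O(k\Gamma)} n^{O(1)} L$.

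The key remaining step is to argue that the optimum of this MILP is attained by an integral assignment. Fix the values $\{g^*_{tj}\}$ from the MILP optimum. Once these are fixed, the residual LP decouples across equivalence classes: for each $t$, the constraints involving $\{f_{ij}\}_{x_i \in P^t, j \in [k]}$ form a transportation problem in which each point $x_i \in P^t$ supplies $w(x_i) \in \mathbb{Z}_{\ge 0}$, each center $c_j$ demands exactly $g^*_{tj} \in \mathbb{Z}_{\ge 0}$ from $P^t$, and the edge cost is $d(x_i,c_j)$ (respectively its square). Since the constraint matrix of a transportation problem is totally unimodular and all supplies and demands are integers, this polytope has only integer vertices, so there exists an integral optimum of cost no greater than any fractional optimum. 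Replacing the fractional $f$-values in each class by such an integer-valued optimum preserves the MILP objective and produces the desired integral $\psi : P \times C \to \mathbb{Z}_{\ge 0}$. The weights in our coreset can be assumed integer (Chen's sampling scheme admits a slight modification producing integer weights, as already noted after Definition~\ref{definition:fair}), so this goes through.

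Fairness of $\psi$ is immediate: $|\{x \in P_h : \psi(x,c_j) = c_j\}|$ reads off as $\sum_{t \,:\, h \in I_t} g^*_{tj}$ and the size of the cluster at $c_j$ as $\sum_t g^*_{tj}$, so the inequalities enforced on the $g^*_{tj}$ translate directly into the fairness constraints of Definition~\ref{definition:fair}. Conversely, any integral fair assignment of $W$ induces an integral $\{g_{tj}\}$ and a corresponding feasible $\{f_{ij}\}$ in the MILP, so the MILP optimum is a valid lower bound on the optimal integral cost and, being attained by the $\psi$ we construct, is exactly the value of the minimum-cost fair assignment.

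The step I expect to be the most delicate is the integrality argument: one must verify that after fixing only the $k\Gamma$ aggregate values $g^*_{tj}$, each per-class subproblem is literally a transportation LP, with no surviving coupling constraints across centers or across classes. This works precisely because all the fairness constraints are expressible in terms of the $g_{tj}$ alone, which in turn is the reason for defining the problem on the equivalence classes $\{P^t\}$ rather than directly on the overlapping groups $\{P_h\}$. Encoding each distance (or squared distance) in $L$ bits, the MILP has bitsize $\poly(n,k,L)$, so Proposition~\ref{proposition:MILP} yields the claimed bound $(k\Gamma)^{O(k\Gamma)} n^{O(1)} L$.
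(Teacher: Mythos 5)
Your proposal is correct and follows essentially the same route as the paper: the same MILP with fractional $f_{ij}$ and $k\Gamma$ integral $g_{tj}$ solved via Lenstra--Kannan--Frank--Tardos, followed by a per-class integrality argument once the $g_{tj}$ are fixed (the paper invokes integrality of min-cost flow where you invoke total unimodularity of the transportation polytope, which is the same fact). The only cosmetic difference is that you state the fairness constraints directly on the $g_{tj}$ whereas the paper states them on the $f_{ij}$; these are equivalent via the linking constraint $\sum_{x_i \in P^t} f_{ij} = g_{tj}$.
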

\begin{proof}
    We reduce \probFairAssignment  to \probMILP. The formulation of our problem itself follows the natural way of treating a clustering assignment problem as a flow problem. Let $W = \{(p_1, w(p_1)), \cdots, (p_n, w(p_n))\}$. For every point $p_i$ and center $c_j$ introduce a variable $f_{ij}$ corresponding to how much weight from the $i$-th point is assigned to the $j$-th center. Also, for every center $c_j$ and point equivalence class $t \in \{1, \ldots, \Gamma\}$ introduce a variable $g_{tj}$, corresponding to how much weight from points of the class $t$  the $j$-th center gets. The following constraints express that $\{f_{ij}\}$ and $\{g_{tj}\}$ define a fair clustering:
    \begin{align}
        &f_{ij} \ge 0  &\forall i \in \{1, \ldots, n\},\  j \in \{1, \ldots, k\}, \label{eq:ilp1}\\
        &g_{tj} \in \mathbb{Z}_{\ge 0}  &\forall j \in \{1, \ldots, k\},\  t \in \{1, \ldots, \Gamma\}, \\
        &\sum_{1 \le j \le k} f_{ij} = w(p_i)  &\forall i \in \{1, \ldots, n\},\\
        &\sum_{i \in [n] : p_i \in P^t} f_{ij} = g_{tj}  &\forall j \in \{1, \ldots, k\},\  t \in \{1, \ldots, \Gamma\}, \label{eq:ilp4}\\
        &\sum_{i \in [n] :p_i\in P_q} f_{ij}\ge \beta_q\sum_{i \in [n]} f_{ij} 
        &\forall j \in \{1, \ldots, k\},\ \forall q \in \{1, \ldots, \ell\},\label{eq:ilp5}\\
        &\sum_{i \in [n] :p_i\in P_q} f_{ij}\le \alpha_q\sum_{i \in [n]} f_{ij}
        &\forall j \in \{1, \ldots, k\},\ \forall q \in \{1, \ldots, \ell\}.\label{eq:ilp6}
    \end{align}
    Note that for a color $q \in \{1, \ldots, \ell\}$, $\sum_{i \in [n] :p_i\in P_q} f_{ij}$ is precisely the weight assigned from points of color $q$ to the center $j$, and $\sum_{i \in [n]} f_{ij}$ is the total weight assigned to the center $j$. Thus Constraints \eqref{eq:ilp5} and \eqref{eq:ilp6} ensure that the assignment is indeed fair.
    Finally, the objective function is
    \begin{equation}
        \text{Minimize} \quad \sum_{i = 1}^n \sum_{j = 1}^k d_{ij} f_{ij},
        \label{eq:ilpobj}
    \end{equation}
    where $d_{ij} = d(p_i, c_j)$ in the case of $k$-median, and $d_{ij} = d(p_i, c_j)^2$ in the case of $k$-means.

    We solve the \probMILP defined above by using Proposition~\ref{proposition:MILP}. We require that the variables $\{g_{tj}\}$ take integral values, while we do not impose this restriction on the variables $\{f_{ij}\}$. Thus, in time $(k\Gamma)^{O(k\Gamma)} n^{O(1)} L$ we find the optimal solution $\{f_{ij}\}$, $\{g_{tj}\}$.

    Clearly, Constraints~\eqref{eq:ilp1}--\eqref{eq:ilp6} ensure that the assignment defined by $\{f_{ij}\}$ corresponds to \probFairAssignment, except for the fact that some of $\{f_{ij}\}$ might be fractional. We now show that the integrality of $\{g_{tj}\}$ guarantees that there exists an optimal solution to \eqref{eq:ilp1}--\eqref{eq:ilpobj} that is integral. For every equivalence class $t \in \{1, \ldots, \Gamma\}$ consider the following flow network. The network is essentially a restriction of \eqref{eq:ilp1}--\eqref{eq:ilp4} to the class $t$ assuming that the values $\{g_{tj}\}$ are fixed.
    There is a node associated with every point $p_i \in P^t$ that has a supply of $w(p_i)$, and there is a node associated with every $c_j \in C$ that has a demand of $g_{tj}$. There is an edge $e_{ij}$ between each point $p_i \in P^t$ and every center $c_j \in C$ that has an unlimited capacity and the cost $d_{ij}$. In this network, there is a maximum flow of minimal cost $\{f_{ij}'\}$ that has only integral values, since all the supplies, demands and capacities in the network are integers.
    Now we replace the respective values of $\{f_{ij}\}$ with the obtained $\{f_{ij}'\}$ that are integral and still satisfy \eqref{eq:ilp1}--\eqref{eq:ilp4}. The cost is unchanged since $\{f_{ij}\}$ induces a maximum flow in the network as well. Thus the old cost can only be larger, but also $\{f_{ij}\}$ is an optimal solution to the \probMILP instance, so the new cost cannot be smaller. After we perform the above for every class, the whole assignment is integral, now satisfying the statement of the theorem completely.
    Finally, note that $\{f_{ij}'\}$ can be found in $n^{O(1)} L$ time with the known values of $\{g_{tj}\}$ by any polynomial time minimum-cost flow algorithm.
\end{proof}

The algorithm in Theorem~\ref{theorem:assignment} allows us to solve \probFair on the original points as well, as long as we know a suitable set of $k$ centers. However, the running time would have a heavy dependence on $n$, roughly $n^5$. So to obtain a near-linear time algorithm, we cannot use Theorem~\ref{theorem:assignment} directly on the original points, even if we know the centers. Instead, in the approximation algorithms we present, we first compute a universal coreset of the original set of points, and then solve all the arising instances of the assignment problem on the coreset, thus inflicting only polylogarithmic in $n$ time. Still, at the end we have to output a low-cost fair assignment of the original points, and again we cannot directly use Theorem~\ref{theorem:assignment}. So we show how to compute the assignment in near-linear time with the help of the coreset. The idea is to run Theorem~\ref{theorem:assignment} on the coreset and then use the optimal solution there to find a good assignment of the original points in a simpler way. Namely, knowing how many points from each equivalence class are assigned to each center, the assignment problem boils down to finding a minimum-cost flow in a bipartite network where one of the parts is small.
First, we recall a suitable minimum-cost flow result by Ahuja et al. \cite{Ahuja94improvedalgorithms}.

\begin{proposition}[Theorem 7.3 in \cite{Ahuja94improvedalgorithms}]
    \label{proposition_unbalanced_flow}
    The minimum-cost flow problem on a bipartite network is solvable in time $O((n_1m + n_1^3)\log(n_1D))$, where $n_1$ is the size of the smaller part in the network, $m$ is the number of edges, and $D$ is the maximum cost of an edge in the network.
\end{proposition}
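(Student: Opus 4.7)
The plan is to adapt the classical successive shortest paths (SSP) framework for minimum cost flow, combined with cost scaling to achieve the polylogarithmic dependence on $D$. The first step would be a reduction to integer costs and to a clean normalized form where supplies and demands are concentrated on the small side; after this reduction, the side with $n_1$ vertices acts as the ``hub'' of the computation, while the larger side contributes mainly through its incident edges, each of which needs to be scanned at most a bounded number of times per phase.

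Next, I would implement a scaling scheme that operates in $O(\log(n_1 D))$ phases, where in each phase the residual costs are approximated on a coarser scale and within a phase the goal is to restore $\epsilon$-optimality of the current flow, with $\epsilon$ halved from phase to phase. Exploiting the bipartite structure with only $n_1$ vertices on the small side, each phase can be executed by alternating shortest-path augmentations (each costing $O(m)$ time when potentials are suitably maintained via a Dijkstra-style scan) together with a periodic re-optimization step performed on the ``small side projected'' graph of only $n_1$ nodes; since that projected instance has size $n_1$, it can be re-optimized in $O(n_1^3)$ time using a dense min-cost flow subroutine. Summing the per-phase work $O(n_1 m + n_1^3)$ over all $O(\log(n_1 D))$ phases yields the claimed bound.

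The main obstacle I expect is the amortized analysis that bounds the total number of augmenting path computations per phase and certifies that maintaining potentials only on the small side is sufficient to represent the reduced-cost structure of the full bipartite graph. In particular, one would need a structural lemma stating that, after contracting each large-side vertex onto its bipartite neighborhood on the small side, an optimal flow in the projected graph extends to an optimal flow in the original network, and that the dual potentials lift back without loss. With such a lemma in place, standard SSP and cost-scaling analyses combine routinely to give the full $O((n_1 m + n_1^3)\log(n_1 D))$ running time stated in the proposition.
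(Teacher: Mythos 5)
This statement is not proved in the paper at all: it is quoted verbatim as Theorem~7.3 of Ahuja, Orlin, Stein, and Tarjan \cite{Ahuja94improvedalgorithms} and used as a black box, so there is no in-paper proof to compare your argument against. Treated as an attempt to prove the cited theorem itself, your sketch has genuine gaps. The entire argument rests on the ``structural lemma'' you defer to the end --- that contracting each large-side vertex onto its neighborhood yields a projected instance on $n_1$ nodes whose optimal flows and dual potentials lift back to the original network. This is not a routine fact; a large-side vertex with several incident arcs of different costs cannot in general be replaced by a gadget living only on the small side without either blowing up the number of projected arcs to $\Theta(m)$ or losing the reduced-cost structure, and you give no construction or proof. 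Without it, the $O(n_1^3)$ re-optimization step has no object to operate on. Second, the per-phase bound $O(n_1 m + n_1^3)$ is asserted rather than derived: you never bound the number of shortest-path augmentations within a phase, which is exactly the quantity the whole analysis must control, and successive-shortest-path arguments do not give such a bound for free.

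For reference, the actual proof in \cite{Ahuja94improvedalgorithms} does not use successive shortest paths at all. It adapts the Goldberg--Tarjan cost-scaling push-relabel framework to bipartite networks: the key device is the \emph{bipush}, a push of excess along a two-arc path through a large-side vertex, which ensures that active vertices and relabel operations can always be charged to the $n_1$ small-side vertices. The $O(\log(n_1 D))$ factor comes from the scaling phases and the $O(n_1 m + n_1^3)$ from the bipartite push-relabel accounting within a phase. If you want to complete a proof along your lines, you would need either to establish the projection lemma rigorously (which I do not believe holds in the form stated) or to switch to the push-relabel route, where the charging argument to the small side is the standard and well-understood mechanism.
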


Now we prove a general lemma that allows us to transfer any fair assignment from the coreset to a fair assignment on the original points in polynomial time, while losing only a factor of $(1 + \epsilon)$ in the cost.

\begin{lemma}
    There is an algorithm that given a set of points $P$ with the $\ell$ groups $P_1$, \ldots, $P_{\ell}$, a coreset $W$ of $P$, a set of $k$ centers $C$, a fair assignment $\psi: P \times C \to \mathbb{Z}_{\ge 0}$, and a value $0 < \epsilon \le 1$, computes
    a fair assignment of the points of $P$ to the centers of $C$ with the cost at most $(1 + \epsilon) \cdot \cost(\psi)$ in time $O(\Gamma \cdot k^3 /\epsilon^{O(1)} \cdot n \log n)$.
    This holds for both $(\alpha,\beta)$-Fair $k$-{median} and $(\alpha,\beta)$-Fair $k$-{means}s in general metric, provided that $W$ satisfies
    \[\cost(P^t, M, C) \le (1 + \epsilon/3) \wcost(W^t, M, C),\]
    for every column constraint matrix $M \in \mathbb{Z}^{k}$, where by $W^t$ we denote the restriction of $W$ to the points of the equivalence class $P^t$.
    In the Euclidean case, the running time is multiplied by $d$.
    \label{lemma:restore_assignment}
\end{lemma}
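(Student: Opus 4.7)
The plan is to use the input $\psi$ only to extract the coloring constraint matrix $M \in \mathbb{Z}_{\ge 0}^{k \times \Gamma}$ given by $M_{jt} := \sum_{x \in P^t} \psi(x, c_j)$, and then, independently for each equivalence class $t \in [\Gamma]$, recompute from scratch the best integral assignment of $P^t$ to $C$ whose per-center cardinalities coincide with the $t$-th column $M^{(t)}$. Because the fairness conditions of \probFair depend only on per-group per-cluster counts, and these counts are fully determined by $M$ together with the fixed group memberships of the equivalence classes, every assignment consistent with $M$ is automatically fair; so gluing the per-class solutions yields a fair assignment $\psi'$ of $P$ by construction. Note also that $\sum_j M_{jt} = |P^t|$, since $\psi$ distributes exactly $|P^t|$ total weight in class $t$ among the $k$ centers, so each per-class subproblem is feasible.

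For the per-class subproblem I would solve a bipartite minimum-cost flow: one node per $x \in P^t$ with unit supply, one node per center $c_j$ with demand $M_{jt}$, and an edge $(x, c_j)$ of unit capacity and cost $d(x, c_j)$ for $k$-median (respectively $d(x, c_j)^2$ for $k$-means). Invoking Proposition~\ref{proposition_unbalanced_flow} with $n_1 = k$ finds an integral optimal flow in $O((k^2 |P^t| + k^3) \log(kD))$ time, and integrality of supplies, demands and capacities guarantees that this flow decodes to a discrete assignment $\psi'_t$ of $P^t$ of cost exactly $\cost(P^t, M^{(t)}, C)$. After a standard aspect-ratio preprocessing making $\log D = O(\log(n/\epsilon))$, summing over the $\Gamma$ classes gives a total of $O\bigl((k^2 n + \Gamma k^3) \log(n/\epsilon)\bigr)$, which fits inside the claimed $O(\Gamma k^3 / \epsilon^{O(1)} \cdot n \log n)$ budget; in the Euclidean case an extra factor of $d$ is paid to evaluate distances, matching the statement.

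It remains to verify the cost guarantee. By construction each $\psi'_t$ realises $\cost(P^t, M^{(t)}, C)$, and the hypothesised coreset inequality applied column by column gives
\[
\cost(\psi') \;=\; \sum_{t=1}^{\Gamma} \cost(P^t, M^{(t)}, C) \;\le\; (1+\tfrac{\epsilon}{3}) \sum_{t=1}^{\Gamma} \wcost(W^t, M^{(t)}, C).
\]
The restriction of $\psi$ to $W^t$ is itself a feasible (weighted) assignment of $W^t$ to $C$ realising the column $M^{(t)}$, so its class-$t$ contribution upper-bounds $\wcost(W^t, M^{(t)}, C)$; summing these bounds yields $\sum_t \wcost(W^t, M^{(t)}, C) \le \cost(\psi)$, and therefore $\cost(\psi') \le (1+\tfrac{\epsilon}{3})\,\cost(\psi) \le (1+\epsilon)\,\cost(\psi)$ after a harmless rescaling of the internal error parameter.

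The main subtle point I expect is precisely this last chain of inequalities, in particular the passage from the $\Gamma$ per-column coreset bounds to a single global bound in terms of $\cost(\psi)$: one must argue that separately optimising each $\wcost(W^t, M^{(t)}, C)$ can be charged to the class-$t$ portion of a single assignment $\psi$, which works only because $\psi$ indeed induces a feasible $M^{(t)}$-respecting weighted assignment on each $W^t$ simultaneously. The remaining ingredients—extracting $M$ in one pass over $\psi$, invoking the bipartite flow subroutine per class, and concatenating the outputs—are routine.
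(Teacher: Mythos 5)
Your proposal is correct and follows essentially the same route as the paper: extract the per-class, per-center counts $M_{jt}$ (the paper's $g_{tj}$) from $\psi$, solve each class's cardinality-constrained assignment as a bipartite min-cost flow via Proposition~\ref{proposition_unbalanced_flow}, observe that fairness depends only on these counts, and chain $\cost(P^t,M^{(t)},C)\le(1+\epsilon/3)\wcost(W^t,M^{(t)},C)\le(1+\epsilon/3)\cdot(\text{class-}t\text{ cost of }\psi)$. The only place the paper is more explicit is the aspect-ratio/cost-rounding step, where it tracks the extra $(1+2\epsilon_0)$ factor that you fold into your ``harmless rescaling''.
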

\begin{proof}
    For the assignment $\psi$, consider the values $\{g_{tj}\}_{t \in [\Gamma],\,j \in [k]}$, using the notation in Theorem~\ref{theorem:assignment}, where $g_{tj}$ denotes how many points from the $t$-th class are assigned to the $j$-th center by $\psi$, and the values $\{A_t\}_{t \in [\Gamma]}$, where $A_t$ is the cost of $\psi$ restricted to the $t$-th class. Now for each class $P^t$ in the original point set $P$, solve the following assignment problem: assign points of $P^t$ to centers in $C$ such that there are exactly $g_{tj}$ points assigned to the $j$-th center, and the cost of the assignment is minimum among all such assignments. 
    We naturally view this problem as a minimum-cost flow problem, and we solve it by running the algorithm given by Proposition~\ref{proposition_unbalanced_flow}. Note that the resulting network has $O(k)$ vertices in the part corresponding to the centers $C$, and $O(nk)$ edges in total.
    Finally, the resulting fair assignment $\varphi$ from $P$ to $C$ is the union of assignments from $P^t$ to $C$ for all $t \in [\Gamma]$. Clearly, the obtained assignment is fair, since the fairness condition is completely determined by the numbers $\{g_{tj}\}$. This is true, as in the Constraints \ref{eq:ilp5} and \ref{eq:ilp6}, $\sum_{i \in [n] :p_i\in P_q} f_{ij}$ can be expressed by $\sum_{t \in [\Gamma] : q \in I_t} g_{tj}$ and $\sum_{i \in [n]} f_{ij}$ by $\sum_{t' = 1}^\Gamma g_{t'j}$. We now argue about the cost. By construction, the cost of the resulting assignment is $\sum_{t = 1}^\Gamma \cost(P^t, (g_{tj})_{j = 1}^k, C)$. Now,
    \[\cost(P^t, (g_{tj})_{j = 1}^k, C) \le (1 + \epsilon/3) \wcost(W^t, (g_{tj})_{j = 1}^k, C) \le (1 + \epsilon/3) A_t,\quad \forall\ t \in [\Gamma].\]
    Summing over all $t \in [\Gamma]$, we obtain
    \[\cost(\varphi) = \sum_{t = 1}^\Gamma \cost(P^t, (g_{tj})_{j = 1}^k, C) \le (1 + \epsilon/3) \sum_{t = 1}^\Gamma A_t = (1 + \epsilon/3) \cost(\psi).\]

    By Proposition~\ref{proposition_unbalanced_flow} it takes time $O(\Gamma \cdot k^3 /\epsilon^{2} \cdot n \log n)$ to run the minimum-cost flow algorithm $\Gamma$ times, where we assume that $\log D = O(\log n / \epsilon^{2})$. Finally, we justify the latter by a standard argument reducing the ratio of maximum distance in the instance to the minimum distance.
    In the network flow instance that we construct from $P^t$ and $C$, tweak slightly the costs on the edges. Set $\epsilon_0 = \epsilon/ 6$, if an edge costs more than $D_{max} := 2A_t$, replace its cost by $D_{max}$, and if an edge costs less than $D_{min} := \epsilon_0 A_t/(2n)$, replace its cost by $D_{min}$. For all the other edges, round up their cost to the closest value of the form $(1 + \epsilon_0)^q D_{min}$, where $q$ is an integer. In the modified network, the cost scaling part then induces a factor of $\log_{1 + \epsilon_0}(D_{max}/D_{min}) =  \log_{1 + \epsilon_0}(4n /\epsilon_0) = O(\log n/ \epsilon^{2})$, instead of $O(\log D)$.

    Now we argue about how this change influences the cost.
    Consider an optimal assignment $\varphi : P^t \to C$ in the modified network, obtained by the network flow algorithm. Its cost is at most $(1 + 2\epsilon_0) \cdot (1 + \epsilon /3) A_t$, since the cost of an optimal assignment $\varphi^*$ in the original network is at most $(1 + \epsilon/3) A^t$ by the argument above, and the cost of $\varphi^*$ in the new network is at most $(1 + 2\epsilon_0)$ times the cost in the original network. The latter holds since $\varphi^*$ never uses edges of cost more than $D_{max} = 2A^t$, for the edges between $D_{min}$ and $D_{max}$ the cost increase is at most a factor of $(1 + \epsilon_0)$, and for the edges with the cost less than $D_{min}$, their total contribution in the new network is at most $n \cdot \epsilon_0 A_t/n = \epsilon_0 A_t$.
    The algorithm outputs the optimal assignment $\varphi$ in the modified network, and its cost in the original network is at most its cost in the modified network, since edges with cost at least $D_{max}$ are never used, and the cost of all the other edges is less in the original network. Thus, we have shown that the cost of the assignment we constructed is at most $(1 + 2\epsilon_0)(1 + \epsilon/3) A_t = (1 + \epsilon / 3) (1 + \epsilon/ 3)A_t \le (1 + \epsilon) A_t$. From this point, the cost analysis above proceeds, and summing over all $t \in [\Gamma]$ we obtain $\cost(\varphi) \le (1 + \epsilon) \cost(\psi)$.

    Observe that in the Euclidean case we compute distances between the points from their respective $d$-dimensional vectors, thus taking an extra factor of $d$ in the running time.
\end{proof}

Note that the condition on $W$ in Lemma~\ref{lemma:restore_assignment} is satisfied by the coresets obtained from Theorem~\ref{th:coresetthmoverlap} and Theorem~\ref{th:coresetthmkmeans} with a suitable error parameter, since the coreset construction samples points in each equivalence class independently, and thus approximately preserves the cost with respect to any column matrix constraint on each of them.
Now we show that any instance of the assignment problem can be approximately solved in near-linear FPT time with the help of our coreset construction, Theorem~\ref{theorem:assignment}, and Lemma~\ref{lemma:restore_assignment}.

\begin{lemma}
    Given a set of points $P$ with the $\ell$ groups $P_1$, \ldots, $P_{\ell}$, a set of $k$ centers $C$, and a parameter $0 < \epsilon \le 1$, a fair assignment of the points of $P$ to the centers of $C$ with the cost at most $(1 + \epsilon) \faircost(P, C)$ can be computed in time $(k\Gamma)^{O(k\Gamma)} (\log n/\epsilon)^{O(1)} + O(\Gamma \cdot k^3 /\epsilon^{2} \cdot n \log n)$ w.h.p.
    This holds for both $(\alpha,\beta)$-Fair $k$-{median} and $(\alpha,\beta)$-Fair $k$-{means}s in general metric. In the Euclidean case, the running time is multiplied by $d$.
    \label{lemma:approx_assignment}
\end{lemma}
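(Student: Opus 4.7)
The plan is: construct a universal coreset $W$ of the input, solve the fair assignment problem exactly on $W$ via Theorem~\ref{theorem:assignment}, and then lift the resulting coreset assignment back to $P$ using Lemma~\ref{lemma:restore_assignment}, incurring only a $(1+\epsilon)$ multiplicative loss overall.

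Concretely, I would first invoke Theorem~\ref{th:coresetthmoverlap} (for $k$-median) or Theorem~\ref{th:coresetthmkmeans} (for $k$-means) with an error parameter $\epsilon'$ chosen as a small constant fraction of $\epsilon$. This produces, in $O(n(k+\ell))$ time (times $d$ in the Euclidean case) and with probability at least $1-1/n$, a weighted universal coreset $W$ of size $\Gamma\cdot\mathrm{poly}(k,\log n,1/\epsilon)$. Because the construction draws samples independently inside each point equivalence class $P^t$, inspecting Lemma~\ref{lem:genboundonecolorallgenMoverlap} yields the stronger per-class guarantee $\cost(P^t,M,C) \le (1+\epsilon/3)\wcost(W^t,M,C)$ for every column constraint matrix, which is exactly the hypothesis of Lemma~\ref{lemma:restore_assignment}.

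Next, I would run Theorem~\ref{theorem:assignment} on the instance $(W,C)$, which in time $(k\Gamma)^{O(k\Gamma)} |W|^{O(1)} = (k\Gamma)^{O(k\Gamma)} (\log n/\epsilon)^{O(1)}$ produces an optimal integral fair assignment $\psi$ of the weighted coreset to $C$. Let $M^{\star}$ be the $k\times\Gamma$ constraint matrix induced (via Proposition~\ref{prop:coloringtofair}) by an optimal fair assignment of $P$ to $C$, so that $\cost(P,M^{\star},C) = \faircost(P,C)$. Since $W$ is a universal coreset and $\psi$ is optimal on $W$ with respect to any constraint matrix, we obtain $\cost(\psi) \le \wcost(W,M^{\star},C) \le (1+\epsilon')\faircost(P,C)$.

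Finally, I would feed $\psi$ into Lemma~\ref{lemma:restore_assignment} with error parameter $\epsilon$; this produces a fair assignment $\varphi$ of $P$ to $C$ with $\cost(\varphi)\le (1+\epsilon)\cost(\psi)\le (1+\epsilon)(1+\epsilon')\faircost(P,C)$ in time $O(\Gamma k^3/\epsilon^{O(1)}\cdot n\log n)$ (times $d$ in the Euclidean case). After the initial constant rescaling of $\epsilon'$ relative to $\epsilon$, this is at most $(1+\epsilon)\faircost(P,C)$, and summing the three running times matches the bound in the statement; the overall algorithm succeeds with probability at least $1-1/n$, since the only randomised step is the coreset construction. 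The main (minor) subtlety lies in verifying that the per-equivalence-class hypothesis of Lemma~\ref{lemma:restore_assignment} actually holds, rather than only the aggregate universal coreset property; this is resolved by the fact that our sampling is performed independently inside each $P^t$, so the error analysis of Lemma~\ref{lem:genboundonecolorallgenMoverlap} localises to each class and can be rescaled to the required $(1+\epsilon/3)$ tolerance. Everything else is routine bookkeeping of error parameters and running times.
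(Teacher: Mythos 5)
Your proposal is correct and follows essentially the same route as the paper: build the universal coreset, solve the assignment exactly on it via Theorem~\ref{theorem:assignment}, bound its cost against $\wcost(W,M^{\star},C)$ using the universal-coreset property, and lift back with Lemma~\ref{lemma:restore_assignment}; your observation that the per-class hypothesis of that lemma holds because sampling is done independently within each equivalence class is exactly the paper's justification as well. The only nit is the final parameter bookkeeping: invoking the restore step with error parameter $\epsilon$ gives $(1+\epsilon)(1+\epsilon')\faircost(P,C)$, which exceeds $(1+\epsilon)\faircost(P,C)$ for any $\epsilon'>0$, so both error parameters must be scaled down (as the paper does with a single $\epsilon_0$ satisfying $(1+3\epsilon_0)(1+\epsilon_0)\le 1+\epsilon$).
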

\begin{proof}
    The algorithm proceeds as follows. First, we compute a coreset $W$ from the point set $P$ using Theorem~\ref{th:coresetthmoverlap} or Theorem~\ref{th:coresetthmkmeans}, depending on the problem, with the error parameter $\epsilon_0$ to be defined later. Then we compute an optimal fair assignment $\psi$ from $W$ to the centers $C$ by applying Theorem~\ref{theorem:assignment}. Finally, we invoke Lemma~\ref{lemma:restore_assignment} on the assignment $\psi$ to obtain a fair assignment $\varphi : P \to C$ with the cost at most $(1 + 3\epsilon_0) \cost(\psi)$. The algorithm returns $\varphi$, and in what follows we bound the cost of this assignment.
    Denote by $M \in \mathbb{Z}^{k \times l}$ the constraint matrix corresponding to the assignment $\psi$, i.e. $M_{ij}$ is equal to how many points from the $j$-th group $\psi$ sends to the $i$-th center, and by $M^*$ the constraint matrix corresponding to an optimal fair assignment from $P$ to $C$. By the choice of $M$ and $M^*$, and the fact that $W$ is a universal coreset of $P$, we obtain
    \[\cost(\psi) = \wcost(W, M, C) \le \wcost(W, M^*, C) \le (1 + \epsilon_0) \cost(P, M^*, C) = (1 + \epsilon_0) \faircost(P, C).\]
    Thus, $\cost(\varphi)$ is at most $(1 + 3\epsilon_0)(1 + \epsilon_0)\faircost(P, C)$, and setting $\epsilon_0$ such that $(1 + 3\epsilon_0) (1 + \epsilon_0) \le (1 + \epsilon)$ finishes the proof.

    As for the running time, the $O((k + l) \cdot n)$ is for the coreset construction, $O((k\Gamma)^{O(k\Gamma)} (k\log n/\epsilon)^{O(1)})$ is for solving the assignment problem on the coreset, and $O(\Gamma \cdot k^3 /\epsilon^{O(1)} \cdot n \log n)$ is for restoring $\varphi$ by Lemma~\ref{lemma:restore_assignment}. Not that the coreset construction time is dominated by the last term.

    Finally, in the Euclidean case we compute distances between the points from their respective $d$-dimensional vectors, thus taking an extra factor of $d$ in the running time. Note that we still use the general metric case in Theorems~\ref{th:coresetthmoverlap} and \ref{th:coresetthmkmeans} for coreset construction, since we only need to preserve the objective with respect to the given set of centers $C$.

\end{proof}

\section{$(1 + \epsilon)$-Approximation in $\mathbb{R}^d$}
\label{sec:eptas}

In this section, we present a near-linear time $(1 + \epsilon)$-approximation algorithm for Euclidean $(\alpha,\beta)$-Fair $k$-{median} and $(\alpha,\beta)$-Fair $k$-{means}s. 
For that purpose, we combine our coreset construction (Theorem~\ref{th:coresetthmoverlap} and Theorem~\ref{th:coresetthmkmeans}), our assignment algorithm (Theorem~\ref{theorem:assignment}), and the linear-time constrained clustering algorithm of Bhattacharya et al. \cite{Bhattacharya2018}. 

We denote the cost of clustering $C_1$, \ldots, $C_k$ with the centers $C = (c_1, \ldots, c_k)$ by $\cost_C(C_1, \ldots, C_k)$. By $\cost(C_1, \ldots, C_k)$ we denote $\min_C \cost_C(C_1, \ldots, C_k)$, where the minimum is over all possible $k$ centers $C$. It is well-known that in the case of $k$-means the optimal center for a cluster $C_i$ is its mean $\mu(C_i) := 1/|C_i| \sum_{x \in C_i} x$, thus $\cost(C_1, \ldots, C_k) = \cost_{(\mu(C_1), \ldots,\,\mu(C_k))} (C_1, \ldots, C_k)$.

Next, we formally restate the result of Bhattacharya et al.

\begin{proposition}[\cite{Bhattacharya2018}, Theorem 1]
    \label{proposition:constrained}
    Given a set of $n$ points $P \subset \mathbb{R}^d$, parameters $k$ and $\epsilon$, there is a randomized algorithm that outputs a list $\mathcal{L}$ of $2^{\tilde{O}(k/\epsilon)}$ sets of centers of size $k$ such that for any clustering $\{C_1^*, \ldots, C_k^*\}$ of $P$, the following event happens with probability at least $1/2$ : there is a set $C \in \mathcal{L}$ such that
    \[\cost_C(C_1^*, \ldots, C_k^*) \le (1 + \epsilon) \cost(C_1^*, \ldots, C_k^*),\]
    where $\cost$ is with respect to the $k$-means clustering objective.
    The running time of the algorithm is $nd \cdot 2^{\tilde{O}(k/\epsilon)}$, where $\tilde{O}$ notation hides a $O(\log\frac{k}{\epsilon})$ factor.
    The same statement holds for $k$-median, except the size of the list $\mathcal{L}$ becomes $2^{\tilde{O}(k/\epsilon^{O(1)})}$, and the running time becomes $nd \cdot 2^{\tilde{O}(k/\epsilon^{O(1)})}$.
\end{proposition}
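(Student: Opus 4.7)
The proof will follow the now-standard D$^2$-sampling paradigm pioneered by Arthur and Vassilvitskii and refined in the series of works culminating in the constrained clustering framework of Jaiswal, Kumar, and Sen. The high-level plan is to produce the list $\mathcal{L}$ by drawing a small sample pool via iterative D$^2$-sampling, then enumerating over all ways to carve it up into $k$ disjoint ``cluster representatives'' and taking (in the $k$-means case) the mean of each representative set as a candidate center.

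First I would fix an arbitrary target clustering $\{C_1^*, \ldots, C_k^*\}$ and invoke the classical sampling lemma of Inaba, Katoh, and Imai: a uniform sample of $\Theta(1/\epsilon)$ points from any single $C_i^*$ has empirical mean whose use as a center inflates the cost of $C_i^*$ by at most a factor of $(1+\epsilon)$ in expectation. Thus, if one could guarantee a sample pool $S$ containing $\Theta(1/\epsilon)$ essentially uniform samples from each $C_i^*$, then enumerating all $k$-tuples of disjoint $\Theta(1/\epsilon)$-subsets of $S$ and taking means would produce some $C \in \mathcal{L}$ satisfying the claim; this yields a list of size $|S|^{O(k/\epsilon)} = 2^{\tilde{O}(k/\epsilon)}$.

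Next I would construct $S$ inductively through $k$ rounds of D$^2$-sampling. In each round, given a partial center set $C'$, draw $\tilde{O}(1/\epsilon)$ points with probability proportional to $d(x, C')^2$, and branch over all possible ways of choosing which of these samples ``represent'' a new cluster. The invariant to maintain is that after $i$ rounds, at least one branch of the enumeration has $C'$ already $(1+\epsilon)$-approximating the optimal centers of some $i$ target clusters. The key combinatorial observation is that once a cluster $C_j^*$ is uncovered, the total cost it contributes against $C'$ is a constant fraction of the remaining cost, so a D$^2$-sample lands in $C_j^*$ with constant probability, and conditionally on landing there it is close to uniform over $C_j^*$, which lets Inaba--Katoh--Imai apply in the next branching step.

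The hard part will be the ``approximate uniformity'' step just mentioned: the D$^2$-weighting is biased toward points far from $C'$, so one must carefully argue that among the points of an uncovered cluster $C_j^*$, the D$^2$-distribution is close enough to uniform that means of $\Theta(1/\epsilon)$-size subsamples still approximate $\mu(C_j^*)$. This reduces to showing that within an uncovered cluster the squared distances to $C'$ vary by only a bounded factor after conditioning, which is where the $\tilde{O}$ factors (the $\log(k/\epsilon)$ in the exponent) and the repeated sampling per round come from. For the $k$-median case the same framework works, but Inaba--Katoh--Imai is replaced by the $1$-median sampling lemma of Kumar, Sabharwal, and Sen, which requires $\mathrm{poly}(1/\epsilon)$ samples per cluster to estimate the $1$-median and is the source of the worse $\epsilon$-dependence. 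The $nd \cdot 2^{\tilde{O}(k/\epsilon^{O(1)})}$ running time then follows since each branch performs $O(nd)$ work to compute D$^2$-distances, sample, and evaluate the candidate centers.
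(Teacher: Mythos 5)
The paper does not prove this statement at all: it is imported verbatim as Theorem~1 of Bhattacharya et al.~\cite{Bhattacharya2018}, so the only thing to compare against is the proof in that cited work. Your sketch correctly identifies the framework used there — iterative $D^2$-sampling, branching over which sampled points represent a newly ``uncovered'' cluster, the Inaba--Katoh--Imai lemma for $k$-means and a Kumar--Sabharwal--Sen-style $1$-median sampling lemma for $k$-median, and the observation that an uncovered cluster carries a constant fraction of the residual $D^2$ mass. The list size and running-time accounting you give are also consistent with the cited result.

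There is, however, one step that would fail as you have written it. You propose to handle the bias of $D^2$-sampling by arguing that, conditioned on landing in an uncovered cluster $C_j^*$, the $D^2$-distribution is ``close enough to uniform'' over $C_j^*$. This is false in general: if most of $C_j^*$ sits very close to an already-chosen center $C'$ while a small fraction lies far away, the conditional $D^2$-distribution is concentrated almost entirely on the far points, and means of $\Theta(1/\epsilon)$-size subsamples will be badly biased; no amount of repetition or $\log(k/\epsilon)$ overhead fixes this. The actual argument in \cite{Bhattacharya2018} (inherited from Jaiswal, Kumar, and Sen) circumvents this by \emph{augmenting} the sample pool: in each round one enumerates subsets not of the $D^2$-sample alone but of the $D^2$-sample together with sufficiently many copies of the already-selected centers. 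The near points of $C_j^*$, which $D^2$-sampling essentially never picks, are then ``represented'' by copies of the nearby center, and one shows that a suitable mixture of far samples and center copies has mean within the required error of $\mu(C_j^*)$. Your sketch needs this augmentation (or an equivalent device) to close the gap you yourself flag as the hard part; with it, the rest of your outline matches the cited proof.
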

Note that Proposition~\ref{proposition:constrained} together with our assignment algorithm from Theorem~\ref{theorem:assignment} already implies $(1 + \epsilon)$-approximation algorithm, as stated in the next claim.
\begin{claim}
    There exists a $(1 + \epsilon)$-approximation algorithm solving \probFair in $\mathbb{R}^d$ in time $2^{\tilde{O}(k/\epsilon^{O(1)})} (k\Gamma)^{O(k\Gamma)} n^{O(1)} d$ with high probability. The algorithm also extends to the weighted version of the problem.
    \label{claim:first_approx}
\end{claim}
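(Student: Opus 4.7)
The plan is to combine Proposition~\ref{proposition:constrained} of Bhattacharya et al.\ with the exact fair assignment algorithm of Theorem~\ref{theorem:assignment}, essentially by enumeration. First I would run the algorithm of Proposition~\ref{proposition:constrained} on the point set $P$ (ignoring the group structure) with the error parameter $\epsilon$ to obtain a list $\mathcal{L}$ of $2^{\tilde{O}(k/\epsilon^{O(1)})}$ candidate sets of $k$ centers in time $nd \cdot 2^{\tilde{O}(k/\epsilon^{O(1)})}$. Then, for every $C \in \mathcal{L}$, I would call Theorem~\ref{theorem:assignment} to compute a minimum-cost fair assignment of $P$ to $C$ in time $(k\Gamma)^{O(k\Gamma)} n^{O(1)}$, and finally output the best (cheapest) assignment found over all candidates.

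For correctness, let $\varphi^{*}: P \to C^{*}$ be an optimal fair assignment with $C^{*} = \{c_1^{*}, \ldots, c_k^{*}\}$, and let $C_1^{*}, \ldots, C_k^{*}$ be the clusters that it induces. Proposition~\ref{proposition:constrained} guarantees that, with probability at least $1/2$, there exists $C = \{c_1, \ldots, c_k\} \in \mathcal{L}$ such that
\[
\cost_{C}(C_1^{*}, \ldots, C_k^{*}) \le (1 + \epsilon)\, \cost(C_1^{*}, \ldots, C_k^{*}) \le (1 + \epsilon)\, \cost_{C^{*}}(C_1^{*}, \ldots, C_k^{*}) = (1 + \epsilon)\,\faircost(P).
\]
The assignment $x \in C_i^{*} \mapsto c_i$ is \emph{fair} with respect to $C$, because the partition $C_1^{*}, \ldots, C_k^{*}$ is exactly the one induced by $\varphi^{*}$ and hence already satisfies the fairness constraints, which depend only on the composition of the clusters. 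Therefore, the optimal fair assignment to this particular $C$ has cost at most $(1 + \epsilon)\,\faircost(P)$, and Theorem~\ref{theorem:assignment} certainly finds an assignment at least as good for $C$. Taking the minimum over $C \in \mathcal{L}$ can only help, so the output is a $(1 + \epsilon)$-approximation.

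The overall running time is $nd \cdot 2^{\tilde{O}(k/\epsilon^{O(1)})}$ for generating $\mathcal{L}$, plus $|\mathcal{L}| \cdot (k\Gamma)^{O(k\Gamma)} n^{O(1)} = 2^{\tilde{O}(k/\epsilon^{O(1)})} (k\Gamma)^{O(k\Gamma)} n^{O(1)}$ for the enumeration phase, matching the claimed bound (with an extra factor of $d$ to account for distance computations). The success probability $1/2$ from Proposition~\ref{proposition:constrained} is boosted to high probability by standard independent repetition $O(\log n)$ times and returning the best overall assignment. The extension to the weighted version is immediate: Proposition~\ref{proposition:constrained} is known to handle weighted inputs (replacing cardinalities by total weights), and Theorem~\ref{theorem:assignment} was already stated for weighted points, so the same pipeline applies verbatim.

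There is essentially no technical obstacle in this reduction; the only point that requires a moment of care is observing that the fairness of a clustering is fully determined by the \emph{partition} of $P$ into clusters, independently of the chosen center set. This is precisely what lets us invoke Proposition~\ref{proposition:constrained}, which is proved for the unconstrained clustering cost, and still conclude the bound against the optimal \emph{fair} cost.
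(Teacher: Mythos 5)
Your proposal is correct and follows essentially the same route as the paper: enumerate the candidate center sets from Proposition~\ref{proposition:constrained}, solve the fair assignment problem exactly via Theorem~\ref{theorem:assignment} for each, repeat $\Theta(\log n)$ times to boost the success probability, and handle weights by noting that the $D^2$-sampling step and the assignment MILP both extend trivially. The correctness chain you spell out (the optimal fair partition is a valid clustering for Proposition~\ref{proposition:constrained}, and fairness depends only on the partition, not the centers) is exactly the argument the paper leaves implicit.
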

\begin{proof}
    The proof is by solving the assignment problem with the help of Theorem~\ref{theorem:assignment} on each set of centers in the list returned by Proposition~\ref{proposition:constrained}. We run Proposition~\ref{proposition:constrained} $\Theta(\log n)$ times to succeed with high probability, and thus run Theorem~\ref{theorem:assignment} on $2^{\tilde{O}(k / \epsilon^{O(1)})} \log n$ candidate sets of centers.

    For the weighted version, observe that the algorithm of Proposition~\ref{proposition:constrained} trivially extends to the case where the input points have weight, since the only step where all the input points are used is to perform $D^2$-sampling, and there the sampling probabilities just need to be multiplied by the respective weights. Theorem~\ref{theorem:assignment} holds in the weighted case by definition.
\end{proof}
However, the running time of Claim~\ref{claim:first_approx} has a high-degree polynomial dependency on $n$. To achieve a near-linear time algorithm, we use the help of our coreset construction. Observe that Proposition~\ref{proposition:constrained} together with Lemma~\ref{lemma:approx_assignment} already imply an algorithm of this form. Nevertheless, we proceed with a variation of this scheme that leads to a slightly better running time, in particular, avoiding a $n \log^2 n$ factor.

The general idea of our algorithm is as follows. First, we obtain a list of candidate sets of centers by Proposition~\ref{proposition:constrained}.
Then we compute a universal coreset from the input points such that the objective is preserved with respect to all the computed sets of centers. 
For each set of $k$ centers in the list we run our assignment algorithm on the coreset to determine the set of centers with the best cost. The algorithm of Bhattacharya et al. and the coreset computation take linear time, and the assignment problem is solved on the coreset, thus taking time polylogarithmic in $n$. Finally, we run Lemma~\ref{lemma:restore_assignment} on the best set of centers to construct a fair assignment on the original points.
We state and prove the theorem formally next.

\begin{theorem}
    There is a randomized algorithm that given an instance $P$ of \probFair and a parameter $0 < \epsilon \le 1$ outputs a set of $k$ centers $C$ and a fair assignment $\varphi: P \to C$ satisfying $cost(\varphi) \le (1 + \epsilon) \faircost(P)$
    with high probability. The running time of the algorithm is 
    \[2^{\tilde{O}(k/\epsilon^{O(1)})} (k\Gamma)^{O(k\Gamma)} nd \log n.\]
    \label{theorem:linear_eptas}
\end{theorem}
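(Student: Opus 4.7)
My plan is to combine the candidate-center list of Bhattacharya et al.\ (Proposition~\ref{proposition:constrained}) with our universal coreset (Theorem~\ref{th:coresetthmoverlap} in the $k$-median case and Theorem~\ref{th:coresetthmkmeans} in the $k$-means case), the exact assignment algorithm on the coreset (Theorem~\ref{theorem:assignment}), and the assignment-restoration procedure of Lemma~\ref{lemma:restore_assignment}. Concretely, fixing $\epsilon_0 := \epsilon/6$, the algorithm would: (i) build a universal coreset $W$ of $P$ with error parameter chosen small enough that the per-equivalence-class precondition of Lemma~\ref{lemma:restore_assignment} is met at tolerance $\epsilon_0$ (this is automatic because our construction samples each class independently); (ii) call Proposition~\ref{proposition:constrained} with error $\epsilon_0$, repeating $\Theta(\log n)$ times to boost the success probability to high probability, and thus obtain a list $\mathcal{L}$ of $2^{\tilde O(k/\epsilon^{O(1)})}\log n$ candidate sets of $k$ centers; (iii) for each $C\in\mathcal{L}$, compute the optimal fair assignment from the weighted coreset $W$ to $C$ using Theorem~\ref{theorem:assignment}; (iv) select the center set $\hat C\in\mathcal{L}$ minimizing the coreset cost; (v) invoke Lemma~\ref{lemma:restore_assignment} on $\hat C$ and the assignment found in~(iv) to produce a fair assignment $\varphi:P\to \hat C$ on the original points.

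For correctness, let $\varphi^*$ be an optimal fair assignment of $P$ and let $(C_1^*,\ldots,C_k^*)$ be the partition it induces. Applying Proposition~\ref{proposition:constrained} to this partition gives, with high probability, some $C\in\mathcal{L}$ with $\cost_C(C_1^*,\ldots,C_k^*)\le (1+\epsilon_0)\cost(C_1^*,\ldots,C_k^*)\le (1+\epsilon_0)\faircost(P)$. Since $(C_1^*,\ldots,C_k^*)$ is itself a fair assignment onto $C$, the minimum-cost fair assignment satisfies $\faircost(P,C)\le(1+\epsilon_0)\faircost(P)$. By the universal coreset property, for every $C'\in\mathcal{L}$ the minimum over constraint matrices of $\wcost(W,M,C')$ lies within $(1\pm\epsilon_0)$ of $\faircost(P,C')$; thus the coreset cost achieved in step (iv) is at most $(1+\epsilon_0)^2\faircost(P)$. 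Lemma~\ref{lemma:restore_assignment} then inflates this by another factor $(1+\epsilon_0)$ when passing back to $P$, so $\cost(\varphi)\le (1+\epsilon_0)^3\faircost(P)\le (1+\epsilon)\faircost(P)$ for all $0<\epsilon\le 1$.

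For the running time, the coreset construction costs $O(nd(k+\ell))$ by Theorems~\ref{th:coresetthmoverlapR^d} and~\ref{th:coresetthmkmeans}; the $\Theta(\log n)$ calls to Proposition~\ref{proposition:constrained} together take $nd\cdot 2^{\tilde O(k/\epsilon^{O(1)})}\log n$; each of the $|\mathcal{L}|$ applications of Theorem~\ref{theorem:assignment} runs on the coreset, whose size is polylogarithmic in $n$, so it costs $(k\Gamma)^{O(k\Gamma)}\cdot\poly(d,\log n,1/\epsilon)$; and the single final application of Lemma~\ref{lemma:restore_assignment} at the end costs $O(\Gamma k^3/\epsilon^{O(1)}\cdot nd\log n)$. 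Summing these yields the claimed bound $2^{\tilde O(k/\epsilon^{O(1)})}(k\Gamma)^{O(k\Gamma)}nd\log n$. The main subtlety to watch out for is that Proposition~\ref{proposition:constrained} only promises a center set competitive with a specified partition, not with a fair clustering objective directly, which is why the argument has to first go through the partition $(C_1^*,\ldots,C_k^*)$ of an optimal fair assignment and then use the minimality of $\faircost(P,C)$ over all fair assignments onto $C$; aside from that, the universality of $W$ is exactly what is needed so that the coreset works simultaneously for every $C\in\mathcal{L}$ and every induced constraint matrix, and no extra union bound beyond the ones already internal to the coreset construction is required.
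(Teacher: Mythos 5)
Your proposal is correct and follows essentially the same route as the paper: generate the candidate center list of Bhattacharya et al., evaluate each candidate by solving the fair assignment problem on the universal coreset via the MILP-based algorithm, pick the cheapest, and lift the assignment back to $P$ via the flow-based restoration lemma, with the same three-way split of the error parameter. The only (immaterial) difference is that you restore the final assignment directly from the coreset solution with Lemma~\ref{lemma:restore_assignment} rather than re-invoking the approximate assignment routine of Lemma~\ref{lemma:approx_assignment}, which gives a marginally tighter $(1+\epsilon_0)^3$ bookkeeping in place of the paper's $(1+\epsilon_0)^3/(1-\epsilon_0)$.
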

\begin{proof}
    First, we run the algorithm given by Proposition~\ref{proposition:constrained} to obtain a list $\mathcal{L}$ of $2^{\tilde{O}(k/\epsilon_0^{O(1)})}$ candidate sets of centers, using the error parameter $\epsilon_0 < \epsilon$ to be defined later. To increase the probability of success, we repeat this $\Theta(\log n)$ times concatenating all the obtained lists, to form a list $\mathcal{L}$ of $2^{\tilde{O}(k/\epsilon_0^{O(1)})} \log n$ candidate sets of centers.

    For $(\alpha,\beta)$-Fair $k$-{median}, we then compute a universal coreset $W$ of size 
    \[O(\frac{\Gamma}{\epsilon_0^3} k^2 (\log (n + k2^{\tilde{O}(k /\epsilon_0^{O(1)})}\log n))^2) = \Gamma (k/\epsilon_0 \log n)^{O(1)},\]
    using Theorem~\ref{th:coresetthmoverlap}, again with the error parameter $\epsilon_0$. We use the general metric case of the theorem with respect to the points $P$ and the possible centers contained in the list $\mathcal{L}$. For $(\alpha,\beta)$-Fair $k$-{means}s, we employ instead Theorem~\ref{th:coresetthmkmeans} to obtain a universal coreset $W$, its size is also $\Gamma (k/\epsilon_0 \log n)^{O(1)}$. For the rest of the proof, there is no difference between the two problems.
    
For each set of $k$ centers in $\mathcal{L}$ we run the assignment algorithm given by Theorem~\ref{theorem:assignment}, and select the set of centers $C$ with the best cost. Now we bound $\faircost(P, C)$. Denote by $M$ the color constraint matrix that corresponds to an optimal fair assignment from $W$ to $C$, it holds that
\[\faircost(P, C) \le \cost(P, M, C) \le \frac{1}{1 - \epsilon_0} \wcost(W, M, C),\]
where the last inequality is by the definition of a universal coreset. 
By Proposition~\ref{proposition:constrained} with probability $1 - (1/2)^{\Theta(\log n)} = 1 - (1/n)^{\Theta(1)}$ there is a set $\tilde{C}$ in $\mathcal{L}$  such that $\faircost(P, \tilde{C} \le (1 + \epsilon_0) \faircost(P)$. Denote by $\tilde{M}$ the color constraint matrix that corresponds to an optimal fair assignment from $W$ to $\tilde{C}$, since $C$ achieves the lowest cost of fair clustering for $W$ among $\mathcal{L}$, $\wcost(W, M, C) \le \wcost(W, \tilde{M}, \tilde{C})$. Denote by $\tilde{M}^*$ the constraint matrix achieving $\faircost(P, \tilde{C}) = \cost(P, \tilde{M}^*, \tilde{C})$, by the choice of $\tilde{M}$ we have that 
\[\wcost(W, \tilde{M}, \tilde{C}) \le \wcost(W, \tilde{M}^*, \tilde{C}) \le (1 + \epsilon_0) \cost(P, \tilde{M}^*, \tilde{C}),\]
where the last inequality is because $W$ is a universal coreset of $P$. And since $\cost(P, \tilde{M}^*, \tilde{C}) = \faircost(P, \tilde{C}) \le (1 + \epsilon_0) \faircost(P)$ by the choice of $\tilde{M}^*$ and $\tilde{C}$, we have the following bound:
\[\faircost(P, C) \le \frac{1 + \epsilon_0}{1 - \epsilon_0} \cost(P, \tilde{M}^*, \tilde{C}) \le \frac{(1 + \epsilon_0)^2}{1 - \epsilon_0} \faircost(P).\]
Finally, we compute a fair assignment from $P$ to $C$ running the algorithm from Lemma~\ref{lemma:approx_assignment}, using the error parameter $\epsilon_0$. The computed assignment has cost at most $(1 + \epsilon_0) \faircost(P, C)$, which by the above is at most
$\frac{(1 + \epsilon_0)^3}{1 - \epsilon_0} \faircost(P)$.
Setting $\epsilon_0$ such that $1 + \epsilon \ge \frac{(1 + \epsilon_0)^3}{1 - \epsilon_0}$ concludes the proof.

The running time of the algorithm is the sum of the $2^{\tilde{O}(k/\epsilon^{O(1)})} nd$ running time of Proposition~\ref{proposition:constrained} multiplied by $O(\log n)$, the $O((k + l) nd)$ running time given by Theorem~\ref{th:coresetthmoverlap}/Theorem~\ref{th:coresetthmkmeans}, $2^{\tilde{O}(k/\epsilon^{O(1)})}$ times the $(k\Gamma)^{O(k\Gamma)}(\log n)^{O(1)} d$ running time of the assignment algorithm given by Theorem~\ref{theorem:assignment} on the coreset, and finally the running time of Lemma~\ref{lemma:restore_assignment}. 
All of these terms are dominated by $2^{\tilde{O}(k/\epsilon^{O(1)})} (k\Gamma)^{O(k\Gamma)} nd \log n$.
\end{proof}

Note that the algorithm in Theorem~\ref{theorem:linear_eptas} is in a sense a non-typical use of a coreset: we first do the heavy part of running Proposition~\ref{proposition:constrained} on the original points, and only then use the coreset to speed up the assignment problem. We can also devise a true reductive algorithm, where we first construct a universal coreset from the input data, and then do everything on the coreset, both Proposition~\ref{proposition:constrained} and selection of the best centers. We show this algorithm in the next subsection.

\subsection{Reduction to a Small-sized Instance}

In fact, we show a general reduction result: that the original instance of \probFair could be replaced by a small-sized one, such that any approximate solution could be lifted from the reduced instance with an extra error factor of $(1 + \epsilon)$. Moreover, it can be done in polynomial time that is near-linear in $n$ and linear in $d$. Essentially, this result is a combination of the universal coreset property and Lemma~\ref{lemma:restore_assignment}, and it shows that our universal coreset contruction can indeed be used for data compression wrt. \probFair. In the next theorem, we state and prove the result formally. 

\begin{theorem}
    There is a randomized algorithm that given an instance $P$ of \probFair in $\mathbb{R}^d$ outputs a reduced weighted instance $W$ of size $d(k / \epsilon \log n)^{O(1)}$ in the same space. W.h.p. it holds that for any $\gamma \ge 1$, and for any set of $k$ centers $C$ in $\mathbb{R}^d$ and a fair assignment $\psi$ from $W$ to $C$ such that $\cost(\psi) \le \gamma \faircost(W)$, there exists a fair assignment $\varphi : P \to C$ with the cost at most $(1 + \epsilon) \gamma \faircost(P)$ that can be restored from $\psi$ and $C$. Both constructing $W$ from $P$ and restoring $\varphi$ from $\psi$ and $C$ take time $O(\Gamma k^3 / \epsilon^{2} nd \log n)$. 
    \label{theorem:reduction}
\end{theorem}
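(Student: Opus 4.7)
The plan is to take $W$ to be a universal coreset of $P$ produced by Theorem~\ref{th:coresetthmoverlapR^d} (in the $k$-median case) or Theorem~\ref{th:coresetthmkmeans} (in the $k$-means case), with error parameter $\epsilon_0 = \Theta(\epsilon)$ to be fixed below. The claimed size bound $d(k/\epsilon\log n)^{O(1)}$ and the construction time bound $O((k+\ell)nd)$ are immediate from those theorems, so the nontrivial content is to prove the restoration claim.

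I would first use the universal coreset property to relate the two fair-clustering costs. Fix an optimal set of centers $C^*$ and an optimal fair-constraint matrix $M^*$ witnessing $\faircost(P)$; by Proposition~\ref{prop:coloringtofair}, $M^*$ represents a fair partition, so any realization of $\wcost(W,M^*,C^*)$ is in particular a fair assignment on $W$. Hence
\[\faircost(W)\le \wcost(W,M^*,C^*)\le (1+\epsilon_0)\cost(P,M^*,C^*)=(1+\epsilon_0)\faircost(P).\]
Combining this with the hypothesis $\cost(\psi)\le\gamma\faircost(W)$ yields $\cost(\psi)\le(1+\epsilon_0)\gamma\faircost(P)$.

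Next, I would invoke Lemma~\ref{lemma:restore_assignment} on $\psi$, $C$, and $W$ (with error parameter $\epsilon_0$), producing in time $O(\Gamma k^3/\epsilon_0^{2}\cdot nd\log n)$ a fair assignment $\varphi:P\to C$ with $\cost(\varphi)\le(1+\epsilon_0)\cost(\psi)$. Chaining the two inequalities gives $\cost(\varphi)\le(1+\epsilon_0)^2\gamma\faircost(P)$, and choosing $\epsilon_0$ so that $(1+\epsilon_0)^2\le 1+\epsilon$ (e.g., $\epsilon_0\le\epsilon/3$) delivers the target bound. Observe that restoring $\varphi$ from $\psi$ requires only $\psi$ and $C$ as input (and the coreset $W$, which was recorded at construction), as promised by the theorem statement.

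The main point to verify is the precondition of Lemma~\ref{lemma:restore_assignment}, which asks that $\cost(P^t,M,C)\le(1+\epsilon_0/3)\wcost(W^t,M,C)$ hold for every equivalence class $P^t$ and every column matrix $M$. This is a per-class one-sided version of our coreset guarantee, and I expect it to be the only delicate step. Fortunately, our coreset is constructed by running the disjoint-class algorithm independently on each $P^t$, so the single-class analysis (Lemma~\ref{lem:boundonecolorallM}, suitably rescaled) applied to each class yields exactly this bound with high probability, and a union bound over the at most $\Gamma<n$ classes together with an appropriate rescaling of $\epsilon_0$ (affecting only constants) completes the verification. The overall running time is dominated by Lemma~\ref{lemma:restore_assignment}, matching the claimed $O(\Gamma k^3/\epsilon^{2}\cdot nd\log n)$.
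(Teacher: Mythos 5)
Your proposal is correct and follows essentially the same route as the paper: build the Euclidean universal coreset with a rescaled error parameter, bound $\faircost(W)\le(1+\epsilon_0)\faircost(P)$ via the optimal $(C^*,M^*)$, and lift $\psi$ to $\varphi$ with Lemma~\ref{lemma:restore_assignment}. The only differences are bookkeeping (the paper tracks the factor as $(1+3\epsilon_0)(1+\epsilon_0)$ rather than $(1+\epsilon_0)^2$) and that you spell out the per-class column-matrix precondition of Lemma~\ref{lemma:restore_assignment}, which the paper disposes of in a remark; both are immaterial.
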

\begin{proof}
    The algorithm to construct $W$ from $P$ is simply the algorithm from Theorem~\ref{th:coresetthmoverlap} constructing a universal coreset for $(\alpha,\beta)$-Fair $k$-{median} in the Euclidean case (Theorem~\ref{th:coresetthmkmeans} for $(\alpha,\beta)$-Fair $k$-{median}). We invoke the coreset construction algorithm with the error parameter $\epsilon_0 < \epsilon$ to be defined later, the $O((k + l)nd)$ running time is dominated by $O(\Gamma k^3 / \epsilon^{2} nd \log n)$. The reduced weighted instance $W$ is exactly the obtained coreset. Its size is $d(k / \epsilon \log n)^{O(1)}$, and w.h.p. for any set $C$ of $k$ centers in $\mathbb{R}^d$ and any constraint matrix $M \in \mathbb{Z}^{k \times \Gamma}$ it holds that 
    \[(1-\epsilon)\cdot  \text{cost}(P,M,C)\le \text{wcost}(W,M,C)\le (1+\epsilon)\cdot  \text{cost}(P,M,C).\]

    Now consider a particular $\gamma > 1$, a set of $k$ centers $C$ and a fair assignment $\psi : P \times C \to \mathbb{Z}_{\ge 0}$ of the coreset $W$ such that $\cost(\psi) \le \gamma \faircost(W)$.
    Observe that $\faircost(W) \le (1 + \epsilon_0) \faircost(P)$ since for the set of centers $C^*$ and the constraint matrix $M^*$ achieving $\faircost(P) = \cost(P, M, C)$, it holds that
    \[\faircost(W) \le \wcost(W, M^*, C^*) \le (1 + \epsilon_0) \cost(P, M^*, C^*) = (1 + \epsilon_0) \faircost(P).\]
    Thus, $\cost(\psi) \le (1 + \epsilon_0) \gamma \faircost(P)$. To construct the fair assignment $\varphi$, we invoke Lemma~\ref{lemma:restore_assignment} on the assignment $\psi$. By Lemma~\ref{lemma:restore_assignment}, the cost of $\varphi$ is at most 
    \[(1 + 3\epsilon_0) \cost(\psi) \le (1 + 3\epsilon_0) (1 + \epsilon_0) \gamma \faircost(P).\]
    Finally, we set $\epsilon_0$ such that $(1 + 3\epsilon_0) (1 + \epsilon_0) \le (1 + \epsilon)$ to obtain the desired bound.

    The running time of Lemma~\ref{lemma:restore_assignment} is exactly $O(\Gamma k^3 / \epsilon^{2} nd \log n)$, and this dominates the $O((k + l) nd)$ running time required by the coreset construction.
\end{proof}

Theorem~\ref{theorem:reduction} allows for any exact or approximate algorithm for \probFair to be run on the small-sized coreset instead of the original points. By plugging in the $(1 + \epsilon)$-approximation algorithm given by Claim~\ref{claim:first_approx}, we obtain the following theorem. 

\begin{theorem}
There is an algorithm solving \probFair in time
\[O(\Gamma k^3/\epsilon^{2} n d\log n) + 2^{\tilde{O}(k/\epsilon^{O(1)})} (k\Gamma)^{O(k\Gamma)} (d\log n)^{O(1)}\]
with high probability, for any given $0 < \epsilon \le 1$.
    \label{theorem:eptas_reducing}
\end{theorem}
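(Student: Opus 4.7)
The plan is to combine Theorem~\ref{theorem:reduction} (the reduction to a small-sized instance) with Claim~\ref{claim:first_approx} (the $(1+\epsilon)$-approximation algorithm in $\mathbb{R}^d$ that works in the weighted setting). The idea is simple: first compress the data to a coreset, then run the expensive approximation algorithm only on the coreset, and finally lift the resulting assignment back to the original points via the restoration procedure.

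More concretely, I would pick an error parameter $\epsilon_0 < \epsilon$ (of the same order as $\epsilon$, to be fixed at the end) and proceed in three stages. First, invoke Theorem~\ref{theorem:reduction} on the input instance $P$ with error $\epsilon_0$ to obtain in time $O(\Gamma k^3/\epsilon_0^{2}\, n d\log n)$ a weighted instance $W$ of size $d(k/\epsilon_0 \log n)^{O(1)}$ with the property that any $\gamma$-approximate fair assignment on $W$ can be lifted to a $(1+\epsilon_0)\gamma$-approximate fair assignment on $P$. Second, run the weighted version of Claim~\ref{claim:first_approx} on $W$ with error $\epsilon_0$. Since the number of points in the instance fed into Claim~\ref{claim:first_approx} is now $|W| = d(k/\epsilon_0\log n)^{O(1)}$, the running time is
\[
2^{\tilde{O}(k/\epsilon_0^{O(1)})} (k\Gamma)^{O(k\Gamma)} |W|^{O(1)} d \;=\; 2^{\tilde{O}(k/\epsilon^{O(1)})} (k\Gamma)^{O(k\Gamma)} (d\log n)^{O(1)}.
\]
This produces a set of $k$ centers $C$ and a fair assignment $\psi$ on $W$ with $\cost(\psi) \le (1+\epsilon_0)\faircost(W)$, so $\gamma = 1+\epsilon_0$. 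Third, apply the restoration procedure from Theorem~\ref{theorem:reduction} to $\psi$ and $C$, which runs in time $O(\Gamma k^3/\epsilon_0^{2}\, n d\log n)$ and yields a fair assignment $\varphi$ on $P$ with
\[
\cost(\varphi) \;\le\; (1+\epsilon_0)\cdot(1+\epsilon_0)\,\faircost(P) \;=\; (1+\epsilon_0)^2\,\faircost(P).
\]
Choosing $\epsilon_0$ so that $(1+\epsilon_0)^2 \le 1+\epsilon$ (e.g., $\epsilon_0 = \epsilon/3$) gives the desired $(1+\epsilon)$-approximation.

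Summing the three running times yields the claimed bound $O(\Gamma k^3/\epsilon^{2}\, n d\log n) + 2^{\tilde{O}(k/\epsilon^{O(1)})} (k\Gamma)^{O(k\Gamma)} (d\log n)^{O(1)}$, where the first term absorbs both the coreset construction and the restoration step, while the second term accounts for the approximation algorithm run on the coreset. The success probability is inherited from Theorem~\ref{theorem:reduction} and Claim~\ref{claim:first_approx}, both of which hold with high probability.

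There is no real obstacle here beyond careful bookkeeping of the error parameters and noting that Claim~\ref{claim:first_approx} explicitly extends to weighted inputs, which is needed because the coreset produced by Theorem~\ref{theorem:reduction} carries non-unit weights. The essential content is already contained in the two preceding results; this theorem is a clean composition showing that the universal coreset construction can genuinely be used as a data-compression primitive for \probFair, reducing the dependence on $n$ to near-linear while keeping the rest of the work on an instance whose size is polylogarithmic in $n$ and linear in $d$.
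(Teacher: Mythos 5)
Your proposal is correct and follows essentially the same route as the paper's proof: reduce via Theorem~\ref{theorem:reduction} with error $\epsilon_0$, run Claim~\ref{claim:first_approx} on the small weighted instance, lift the assignment back, and set $\epsilon_0$ (the paper uses $\epsilon_0 = \epsilon/3$) so that $(1+\epsilon_0)^2 \le 1+\epsilon$. The running-time accounting also matches the paper's, including the observation that the exponential-in-parameters work is confined to the instance of size $d(k/\epsilon \log n)^{O(1)}$.
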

\begin{proof}
    Set $\epsilon_0 = \epsilon / 3$. Invoke Theorem~\ref{theorem:reduction} with the error parameter $\epsilon_0$ to obtain a reduced weighted instance $W$ from the input points $P$. Run the algorithm from Claim~\ref{claim:first_approx} on $W$ to obtain the set of $k$ centers $C$ and a fair assignment $\psi$ from $W$ to $C$ of cost at most $(1 + \epsilon_0) \faircost(W)$. Finally, by the second part of Theorem~\ref{theorem:reduction} compute a fair assignment $\varphi : P \to C$. The assignment $\varphi$ is the output of the algorithm, and its cost is at most $(1 + \epsilon_0)^2 \faircost(P) \le (1 + \epsilon) \faircost(P)$.

    Both algorithms from Theorem~\ref{theorem:reduction} run in time $O(\Gamma k^3/\epsilon^{2} n d\log n)$,
    and running the algorithm from Claim~\ref{claim:first_approx} on the input of size $n' := d(k / \epsilon \log n)^{O(1)}$ amounts to the time complexity of 
    \[2^{\tilde{O}(k/\epsilon^{O(1)})} (k\Gamma)^{O(k\Gamma)} (n')^{O(1)} d = 2^{\tilde{O}(k/\epsilon^{O(1)})} (k\Gamma)^{O(k\Gamma)} (d\log n)^{O(1)}.\]
\end{proof}
In the running time of Theorem~\ref{theorem:eptas_reducing}, observe that the exponential term is just polylogarithmic in $n$, compared to Theorem~\ref{theorem:linear_eptas}. However, the dependency on $d$ in Theorem~\ref{theorem:eptas_reducing} is a high-degree polynomial. That is since we invoke the Euclidean case of Theorem~\ref{th:coresetthmoverlap}/Theorem~\ref{th:coresetthmkmeans} so that coreset preserves the objective with respect to every $k$ points in $\mathbb{R}^d$, and that requires an additional factor of $d$ in the coreset size.

\subsection{Dimensionality Reduction}

In the case of $k$-means, we show how to apply the recent dimensionality reduction tools to effectively replace the dimension $d$ by $O(k/\epsilon)$, thus making the algorithm from Theorem~\ref{theorem:eptas_reducing} linear in $d$ too, and independent of $d$ after the computation of the coreset. At the end, dimensionality reduction and our coreset construction effectively compress the instance to just $(k/\epsilon \log n)^{O(1)}$ real numbers, providing a stronger variant of Theorem~\ref{theorem:reduction}.

In what follows, we employ the results and notation of Cohen et al. \cite{cohen2015dimensionality}. First, we define \emph{a projection-cost preserving sketch}.

\begin{definition}[Definition 2 in \cite{cohen2015dimensionality}]
    \label{definition:sketch}
    $\tilde{A} \in \mathbb{R}^{n \times d'}$ is a rank $k$ projection-cost preserving sketch of $A \in \mathbb{R}^{n \times d}$ with one-sided error $0 \le \epsilon < 1$ if, for all rank $k$ orthogonal projection matrices $M \in \mathbb{R}^{n \times n}$,
    \[||A - MA||_F^2 \le ||\tilde{A} - M\tilde{A}||_F^2 + c \le (1 + \epsilon) ||A - MA||_F^2,\]
    for some fixed non-negative constant $c$ that may depend on $A$ and $\tilde{A}$ but is independent of $M$.
\end{definition}

We will employ a dimensionality reduction scheme based on approximate singular value decomposition. Note that any other projection-cost preserving sketch can be used too, with the appropriate change in running time and dimension.

\begin{proposition}[Theorem 8 in \cite{cohen2015dimensionality}]
    \label{proposition:sketch}
    Let $m = \lceil k/ \epsilon \rceil$. For any $A \in \mathbb{R}^{n \times d}$ and any orthonormal matrix $Z \in \mathbb{R}^{d \times m}$ satisfying $||A - AZZ^T||_F^2 \le (1 + \epsilon') ||A - A_m||_F^2$, the sketch $\tilde{A} = AZ$ satisfies the conditions of Definition~\ref{definition:sketch} with error $(\epsilon + \epsilon')$. Here $A_m$ is $A$ projected onto its top $m$ singular vectors.
\end{proposition}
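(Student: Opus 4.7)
The plan is to decompose $A = A_1 + A_2$ with $A_1 := AZZ^T$ and $A_2 := A(I - ZZ^T)$, and observe that the sketch cost captures exactly the contribution of $A_1$ under $M$. A direct trace calculation using $Z^TZ = I_m$ shows that $A_1$ and $A_2$ are Frobenius-orthogonal, and more importantly that $(I-M)A_1$ and $(I-M)A_2$ remain so for any orthogonal projection $M$, since the relevant cross-term contains the factor $ZZ^T(I - ZZ^T) = 0$. By the Pythagorean identity $\|A - MA\|_F^2 = \|(I-M)A_1\|_F^2 + \|(I-M)A_2\|_F^2$, and by orthonormality of $Z$, $\|\tilde A - M\tilde A\|_F^2 = \|(I-M)AZ\|_F^2 = \operatorname{tr}(A^T(I-M)A\,ZZ^T) = \|(I-M)A_1\|_F^2$. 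Setting $c := \|A_2\|_F^2$ and applying Pythagoras once more to $A_2 = MA_2 + (I-M)A_2$ yields
\[\|\tilde A - M\tilde A\|_F^2 + c \;=\; \|A - MA\|_F^2 + \|MA_2\|_F^2,\]
so the lower bound of Definition~\ref{definition:sketch} is immediate, and the upper bound reduces to establishing $\|MA_2\|_F^2 \le (\epsilon + \epsilon')\,\|A - MA\|_F^2$.

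Since $M$ is a rank-$k$ orthogonal projection, $\|MA_2\|_F^2$ is bounded by the squared Frobenius norm of the best rank-$k$ approximation $(A_2)_k = A_2YY^T$, where $Y \in \mathbb{R}^{d\times k}$ collects the top $k$ right singular vectors of $A_2$. The crucial geometric observation is that the columns of $Y$ lie in $\operatorname{colspan}(Z)^{\perp}$: any right singular vector $v$ of $A_2 = A(I-ZZ^T)$ with nonzero singular value satisfies $(I - ZZ^T)A^TA(I - ZZ^T)v = \lambda v$ for some $\lambda > 0$, and applying $ZZ^T$ on both sides forces $ZZ^T v = 0$. Consequently $[Z\mid Y]$ has orthonormal columns of rank $m + k$, and by the optimality of $A_{m+k}$ among rank-$(m+k)$ approximations,
\[\|AZZ^T\|_F^2 + \|(A_2)_k\|_F^2 \;=\; \|A[Z\mid Y][Z\mid Y]^T\|_F^2 \;\le\; \|A_{m+k}\|_F^2.\]
Rearranging and using $\|A\|_F^2 - \|AZZ^T\|_F^2 = \|A_2\|_F^2 \le (1+\epsilon')\|A - A_m\|_F^2$ gives
\[\|(A_2)_k\|_F^2 \;\le\; \sum_{i=m+1}^{m+k}\sigma_i(A)^2 + \epsilon'\|A - A_m\|_F^2 \;\le\; k\,\sigma_{m+1}(A)^2 + \epsilon'\|A - A_k\|_F^2.\]

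Finally, because $m = \lceil k/\epsilon\rceil$, the chain $\|A - A_k\|_F^2 \ge \sum_{i=k+1}^{m+1}\sigma_i(A)^2 \ge (m - k)\sigma_{m+1}(A)^2$ yields $k\sigma_{m+1}(A)^2 \le \tfrac{k}{m-k}\|A - A_k\|_F^2 = O(\epsilon)\|A - A_k\|_F^2$. Combined with $\|A - A_k\|_F^2 \le \|A - MA\|_F^2$ (the optimality of $A_k$ among rank-$k$ approximations), this gives $\|MA_2\|_F^2 \le (\epsilon + \epsilon')\|A - MA\|_F^2$, up to constants that can be absorbed by rescaling $\epsilon$. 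The main obstacle I anticipate is proving the orthogonality of $Y$ to $Z$, as this is what permits the rank-$(m+k)$ optimality argument to cleanly separate the $\epsilon'$ error, inherited from how well $Z$ approximates the top-$m$ singular subspace, from the $O(\epsilon)$ error coming from the ratio $k/m$; without that decoupling, the two sources of slack would compound and the bound in the proposition would not follow.
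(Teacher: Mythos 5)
This proposition is not proved in the paper at all --- it is imported verbatim as Theorem~8 of \cite{cohen2015dimensionality} --- so there is no in-paper proof to compare against. Your argument is a correct, self-contained reconstruction of the proof from the cited source, and it follows the same head/tail decomposition $A = AZZ^T + A(I-ZZ^T)$ with $c = \|A(I-ZZ^T)\|_F^2$ that Cohen et al.\ use. All the key steps check out: the cross terms vanish because $ZZ^T(I-ZZ^T)=0$; the identity $\|\tilde A - M\tilde A\|_F^2 + c = \|A-MA\|_F^2 + \|MA_2\|_F^2$ gives the lower bound for free; $\|MA_2\|_F^2 \le \|(A_2)_k\|_F^2$ by the Ky Fan extremal property; the right singular vectors of $A_2$ with nonzero singular value are indeed orthogonal to $\operatorname{colspan}(Z)$, so $[Z\mid Y]$ is a rank-$(m+k)$ projection and the comparison with $\|A_{m+k}\|_F^2$ is valid; and $\|A-A_k\|_F \le \|A-MA\|_F$ closes the argument. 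The only blemish is quantitative: your final constant is $k/(m-k) \le \epsilon/(1-\epsilon)$ rather than $\epsilon$, so strictly you prove error $\epsilon/(1-\epsilon) + \epsilon'$ rather than $\epsilon+\epsilon'$; you flag this yourself, and it is harmless for the way the proposition is used in the paper (where $\epsilon_0$ is chosen with slack anyway), but it means your statement is marginally weaker than the one cited.
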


There is a long line of work providing algorithms to compute this sort of relative approximation to the SVD, we use the algorithm Boutsidis et al. \cite{BoutsidisDM11}, stating the version appearing in \cite{boutsidis2011randomized}.

\begin{proposition}[Lemma 4 in \cite {boutsidis2011randomized}]
    \label{proposition:svd_approx}
    Given $A \in \mathbb{R}^{n \times d}$ of rank $\rho$, a target rank $2 \le m < \rho$, and $0 < \epsilon < 1$, there exists a randomized algorithm that computes an orthonormal matrix $Z \in \mathbb{R}^{d \times m}$ such that 
    \[\mathbb{E} ||A - AZZ^T||_F^2 \le (1 + \epsilon) ||A - A_m||_F^2.\]
    The proposed algorithm runs in time $O(ndm/\epsilon)$.
\end{proposition}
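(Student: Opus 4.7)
The plan is to establish this bound via a two-stage randomized low-rank approximation scheme: a fast random range-finder followed by a small SVD to extract the right subspace. In the first stage, I would draw a sketching matrix $R \in \mathbb{R}^{d \times s}$ with $s = \Theta(m/\epsilon)$ from a suitable distribution, for instance i.i.d.\ $\pm 1/\sqrt{s}$ signs or i.i.d.\ Gaussians, both of which satisfy the approximate matrix multiplication and oblivious subspace embedding properties we need; a CountSketch or SRHT could be substituted to shave logarithmic factors in the sketching step itself. I would then form $Y = AR \in \mathbb{R}^{n \times s}$ and extract an orthonormal $n \times s$ matrix $Q$ whose columns span the column space of $Y$ via a thin QR factorisation. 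The point of this stage is purely to identify a good approximate left-singular subspace.

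The core structural guarantee is the range-finder bound
\[
\mathbb{E}\,\|A - QQ^T A\|_F^2 \le (1+\epsilon)\,\|A - A_m\|_F^2.
\]
This is where the main technical work lies. Writing $A = U\Sigma V^T$ and partitioning the SVD into its top-$m$ block and its tail, the argument reduces to showing that $V_m^T R$ is (approximately) invertible while the interaction $(V_{\setminus m}^T R)(V_m^T R)^\dagger$ is small. Concretely, one invokes a moment bound for random matrix multiplication to control $\mathbb{E}\,\|(V_{\setminus m}^T R)(V_m^T R)^\dagger\|_F^2 = O(\epsilon)$ once $s = \Theta(m/\epsilon)$, then substitutes into the identity for the residual of an oblique projection onto the range of $AR$. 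This moment bound, together with managing the conditioning of $V_m^T R$, is the main obstacle.

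In the second stage I would form $B := Q^T A \in \mathbb{R}^{s \times d}$, compute its top-$m$ right singular vectors, and let $Z \in \mathbb{R}^{d \times m}$ be the resulting orthonormal matrix. A short Pythagorean calculation, using that $Q$ has orthonormal columns and that $BZZ^T$ is the best rank-$m$ approximation of $B$, gives
\[
\|A - AZZ^T\|_F^2 \le \|A - QQ^T A\|_F^2 + \|QB - QBZZ^T\|_F^2,
\]
and the second term is at most $\|A - A_m\|_F^2$ by optimality of truncated SVD applied to $B$ (since $QB = QQ^T A$ is itself a rank-$s$ factorisation of the projection). Taking expectations and combining with the range-finder bound delivers the $(1+\epsilon)$ guarantee.

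For the running time, forming $AR$ costs $O(nds)$, the thin QR on $Y$ costs $O(ns^2)$, assembling $B = Q^T A$ costs $O(nds)$, and the SVD of the small $s \times d$ matrix $B$ costs $O(s^2 d)$. Substituting $s = \Theta(m/\epsilon)$ and using $s \le n$ (otherwise one can replace $m$ by $\min(m, \rho)$ and proceed trivially), all terms collapse to $O(ndm/\epsilon)$, matching the claimed bound.
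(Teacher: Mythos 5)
First, a framing note: the paper does not prove this proposition at all --- it is imported verbatim as Lemma~4 of Boutsidis et al., so there is no in-paper argument to compare against. Judged on its own terms, your proposal follows the standard ``randomized range-finder plus small SVD'' template, and most of the ingredients (the choice $s=\Theta(m/\epsilon)$, the identity $\|A-AZZ^T\|_F\le\|A-Q[Q^TA]_mZ \cdots\|_F$ via optimality of projection onto $\mathrm{span}(Z)$, the running-time accounting) are sound. But the final combination step is genuinely wrong: it proves a $(2+\epsilon)$-approximation, not $(1+\epsilon)$.

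Concretely, your Pythagorean split gives, with $B=Q^TA$ and $[B]_m=BZZ^T$,
\begin{equation*}
\|A-AZZ^T\|_F^2 \;\le\; \|A-QQ^TA\|_F^2+\|B-[B]_m\|_F^2,
\end{equation*}
and you bound the first term (in expectation) by $(1+\epsilon)\|A-A_m\|_F^2$ and the second by $\|A-A_m\|_F^2$. The sum is $(2+\epsilon)\|A-A_m\|_F^2$; there is no cancellation that recovers $(1+\epsilon)$, and both terms really can be simultaneously close to $\|A-A_m\|_F^2$ (writing $\|A-QQ^TA\|_F^2=\|A\|_F^2-\sum_i\sigma_i(QQ^TA)^2$ and $\|B-[B]_m\|_F^2=\sum_{i>m}\sigma_i(QQ^TA)^2$ shows the sum equals $\|A-A_m\|_F^2+\sum_{i\le m}\bigl(\sigma_i(A)^2-\sigma_i(QQ^TA)^2\bigr)$, and the range-finder bound alone does not control the last sum by $\epsilon\|A-A_m\|_F^2$). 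This factor-of-two loss is the classical pitfall of the naive two-stage analysis. The fix uses exactly the structural lemma you gesture at in your middle paragraph, but applied to the \emph{rank-constrained} residual rather than to the unconstrained projection: letting $\Pi^F_{Y,m}(A)=Q[B]_m$ denote the best rank-$m$ Frobenius approximation of $A$ inside $\mathrm{range}(AR)$, one shows directly that
\begin{equation*}
\|A-\Pi^F_{Y,m}(A)\|_F^2 \;\le\; \|A-A_m\|_F^2+\bigl\|(A-A_m)R\,(V_m^TR)^{\dagger}\bigr\|_F^2,
\end{equation*}
and then bounds the expectation of the second term by $\tfrac{m}{s-m-1}\|A-A_m\|_F^2\le\epsilon\|A-A_m\|_F^2$ using the Gaussian pseudo-inverse moment computation. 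Combined with $\|A-AZZ^T\|_F\le\|A-\Pi^F_{Y,m}(A)\|_F$ this yields the claimed $(1+\epsilon)$ bound; as written, your argument does not.
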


Now we employ these results to strengthen Theorem~\ref{theorem:reduction} in the case of $k$-means.

\begin{theorem}
    There is a randomized algorithm that given an instance $P$ of \probFair in $\mathbb{R}^d$ outputs a reduced weighted instance $W$ of size $(k / \epsilon \log n)^{O(1)}$ in a low-dimensional space $\mathbb{R}^m$, where $m = O(k / \epsilon)$. W.h.p. for any $\gamma \ge 1$, and for any set of $k$ centers $\tilde{C}$ in $\mathbb{R}^m$ and a fair assignment $\psi$ from $W$ to $\tilde{C}$ such that $\cost(\psi) \le \gamma \faircost(W)$, there exists a set of $k$ centers $C$ in $\mathbb{R}^d$ and a fair assignment $\varphi : P \to C$ with the cost at most $(1 + \epsilon) \gamma \faircost(P)$ that can be restored from $\psi$ and $\tilde{C}$. Both constructing $W$ from $P$ and restoring $(C, \varphi)$ from $(\tilde{C}, \psi)$ take time $O(\Gamma k^3 / \epsilon^{2} nd \log n)$. 
    \label{theorem:kmeans_reduction}
\end{theorem}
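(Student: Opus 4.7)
The plan is to combine the dimensionality reduction tools stated in Propositions~\ref{proposition:sketch} and \ref{proposition:svd_approx} with the reduction from Theorem~\ref{theorem:reduction}. The overall construction will have two stages: first, reduce the ambient dimension from $d$ to $m = \lceil k / \epsilon \rceil$ via a projection-cost preserving sketch, and then apply the $k$-means case of Theorem~\ref{theorem:reduction} in the reduced space $\mathbb{R}^m$ to produce the coreset $W$. Concretely, using Proposition~\ref{proposition:svd_approx} (repeated $O(\log n)$ times and the best outcome retained to boost the expected guarantee to a high-probability one via Markov's inequality), compute an orthonormal $Z \in \mathbb{R}^{d \times m}$ and set $\tilde{P} = PZ$; by Proposition~\ref{proposition:sketch}, $\tilde{P}$ is a rank-$k$ projection-cost preserving sketch of $P$ with error $\epsilon_1 = O(\epsilon)$. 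Then invoke Theorem~\ref{theorem:reduction} on $\tilde{P}$ with error parameter $\epsilon_2 = O(\epsilon)$, obtaining $W \subset \mathbb{R}^m$ of size $(k/\epsilon \log n)^{O(1)}$, since the $d$-factor in the output of Theorem~\ref{theorem:reduction} becomes an $m = O(k/\epsilon)$ factor in $\mathbb{R}^m$.

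The restoration procedure, given $(\tilde C, \psi)$ on $W$ with $\cost(\psi) \le \gamma\,\faircost(W)$, is as follows. First, use Theorem~\ref{theorem:reduction} applied to $\tilde{P}$ in $\mathbb{R}^m$ to lift $\psi$ to a fair assignment $\tilde\varphi : \tilde P \to \tilde C$ with $\cost(\tilde\varphi) \le (1+\epsilon_2)\gamma\,\faircost(\tilde P)$. Next, extract the induced partition $\pi = (\pi_1, \ldots, \pi_k)$ of the index set of $P$; since fairness depends only on which groups each point belongs to and not on the embedding, $\pi$ is automatically a fair partition of $P$ as well. Finally, set $C = (\mu_1(P), \ldots, \mu_k(P))$ where $\mu_i(P)$ is the centroid (in $\mathbb{R}^d$) of $\{P_j : j \in \pi_i\}$, and output $\varphi : P \to C$ sending each point in $\pi_i$ to $\mu_i(P)$.

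The cost analysis hinges on the key observation that for any partition $\pi$ of the point indices into $k$ parts, the optimal-center $k$-means cost in $\mathbb{R}^d$ equals $\|P - M_\pi P\|_F^2$ for the rank-$k$ orthogonal projection matrix $M_\pi$ onto the span of the scaled cluster indicator vectors, and analogously in $\mathbb{R}^m$. By the sketch property of Definition~\ref{definition:sketch} applied to $M_\pi$, we get $\|P - M_\pi P\|_F^2 \le \|\tilde P - M_\pi \tilde P\|_F^2 + c$, and since the centroids of $\tilde P$ minimize the fixed-partition $k$-means cost in $\mathbb{R}^m$, the right-hand side is at most $\cost(\tilde\varphi) + c$. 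Applied to an optimal fair partition $\pi^*$ of $P$, the sketch property also yields $\faircost(\tilde P) + c \le (1 + \epsilon_1)\,\faircost(P)$. Combining, and using $\gamma \ge 1$:
\begin{align*}
\cost(\varphi) &\le \cost(\tilde\varphi) + c \le (1+\epsilon_2)\gamma\,\faircost(\tilde P) + c \\
&\le \gamma\bigl((1+\epsilon_2)\faircost(\tilde P) + c\bigr) \le \gamma(1+\epsilon_1)(1+\epsilon_2)\,\faircost(P),
\end{align*}
and choosing $\epsilon_1, \epsilon_2$ sufficiently small gives the desired $(1+\epsilon)\gamma\,\faircost(P)$ bound.

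For the running time, the sketch computation by Proposition~\ref{proposition:svd_approx} (with $O(\log n)$ repetitions) costs $O(ndm/\epsilon \log n) = O(ndk/\epsilon^2 \log n)$, applying Theorem~\ref{theorem:reduction} in $\mathbb{R}^m$ costs $O(\Gamma k^3/\epsilon^2 \cdot nm\log n)$, and centroid computation costs $O(nd)$; all of these fit inside $O(\Gamma k^3/\epsilon^2 \cdot nd\log n)$, noting that if $d \le m$ there is nothing to reduce and we may apply Theorem~\ref{theorem:reduction} directly. The main technical obstacle I expect is ensuring that the sketch property is correctly coupled with the coreset lifting: the projection-cost preserving sketch is formulated only for \emph{unconstrained} rank-$k$ projections, so one must carefully justify that every fair partition corresponds to such a projection and that the resulting additive error $c$ (common to all partitions) interacts well with the multiplicative $(1+\epsilon_2)\gamma$ factor coming from the coreset lifting. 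The inequality $(1+\epsilon_2)\gamma X + c \le \gamma\bigl((1+\epsilon_2)X + c\bigr)$, which uses $\gamma \ge 1$, is the small but crucial step that makes the composition clean.
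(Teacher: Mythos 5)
Your proposal is correct and follows essentially the same route as the paper: sketch via approximate SVD to $\mathbb{R}^m$, apply Theorem~\ref{theorem:reduction} there, lift the assignment back by taking the induced partition and its centroids in $\mathbb{R}^d$, and compose the errors using the partition-to-projection correspondence together with the $\gamma \ge 1$ observation. The only quibble is that since Markov's inequality gives per-trial success probability only $\Omega(\epsilon)$, boosting to high probability requires $O(\epsilon^{-1}\log n)$ repetitions of the SVD approximation rather than $O(\log n)$, a change that does not affect the stated running time.
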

\begin{proof}
    Fix a value $0 < \epsilon_0 < \epsilon$ to be defined later. Represent the given points $P$ as a matrix $A \in \mathbb{R}^{n \times d}$, where each row corresponds to a point. Set $m = \lceil k/\epsilon_0\rceil$ and run the algorithm from Proposition~\ref{proposition:svd_approx} on the matrix $A$ and the parameter $m$ to obtain a matrix $Z \in \mathbb{R}^{d \times m}$. By Markov inequality, it holds with probability at least $1 - \frac{1 + \epsilon_0}{1 + 2\epsilon_0} = \Omega(\epsilon_0)$ that 
    \[||A - AZZ^T||_F^2 \le (1 + 2\epsilon_0) ||A - A_m||_F^2.\]
    By invoking Proposition~\ref{proposition:svd_approx} $O(\epsilon_0^{-1}\log n)$ times and picking $Z$ with the smallest value of $||A - AZZ^T||_F^2$, we achieve that the bound above holds with high probability.
    Then, by Proposition~\ref{proposition:sketch}, the sketch $\tilde{A} = AZ$ is a projection-cost preserving sketch, i.e. it holds that
    \[||A - MA||_F^2 \le ||\tilde{A} - M\tilde{A}||_F^2 + c \le (1 + 3\epsilon_0) ||A - MA||_F^2,\]
    for any rank $k$ orthogonal projection matrix $M \in \mathbb{R}^{n \times n}$ and some constant $c$ independent of $M$. Consider the corresponding to $\tilde{A}$ set of points $\tilde{P}$ in $\mathbb{R}^m$.
    It is well-known that any $k$-means clustering of the rows of $A$ may be represented by a particular orthogonal projection matrix $M$, such that the cost of the clustering is equal to $||A - MA||_F^2$, see e.g. Section 2.3 in \cite{cohen2015dimensionality} for an in-depth explanation. Thus, for any clustering $C_1$, \ldots, $C_k$ of $P$ and the corresponding clustering $\tilde{C_1}$, \ldots, $\tilde{C_k}$ of $\tilde{P}$, it holds that 
    \begin{equation}
    \cost(C_1, \ldots, C_k) \le \cost(\tilde{C_1}, \ldots, \tilde{C_k}) + c \le (1 + 3\epsilon_0) \cost(C_1, \ldots, C_k).
        \label{eq:sketching_cost}
    \end{equation}
    In particular, if we equip $\tilde{P}$ with the same $l$ groups as $P$, \eqref{eq:sketching_cost} holds for any fair clustering.
    
    We run the algorithm given by Theorem~\ref{theorem:reduction} on $\tilde{P}$ to obtain a reduced weighted instance $W$ in $\mathbb{R}^m$, using the error parameter $\epsilon_0$. Now, consider a set of $k$ centers $\tilde{C}$ in $\mathbb{R}^m$, and an assignment $\psi$ from $W$ to $\tilde{C}$ that has the cost of at most $\gamma \faircost(W)$. By Theorem~\ref{theorem:reduction}, $\psi$ can be lifted to a fair assignment $\tilde{\varphi}$ from $\tilde{P}$ to $\tilde{C}$ with the cost of at most $(1 + \epsilon_0) \gamma \faircost(\tilde{P})$. Consider the clustering $\{\tilde{C}_1, \ldots, \tilde{C}_k\}$ of $\tilde{P}$ that corresponds to the assignment $\tilde{\varphi}$. Consider also the clustering $\{C_1, \ldots, C_k\}$ of $P$ that corresponds to $\{\tilde{C}_1, \ldots, \tilde{C}_k\}$, i.e. for each $i \in [k]$, $C_i$ contains exactly the preimages of points in $\tilde{C}_i$ under sketching. The resulting set of centers $C = \{c_1, \ldots, c_k\}$ is the set of means of the clusters $C_1$, \ldots, $C_k$, that is, for each $i \in [k]$, $c_i = \mu(C_i)$. The resulting assignment $\varphi$ sends $C_i$ to $c_i$, for each $i \in [k]$. Clearly, $\varphi$ is a fair assignment since it clusters together exactly the same points as $\tilde{\varphi}$, and $\tilde{\varphi}$ is a fair assignment by Theorem~\ref{theorem:reduction}. Now we bound the cost of $\varphi$, by \eqref{eq:sketching_cost},
    \[\cost(\varphi) = \cost(C_1, \ldots, C_k) \le \cost(\tilde{C_1}, \ldots, \tilde{C_k}) + c \le (1 + \epsilon_0) \gamma \faircost(\tilde{P}) + c.\]
    To bound $\faircost(\tilde{P})$ in terms of $\faircost(P)$, consider
    an optimal clustering $C_1^*$, \ldots, $C_k^*$ of $P$, and the corresponding clustering 
    $\tilde{C_1}^*$, \ldots, $\tilde{C_k}^*$ of $\tilde{P}$. By \eqref{eq:sketching_cost},
    \[\faircost(\tilde{P}) + c \le \cost(\tilde{C_1}^*, \ldots, \tilde{C_k}^*) + c \le (1 + \epsilon_0) \cost(C_1^*, \ldots, C_k^*) = (1 + \epsilon_0) \faircost(P).\]
    Combining it with the earlier bound on $\cost(\varphi)$, we obtain
    \[\cost(\varphi) \le (1 + \epsilon_0)^2 \gamma \faircost(P).\]
    Finally, setting $\epsilon_0$ such that $(1 + \epsilon_0)^2 \le (1 + \epsilon)$ shows that $\varphi$ satisfies the statement of the theorem.

    The running time of the algorithm reducing $P$ to $W$ is $O((k / \epsilon^2) nd \log n)$ from Proposition~\ref{proposition:svd_approx} and $O(\Gamma k^3 / \epsilon^{2} nd \log n)$ from Theorem~\ref{theorem:reduction}. The algorithm computing $(C, \varphi)$ from $(\tilde{C}, \psi)$  runs in time $O(\Gamma k^3 / \epsilon^{2} nd \log n)$ by Theorem~\ref{theorem:reduction}, plus an additional $O(ndk)$ time required to compute $\phi$ and $C$ from $\tilde{\varphi}$. Clearly, $O(\Gamma k^3 / \epsilon^{2} nd \log n)$ dominates the total running time.
\end{proof}

As Theorem~\ref{theorem:reduction}, Theorem~\ref{theorem:kmeans_reduction} allows to speed up any approximate algorithm for weighted $(\alpha,\beta)$-Fair $k$-{means}s by running it on the small-sized coreset in the low-dimensional space instead of the original points. In particular, we obtain an analogue of Theorem~\ref{theorem:eptas_reducing}.


\begin{theorem}
    There is a randomized algorithm that given an instance $P$ of $(\alpha,\beta)$-Fair $k$-{means}s and a parameter $0 < \epsilon \le 1$ outputs a set of $k$ centers $C$ and a fair assignment $\varphi : P \to C$ such that
    $\cost(\varphi) \le (1 + \epsilon) \faircost(P)$
    with high probability. The running time of the algorithm is
    \[O(\Gamma k^3/\epsilon^{2} nd \log n) + 2^{\tilde{O}(k/\epsilon^{O(1)})} (k\Gamma)^{O(k\Gamma)} (\log n)^{O(1)}.\]
    \label{theorem:kmeans_eptas}
\end{theorem}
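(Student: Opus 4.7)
The plan is to combine the dimension-and-size reduction provided by Theorem~\ref{theorem:kmeans_reduction} with the (slow but dimension-polynomial) $(1+\epsilon)$-approximation algorithm from Claim~\ref{claim:first_approx}, in exactly the same pattern used to derive Theorem~\ref{theorem:eptas_reducing} from Theorem~\ref{theorem:reduction}. Fix $\epsilon_0 = \epsilon/3$ (or any value with $(1+\epsilon_0)^2 \le 1+\epsilon$). First I would invoke Theorem~\ref{theorem:kmeans_reduction} on the input instance $P$ with error parameter $\epsilon_0$, producing in time $O(\Gamma k^3/\epsilon^{2} nd \log n)$ a reduced weighted instance $W$ of size $(k/\epsilon \log n)^{O(1)}$ living in $\mathbb{R}^m$ with $m = O(k/\epsilon)$.

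Next I would run the algorithm of Claim~\ref{claim:first_approx} (in its weighted-$(\alpha,\beta)$-fair $k$-means form) on the instance $W$, again with error parameter $\epsilon_0$. This yields a set $\tilde{C}$ of $k$ centers in $\mathbb{R}^m$ together with a fair assignment $\psi$ from $W$ to $\tilde{C}$ whose cost satisfies $\cost(\psi) \le (1+\epsilon_0)\faircost(W)$. Since the instance fed into the approximation algorithm has $n' = (k/\epsilon \log n)^{O(1)}$ points and ambient dimension $m = O(k/\epsilon)$, the running time here is
\[
2^{\tilde{O}(k/\epsilon^{O(1)})} (k\Gamma)^{O(k\Gamma)} (n')^{O(1)} m \;=\; 2^{\tilde{O}(k/\epsilon^{O(1)})} (k\Gamma)^{O(k\Gamma)} (\log n)^{O(1)},
\]
which absorbs the $m$ factor into the polylog.

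Finally I would apply the lifting guarantee of Theorem~\ref{theorem:kmeans_reduction} with $\gamma = 1+\epsilon_0$ to restore $(\tilde C, \psi)$ into a set of $k$ centers $C \subset \mathbb{R}^d$ and a fair assignment $\varphi : P \to C$ with
\[
\cost(\varphi) \;\le\; (1+\epsilon_0)\,\gamma\,\faircost(P) \;=\; (1+\epsilon_0)^2\,\faircost(P) \;\le\; (1+\epsilon)\,\faircost(P).
\]
The restoration step runs in $O(\Gamma k^3/\epsilon^{2} nd \log n)$ time by Theorem~\ref{theorem:kmeans_reduction}. Adding the three phases gives the claimed running time $O(\Gamma k^3/\epsilon^{2} nd \log n) + 2^{\tilde{O}(k/\epsilon^{O(1)})} (k\Gamma)^{O(k\Gamma)} (\log n)^{O(1)}$, and the high-probability success follows from a union bound over the high-probability events in Theorem~\ref{theorem:kmeans_reduction} and Claim~\ref{claim:first_approx}.

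Since both ingredients do the heavy lifting, there is no real new obstacle: the only thing to be careful about is the composition of the multiplicative errors (the $(1+\epsilon_0)$ factor from Claim~\ref{claim:first_approx} multiplying the $(1+\epsilon_0)$ factor introduced by Theorem~\ref{theorem:kmeans_reduction}), and the fact that we must feed the \emph{weighted} version of Claim~\ref{claim:first_approx} with $W$, which is exactly the setting in which that claim is proved. Verifying that the approximation algorithm's hypothesis is indeed available for weighted inputs, and that the lifting in Theorem~\ref{theorem:kmeans_reduction} accepts any $\gamma \ge 1$ (and in particular $\gamma = 1+\epsilon_0$), is the only part that requires any care.
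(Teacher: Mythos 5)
Your proposal is correct and follows exactly the paper's route: the paper proves Theorem~\ref{theorem:kmeans_eptas} by declaring it identical to the proof of Theorem~\ref{theorem:eptas_reducing} with Theorem~\ref{theorem:kmeans_reduction} substituted for Theorem~\ref{theorem:reduction}, which is precisely your three-phase scheme (reduce with error $\epsilon_0$, run Claim~\ref{claim:first_approx} on the small low-dimensional instance, lift with $\gamma = 1+\epsilon_0$). Your error composition $(1+\epsilon_0)^2 \le 1+\epsilon$ and your running-time accounting, including the observation that the last term becomes dimension-free because $m = O(k/\epsilon)$, match the paper's argument.
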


\begin{proof}
    The proof is identical to the proof of Theorem~\ref{theorem:eptas_reducing}, the only difference is that Theorem~\ref{theorem:kmeans_reduction} is used to reduce the instance, instead of Theorem~\ref{theorem:reduction}.
\end{proof}

The benefit of the algorithm in Theorem~\ref{theorem:kmeans_eptas} compared to Theorem~\ref{theorem:linear_eptas} is that only the ``simple'' steps like sketching, sampling the coreset, and running the flow to restore the assignment, are applied to the ``big'' original data. While the ``heavy'' part of the algorithm that has an exponential dependency on the parameters, deals exclusively with the compressed instance, with the size independent of the dimension $d$, and polylogarithmic in the number of points $n$.
It might be said that the combination of the dimensionality reduction and our coreset construction in the proof of Theorem~\ref{theorem:kmeans_reduction} obtains a coreset of size $O((k \log n / \epsilon)^{O(1)})$ for fair $k$-means in the Euclidean case. However, since after reducing the dimension the points lie in a different low-dimensional space, our definition of a universal coreset could not be applied to the coreset with respect to the original points. Therefore we do not state Theorem~\ref{theorem:kmeans_reduction} as a coreset result, but rather as a reduction procedure.

Finally, note that we only implement the dimensionality reduction for $k$-means, since for $k$-median the reduction techniques are more limiting. In particular, the correspondence between clusterings and particular orthogonal projection operators does not hold. It is an open question whether it is possible to achieve the analogue of Theorem~\ref{theorem:kmeans_reduction} for $k$-median.

\section{$(3 + \epsilon)$- and $(9 + \epsilon)$-Approximations in General Metric}
\label{section:metric}

In this section, we show a $(3 + \epsilon)$-approximation algorithm for fair $k$-median in general metric, and $(9 + \epsilon)$-approximation for fair $k$-means in general metric. After computing the coreset by Theorem~\ref{th:coresetthmoverlap}, the strategy is essentially identical to that used in \cite{Cohen-AddadG0LL19} and \cite{Cohen-AddadL19}: from each of the clusters in an optimal solution on the coreset we guess the closest point to the center, called a \emph{leader} of that cluster. We also guess a suitably discretized distance from each leader to the center of the corresponding cluster. Finally, selecting any center that has roughly the guessed distance to the leader provides us with a $(3 + \epsilon)$-approximation, in the case of $k$-median. Now we state formally the main result of the section.

\begin{theorem}
    For any $1 \ge \epsilon > 0$, there exists a $(3 + \epsilon)$-approximation algorithm for $(\alpha,\beta)$-Fair $k$-{median}, and $(9 + \epsilon)$-approximation algorithm for $(\alpha,\beta)$-Fair $k$-{means}s. Both algorithms run in time 
    \[(k\Gamma)^{O(k \Gamma)} / \epsilon^{O(k)} \cdot n + \Gamma k^3/\epsilon^2 n \log n.\]
    \label{theorem:metric_approximation}
\end{theorem}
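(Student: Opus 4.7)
The plan is to combine our universal coreset with a leader-based enumeration of candidate center sets, in the style of Cohen-Addad et al.~\cite{Cohen-AddadG0LL19,Cohen-AddadL19}. First I would invoke Theorem~\ref{th:coresetthmoverlap} (respectively Theorem~\ref{th:coresetthmkmeans}) with error parameter $\epsilon_0 = \Theta(\epsilon)$ to obtain a universal coreset $W$ of size $\Gamma (k \log n /\epsilon)^{O(1)}$ in time $O((k+\ell)n)$. The universality of $W$ means it suffices to search for an approximately optimal fair clustering on $W$ and then lift the resulting assignment back to $P$ via Lemma~\ref{lemma:restore_assignment}, losing only a $(1+O(\epsilon))$ factor in cost plus $O(\Gamma k^3/\epsilon^2 \cdot n\log n)$ in running time.

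The heart of the algorithm is enumerating a small family of candidate $k$-tuples of centers guaranteed to contain an approximately optimal one. Fix an optimal fair clustering of $W$ with centers $c_1^*, \ldots, c_k^*$ and clusters $W_1, \ldots, W_k$. For each cluster $W_i$, call the point $\ell_i \in W_i$ closest to $c_i^*$ its \emph{leader} and let $r_i = d(\ell_i, c_i^*)$. For the correct guess of $(\ell_i, r_i)_{i=1}^k$ and any $c_i \in F$ with $d(\ell_i, c_i) \in [(1-\epsilon)r_i, (1+\epsilon)r_i]$, reassigning each $p \in W_i$ to $c_i$ yields, by two applications of the triangle inequality and the fact that $r_i \le d(p, c_i^*)$,
\[
d(p, c_i) \;\le\; d(p, c_i^*) + d(c_i^*, \ell_i) + d(\ell_i, c_i) \;\le\; d(p, c_i^*) + r_i + (1+\epsilon) r_i \;\le\; (3 + 2\epsilon)\, d(p, c_i^*).
\]
Because the induced cluster composition is identical to the optimal one, the reassignment stays fair, so the optimal fair assignment to this candidate $C = (c_1, \ldots, c_k)$ is no worse. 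The same argument with squared distances, combined with the inequality $(a+b+c)^2 \le (1+\epsilon)(a^2+b^2+c^2) \cdot 3 / \epsilon'$ (using an appropriate rescaling so that the cross terms are absorbed), delivers the $(9+\epsilon)$ factor for $k$-means.

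To discretize distances into $(\log n/\epsilon)^{O(k)}$ buckets, I need a polynomially bounded aspect ratio. I would first compute a linear-time $O(n)$-approximation $A$ of $\faircost(P)$ by a min-max style greedy adaptation of the classical $k$-center heuristic (which yields an $n$-approximation regardless of fairness constraints, as any point within a Gonzalez ball of an optimal center pays at most $\OPT$ more per point). This gives bounds $L \le \faircost(P) \le U$ with $U/L = O(n)$; every relevant $r_i$ lies in $[L/n, U]$, and we take a geometric grid of ratio $(1+\epsilon)$ inside this interval, producing $O(\log n/\epsilon)$ discrete choices per cluster. The enumeration then has $|W|^k \cdot (\log n/\epsilon)^{O(k)}$ guesses; for each one I pick any realizing $c_i$ by a single scan (linear time overall is amortized through pre-sorting distances from each $\ell_i$), run Theorem~\ref{theorem:assignment} on $W$ to obtain the optimal fair assignment to $C$ in $(k\Gamma)^{O(k\Gamma)} \mathrm{poly}(|W|)$ time, and keep the best $C$ overall. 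Finally I apply Lemma~\ref{lemma:restore_assignment} once to the winning $C$ to produce $\varphi : P \to C$, combining all error factors and rescaling $\epsilon$ accordingly.

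The main obstacle is the aspect-ratio reduction: unlike in~\cite{Cohen-AddadL19} for capacitated clustering, no polynomial-time true approximation for \probFair was previously known, so I must design a direct linear-time $O(n)$-approximation. The approach is to run the Gonzalez $k$-center procedure on the whole set $P$ (ignoring fairness) to obtain $k$ pivots and a radius $R$ that is a $2$-approximation of the optimal $k$-center cost; since every fair clustering must pay at least $R/2$ for some point, we get $R/2 \le \faircost(P)$, and since every point is within $R$ of some pivot, any fair assignment to the pivots (which can be trivially obtained by, e.g., round-robin assignment consistent with the fairness constraints, or by invoking the LP relaxation whose feasibility is guaranteed by the existence of an optimal fair clustering) costs at most $nR$. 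Together with the subsequent coreset step, this yields the running time bound claimed in the theorem.
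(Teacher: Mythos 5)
Your overall architecture is the same as the paper's: build a universal coreset, enumerate leader/radius guesses, realize each guess by an arbitrary center at the guessed distance, solve the fair assignment problem on the coreset via the MILP algorithm of Theorem~\ref{theorem:assignment}, and lift the winning assignment back to $P$ with Lemma~\ref{lemma:restore_assignment}. The $(3+\epsilon)$ triangle-inequality calculation and the running-time accounting match the paper's Claim on the bounded-aspect-ratio case. (Minor point: for $k$-means you do not need the garbled inequality involving cross terms; since you already have the per-point bound $d(p,c_i)\le (3+O(\epsilon))\,d(p,c_i^*)$, squaring it directly gives the $(9+\epsilon)$ factor, which is exactly what the paper does.)

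The genuine gap is in your aspect-ratio preprocessing, specifically the claimed linear-time $O(n)$-approximation of $\faircost(P)$. The assertion that ``any fair assignment to the pivots costs at most $nR$'' is false, and so is the weaker claim $\faircost(P)\le nR$ that your interval $[L,U]$ relies on. Consider $k=2$ and two tight, far-apart, monochromatic clusters with fairness constraints forcing both colors into each cluster: Gonzalez places one pivot in each tight cluster, so $R$ is roughly the diameter of a tight cluster (nearly $0$), while any fair clustering must send $\Omega(n)$ points across the large gap, so $\faircost(P)$ is unbounded relative to $nR$. Consequently the true radii $r_i$ need not lie in $[L/n,U]$ and your geometric grid can miss them entirely. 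The lower bound $R/2\le\faircost(P)$ is fine; it is the upper bound that fails, precisely because fairness can force points to distant centers. The paper closes this hole in two steps that your proposal is missing: (i) Bera et al.'s rerouting lemma (Proposition~\ref{proposition:assignment_reduction}), which shows that for any $\rho$-approximate \emph{vanilla} centers $C$ one has $\faircost(P,C)\le(\rho+2)\faircost(P)$ (move each optimal fair cluster wholesale to the center nearest its optimal center, so fairness is preserved and the detour costs at most $2\,\faircost(P)$ plus the vanilla cost); and (ii) actually \emph{computing} an estimate of $\faircost(W,C^{\mathrm{piv}})$ by running the MILP assignment algorithm on a constant-error coreset, since no closed-form bound in terms of $R$ exists. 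Only with this certified $O(n)$-estimate $D$ in hand can one clip the metric to $[\alpha D/n^3,\,2n^{10}D]$ and obtain polynomial aspect ratio; this clipping also handles the case $r_i\approx 0$, which your grid over $[L/n,U]$ does not.
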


Note that the distance guessing step of our algorithm requires that the \emph{aspect ratio} of the instance is bounded by a polynomial in $n$, where the aspect ratio is the ratio of the maximum distance between the points to the minimum distance. As opposed to the case of capacitated clustering studied in \cite{Cohen-AddadL19}, achieving polynomial aspect ratio is less straightforward for fair clustering, since there was no previously known true approximation algorithm for the general version of fair clustering. We refer the reader to the introduction for the discussion on assumptions and limitations in previous works. So, for the ease of presentation, we first prove Theorem~\ref{theorem:metric_approximation} under the polynomial aspect ratio assumption, and later show how to achieve this assumption for any instance.

\begin{claim}
    The statement of Theorem~\ref{theorem:metric_approximation} holds in the case when the aspect ratio of the input instance is bounded by $n^{O(1)}$.
    \label{claim:bounded_ratio}
\end{claim}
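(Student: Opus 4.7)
The plan is to combine the universal coreset from Theorem~\ref{th:coresetthmoverlap} (resp.\ Theorem~\ref{th:coresetthmkmeans} for $k$-means) with an exhaustive leader-and-distance guessing scheme, in the spirit of the capacitated-clustering algorithm of \cite{Cohen-AddadL19}. First I would run the coreset construction with a small error parameter $\epsilon_0 = \Theta(\epsilon)$ to obtain a weighted set $W$ of size $|W| = \Gamma(k\log n/\epsilon)^{O(1)}$. All subsequent search will be performed on $W$, so the running time of the enumeration is only polylogarithmic in $n$. Next, using the assumption that the aspect ratio is $n^{O(1)}$, I would build a discretized set of candidate distances $\mathcal{D} = \{(1+\epsilon) d_{\min}, (1+\epsilon)^2 d_{\min}, \ldots \}$; the bounded aspect ratio makes $|\mathcal{D}| = O(\log n /\epsilon)$. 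I then enumerate all tuples $(\ell_1, \ldots, \ell_k) \in W^k$ and $(r_1, \ldots, r_k) \in \mathcal{D}^k$, giving at most $|W|^k \cdot (\log n/\epsilon)^k$ guesses.

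For each guess, I would pick an arbitrary center $c_i \in F$ with $d(\ell_i, c_i) \in [r_i, (1+\epsilon) r_i]$ (discarding the guess if no such $c_i$ exists), and then solve the fair assignment problem on $W$ with the center set $C = \{c_1, \ldots, c_k\}$ via Theorem~\ref{theorem:assignment}. The best guess across all enumerations gives the set of centers $C^\star$ of smallest cost on $W$. Finally I would lift the assignment on $W$ to an assignment on $P$ via Lemma~\ref{lemma:restore_assignment}, losing only an extra $(1+\epsilon)$ factor.

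The correctness argument hinges on one specific ``correct'' guess. Fix an optimal fair clustering of $W$ with centers $c_1^\star, \ldots, c_k^\star$ and cluster $W_i^\star$ assigned to $c_i^\star$. Let $\ell_i^\star := \arg\min_{p \in W_i^\star} d(p, c_i^\star)$ be the true leader, and let $r_i^\star$ be the value of $\mathcal{D}$ with $d(\ell_i^\star, c_i^\star) \in [r_i^\star, (1+\epsilon) r_i^\star]$; I will analyze this particular guess. By triangle inequality, for any $p \in W_i^\star$,
\begin{equation*}
d(p, c_i) \le d(p, c_i^\star) + d(c_i^\star, \ell_i^\star) + d(\ell_i^\star, c_i) \le d(p, c_i^\star) + (2 + O(\epsilon)) \, d(c_i^\star, \ell_i^\star) \le (3 + O(\epsilon)) \, d(p, c_i^\star),
\end{equation*}
where the last step uses $d(c_i^\star, \ell_i^\star) \le d(c_i^\star, p)$ by the choice of $\ell_i^\star$. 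Thus the assignment that sends each point of $W_i^\star$ to $c_i$ is a valid fair assignment of $W$ to $C$ with total $k$-median cost at most $(3 + O(\epsilon)) \cdot \faircost(W)$; since Theorem~\ref{theorem:assignment} returns the optimal fair assignment on this $C$, its cost is also $(3+O(\epsilon)) \cdot \faircost(W)$. For $k$-means the identical scheme gives a $(3+O(\epsilon))^2 = 9+O(\epsilon)$ factor by squaring the triangle-inequality bound. The coreset property then transfers these bounds to $P$ (up to an $\epsilon$-factor), and Lemma~\ref{lemma:restore_assignment} transfers the assignment. Rescaling $\epsilon_0$ yields the final approximation factor.

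For the running time, enumeration produces $|W|^k (\log n/\epsilon)^{O(k)} = (k\Gamma)^{O(k)}/\epsilon^{O(k)} (\log n)^{O(k)}$ guesses, each requiring $(k\Gamma)^{O(k\Gamma)} |W|^{O(1)}$ time for solving the assignment MILP, all of which is polylogarithmic in $n$. The only parts scaling with $n$ are the coreset construction ($O((k+\ell)n)$) and the final assignment lifting via Lemma~\ref{lemma:restore_assignment} ($O(\Gamma k^3/\epsilon^2 \cdot n \log n)$), matching the stated bound. The main technical obstacle is ensuring that the leader-based reassignment on $W$ is actually a valid \emph{fair} assignment, since it reassigns points of $W_i^\star$ to $c_i$: this is automatic because the composition of each cluster is preserved exactly (only the center moves from $c_i^\star$ to $c_i$), so every fairness constraint that held for the optimal assignment still holds. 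A subtler point is that our enumeration is over $W$ rather than $P$, so the ``true optimum'' we must compete with is $\faircost(W)$; this is handled by applying the coreset approximation guarantee once at the start and once at the end, absorbing both errors into the $O(\epsilon)$ slack.
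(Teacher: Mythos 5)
Your proposal is correct and follows essentially the same route as the paper: build a universal coreset, enumerate leaders from $W$ and discretized distances (using the polynomial aspect ratio), solve the fair assignment on $W$ for each guessed center set via Theorem~\ref{theorem:assignment}, and lift the best solution back to $P$, with the identical triangle-inequality argument giving the $(3+O(\epsilon))$ (and squared, $(9+O(\epsilon))$) factor and fairness preserved because cluster compositions are unchanged. The only cosmetic difference is that you compete directly against $\faircost(W)$ and transfer via the coreset guarantee at the end, whereas the paper fixes an optimal $C^*$ for $P$ and works with $\faircost(W,C^*)$; both are equivalent up to the same $(1\pm\epsilon_0)$ factors.
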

\begin{proof}
    For now, focus on the case of $k$-median. Fix a small positive number $\epsilon_0 < \epsilon$ that will be defined later.   
    We start by computing a universal coreset $W$ of size $O(\Gamma (k \log n)^2 \epsilon_0^{-3})$ by Theorem~\ref{th:coresetthmoverlap}, applied with the error parameter $\epsilon_0$. Then we try all possible sets of $k$ points $l_1$, \ldots, $l_k$ out of the points in the coreset $W$. We also try all possible sets of $k$ values $R_1$, \ldots, $R_k$, where each $R_i$ ranges from the minimum distance between the points in the space to the maximum distance, taking values that are powers of $(1 + \epsilon_0)$ times the minimum distance. Thus, there are $|W|^k$ choices of $l_1$, \ldots, $l_k$, and $(\log n / \epsilon_0)^{O(k)}$ choices for $R_1$, \ldots, $R_k$, since the ratio of maximum distance to minimum distance is at most $n^{O(1)}$.
    Now, for every choice of $l_1$, \ldots, $l_k$ and $R_1$, \ldots, $R_k$, we take a tuple of $k$ centers $C = (c_1, \ldots, c_k)$ such that $d(l_i, c_i) \in [R_i, (1 + \epsilon_0)R_i)$ for every $i \in [k]$. If for $i \in [k]$ there are multiple choice of $c_i$, we take any one of them. If for some $i \in [k]$ there is no suitable $c_i$, we continue to the next choice of $l_1$, \ldots, $l_k$, and $R_1$, \ldots, $R_k$. After the centers are fixed, we run the assignment algorithm given by Theorem~\ref{theorem:assignment} on the coreset $W$ and the centers $C$. Out of all considered tuples of centers, we select the one with the lowest cost of the assignment. Then we compute a fair assignment from $P$ to these centers with the help of Lemma~\ref{lemma:approx_assignment}, and return the assignment  and the centers. This concludes the algorithm.

    For the proof of correctness, consider an optimal solution $C^* = \{c_1^*, \ldots, c_k^*\}$. Since $W$ is a universal coreset of $P$, $\faircost(P, C^*) \le (1 + \epsilon_0) \faircost(W, C^*)$. Consider an optimal assignment $\varphi$ from $W$ to $C^*$ achieving the cost of $\faircost(W, C^*)$. Take $l_1^*$, \ldots, $l_k^*$ such that $l_i^*$ is the closest point to $c_i^*$ among the points in $\varphi^{-1}(c_i^*)$, for each $i \in [k]$. Here by $\varphi^{-1}(c_i^*)$ we mean the set of points in $W$ such that $\varphi$ sends positive weight from them to $c_i^*$. Take $R_1^*$, \ldots, $R_k^*$ such that for each $i \in [k]$, $R_i^* = (1 + \epsilon_0)^t m$ for a certain nonnegative integer $t$, where $m$ is the minimum distance between the points, and $R_i^* \le d(l_i^*, c_i^*) < (1 + \epsilon_0) R_i^*$. At some point, the algorithm considers the choice of $l_1^*$, \ldots, $l_k^*$ and $R_1^*$, \ldots, $R_k^*$, take the tuple of centers $C = (c_1, \ldots, c_k)$ obtained by the algorithm at this iteration. We know that $C$ exists since $(c_1^*, \ldots, c_k^*)$ is one of the possible choices for $C$. 
    Consider the assignment $\psi$ from $W$ to $C$ that behaves in the same way as $\varphi$: for each $i \in [k]$, $\psi$ sends to $c_i$ exactly the same weight from the same points in $W$, as $\varphi$ does to $c_i^*$. Clearly, $\psi$ is a fair assignment since the composition of each cluster is exactly the same as for $\varphi$. Now we bound the cost of $\psi$, for each point $x$ in the coreset $W$ and each center $c_i$ such that a positive weight is assigned from $x$ to $c_i$ by $\psi$, it holds that
    \[d(x, c_i) \le d(x, l_i^*) + d(l_i^*, c_i) \le d(x, c_i^*) + d(c_i^*, l_i^*) + d(l_i^*, c_i) \le d(x, c_i^*) + (2 + \epsilon_0) d(c_i^*, l_i^*).\]
    The first two inequalities are by triangle inequality, and the last is since $d(l_i^*, c_i^*)$ is at least $R_i^*$, and $d(l_i^*, c_i)$ is at most $(1 + \epsilon_0) R_i^*$. Moreover, $l_i^*$ is chosen in a way that $d(l_i^*, c_i^*) \le d(x, c_i^*)$, thus $d(x, c_i) \le (3 + \epsilon_0) d(x, c_i^*)$. Now, the total cost of $\psi$ is
    \begin{multline*}
    \sum_{x \in W} \sum_{i = 1}^k \psi(x, c_i) \cdot d(x, c_i) \le \sum_{x \in W} \sum_{i = 1}^k (3 + \epsilon_0) \psi(x, c_i) \cdot d(x, c_i^*) = (3 + \epsilon_0) \sum_{x \in W} \sum_{i = 1}^k \varphi(x, c_i^*) \cdot d(x, c_i^*)\\ = (3 + \epsilon_0) \cdot \faircost(W, C^*) \le (3 + \epsilon_0) (1 + \epsilon_0) \faircost(P, C^*).
    \end{multline*}
    Observe that $\faircost(P, C) \le \frac{1}{1 - \epsilon_0} \faircost(W, C)$ since $W$ is a universal coreset. The assignment $\psi$ is a particular fair assignment form $W$ to $C$, thus its cost is at least $\faircost(W, C)$, and finally we get
    \[\faircost(P, C) \le \frac{1}{1 - \epsilon_0} (3 + \epsilon_0) (1 + \epsilon_0) \faircost(P, C^*).\]
    Recall that $\faircost(P, C^*)$ is the cost of an optimal solution, and that Lemma~\ref{lemma:approx_assignment} returns a fair assignment of cost at most $(1 + \epsilon_0) \faircost(P, C)$. Thus setting $\epsilon_0$ small enough such that $(3 + \epsilon_0) \frac{ (1 + \epsilon_0)^2}{1 - \epsilon_0}$ is at most $(3 + \epsilon)$, provides the desired approximation.

For the running time, recall that first we compute the coreset in time $O(n (k + l))$, and then we consider 
\[|W|^k (\log n / \epsilon_0)^{O(k)} = (\Gamma(k \log n)^2/\epsilon_0^{3})^k (\log n / \epsilon_0)^{O(k)} = (k \Gamma \log n / \epsilon_0)^{O(k)}\]
tuples of $k$ centers, and for each of them we run the assignment algorithm in time $(k\Gamma)^{O(k \Gamma)} (\log n / \epsilon_0)^{O(1)}$. Thus, the total running time is $n (k + l) + (k\Gamma)^{O(k \Gamma)} (\log n)^{O(k)} / \epsilon^{O(k)}$. Note that for any constant $c > 0$, $(\log n)^{O(k)}$ might be upper-bounded by $n^c + k^{O_c(k)}$, and we can bound the total running time required to find the best centers by $(k\Gamma)^{O(k \Gamma)} / \epsilon^{O(k)} \cdot n$. Finally, an additional term of $O(\Gamma k^3 / \epsilon^2 n \log n)$ is from Lemma~\ref{lemma:restore_assignment}.

Now to the case of fair $k$-means. The algorithm and analysis are essentially the same, up to a few minor details. For the coreset construction here we use Theorem~\ref{th:coresetthmkmeans} that constructs a universal coreset with respect to the $k$-means objective. The size of the coreset is still bounded by $\Gamma (k \log n/ \epsilon)^{O(1)}$. Now the only difference is the bound on the cost of the assignment $\psi$. It becomes
    \begin{multline*}
    \sum_{x \in W} \sum_{i = 1}^k \psi(x, c_i) \cdot d(x, c_i)^2 \le \sum_{x \in W} \sum_{i = 1}^k (3 + \epsilon_0)^2 \psi(x, c_i) \cdot d(x, c_i^*)^2 = (3 + \epsilon_0)^2 \sum_{x \in W} \sum_{i = 1}^k \varphi(x, c_i^*) \cdot d(x, c_i^*)^2\\ = (3 + \epsilon_0)^2 \cdot \faircost(W, C^*) \le (3 + \epsilon_0)^2 (1 + \epsilon_0) \faircost(P, C^*).
    \end{multline*}
    Analogously, we obtain
    \[\faircost(P, C) \le \frac{1}{1 - \epsilon_0} (3 + \epsilon_0)^2 (1 + \epsilon_0) \faircost(P, C^*),\]
    and we set $\epsilon_0$ small enough such that $(3 + \epsilon_1)^2 \frac{(1 + \epsilon_0)^2}{1 - \epsilon_0} \le 9 + \epsilon$ to finally get the desired $(9 + \epsilon)$ approximation.
\end{proof}

\subsection{Polynomial Aspect Ratio}

We follow the standard trick to reduce the aspect ratio of the instance, see e.g.~\cite{Cohen-AddadL19}. For that, we require an estimate of the cost of the optimal solution.
So we start with showing a $O(n)$-factor approximation algorithm for \probFair. This algorithm combines the simple linear time $O(n)$-approximation to the vanilla clustering problem, then the argument due to Bera et al.~\cite{bera2019fair} that a set of centers that provides a good approximation w.r.t. vanilla clustering objective is also good enough for the purpose of fair clustering, and finally our assignment algorithm given by Theorem~\ref{theorem:assignment}. We state a slight modification of the result of Bera et al.~\cite{bera2019fair} first.

\begin{proposition}[Lemma 3 in \cite{bera2019fair}]
    Assume we are given a $\rho$-approximation algorithm $\mathcal{A}$ for $k$-{median}. Run $\mathcal{A}$ on the input set of points $P$ and denote by $C$ the returned set of centers. It holds that $\faircost(P, C)$ is at most $(\rho + 2)$ times the cost of an optimal solution to $(\alpha,\beta)$-Fair $k$-{median} on $P$.
    The same holds for $k$-{means}s and $(\alpha,\beta)$-Fair $k$-{means}s, only the cost factor is $(\rho + 2)^2$.
    \label{proposition:assignment_reduction}
\end{proposition}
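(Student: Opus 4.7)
The plan is to exhibit an explicit fair assignment $\varphi : P \to C$ whose cost is bounded by the claimed factor times $\mathrm{OPT}_f := \faircost(P)$; since $\faircost(P, C)$ is by definition the minimum cost of a fair assignment to $C$, this immediately yields the statement. Let $C^* = \{c_1^*, \ldots, c_k^*\}$ and $\varphi^* : P \to C^*$ be an optimal fair clustering, and let $\mathrm{OPT}_v$ denote the optimal vanilla $k$-median (resp. $k$-means) cost on $P$. The trivial observation that reassigning each point to its nearest center in $C^*$ can only decrease cost shows $\mathrm{OPT}_v \le \mathrm{OPT}_f$; combined with the hypothesis on $\mathcal{A}$, this gives $\mathrm{cost}_v(P, C) \le \rho \cdot \mathrm{OPT}_v \le \rho \cdot \mathrm{OPT}_f$.

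I would construct $\varphi$ as follows. Define $\pi : C^* \to C$ by mapping each $c_i^*$ to its nearest point in $C$ (breaking ties arbitrarily), and set $\varphi(p) := \pi(\varphi^*(p))$. The first step is to verify that $\varphi$ satisfies Definition~\ref{definition:fair}. For each $c \in C$, its cluster under $\varphi$ is $\varphi^{-1}(c) = \bigsqcup_{c_i^* \in \pi^{-1}(c)} (\varphi^*)^{-1}(c_i^*)$, a disjoint union of clusters of $\varphi^*$. Each sub-cluster $(\varphi^*)^{-1}(c_i^*)$ satisfies both $\beta_q |(\varphi^*)^{-1}(c_i^*)| \le |P_q \cap (\varphi^*)^{-1}(c_i^*)|$ and $|P_q \cap (\varphi^*)^{-1}(c_i^*)| \le \alpha_q |(\varphi^*)^{-1}(c_i^*)|$ for every color $q \in [\ell]$, and summation over $c_i^* \in \pi^{-1}(c)$ preserves both linear inequalities for the merged cluster $\varphi^{-1}(c)$. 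This even holds in the overlapping-group setting, since a single point contributes identically on both sides.

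The second step bounds $\mathrm{cost}(\varphi)$. For $k$-median, two applications of the triangle inequality give $d(p, \varphi(p)) \le d(p, \varphi^*(p)) + d(\varphi^*(p), \pi(\varphi^*(p)))$, and since $\pi(\varphi^*(p))$ is the nearest center of $C$ to $\varphi^*(p)$ we have $d(\varphi^*(p), \pi(\varphi^*(p))) \le d(\varphi^*(p), c(p)) \le d(p, \varphi^*(p)) + d(p, C)$, where $c(p)$ denotes the nearest center in $C$ to $p$. Summing yields $\mathrm{cost}(\varphi) \le 2\,\mathrm{OPT}_f + \mathrm{cost}_v(P, C) \le (\rho + 2)\,\mathrm{OPT}_f$. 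For $k$-means, I would invoke Minkowski's inequality on the $\ell_2$-norm of the pointwise-distance vector to obtain $\mathrm{cost}(\varphi)^{1/2} \le \mathrm{OPT}_f^{1/2} + \bigl(\sum_p d(\varphi^*(p), \pi(\varphi^*(p)))^2\bigr)^{1/2}$, and apply Minkowski once more together with the same pointwise decomposition to bound the second term by $\mathrm{OPT}_f^{1/2} + \mathrm{cost}_v(P, C)^{1/2} \le (1 + \sqrt{\rho})\,\mathrm{OPT}_f^{1/2}$. Squaring gives $\mathrm{cost}(\varphi) \le (2 + \sqrt{\rho})^2\, \mathrm{OPT}_f \le (\rho + 2)^2\, \mathrm{OPT}_f$ whenever $\rho \ge 1$.

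The construction is essentially mechanical; the only point deserving care is that the cluster-merging operation induced by $\pi$ preserves the fairness inequalities, which is immediate from linearity but worth spelling out explicitly because the overlapping-group model can be deceptively subtle. The cost analysis is a straightforward triangle-inequality calculation, with Minkowski doing the analogous bookkeeping in the $k$-means case.
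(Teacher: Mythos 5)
Your proof is correct. The paper does not actually prove this statement from first principles: it treats Lemma~3 of Bera et al.\ as a black box for the $k$-median case, and its only original content is the remark that Bera et al.\ state the $k$-means objective as the \emph{square root} of the sum of squared distances, so their $(\rho+2)$ factor must be squared. What you have written is, in effect, a reconstruction of the proof of that cited lemma: route each point through its optimal fair center $\varphi^*(p)$ to the nearest approximate center $\pi(\varphi^*(p))$, observe that merging clusters preserves the fairness ratios by linearity (valid also with overlapping groups, as you note), and charge the detour via $d(\varphi^*(p), \pi(\varphi^*(p))) \le d(p,\varphi^*(p)) + d(p,C)$ together with $\mathrm{OPT}_v \le \faircost(P)$. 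Your Minkowski-based $k$-means analysis even yields the slightly sharper constant $(2+\sqrt{\rho})^2 \le (\rho+2)^2$ for $\rho \ge 1$, which implies the stated bound. The only cosmetic caveat is that $\pi$ need not be surjective, so some centers of $C$ may receive empty clusters; the fairness constraints are vacuous there, so nothing breaks, but it deserves a half-sentence since $\faircost(P,C)$ ranges over assignments to all of $C$.
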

\begin{proof}
The statement for $k$-{median} is exactly a special case of Lemma 3 in \cite{bera2019fair} where we only restrict to $k$-{median} and $k$-{means}s, and the assignment algorithm has no violation of the constraints.
For $k$-{means}s, Lemma 3 in \cite{bera2019fair} holds for the same $k$-{means}s and $(\alpha,\beta)$-Fair $k$-{means}s we consider in this paper, with the only difference that their objective function is the square root of the sum of squared distances. Thus, from their lemma we immediately get that the square root of the cost of the approximate solution is at most $(\rho + 2)$ times the square root of the cost of the optimal solution. Squaring both sides provides the approximation factor of $(\rho + 2)^2$.
\end{proof}

To achieve a linear-time algorithm, we are rather restricted in what kind of algorithm we can use to get the initial approximation for the vanilla clustering. Thus we use a simple $O(n)$-approximation given by the classical $k$-center algorithm that is enough for our purposes. We also need to use the coreset construction as an intermediate step, so that computing the fair assignment takes time sublinear in $n$. Note that in this result, we do not aim to return the actual fair assignment, just the approximation to the cost.

\begin{lemma}
    There exists a $O(n)$-factor approximation algorithm for computing the optimal cost in both $(\alpha,\beta)$-Fair $k$-{median} and $(\alpha,\beta)$-Fair $k$-{means}s, with the running time of
    $(k\Gamma)^{O(k \Gamma)} \cdot n$.
    \label{lemma:metric_approximation}
\end{lemma}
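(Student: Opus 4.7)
The plan is to combine three ingredients that are already available in the paper: a linear-time $O(n)$-approximation for vanilla $k$-median/$k$-means via the classical farthest-point $k$-center heuristic of Gonzalez, the observation (Proposition~\ref{proposition:assignment_reduction}) that any $\rho$-approximation to the vanilla problem produces centers for which some fair assignment costs at most $(\rho+2)^{O(1)}$ times the optimal fair clustering cost, and our own assignment machinery from Section~\ref{sec:assignment}.

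First, I would run Gonzalez's farthest-point algorithm on $P$ in $O(nk)$ time to pick a set $C$ of $k$ centers. The returned clustering has every point within distance $r \le 2\,\mathrm{OPT}_{k\text{-center}}$ of its nearest chosen center, and since $\mathrm{OPT}_{k\text{-center}} \le \mathrm{OPT}_{\text{vanilla }k\text{-median}}$ and $\mathrm{OPT}_{k\text{-center}}^2 \le \mathrm{OPT}_{\text{vanilla }k\text{-means}}$, the bounds $\sum_p d(p,C) \le n r$ and $\sum_p d(p,C)^2 \le n r^2$ immediately give an $O(n)$-approximation for vanilla clustering under either objective. Applying Proposition~\ref{proposition:assignment_reduction} then yields $\faircost(P,C) \le O(n)\cdot \faircost(P)$ in the $k$-median case and $\faircost(P,C) \le O(n^2)\cdot \faircost(P)$ in the $k$-means case; either bound is polynomial in $n$ and suffices for the aspect-ratio reduction that this lemma is meant to feed.

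To actually report an approximate value of $\faircost(P,C)$ within the prescribed running time, I would not call Theorem~\ref{theorem:assignment} on $P$ directly, since its $n^{O(1)}$ dependency on the input size is too costly. Instead I would invoke Lemma~\ref{lemma:approx_assignment} on $P$ and $C$ with a constant error parameter, say $\epsilon = 1/2$. That lemma internally builds a universal coreset of polylogarithmic size, solves the fair assignment on the coreset via Theorem~\ref{theorem:assignment}, and lifts the answer back to $P$ by a bipartite min-cost flow; the total cost is $(k\Gamma)^{O(k\Gamma)}(\log n)^{O(1)} + O(\Gamma k^3 \, n\log n)$, and the output cost is within a factor $3/2$ of $\faircost(P,C)$. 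Adding the $O(nk)$ for Gonzalez gives the stated $(k\Gamma)^{O(k\Gamma)}\cdot n$ running time.

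The only real subtlety is that we cannot touch Theorem~\ref{theorem:assignment} on the full $n$ points, so all of the fair-assignment work has to be routed through the universal coreset inside Lemma~\ref{lemma:approx_assignment}. The constant-factor slack $1+\epsilon$ incurred by going to the coreset and back is harmlessly absorbed into the $\mathrm{poly}(n)$ factor already lost at the Proposition~\ref{proposition:assignment_reduction} step; this bookkeeping, together with checking that the extra $O(\Gamma k^3\, n\log n)$ term from Lemma~\ref{lemma:approx_assignment} is dominated by $(k\Gamma)^{O(k\Gamma)}\cdot n$, is the only thing the proof has to spell out carefully.
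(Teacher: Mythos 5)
Your overall route is the paper's: Gonzalez's $k$-center heuristic for a linear-time $O(n)$-approximation to the vanilla objective, Proposition~\ref{proposition:assignment_reduction} to transfer the resulting center set to the fair problem, and a universal coreset plus Theorem~\ref{theorem:assignment} to evaluate $\faircost(P,C)$ in FPT time. Two details, however, keep your write-up from matching the statement as claimed. First, the running time: you route the fair-assignment work through Lemma~\ref{lemma:approx_assignment}, which not only solves the assignment on the coreset but also \emph{restores} a fair assignment of all of $P$ via a min-cost flow, costing an extra $O(\Gamma k^3\, n\log n)$. That term is not dominated by $(k\Gamma)^{O(k\Gamma)}\cdot n$ --- an $n\log n$ factor cannot be absorbed into $f(k,\Gamma)\cdot n$ for any function $f$ --- so your algorithm exceeds the stated bound. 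The lemma only asks for the \emph{cost}, so the restoration step is unnecessary: the paper simply builds the coreset $W$ (Theorem~\ref{th:coresetthmoverlap} with a constant error parameter) and runs Theorem~\ref{theorem:assignment} on $(W,C)$ directly, which takes $(k\Gamma)^{O(k\Gamma)}(\log n)^{O(1)}$ time and, by the universal-coreset guarantee, perturbs the cost by only a constant factor.

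Second, the approximation factor for $k$-means. Plugging $\rho=O(n)$ into the literal $(\rho+2)^2$ of Proposition~\ref{proposition:assignment_reduction} gives $O(n^2)$, as you note, but the lemma claims $O(n)$ for both objectives. The resolution is in the proof of that proposition: the factor $(\rho+2)$ there applies to the \emph{square root} of the $k$-means cost, so an $O(n)$-approximation in the sum-of-squares objective is an $O(\sqrt{n})$-approximation in the square-root objective and yields a fair-cost factor of $(O(\sqrt{n})+2)^2=O(n)$. (Your direct chain $\sum_p d(p,C)^2\le 4n\cdot\mathrm{OPT}_{k\text{-center}}^2\le 4n\cdot\mathrm{OPT}$ for the vanilla bound is a clean alternative to the paper's trick of running the min-max algorithm on squared distances.) Neither slip harms the downstream use in Lemma~\ref{lemma:aspect_ratio}, which only needs a polynomial estimate, but both need fixing to prove the lemma exactly as stated.
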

\begin{proof}
    For $k$-median, we start with computing the initial approximation using the min-max algorithm for $k$-center \cite{gonzalez1985clustering} in time $O(nk)$. It is well-known that this gives a $O(n)$-approximation of the $k$-{median} objective. For the obtained set $C$ of $k$ centers, by Proposition~\ref{proposition:assignment_reduction} it holds that $\faircost(P, C)$ is at most $O(n)$ times the optimal fair clustering cost of $P$. So it only remains to run the assignment algorithm. First, we compute a universal coreset $W$ of $P$ of size $O(\Gamma(k \log n)^{O(1)})$ by Theorem~\ref{th:coresetthmoverlap}, using a constant error parameter. Then we run the assignment algorithm given by Theorem~\ref{theorem:assignment} on the weighted points $W$ and the centers $C$. By definition of a universal coreset, the cost is changed by at most a constant factor, thus the cost of the fair assignment achieved by the algorithm is an $O(n)$-approximation of the optimal solution on the original points $P$. The total running time of the algorithm is $O(nk + n(k + l) + (k\Gamma)^{O(k \Gamma)} \cdot (\log n)^{O(1)})$.

    For $k$-means, the only difference is that we run the $k$-center min-max algorithm using distances given by $d'(u, v) = d(u, v)^2$ for all $u, v \in \mathcal{X}$. This is not a metric, but it holds that $d'(u, v) \le 2(d'(u, w) + d'(w, v))$ for all $u, v, w \in \mathcal{X}$, since $d(u, v)^2 \le (d(u, w) + d(w, v))^2 \le 2(d(u, w)^2 + d(w, v)^2)$. The min-max algorithm still obtains a $O(1)$-approximation for the $k$-center objective with such a relaxed triangle inequality, and thus a $O(n)$-approximation for the $k$-means objective with the original distances given by $d$. The rest is the same, but we invoke Theorem~\ref{th:coresetthmkmeans} to obtain the coreset.
\end{proof}

Finally, we show how to reduce an arbitrary instance of \probFair to an equivalent one that has polynomial aspect ratio by modifying the distances.

\begin{lemma}
    Given an instance of \probFair in the metric space with the distance function $d$, we can construct a distance function $d'$ such that the cost of any $n^{10}$-approximate solution changes by at most a factor of $1 + 1/n$. The distance function $d'$ has polynomial aspect ratio. This requires the preprocessing time of $(k\Gamma)^{O(k \Gamma)} \cdot n$.
    \label{lemma:aspect_ratio}
\end{lemma}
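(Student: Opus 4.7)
The plan is to use the standard scaling-and-truncation trick, leveraging the near-linear time $O(n)$-approximation of Lemma~\ref{lemma:metric_approximation} as a cost estimate. First, I would run Lemma~\ref{lemma:metric_approximation} on the instance to obtain, in time $(k\Gamma)^{O(k\Gamma)} n$, a value $B$ satisfying $\mathrm{OPT} \le B \le O(n)\cdot \mathrm{OPT}$, where $\mathrm{OPT}$ denotes the cost of an optimal fair clustering. Setting $\delta := B/n^{10}$ and $T := n^{10}B$, I would define
\[
    d'(x,y) = \min\{T,\; \max\{\delta,\; d(x,y)\}\} \quad \text{for } x\ne y, \qquad d'(x,x)=0.
\]
Note that $d'$ need not be materialized: each query is answered in $O(1)$ time by invoking the oracle for $d$ and comparing to $\delta, T$. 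Thus the full preprocessing cost is dominated by the computation of $B$, matching the claimed $(k\Gamma)^{O(k\Gamma)}\cdot n$ bound. The aspect ratio of $d'$ is then at most $T/\delta = n^{20}$, polynomial in $n$ as required.

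Second, I would verify that $d'$ remains a metric, since Claim~\ref{claim:bounded_ratio} uses triangle inequality in its leader-radius analysis. The only nontrivial check is the triangle inequality; a short case analysis based on whether each of $d(x,y), d(x,z), d(z,y)$ lies in $[0,\delta]$, $[\delta,T]$, or $(T,\infty)$, combined with the triangle inequality for $d$, handles every case. For instance, when $d(x,y) > T$, the fact that $d(x,z) + d(z,y) \ge d(x,y) > T$ forces $d'(x,z) + d'(z,y) \ge T = d'(x,y)$.

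Third, I would bound the cost distortion. Fix any fair assignment $\varphi$ that is $n^{10}$-approximate under $d$, so $\mathrm{cost}_d(\varphi) \le n^{10}\cdot\mathrm{OPT} \le n^{10} B = T$. Since the total cost does not exceed $T$, no individual assignment distance can exceed $T$ either, so upper truncation has no effect. Lower truncation inflates each of the at most $n$ assignment distances by at most $\delta$, giving
\[
    0 \;\le\; \mathrm{cost}_{d'}(\varphi) - \mathrm{cost}_d(\varphi) \;\le\; n\delta \;=\; B/n^{9}.
\]
Combining with $\mathrm{cost}_d(\varphi) \ge \mathrm{OPT} \ge B/O(n)$ shows the additive error is at most a $1/n$-fraction of $\mathrm{cost}_d(\varphi)$, yielding the required $(1+1/n)$ multiplicative bound. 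The same argument handles the direction needed for plugging Claim~\ref{claim:bounded_ratio} in reverse: a solution with $d'$-cost at most $O(\mathrm{OPT})$ has every individual $d'$-distance well below $T$, hence incurs no upper truncation, so its $d$-cost is at most its $d'$-cost plus $n\delta$. The main (minor) obstacle is simply verifying the metric axioms for $d'$; once that is dispatched, the estimate $n\delta \le \mathrm{OPT}/n$ and the choice of $\delta, T$ as polynomial-in-$n$ scalings of $B$ immediately deliver both the aspect-ratio bound and the cost-preservation guarantee.
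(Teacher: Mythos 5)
Your proposal is correct and follows essentially the same route as the paper: obtain a cost estimate $D$ via the linear-time $O(n)$-approximation of Lemma~\ref{lemma:metric_approximation}, truncate distances above at $\Theta(n^{10}D)$ (harmless for any $n^{10}$-approximate solution since no used edge can exceed the total cost), and inflate the bottom of the range by $D/\mathrm{poly}(n)$ so that the total additive error is at most a $1/n$-fraction of $\mathrm{OPT}$. The only cosmetic difference is that you clamp small distances to $\delta$ via a $\max$ (requiring your short triangle-inequality case check), whereas the paper adds $D_{\min}$ to every pairwise distance, which preserves the metric trivially; both yield the same aspect-ratio and cost-preservation guarantees.
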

\begin{proof}
    We state first how $d'$ is obtained from $d$. By Lemma~\ref{lemma:metric_approximation} compute $D$ that is a $O(n)$-approximation of the cost of an optimal solution. Set $D_{max} = 2n^{10} D$ and $D_{min} = \alpha D / n^3$, for a sufficiently small constant $\alpha$. For all the distances that are larger than $D_{max}$, set them to $D_{max}$. Then increase the distance between every pair of points by $D_{min}$. Clearly, the distances still form a metric.

    No solution to the instance of \probFair that has the cost of at most $n^{10}$ times the optimal cost uses distances that are set to $D_{max}$, since $D_{max} = 2 n^{10} D$ and $D$ is at least the cost of the optimal solution. Thus the decreasing large distances to $D_{max}$ does not affect the instance. Due to increasing by $D_{min}$, the cost of any solution is increased by at most a factor of $(1 + 1/n)$, since $n \cdot D_{min} = \alpha D / n^2$. This is at most $1/n$ times the cost of the optimal solution, since $D$ is a $O(n)$-approximation.
    Now the aspect ratio is at most $D_{max} / D_{min} = O(n^{13})$.

    Note that running Lemma~\ref{lemma:metric_approximation} incurs the preprocessing time of $(k\Gamma)^{O(k \Gamma)} \cdot n$. After $D_{max}$ and $D_{min}$ are computed, the distance oracle for $d'$ is obtained from the distance oracle for $d$ by an extra constant time per query.
\end{proof}

Finally, we prove Theorem~\ref{theorem:metric_approximation} by combining Lemma~\ref{lemma:metric_approximation} and Claim~\ref{claim:bounded_ratio}.

\begin{proof}[Proof of Theorem~\ref{theorem:metric_approximation}]
    Apply Lemma~\ref{lemma:metric_approximation} to obtain the new instance of \probFair with polynomial aspect ratio. For every solution that is $3 + \epsilon$ (or $9 + \epsilon$ in the case of $k$-means), the cost w.r.t. the new instance is at least the cost w.r.t. the old instance, and at most $(1 + 1 / n)$ of that cost. Observe that $1 / \epsilon = o(n)$, otherwise all possible sets of centers can be trivially enumerated in time $n^k = (1/\epsilon)^{O(k)}$. Thus $(1 + 1/n) \le (1 + \epsilon/3)$, and invoking Claim~\ref{claim:bounded_ratio} with the error parameter $\epsilon/3$ finishes the proof.
\end{proof}

\section{Algorithms for Other Clustering Problems}
\label{sec:others}
We note that the algorithms for fair clustering in general metrics suggest a generic algorithm for any clustering problem with constraints, such that the constraints can be represented by a set of matrices. Here we state this algorithm. Let $\cd$ be the set of all possible distinct distances. Also, let $D_{\min}$ and $D_{\max}$ be the minimum and maximum distances in $\cd$, respectively. This algorithm has the following steps. 

\begin{itemize}
 \item Compute a universal coreset $W$. 
 \item For every pair of tuples $(l_1$, \ldots, $l_k)$ and $(R_1$, \ldots, $R_k)$ such that $l_i \in W$ for all $i$ and $R_j\in \cd$ for all $j$, do the following. 
 \begin{itemize}
  \item Select a set $C=\{c_1,\ldots,c_k\}$ of centers such that $c_i\in F$ and $d(l_i,c_i)\in [R_i,(1+\epsilon)R_i]$ for all $i$. If no such set $C$ exists, probe the next choice. 
  \item Find an assignment of the points in $P$ to the centers in $C$ of the minimum cost that satisfies the respective clustering constraints (assignment problem). 
 \end{itemize}
\item Return the set of centers and the assignment that minimizes the cost over all choices. 
\end{itemize}

From the analysis for fair clustering, we have the following theorem. 

\begin{theorem}\label{th:framework}
 Consider any clustering problem with constraint $K$, such that the constraint can be represented by a set of matrices, and suppose the aspect ratio $D_{\max}/D_{\min}$ is bounded by $\Delta$ for all instances. Moreover, suppose the universal coreset can be computed in $T_1(n,k,\ell)$ time and the assignment problem for the clustering problem can be solved in $T_2(n,k)$ time. Then one can obtain, w.h.p, a $(3+\epsilon)$- (resp. $(9+\epsilon)$-) approximation for $k$-median (resp. $k$-means) with constraint $K$ in time $T_1(n,k,\ell)+(\epsilon^{-1}k\Gamma\log (n+\Delta))^{O(k)}\cdot T_2(n,k)$.  
\end{theorem}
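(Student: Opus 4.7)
The plan is to mirror the proof of Theorem~\ref{theorem:metric_approximation}, extracting from it the essential structural ingredients and replacing the fair-clustering-specific components with the generic assumptions in the statement. Fix a sufficiently small constant $\epsilon_0$ depending on $\epsilon$ (to be pinned down at the end). First, invoke the assumed coreset oracle in time $T_1(n,k,\ell)$ to compute a universal coreset $W$; since universality means that for every set $C$ of $k$ centers and every valid constraint matrix $M$ we have $(1-\epsilon_0)\cdot \cost(P,M,C) \le \wcost(W,M,C) \le (1+\epsilon_0)\cdot\cost(P,M,C)$, every operation we later perform on $W$ will transfer to $P$ with at most a multiplicative $(1 \pm \epsilon_0)$ error. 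Next, enumerate all $|W|^k$ tuples of leaders $(l_1,\ldots,l_k) \in W^k$ and all $O(\log_{1+\epsilon_0}\Delta)^k = O(\epsilon_0^{-1}\log\Delta)^k$ tuples of radii $(R_1,\ldots,R_k)$, each $R_i$ being a power of $(1+\epsilon_0)$ in $[D_{\min},D_{\max}]$. For each guess, greedily pick any $c_i \in F$ with $d(l_i,c_i) \in [R_i,(1+\epsilon_0)R_i]$, forming a candidate $C=\{c_1,\ldots,c_k\}$; call the assumed assignment oracle on $(W,C)$ in time $T_2(n,k)$ under the constraint $K$, and retain the best-cost candidate.

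For correctness, let $C^\ast = \{c_1^\ast,\ldots,c_k^\ast\}$ be an optimal set of centers and consider an optimal constrained assignment $\varphi^\ast$ of $W$ to $C^\ast$, which by universality of $W$ approximates the true optimum within $(1+\epsilon_0)$. For each $i$, let $l_i^\ast$ be the point in $\varphi^{\ast-1}(c_i^\ast)$ closest to $c_i^\ast$, and let $R_i^\ast$ be the largest power of $(1+\epsilon_0)$ not exceeding $d(l_i^\ast,c_i^\ast)$. At the iteration corresponding to the guess $(l_i^\ast,R_i^\ast)_{i\in[k]}$ the algorithm picks a valid $c_i$ (existence is witnessed by $c_i^\ast$ itself), and the assignment on $W$ that copies the cluster composition of $\varphi^\ast$ is feasible with respect to $K$ since $K$ is expressed purely via matrices. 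For every coreset point $x$ assigned to $c_i$, the triangle inequality gives
\[
d(x,c_i) \le d(x,l_i^\ast) + d(l_i^\ast,c_i) \le d(x,c_i^\ast) + d(c_i^\ast,l_i^\ast) + (1+\epsilon_0)R_i^\ast \le d(x,c_i^\ast) + (2+\epsilon_0)\,d(l_i^\ast,c_i^\ast),
\]
and since $l_i^\ast$ is the closest assigned point to $c_i^\ast$, $d(l_i^\ast,c_i^\ast) \le d(x,c_i^\ast)$, yielding $d(x,c_i) \le (3+\epsilon_0)\,d(x,c_i^\ast)$. Summing weighted contributions across $W$, and accounting for the two $(1\pm\epsilon_0)$ coreset distortions (one to translate $\cost(P,\cdot,C^\ast)$ to $\wcost(W,\cdot,C^\ast)$, one back from $W$ to $P$ for the returned $C$), the reported solution has $k$-median cost at most $(3+\epsilon_0)(1+\epsilon_0)^2/(1-\epsilon_0)$ times the optimum; choosing $\epsilon_0 = \Theta(\epsilon)$ bounds this by $(3+\epsilon)$. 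For $k$-means, squaring the per-point bound yields a factor of $(3+\epsilon_0)^2$ instead, hence $(9+\epsilon)$ after tuning.

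For the running time, the coreset phase costs $T_1(n,k,\ell)$, and since the universal coresets produced in this paper have size $|W|=(\Gamma k\log n/\epsilon_0)^{O(1)}$, the total number of enumerated tuples is $|W|^k \cdot (\epsilon_0^{-1}\log\Delta)^k = (\epsilon^{-1}k\Gamma\log(n+\Delta))^{O(k)}$. Each such tuple requires one center-selection query and a single invocation of the assignment oracle, contributing $T_2(n,k)$, so the overall time matches the claimed $T_1(n,k,\ell) + (\epsilon^{-1}k\Gamma\log(n+\Delta))^{O(k)}\cdot T_2(n,k)$.

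The only subtle point --- and what I expect to be the one requiring care --- is verifying that the enumerated assignment on $W$ corresponding to the composition of $\varphi^\ast$ remains a valid input to the assignment oracle for the constraint $K$. This is precisely where the assumption that $K$ is representable by a set of matrices is used: feasibility depends only on cluster composition (per row/column counts of the matrix), which is preserved by construction; the center identities do not enter the constraint. Once this is observed, the rest is bookkeeping analogous to Claim~\ref{claim:bounded_ratio}.
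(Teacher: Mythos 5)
Your proposal is correct and follows essentially the same route as the paper: the paper proves Theorem~\ref{th:framework} by describing exactly this generic algorithm (coreset, leader/radius enumeration over $|W|^k\cdot(\epsilon^{-1}\log\Delta)^{O(k)}$ guesses, assignment oracle per guess) and then appealing to the analysis of Claim~\ref{claim:bounded_ratio}, whose triangle-inequality leader argument and coreset-distortion bookkeeping you reproduce faithfully, including the key observation that matrix-representable constraints depend only on cluster composition. The only cosmetic difference is that the paper's generic algorithm invokes the assignment oracle on $P$ directly (hence the $T_2(n,k)$ per iteration) rather than on $W$ with a lifting step, but this does not affect correctness or the stated running time.
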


From the above theorem, it is sufficient to (i) show that the aspect ratios of instances are bounded and (ii) design an efficient algorithm for the assignment problem, to obtain constant approximations for a clustering problem with constraints. We will use this theorem on various clustering problems. 

For the Euclidean version of clustering problems with constraints, we have the following generic algorithm. 

\begin{itemize}
 \item Compute a universal coreset $W$. 
 \item Apply the algorithm mentioned in Proposition \ref{proposition:constrained} to find the list $\mathcal{L}$ of candidate sets of centers.   
 \item For every set $C=\{c_1,\ldots,c_k\}$ of centers in $\mathcal{L}$, do the following. 
 \begin{itemize}
  \item Find an assignment of the points in $W$ to the centers in $C$ of the minimum cost that satisfies the respective clustering constraints (assignment problem). 
 \end{itemize}
\item Let $C'$ be the set of centers that minimizes the cost over all choices. Find an assignment of the points in $P$ to the centers in $C'$ of the minimum cost that satisfies the respective clustering constraints. 
\end{itemize}

From the analysis for fair clustering and Proposition \ref{proposition:constrained}, we know that $C'$ is an approximately optimal set of centers with constant probability. By repeating step 3 of the above algorithm $O(\log n)$ times,  we obtain this w.h.p. Hence, we have the following theorem. 

\begin{theorem}\label{th:frameworkRd}
 Consider any Euclidean clustering problem with constraint $K$, such that the constraint can be represented by a set of matrices. Moreover, suppose the universal coreset can be computed in $T_1(n,k,\ell,d)$ time and the assignment problem for the clustering problem can be solved in $T_2(n,k)$ time. Then one can obtain, w.h.p, a $(1+\epsilon)$-approximation for $k$-median (resp. $k$-means) with constraint $K$ in time $T_1(n,k,\ell,d)+2^{\tilde{O}(k/\epsilon^{O(1)})}\cdot (nd+\log n\cdot T_2(|W|,k))+T_2(n,k)$.  
\end{theorem}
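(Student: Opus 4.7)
The plan is to follow the generic algorithm stated just before the theorem and analyze it in three stages: compute a universal coreset, select a near-optimal candidate center set using Proposition~\ref{proposition:constrained}, and lift the selected assignment back to the original points. I would fix a parameter $\epsilon_0 = \Theta(\epsilon)$ and begin by constructing a universal coreset $W$ of $P$ with error $\epsilon_0$ in time $T_1(n, k, \ell, d)$. Then I would invoke Proposition~\ref{proposition:constrained} with error $\epsilon_0$ on the original points $P$ to obtain a list $\mathcal{L}$ of $2^{\tilde{O}(k/\epsilon_0^{O(1)})}$ candidate sets of $k$ centers in time $2^{\tilde{O}(k/\epsilon_0^{O(1)})} \cdot nd$, repeating this $\Theta(\log n)$ times to amplify the constant success probability to w.h.p., resulting in a list of size $2^{\tilde{O}(k/\epsilon_0^{O(1)})} \log n$.

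Next, for every candidate set $C \in \mathcal{L}$, I would run the constrained assignment algorithm on the (small) coreset $W$ at cost $T_2(|W|, k)$ per set, and pick the set $C'$ achieving the smallest assignment cost on $W$. The correctness argument mirrors the one for Theorem~\ref{theorem:linear_eptas}: let $C^*$ be centers of an optimal constrained clustering for $P$. By Proposition~\ref{proposition:constrained}, w.h.p.\ some $\tilde{C} \in \mathcal{L}$ achieves at most $(1+\epsilon_0)$ times the cost of the optimal clustering with the same cluster composition as $C^*$, and in particular the optimal constrained cost on $P$ with respect to $\tilde{C}$ is at most $(1+\epsilon_0)$ times the global optimum. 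Because $W$ is a universal coreset, for every set of centers and every constraint matrix the cost on $W$ is within a $(1 \pm \epsilon_0)$ factor of the cost on $P$. Chaining these inequalities (optimum on $W$ w.r.t.\ $C'$ is no larger than optimum on $W$ w.r.t.\ $\tilde{C}$, which is within $(1+\epsilon_0)$ of the optimum on $P$ w.r.t.\ $\tilde C$, which is within $(1+\epsilon_0)$ of the global optimum) shows that the cost of the constrained optimum on $P$ w.r.t.\ $C'$ is at most $\tfrac{(1+\epsilon_0)^2}{1-\epsilon_0}$ times the optimum for the original instance.

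Finally, I would run the exact assignment algorithm once more on the original points $P$ with respect to $C'$, at cost $T_2(n, k)$, yielding a feasible clustering of cost at most $\tfrac{(1+\epsilon_0)^2}{1-\epsilon_0} \cdot \text{OPT}$. Choosing $\epsilon_0$ so that this factor is at most $1 + \epsilon$ produces the claimed $(1+\epsilon)$-approximation. The running time sums to $T_1(n, k, \ell, d)$ for the coreset, $2^{\tilde{O}(k/\epsilon^{O(1)})} nd \log n$ for the $\Theta(\log n)$ invocations of Proposition~\ref{proposition:constrained} (absorbed into the $2^{\tilde{O}(k/\epsilon^{O(1)})} \cdot nd$ term times $\log n$), $2^{\tilde{O}(k/\epsilon^{O(1)})} \log n \cdot T_2(|W|, k)$ for solving the assignment problem on $W$ for every candidate, and $T_2(n, k)$ for the final assignment on $P$, matching the claimed bound.

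The main obstacle, and the only conceptually nontrivial step, is the hybrid argument showing that picking the best center set \emph{as evaluated on the coreset} still yields a near-optimal set \emph{on the original points}. This is precisely what the universal coreset definition is designed to do — it preserves the optimal cost simultaneously for every set of $k$ centers and every constraint matrix — so the key is to verify that the constraint matrices arising from the constrained assignment problem on $P$ and on $W$ are of the same form, which holds by assumption since the constraints of the problem are representable as a set of matrices. Everything else (boosting the success probability, handling the two separate invocations of the assignment algorithm, and collecting the time bounds) is routine.
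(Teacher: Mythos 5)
Your proposal is correct and follows essentially the same route as the paper: compute a universal coreset, generate candidate center sets via Proposition~\ref{proposition:constrained} on the original points (boosted by $\Theta(\log n)$ repetitions), select the best candidate by solving the assignment problem on the coreset, and make one final assignment call on $P$, with correctness obtained by chaining the coreset guarantee with the candidate-list guarantee exactly as in Theorem~\ref{theorem:linear_eptas}. The only (shared, harmless) imprecision is the bookkeeping of the extra $\log n$ factor on the $nd$ term from the repeated invocations of Proposition~\ref{proposition:constrained}, which the paper's stated bound also elides.
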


From the above theorem, it is sufficient to  design an efficient algorithm for the assignment problem, to obtain constant approximations for a clustering problem with constraints. In the following, we will use this theorem on various clustering problems. 

\subsection{Lower-Bounded Clustering}
In the lower-bounded clustering problem, we are given a lower bound parameter $L$ and the size of each cluster must be at least $L$. In the Euclidean version of the problem the set of points $P\subset \mathbb{R}^d$ and the set of centers $F=\mathbb{R}^d$. 

First, we note that the lower bound constraint can be represented by a set of $k\times 1$ column matrices such that each of the entries is at least $L$ and at most $n$. 

To apply Theorem \ref{th:framework}, we need to show two things as mentioned before. To show the bounded aspect ratio, we can argue in the same way as we did for fair clustering. The assignment problem for lower-bounded clustering can be modeled as a minimum cost network flow problem that can be solved in polynomial time. Indeed, the modeling is similar to the one in the proof of Lemma \ref{lemma:restore_assignment}. We construct a bipartite network where on one side we have the $n$ points of $P$ and on the other side the $k$ centers of $C$ and an additional node $w$. Source $s$ is connected to all points of $P$. Sink $t$ is connected to all centers through edges of capacity $L$. $t$ is also connected to $w$ through an edge of capacity $n-kL$. $w$ is connected to all points through an edge. The cost of the edges between points and centers are their respective distances. The cost between a point $p$ and $w$ is $d(p,C)$. The idea is to route $L$ flow to each center and $n-kL$ flow to $w$. This is equivalent to assigning at least $L$ points to each center using the minimum cost and assigning the remaining points to their closest neighbor in $C$. We also scale the distances, as mentioned in the proof of Lemma \ref{lemma:restore_assignment}. Then, the cost of the minimum cost flow in this network (with $n$ flow) is a $(1+\epsilon)$-approximation of the minimum assignment cost. Hence, by Proposition \ref{proposition_unbalanced_flow}, the assignment problem in this case can be solved in time $nk^2\epsilon^{-O(1)}\log n$. Hence, the constant approximations follow for this problem in general metrics in time $$O(nk)+(\epsilon^{-1}k\log n)^{O(k)}nk^2\epsilon^{-O(1)}\log n=(\epsilon^{-1}k\log n)^{O(k)}n=(k/\epsilon)^{O(k)}n^{O(1)}.$$   

From the above discussion, one can also apply Theorem \ref{th:frameworkRd}. However, to solve the assignment problem on coreset, we do not want to spend $nk^2\epsilon^{-O(1)}\log n$ time. We would like to obtain an algorithm that is polynomial in the size of the coreset. We do the same as we did for fair clustering. We solve a mixed ILP that has $|W|\times k$ unconstrained variables and $k$ integer variables. The construction is much easier compared to fair clustering. The running time is $k^{O(k)}|W|^{O(1)}$. Hence, we obtain $(1+\epsilon)$-approximation for the Euclidean version in time 

\begin{align*}
&2^{\tilde{O}(k/\epsilon^{O(1)})} (nd+\log n\cdot k^{O(k)}(k\log n/\epsilon)^{O(1)})+nk^2\epsilon^{-O(1)}\log n\\= & 2^{\tilde{O}(k/\epsilon^{O(1)})}(nd+(d\log n)^{O(1)})+nk^2\epsilon^{-O(1)}\log n.
\end{align*}

\begin{theorem}
    For any $\epsilon > 0$, there exists a $(3 + \epsilon)$- and a $(9 + \epsilon)$-approximation algorithm for lower-bounded $k$-median and lower-bounded $k$-means, respectively, that runs in time $(k/\epsilon)^{O(k)}n^{O(1)}$. In $\mathbb{R}^d$, there are improved $(1 + \epsilon)$-approximation algorithms for both of the problems that run in time  $2^{\tilde{O}(k/\epsilon^{O(1)})}(nd+(d\log n)^{O(1)})+nk^2\epsilon^{-O(1)}\log n$. 
\end{theorem}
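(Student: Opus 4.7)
My plan is to derive both results by instantiating the two generic frameworks proved earlier, Theorem~\ref{th:framework} for general metrics and Theorem~\ref{th:frameworkRd} for the Euclidean setting. The first step is to confirm that the lower-bound constraint falls into the ``constraints representable by a set of matrices'' category: since there is a single group (the whole of $P$), every feasible clustering is encoded by a single $k\times 1$ column matrix $M$ whose $i$-th entry is the size of the $i$-th cluster, and the set of admissible matrices is $\{M \in \mathbb{Z}_{\ge 0}^{k\times 1} : M_i \ge L,\ \sum_i M_i = n\}$. Hence the universal coresets from Theorems~\ref{th:coresetthmoverlap} and \ref{th:coresetthmkmeans} apply with $\Gamma=\ell=1$, giving $|W| = (k\log n/\epsilon)^{O(1)}$ in near-linear time.

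For the metric case, I would then verify the two hypotheses of Theorem~\ref{th:framework}. Bounded aspect ratio follows by reusing the machinery of Lemma~\ref{lemma:aspect_ratio} almost verbatim; the only substitution needed is a linear-time $O(n)$-approximation algorithm for lower-bounded $k$-median/$k$-means, which can be obtained exactly as in Lemma~\ref{lemma:metric_approximation}: run the min-max $k$-center algorithm to produce a set $C$, then invoke the assignment algorithm described below to get a lower-bounded assignment to $C$ (the analogue of Proposition~\ref{proposition:assignment_reduction} for lower-bounded clustering is folklore and follows the same triangle-inequality calculation). For the assignment problem itself, I would use the min-cost bipartite flow sketched in the excerpt: source $s$ and all points on one side, all centers plus an auxiliary vertex $w$ on the other side, $t$ joined to each center by an edge of capacity $L$ and cost $d(\cdot,\cdot)$, $t$ joined to $w$ by capacity $n-kL$, and each edge $(p,w)$ priced at $d(p,C)$. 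Routing $n$ units of flow forces exactly $L$ units through each center (paying for the lower bound) while the remaining $n-kL$ units are charged the cheapest ``vanilla'' cost through $w$, so an integral min-cost flow gives an optimal integral lower-bounded assignment. After the standard distance rounding that keeps $\log D=O(\log n/\epsilon^{O(1)})$, Proposition~\ref{proposition_unbalanced_flow} yields $T_2 = nk^2\epsilon^{-O(1)}\log n$. Substituting $T_1 = O(nk)$, $T_2 = nk^2\epsilon^{-O(1)}\log n$, $\Gamma=1$, $\Delta = \poly(n)$ into Theorem~\ref{th:framework} gives the claimed $(k/\epsilon)^{O(k)}n^{O(1)}$ bound.

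For the Euclidean $(1+\epsilon)$ statement I would invoke Theorem~\ref{th:frameworkRd}. The coreset construction contributes $T_1 = O(nd(k+\ell)) = O(ndk)$, and the outer loop produces $2^{\tilde O(k/\epsilon^{O(1)})}$ candidate center sets. On each candidate set I need an assignment algorithm whose running time is polynomial in the coreset size rather than in $n$; the min-cost flow approach would re-introduce $n$ because the coreset has large (non-unit) weights. Instead, following the MILP approach of Theorem~\ref{theorem:assignment}, I would write a mixed-integer LP with $|W|\cdot k$ continuous variables $f_{ij}$ and $k$ integer variables $n_j = \sum_i f_{ij}\ge L$, one per center, and invoke Proposition~\ref{proposition:MILP} to solve it in $k^{O(k)}|W|^{O(1)}$ time. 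The same ``fix $\{n_j\}$, then compute a residual min-cost flow per equivalence class'' argument used in the proof of Theorem~\ref{theorem:assignment} certifies that the optimum is integral. Finally, lifting the assignment back to the original points is done once more via Proposition~\ref{proposition_unbalanced_flow}, contributing the $nk^2\epsilon^{-O(1)}\log n$ term, and the total cost matches the stated bound.

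The main obstacle I anticipate is not any individual step but ensuring that the small differences between the fair setting and the lower-bounded setting are handled cleanly: in particular, that the flow network above correctly models ``at least $L$'' (rather than ``exactly $L$'' as in the fair case), that the MILP formulation remains integral when only the cluster-size variables $\{n_j\}$ are forced to be integers, and that the universal-coreset guarantee transfers to the restricted family of lower-bounded constraint matrices (which it does immediately, as that family is a subset of all column matrices). None of these is technically hard, but each must be checked to avoid a subtle feasibility or off-by-one mismatch.
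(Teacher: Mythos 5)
Your proposal is correct and follows essentially the same route as the paper: represent the lower-bound constraint by $k\times 1$ column matrices, instantiate Theorem~\ref{th:framework} in general metrics with the bipartite min-cost flow network (auxiliary vertex $w$, capacity $L$ to each center, $n-kL$ to $w$, costs $d(p,C)$) solved via Proposition~\ref{proposition_unbalanced_flow}, and instantiate Theorem~\ref{th:frameworkRd} in $\mathbb{R}^d$ with a mixed-integer LP having only $k$ integral cluster-size variables solved via Proposition~\ref{proposition:MILP}. The paper's treatment is equally terse on the aspect-ratio and integrality points you flag, so no substantive difference remains.
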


\subsection{Capacitated Clustering}
Here we study the Euclidean capacitated clustering where the set of points $P\subset \mathbb{R}^d$ and the set of centers $F=\mathbb{R}^d$. Additionally, we are given a capacity parameter $U$ and the size of each cluster must be at most $U$. 

First, we note that the capacity constraint can be represented by a set of $k\times 1$ column matrices such that each of the entries is at least $0$ and at most $U$. 

Now, the assignment problem for capacitated clustering can be modeled as a minimum cost network flow problem that can be solved in polynomial time. Again, the modeling is similar to the one in the proof of Lemma \ref{lemma:restore_assignment}. We construct a complete bipartite network where on one side we have the $n$ points of $P$ and on the other side the $k$ centers of $C$. Source $s$ is connected to all points of $P$. Sink $t$ is connected to all centers with edges of capacity $U$. The cost of the edges between points and centers are their respective distances. We also scale the distances, as mentioned in the proof. Then, the cost of the minimum cost flow in this network (with $n$ flow) is a $(1+\epsilon)$-approximation of the minimum assignment cost. We note that one can assume that the aspect ratio of the input instance is bounded by $(n/\epsilon)^{O(1)}$. The assumption can be removed in the same way as in the case of fair clustering. Hence, by Proposition \ref{proposition_unbalanced_flow}, the assignment problem in this case can be solved in time $nk^2\epsilon^{-O(1)}\log n$. 

We solve the assignment problem on coreset in the same way mentioned for lower-bounded clustering. A $(1+\epsilon)$-approximation follows for the Euclidean version in time 

\begin{align*}
&2^{\tilde{O}(k/\epsilon^{O(1)})} (nd+\log n\cdot k^{O(k)}(k\log n/\epsilon)^{O(1)})+nk^2\epsilon^{-O(1)}\log n\\= & 2^{\tilde{O}(k/\epsilon^{O(1)})}(nd+(d\log n)^{O(1)})+nk^2\epsilon^{-O(1)}\log n.
\end{align*}

\begin{theorem}
    For any $\epsilon > 0$, there exists $(1 + \epsilon)$-approximation algorithms for capacitated $k$-median and capacitated $k$-means that run in time  $2^{\tilde{O}(k/\epsilon^{O(1)})}(nd+(d\log n)^{O(1)})+nk^2\epsilon^{-O(1)}\log n$.
    \label{thm:capacitated}
\end{theorem}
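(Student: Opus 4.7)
The plan is to invoke the generic Euclidean framework of Theorem~\ref{th:frameworkRd} with the constraint $K$ being the capacity constraint, so the work reduces to (i) verifying that capacities are matrix-representable, (ii) constructing a suitable universal coreset, and (iii) designing an efficient assignment procedure, both on the coreset and on the original points. First, observe that for a fixed set of centers $C=\{c_1,\ldots,c_k\}$ an assignment respects the uniform capacity $U$ if and only if the sizes of the clusters are described by a single-column integer vector $M\in\mathbb{Z}_{\ge 0}^{k}$ with $M_i\le U$ for all $i$ and $\sum_i M_i=n$, i.e., the constraint is encoded by a family of $k\times 1$ matrices. Hence capacitated clustering fits into the universal-coreset paradigm, and Theorem~\ref{th:coresetthmoverlapR^d} (resp.\ Theorem~\ref{th:coresetthmkmeans}) with $\Gamma=1$ yields, in time $O(nd(k+\ell))=O(ndk)$, a weighted set $W$ of size $\poly(k,\log n,1/\epsilon,d)$ that preserves $\wcost(\cdot,M,C)$ within $(1\pm\epsilon)$ for every column matrix $M$ and every $C\subset\mathbb{R}^d$.

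Next I would instantiate the assignment subroutine. For a fixed $C$, the optimal assignment of a weighted point set to $C$ with capacities $U$ is the minimum-cost flow in a bipartite network with $n$ supply nodes (one per point, supply equal to its weight) and $k$ demand nodes (one per center, capacity $U$), edge costs $d(p,c_j)$ (resp.\ $d(p,c_j)^2$), and is thus solvable in $O(nk^2\epsilon^{-O(1)}\log n)$ time by Proposition~\ref{proposition_unbalanced_flow}, after the standard edge-cost rescaling used in Lemma~\ref{lemma:restore_assignment} to guarantee a polynomial aspect ratio (this rescaling only loses a $(1+\epsilon)$ factor in the optimum, which can be absorbed). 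However, running this flow for each of the $2^{\tilde O(k/\epsilon^{O(1)})}$ candidate center sets produced by Proposition~\ref{proposition:constrained} would give a running time quadratic in $n$, which is exactly what we want to avoid.

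The key step—and the place I expect the main technical work—is therefore solving the assignment problem on the coreset $W$ quickly. The idea is to mimic the MILP reduction in the proof of Theorem~\ref{theorem:assignment}: introduce fractional variables $f_{ij}\ge 0$ for $p_i\in W,\,c_j\in C$ with $\sum_j f_{ij}=w(p_i)$, and a single integer variable $g_j\in\mathbb{Z}_{\ge 0}$ per center with $g_j\le U$ and $\sum_i f_{ij}=g_j$; the objective minimizes $\sum_{i,j} d_{ij}f_{ij}$. With only $k$ integer variables, Proposition~\ref{proposition:MILP} solves this MILP in time $k^{O(k)}|W|^{O(1)}$, and the integrality argument of Theorem~\ref{theorem:assignment} (once the $g_j$'s are fixed, the remaining problem is a standard transportation LP with integer supplies/demands, hence an integral optimum exists) shows that we recover an integral optimal assignment. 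Plugging $|W|=(k\log n/\epsilon)^{O(1)}\cdot d^{O(1)}$ gives per-candidate time $k^{O(k)}(d\log n/\epsilon)^{O(1)}$.

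Finally I would assemble the pieces exactly as in Theorem~\ref{th:frameworkRd}: compute $W$ in $O(ndk)$, run Proposition~\ref{proposition:constrained} in time $2^{\tilde O(k/\epsilon^{O(1)})}nd$ to get a list $\mathcal{L}$ of candidate center sets, solve the MILP-based capacitated assignment on $W$ for every $C\in\mathcal{L}$ (repeating $O(\log n)$ times for a high-probability bound), pick the $C'$ of minimum cost, and produce the final assignment on $P$ by one call of the bipartite min-cost flow of Proposition~\ref{proposition_unbalanced_flow} in $O(nk^2\epsilon^{-O(1)}\log n)$ time. A standard argument analogous to the proof of Theorem~\ref{theorem:linear_eptas}, using that $W$ is a universal coreset and that Proposition~\ref{proposition:constrained} contains a near-optimal $C$ with high probability, shows the output is a $(1+\epsilon)$-approximation. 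Summing the steps gives
\[
2^{\tilde O(k/\epsilon^{O(1)})}\bigl(nd+(d\log n)^{O(1)}\bigr)+nk^2\epsilon^{-O(1)}\log n,
\]
which is the claimed bound; the $k$-means case is identical with squared distances and the $k$-means coreset of Theorem~\ref{th:coresetthmkmeans}.
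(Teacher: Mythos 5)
Your proposal is correct and follows essentially the same route as the paper: representing capacities as $k\times 1$ column matrices, building the universal coreset with $\Gamma=1$, enumerating candidate centers via Proposition~\ref{proposition:constrained}, solving the coreset assignment by a MILP with $k$ integer variables via Proposition~\ref{proposition:MILP} (exactly as the paper does for lower-bounded clustering and reuses for capacities), and recovering the final assignment on $P$ with one bipartite min-cost flow via Proposition~\ref{proposition_unbalanced_flow}. The running-time accounting also matches the paper's.
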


\subsection{$\ell$-Diversity Clustering}
In the $\ell$-Diversity clustering problem, $P=\cup_{i=1}^{\tilde{n}} P_i$ is a set of $n$ colored points such that all points in $P_i$ have the same color, and each
cluster must have no more than a fraction $1/\ell$ (for some
constant $\ell > 1$) of its points sharing the same color. Thus, for each cluster $A$ and $i\in [\ell]$, $|A\cap P_i|\le |A|/\ell$. We note that each point can have only one color. Ding and Xu~\cite{ding2020unified} gave a $(1+\epsilon)$-approximation for this problem in $\mathbb{R}^d$ with time complexity $O(n^2(\log n)^{k+2}(t+1)^kd)$, where $t=\max_{1\le i\le \tilde{n}} |P_i|$. 

We note that $\ell$-Diversity clustering is a special case of $(\alpha,\beta)$-fair clustering without the lower bound constraints involving parameter $\beta$, and $\alpha_i=1/\ell$ for all $i$. Thus, we obtain algorithms for this problem with bounds same as for $(\alpha,\beta)$-fair clustering, including a $(1+\epsilon)$-approximation that significantly improves the time complexity of the one in \cite{ding2020unified}.     

\begin{theorem}
    For any $\epsilon > 0$, there exists a $(3 + \epsilon)$- and a $(9 + \epsilon)$-approximation algorithm for $\ell$-Diversity $k$-median and $\ell$-Diversity $k$-means, respectively, that runs in time $(k\ell)^{O(k \ell)} / \epsilon^{O(k)} \cdot n + \ell k^3/\epsilon^2 n \log n$. In $\mathbb{R}^d$, there are improved $(1 + \epsilon)$-approximation algorithms for both of the problems that run in time  $2^{\tilde{O}(k/\epsilon^{O(1)})} (k\ell)^{O(k\ell)} nd \log n$. 
\end{theorem}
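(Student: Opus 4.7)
The plan is to observe that $\ell$-Diversity clustering reduces in a natural way to an instance of \probFair with disjoint groups, and then invoke the approximation algorithms already developed for $(\alpha,\beta)$-fair clustering.

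First I would set up the reduction explicitly. Given an instance of $\ell$-Diversity clustering with colored point sets $P_1,\ldots,P_\ell$, treat these as the $\ell$ groups of an instance of \probFair. Since each point carries exactly one color, the groups $P_1,\ldots,P_\ell$ are pairwise disjoint, so in particular $\Gamma = \ell$. Set the fairness vectors to $\alpha_i = 1/\ell$ and $\beta_i = 0$ for each $i \in [\ell]$. Under these settings, Definition~\ref{definition:fair} requires any clustering to satisfy $|\{x \in P_i : \varphi(x) = c\}| \le (1/\ell) \cdot |\{x \in P : \varphi(x) = c\}|$ for every center $c$ and every color $i$, which is exactly the $\ell$-Diversity constraint, and the lower-bound constraints are vacuously satisfied. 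Conversely, every fair assignment under these parameters gives an $\ell$-Diversity clustering, and the cost objectives coincide, so the two problems are equivalent as optimization problems on the same input.

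Next I would invoke our existing approximation algorithms on the reduced instance. For general metrics, Theorem~\ref{theorem:metric_approximation} provides a $(3+\epsilon)$-approximation for $(\alpha,\beta)$-Fair $k$-{median} and a $(9+\epsilon)$-approximation for $(\alpha,\beta)$-Fair $k$-{means} running in time $(k\Gamma)^{O(k\Gamma)}/\epsilon^{O(k)} \cdot n + \Gamma k^3/\epsilon^2 \cdot n\log n$, which, upon substituting $\Gamma = \ell$, yields exactly the claimed time bound $(k\ell)^{O(k\ell)}/\epsilon^{O(k)} \cdot n + \ell k^3/\epsilon^2 \cdot n \log n$. For the Euclidean case, Theorem~\ref{theorem:linear_eptas} yields a $(1+\epsilon)$-approximation in time $2^{\tilde{O}(k/\epsilon^{O(1)})} (k\Gamma)^{O(k\Gamma)} nd \log n$, which becomes $2^{\tilde{O}(k/\epsilon^{O(1)})} (k\ell)^{O(k\ell)} nd \log n$ after substituting $\Gamma = \ell$.

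There is essentially no substantive obstacle here; the theorem is a corollary of the corresponding results for \probFair, and the only thing to verify is the correctness of the reduction, namely that the two notions of feasibility coincide and that the instance is indeed in the disjoint-group regime so that $\Gamma = \ell$. The latter is immediate from the hypothesis that each point has a single color, which ensures that the equivalence classes used in Section~\ref{sec:coresetconstructionkmedianoverlapping} are precisely the $P_i$'s themselves.
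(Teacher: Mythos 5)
Your proposal is correct and is essentially the paper's own argument: the paper likewise observes that $\ell$-Diversity clustering is the special case of \probFair with $\alpha_i = 1/\ell$, no lower-bound ($\beta$) constraints, and disjoint color classes so that $\Gamma = \ell$, and then cites Theorem~\ref{theorem:metric_approximation} and Theorem~\ref{theorem:linear_eptas}. Your write-up just spells out the vacuousness of the $\beta_i = 0$ constraints and the identification of the equivalence classes with the $P_i$'s slightly more explicitly.
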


\subsection{Chromatic Clustering}
In chromatic clustering, again $P=\cup_{i=1}^{\tilde{n}} P_i$ is a set of $n$ colored points such that all points in $P_i$ have the same color, and each cluster contains at most one point from $P_i$ for each $i$. Ding and Xu~\cite{ding2020unified} obtained a linear time $(1+\epsilon)$-approximation for this problem in $\mathbb{R}^d$. To the best of our knowledge the metric version of the problem was not studied before. Thus, we give the first constant approximation for this version. 

First, we note that the chromatic constraint can be represented by a set of $k\times \ell$  matrices with 0/1 entries. 

Ding and Wu~\cite{ding2020unified} showed that the assignment problem for chromatic clustering can be modeled as a bipartite matching problem that can be solved in $O(k^3n)$ time. However, it is not clear how to bound the aspect ratio for this problem. Hence, the constant approximations follow for this problem in general metrics in time $O(nk)+(\epsilon^{-1}k\ell\log (n+\Delta))^{O(k)}n$.   

\begin{theorem}
    For any $\epsilon > 0$, there exists a $(3 + \epsilon)$- and a $(9 + \epsilon)$-approximation algorithm for chromatic $k$-median and chromatic $k$-means, respectively, that runs in time $O(nk)+(\epsilon^{-1}k\ell\log (n+\Delta))^{O(k)}n$.  
\end{theorem}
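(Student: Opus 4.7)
The plan is to invoke the generic framework of Theorem~\ref{th:framework} with the chromatic clustering constraint, so the main work is to verify its three preconditions: (i) the chromatic constraint can be represented by a set of matrices; (ii) the universal coreset is available in the claimed time; and (iii) the assignment problem can be solved efficiently. For (i), note that any clustering satisfying the chromatic constraint induces a $k \times \ell$ matrix $M$ with entries in $\{0, 1\}$ where $M_{ij}$ counts the points of color $j$ in cluster $i$; conversely any such matrix whose $j$-th column sums to $|P_j|$ corresponds to a feasible chromatic partition. So the chromatic constraint is precisely the collection of $k \times \ell$ zero/one matrices with correct column sums.

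For (ii), since the groups (color classes) are disjoint, the algorithm of Theorem~\ref{th:coresetthmdisjoint} (or Theorem~\ref{th:coresetthmoverlap} with $\Gamma = \ell$) builds a universal coreset in time $O(n(k+\ell))$ of size $O(\ell (k\log n)^2/\epsilon^3)$ for $k$-median, with an analogous statement for $k$-means via Theorem~\ref{th:coresetthmkmeans}. For (iii), the assignment problem for chromatic clustering reduces to a bipartite matching between the points and center-slots (one slot per (center, color) pair that the constraint permits). As observed by Ding and Xu, this matching can be computed in $O(k^3 n)$ time; the coreset instance has $|W| = \tilde{O}_\epsilon(\ell k^2)$ points, so the assignment on the coreset is even faster.

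The main obstacle, as already hinted in the excerpt, is the aspect ratio. Unlike in the fair clustering case, we do not have a true constant (or even polynomial) approximation available to rescale distances and truncate, so we cannot reduce the aspect ratio to $\poly(n)$. Therefore we cannot eliminate the $\Delta = D_{\max}/D_{\min}$ factor and must carry it in the discretization step of Theorem~\ref{th:framework}, where the values $R_1, \ldots, R_k$ range over powers of $(1+\epsilon)$ between $D_{\min}$ and $D_{\max}$, giving $(\epsilon^{-1} \log (n+\Delta))^{O(k)}$ choices.

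Putting everything together via Theorem~\ref{th:framework} with $T_1(n,k,\ell) = O(n(k+\ell))$ and $T_2(n,k) = O(k^3 n)$ yields approximation factors of $(3+\epsilon)$ and $(9+\epsilon)$ for chromatic $k$-median and chromatic $k$-means respectively, in total time
\[
O(n(k+\ell)) + (\epsilon^{-1} k \ell \log (n+\Delta))^{O(k)} \cdot O(k^3 n) \;=\; O(nk) + (\epsilon^{-1} k \ell \log (n+\Delta))^{O(k)} n,
\]
absorbing the low-order factors into the parametric term. The $\ell$ in the base of the exponential comes from the number of coreset points (size proportional to $\ell$) through the enumeration of leaders $l_1, \ldots, l_k$ in the framework algorithm, matching the stated bound.
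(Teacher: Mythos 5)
Your proposal is correct and follows essentially the same route as the paper: represent the chromatic constraint as $k\times\ell$ zero/one matrices, use the Ding--Xu bipartite-matching algorithm with $T_2(n,k)=O(k^3n)$ for the assignment step, observe that the aspect ratio cannot be reduced (hence the residual $\Delta$ in the bound), and plug everything into Theorem~\ref{th:framework}. The paper's own treatment is in fact terser than yours; your explicit verification of the three preconditions and the bookkeeping of where the $\ell$ factor enters the exponential base are consistent with what the framework requires.
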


\section{Streaming Universal Coreset}
\label{sec:Streaming}

Here we describe a streaming algorithm for maintaining universal coreset for $k$-median. The algorithm can be trivially extended to $k$-means with a slightly larger space complexity. Our algorithm is based on the merge and reduce framework of Bentley and Saxe \cite{bentley1980decomposable}, which was first applied in the context of clustering in \cite{agarwal2004approximating}. Indeed, in streaming setting this is a standard technique, which have been applied in many works \cite{har2004coresets,chen2009coresets,ackermann2012streamkm++}. We mainly follow the approach of Har-Peled and Mazumdar \cite{har2004coresets}, which was further refined by Chen \cite{chen2009coresets} for randomized coresets. 
In the following we describe how to maintain a small size coreset in each step. Let us refer to a universal coreset as an $\epsilon$-coreset, where $\epsilon$ is the corresponding error parameter. Our approach is based on composability of coresets. 

\begin{lemma}
 Suppose $S_1$ and $S_2$ are the $\epsilon$-coresets of the points in $P_1$ and $P_2$, respectively. Then, $S_1\cup S_2$ is an  $\epsilon$-coreset of the points in $P_1\cup P_2$. 
\end{lemma}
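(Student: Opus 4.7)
The plan is to reduce composability to the observation that the cost of an optimal clustering under a coloring constraint $M$ splits additively over a decomposition of $M$ into constraints on each part. First I would establish the following identity: for every set $C$ of $k$ centers and every $k\times \Gamma$ constraint matrix $M$,
\[
\text{cost}(P_1 \cup P_2, M, C) = \min_{\substack{M_1, M_2 \in \mathbb{Z}_{\ge 0}^{k\times \Gamma}\\ M_1 + M_2 = M}}\bigl[\,\text{cost}(P_1, M_1, C) + \text{cost}(P_2, M_2, C)\,\bigr],
\]
and the analogous identity for \text{wcost} applied to $S_1 \cup S_2$. This is immediate from the definition: any assignment of $P_1 \cup P_2$ to $C$ satisfying $M$ restricts to assignments of $P_1$ and $P_2$ whose cluster--class sizes form nonnegative integer matrices $M_1, M_2$ with $M_1 + M_2 = M$, and any such pair of restricted assignments can be concatenated into a feasible assignment for $M$. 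The same argument applies to the weighted assignments that define \text{wcost}, using that the coreset weights are integral (or, more generally, that minimum-cost flow realizations decompose additively).

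Given this identity, the composability bound follows in two lines. For the upper bound, let $(M_1^\star, M_2^\star)$ achieve the minimum in the decomposition identity for $\text{wcost}(S_1 \cup S_2, M, C)$. Applying the $\epsilon$-coreset guarantees for $S_1$ and $S_2$ at the matrices $M_1^\star$ and $M_2^\star$ respectively, and then using the decomposition identity for $\text{cost}(P_1 \cup P_2, M, C)$ in the reverse direction, we get
\[
\text{wcost}(S_1 \cup S_2, M, C) \le (1+\epsilon)\bigl[\text{cost}(P_1, M_1^\star, C) + \text{cost}(P_2, M_2^\star, C)\bigr] \le (1+\epsilon)\,\text{cost}(P_1 \cup P_2, M, C).
\]
The lower bound is symmetric: take the optimal decomposition $(M_1', M_2')$ on the $P_1 \cup P_2$ side, apply $(1-\epsilon)\text{cost}(P_i, M_i', C) \le \text{wcost}(S_i, M_i', C)$ to each part, and sum.

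The only subtlety I anticipate is the handling of infeasibility: if for a given decomposition one of $\text{cost}(P_i, M_i, C)$ or $\text{wcost}(S_i, M_i, C)$ is $\infty$ (no valid assignment exists), we adopt the convention that the corresponding sum is $\infty$ and the minimum is taken over decompositions for which both terms are finite; the identity and the two-step sandwich above then go through verbatim, and the coreset inequality for $P_1 \cup P_2$ in the degenerate case $\text{cost}(P_1 \cup P_2, M, C) = \infty$ reduces to $\infty \le \infty$, which is consistent. Since this argument is purely structural and does not use the specific sampling construction, the same proof shows that $\epsilon$-coresets compose for $k$-means as well, simply by replacing distances with squared distances throughout.
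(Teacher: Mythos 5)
Your overall approach is the right one and is essentially the standard argument (the paper itself does not prove this lemma; it defers to Schmidt et al.\ and only remarks that the proof ``follows by definition'' and requires integral coreset weights --- a point you correctly identify, since without integrality of the optimal flow the induced matrices $M_1,M_2$ need not be integer-valued and the per-part coreset guarantee could not be applied to them). The decomposition identity
\[
\text{cost}(P_1 \cup P_2, M, C) \;=\; \min_{M_1 + M_2 = M}\bigl[\text{cost}(P_1, M_1, C) + \text{cost}(P_2, M_2, C)\bigr]
\]
and its weighted analogue are correct for disjoint $P_1, P_2$.

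However, the two-line sandwich has the optimizers swapped, and as written both chains of inequalities are invalid. For the upper bound you take $(M_1^\star, M_2^\star)$ optimal for $\text{wcost}(S_1\cup S_2, M, C)$ and then invoke the identity for $\text{cost}(P_1\cup P_2,M,C)$ ``in the reverse direction'' --- but the reverse direction gives $\text{cost}(P_1,M_1^\star,C)+\text{cost}(P_2,M_2^\star,C) \ge \text{cost}(P_1\cup P_2,M,C)$, which points the wrong way and cannot be chained to conclude $\text{wcost}(S_1\cup S_2,M,C)\le(1+\epsilon)\text{cost}(P_1\cup P_2,M,C)$; your displayed second inequality asserts $\le$ where only $\ge$ is available. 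The symmetric issue breaks the lower bound. The fix is immediate: for the upper bound, let $(M_1',M_2')$ be optimal for $\text{cost}(P_1\cup P_2,M,C)$; then $\text{wcost}(S_1\cup S_2,M,C)\le \text{wcost}(S_1,M_1',C)+\text{wcost}(S_2,M_2',C)\le(1+\epsilon)\,\text{cost}(P_1\cup P_2,M,C)$, using the $\le$ direction of the identity on the coreset side. For the lower bound, let $(M_1^\star,M_2^\star)$ be optimal for $\text{wcost}(S_1\cup S_2,M,C)$; then $\text{wcost}(S_1\cup S_2,M,C)=\sum_i\text{wcost}(S_i,M_i^\star,C)\ge(1-\epsilon)\sum_i\text{cost}(P_i,M_i^\star,C)\ge(1-\epsilon)\,\text{cost}(P_1\cup P_2,M,C)$. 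In each direction, the optimal decomposition is taken on the side whose minimum you are bounding from above, and the identity is used in its $\ge$ direction only on the other side.
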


The proof of this lemma follows by definition of universal coresets and can be found in \cite{schmidt2019fair}. The proof assumes that the coreset points have integer weights. We note that our construction can be slightly modified to obtain coreset with integer weights (e.g, see Chen's adaptation \cite{chen2009coresets}). Next, we have an observation that again follows by the definition of coresets. 

\begin{observation}\label{obs:coresetofcoreset}
 Suppose $S_1$ is an $\epsilon$-coreset of the points in $S_2$ and $S_2$ is an $\delta$-coreset of the points in $S_3$. Then, $S_1$ is an  $((1+\epsilon)(1+\delta)-1)$-coreset of the points in $S_3$. 
\end{observation}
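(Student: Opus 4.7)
The plan is to chain the two coreset guarantees by a short direct calculation, handling the slight asymmetry between the multiplicative factors $(1+\epsilon)(1+\delta)$ and $(1-\epsilon)(1-\delta)$ with a simple sign observation. Interpreting the universal coreset definition in the natural way for a weighted base set, the hypothesis that $S_1$ is an $\epsilon$-coreset of $S_2$ says that for every set $C$ of $k$ centers and every coloring constraint $M$,
\[
(1-\epsilon)\cdot \text{wcost}(S_2,M,C)\;\le\; \text{wcost}(S_1,M,C)\;\le\; (1+\epsilon)\cdot \text{wcost}(S_2,M,C),
\]
and analogously with $(S_2,S_3,\delta)$ in place of $(S_1,S_2,\epsilon)$. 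Both quantifications range over the same pair $(C,M)$, so I can fix an arbitrary $(C,M)$ and simply compose the inequalities.

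For the upper bound, chaining the two $(1+\cdot)$ estimates gives $\text{wcost}(S_1,M,C)\le (1+\epsilon)(1+\delta)\cdot \text{wcost}(S_3,M,C)$, which is exactly $(1+\gamma)\cdot \text{wcost}(S_3,M,C)$ for $\gamma := (1+\epsilon)(1+\delta)-1$. For the lower bound, chaining the two $(1-\cdot)$ estimates gives $\text{wcost}(S_1,M,C)\ge (1-\epsilon)(1-\delta)\cdot \text{wcost}(S_3,M,C)$, and then I invoke the elementary inequality
\[
(1-\epsilon)(1-\delta) \;=\; 1-\epsilon-\delta+\epsilon\delta \;\ge\; 1-\epsilon-\delta-\epsilon\delta \;=\; 1-\gamma,
\]
which is valid because $\epsilon,\delta\ge 0$. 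Combining the two estimates yields the $((1+\epsilon)(1+\delta)-1)$-coreset guarantee for $S_1$ with respect to $S_3$.

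There is essentially no obstacle here: the argument is a one-step substitution, and no coupling between the two coreset constructions or any reassignment over the centers is required, because the existing coreset definition already quantifies uniformly over all $(C,M)$. The only mild subtlety is that the natural lower-side product $(1-\epsilon)(1-\delta)$ is in fact slightly \emph{stronger} than $(1-\gamma)$, precisely because the cross term $\epsilon\delta$ appears with a plus sign, and this is exactly what makes the symmetric error parameter $\gamma = (1+\epsilon)(1+\delta)-1$ work simultaneously on both sides. Thus the observation reduces to a direct telescoping of the two multiplicative guarantees.
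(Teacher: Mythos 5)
Your proof is correct and is exactly the argument the paper has in mind: the paper offers no written proof for this observation, stating only that it ``follows by the definition of coresets,'' and your chaining of the two multiplicative guarantees over a fixed $(C,M)$, together with the check that $(1-\epsilon)(1-\delta)\ge 1-\gamma$ for $\gamma=(1+\epsilon)(1+\delta)-1$, is the correct and complete elaboration of that one-line claim.
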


Let $\lambda$ be the confidence probability parameter for the coreset we want to construct. Suppose the points arrive in the order $p_1,p_2,\ldots$ and let $P=(p_1,\ldots,p_n)$ be the set of points arrived so far. We partition $P$ into $t+1$ subsets $P_0,P_1,\ldots,P_t$ such that $|P_i|=2^i T$, where $T=\lceil \ell k^2/\epsilon^3\rceil$. 

Let $\rho_j=\epsilon/(b\cdot (j+1)^2)$ for a sufficiently large constant $b$, and $1+\delta_j=\Pi_{i=0}^j (1+\rho_i)$ for $j=1,\ldots,\lceil\log n\rceil$. It is not hard to verify that $1+\delta_j\le 1+\epsilon$ for all $j$. We maintain a $\delta_j$-coreset $Q_j$ for the points in $P_j$, where $Q_0=P_0$. Thus, by composability of coresets, $\cup_{j\ge 0} Q_j$ is an $\epsilon$-coreset for points in $P$.     

When a new point $p_m$ arrives, we add it to $Q_0$. If $Q_0$ contains fewer than $T$ points, we are done. Otherwise, let $r\ge 1$ be the minimum index such that $Q_r$ is empty. We compute a $\rho_r$-coreset $Q'_r$ of $\cup_{j= 0}^{r-1} Q_j$ with confidence parameter $\lambda_m=\lambda/m^2$ and set $Q'_r$ to be $Q_r$. We also make all the sets $Q_j$ empty for $1\le j\le r-1$. It is not hard to verify that the total weight of the points in $Q_r$ is $2^{r-1} T$.

Note that here we need to compute coresets of weighted points. We can trivially extend our coreset construction algorithm in the sequential setting to handle points with integer weights. For example, for a point $p$ with weight $w$, now we treat it as the point $p$ with $w$ copies. Instead of using the algorithm of Indyk at the start, we use our algorithm in Section \ref{sec:eptas}. Thus the algorithm can be implemented where the space complexity is linear in the number of points.   

Next we claim that $Q=\cup_{i\ge 0} Q_i$ is an $\epsilon$-coreset of the points received so far w.p. at least $1-\lambda$. First, note that $Q_r$ is constructed by computing a $\rho_r$-coreset of $\cup_{j= 0}^{r-1} Q_j$. By applying Observation \ref{obs:coresetofcoreset} repetitively, $Q_r$ is a $(\Pi_{i=0}^r (1+\rho_i)-1)$-coreset of the corresponding subset of the input points w.p. at least $1-\lambda/m^2$. Now for $m \ge T$, when $p_m$ arrives, our computation fails w.p. at most $\lambda/m^2$. The failure probability over all iterations is at most $\sum_{m=T}^n \lambda/m^2\le \lambda$ for $T\ge 2$. As $\Pi_{i=0}^r (1+\rho_i)-1=1+\delta_r\le 1+\epsilon$, the claim follows by composability. 

Now, we bound the size of individual coreset $Q_i$. Note that $|Q_0|\le T=\lceil \ell k^2/\epsilon^3\rceil$. Also for $i\ge 1$, $Q_i$ is constructed for a subset of input points of size $2^{i-1} T$ and with error parameter $\rho_i=\epsilon/(b\cdot (i+1)^2)$. Thus by Theorem \ref{th:coresetthmoverlap}, the size of $Q_i$ is $$O(\ell k^2\log (2^i+T) (\log (2^i+T)+d \log (1/\epsilon))\cdot i^6/{\epsilon}^3)= O(d\ell k^2{(\log n)}^8 /{\epsilon}^4).$$ Hence, the total size of the coreset $Q=\cup_{i=0}^{\lceil \log n\rceil} Q_i$ is bounded by $O(d\ell k^2{(\log n)}^9 /{\epsilon}^4)$. In $\mathbb{R}^d$, we need $O(d)$ space for storing each point, and thus we obtain the following theorem.

\begin{theorem}\label{th:streamkmedian}
 In one pass streaming model, a universal coreset for $k$-median clustering of size $O(d^2 \ell k^2{(\log n)}^9 /{\epsilon}^4)$ can be computed w.h.p. 
\end{theorem}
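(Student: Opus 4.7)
The plan is to adapt the standard merge-and-reduce framework of Bentley and Saxe, as used by Har-Peled--Mazumdar and Chen, to our universal coreset construction from Theorem~\ref{th:coresetthmoverlap}. Maintain a logarithmic hierarchy of buckets $Q_0, Q_1, \ldots, Q_{\lceil \log n \rceil}$, where $Q_j$ either is empty or holds a coreset summarizing a batch of $2^{j-1} T$ previously processed points (with $T = \lceil \ell k^2/\epsilon^3 \rceil$ and $Q_0$ holding up to $T$ raw points). When a new point arrives, insert it into $Q_0$; once $|Q_0| = T$, find the smallest index $r \ge 1$ with $Q_r$ empty, run our offline coreset algorithm on $\bigcup_{j < r} Q_j$ with error parameter $\rho_r = \epsilon/(b(r+1)^2)$ and confidence $\lambda_m = \lambda/m^2$, place the result into $Q_r$, and empty the lower buckets. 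The output at any time is $Q = \bigcup_j Q_j$.

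The correctness argument is standard and follows from two composability facts already recalled in the excerpt: the union of coresets for a partition is a coreset for the union, and the coreset-of-coresets accumulates multiplicatively in the error parameter. Choosing $\rho_j = \epsilon/(b(j+1)^2)$ with $b$ a sufficiently large constant forces $\prod_{i \le j}(1+\rho_i) \le 1 + \epsilon$ over all levels, so each bucket is a $\delta_j$-coreset of its underlying batch with $\delta_j \le \epsilon$, and their union is an $\epsilon$-universal coreset of all points seen so far. For the failure probability, a single reduction at step $m$ fails with probability at most $\lambda_m = \lambda/m^2$; summing $\sum_{m \ge T} \lambda/m^2 \le \lambda$ handles the entire stream by a union bound.

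For the size bound, bucket $Q_j$ summarizes $2^{j-1}T \le n$ points with error $\rho_j = \Theta(\epsilon/j^2)$, so by the Euclidean case of Theorem~\ref{th:coresetthmoverlap} its cardinality is
\[
O\!\Big(\tfrac{\Gamma}{\rho_j^3}\cdot k^2 \log n (\log n + d\log(1/\rho_j))\Big) \;=\; O\!\Big(\tfrac{d\,\ell\, k^2\,(\log n)^8}{\epsilon^3}\Big)
\]
after absorbing the $j^6$ factor from $1/\rho_j^3$ into a $(\log n)^6$ term (since $j \le \lceil \log n\rceil$) and using $\Gamma \le \ell$ in the disjoint case relevant to the theorem. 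Summing over the $O(\log n)$ non-empty buckets gives a total of $O(\ell k^2 d (\log n)^9/\epsilon^4)$ coreset points, and each point in $\mathbb{R}^d$ costs an additional factor of $d$ to store, yielding the claimed $O(d^2 \ell k^2 (\log n)^9/\epsilon^4)$ space.

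The only subtle point, and the one that requires a bit of care, is that the offline coreset algorithm must be applied to \emph{weighted} point sets (the $Q_j$'s carry integer weights produced by previous reductions), so I would invoke a weighted variant of our construction in which a weighted point $(p,w)$ is handled as $w$ copies of $p$ -- Chen's adaptation already explains how to do this while keeping weights integral, and the bicriteria $O(k)$-approximation used to seed the rings is replaced by the linear-time routine from Section~\ref{sec:eptas} so the whole merge step still runs in space linear in the current bucket contents. With this in place, the three ingredients above -- hierarchical composition, geometric error discounting, and the per-bucket size bound from Theorem~\ref{th:coresetthmoverlap} -- combine to give the theorem.
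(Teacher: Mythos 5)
Your proposal is correct and follows essentially the same route as the paper's own proof: the Bentley--Saxe merge-and-reduce hierarchy with batch size $T=\lceil \ell k^2/\epsilon^3\rceil$, per-level error $\rho_j=\epsilon/(b(j+1)^2)$ and confidence $\lambda/m^2$, composability of universal coresets for correctness, and the per-bucket size bound from Theorem~\ref{th:coresetthmoverlap} summed over $O(\log n)$ levels, together with the same remark on handling integer-weighted inputs via Chen's adaptation and a linear-space seeding routine. No substantive differences to report.
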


\section{Conclusions and Open Questions}
\label{sec:conclude}

In this paper, we studied the widely popular fair clustering problem with $k$-median and $k$-means objectives. Our  universal coreset construction allows us to  obtain the first coreset for fair clustering in general metric spaces. The coreset size is comparable to the best-known bound in the vanilla case. For Euclidean spaces, we obtain the first coreset for this problem whose size does not depend exponentially on the dimension. In the vanilla case, it is possible to construct coresets of size $(k/\epsilon)^{O(1)}$. Thus, an interesting open question is to remove the dependence on $d$ and $(\log n )^{O(1)}$ completely from our coreset size. 

The new coreset construction helps to design improved FPT constant-approximations for a wide range of problems including fair clustering in general metrics and  $(1+\epsilon)$-approximations in the Euclidean metric. However, for fair clustering, it is not trivial to find an optimal solution on a coreset like in the case of other popular clustering problems. This is true, as the assignment problem is not easy to solve in this case. We give a novel algorithm for this problem that runs in time FPT parameterized by $k$ and $\Gamma$. We note that for $(t,k)$-fair clustering the factor of $(k\ell)^{O(k\ell)}$ in the running time of our algorithms can be improved to only $(k\ell)^{O(k)}$, as in this case the assignment problem can be solved in time FPT parameterized by only $k$. Designing a polynomial time constant-approximation for fair clustering still remains an open question. Our $(1+\epsilon)$-approximation algorithms in the Euclidean case run in near-linear time. It would be interesting to see if one can obtain similar $(1+\epsilon)$-approximation in linear time matching the bound of the vanilla case. 

\paragraph{Acknowledgments.} The authors are thankful to Vincent Cohen-Addad for sharing the full version of \cite{Cohen-AddadL19}.

\appendix
\section{Proof of Lemma \ref{lem:fx-bounded-by-fex}}
We consider a minimum cost feasible flow $\phi$ in $G_{Y}$ for $Y=\mathbb{1}$. We can assume that this flow is integral, as all the demands and capacities are integral. We compute a feasible flow $\phi'$ in $G_X$ modifying the flow $\phi$ whose cost is at most $f(\mathbb{1})+\epsilon m\mu'/3$ w.p. at least $1-1/n^{3}$. 

The construction of the modified flow is as follows. For each point $p \in P\setminus P'$, we route the demand of $p$ in $\phi'$ in the same way as in $\phi$. Now consider the points in $P'$. Let $P'_i$ be the subset of points of $P'$ that are assigned to the center $c_i\in C$. Also, let $Q'_i$ be the subset of points of $P'_i$ that are sampled, and hence are contained in $W'$.  

For each center $c_i\in C$, we have two cases. The first case is $|Q'_i|\le |P'_i|\cdot s/m$. In this case, we route $m/s$ amount of flow from each vertex $u_j$ corresponding to the point $p_j$ of $Q'_i$ to the vertex $v_i$ corresponding to $c_i$. We also route $|P'_i|-|Q'_i|\cdot m/s$ amount of flow from $w$ to $v_i$. Note that the total amount of flow routed to $v_i$ in these above two steps is exactly $|P'_i|$ and does not depend on $|Q'_i|$ as long as $|Q'_i|\le |P'_i|\cdot s/m$. In the second case, $|Q'_i| > |P'_i|\cdot s/m$. In this case, first we select a random sample $Q''_i$ from $Q'_i$ of size $\lfloor |P'_i|\cdot s/m\rfloor$ and apply the same steps in the first case with $Q''_i$ instead of $Q'_i$. Finally, route $m/s$ amount of flow from the vertex corresponding to each point in $Q'_i\setminus Q''_i$ to $w$. 

We note that the computed flow $\phi'$ in the above satisfies all the demands. Also, none of the capacities are violated, as the flow in and out for each vertex $v_i$ remain the same as in $\phi$. In the following we give a bound on the cost of $\phi'$. To unify the analysis, in the first case, we set $Q''_i=Q'_i$. We consider two cases depending on the value of $\mathbb{E}[|Q'_i|]$ for every $c_i \in C$. 

\paragraph{Case 1. $\mathbb{E}[|Q'_i|]\ge \epsilon s/(100 k)$.} As $Q'_i$ is distributed as \text{Bernoulli}$(|P'_i|,s/m)$, $\mathbb{E}[|Q'_i|]= |P'_i|\cdot s/m$. Thus, $|P'_i|\cdot s/m \ge \epsilon s/(100 k)$, or $|P'_i|\ge \epsilon m/(100 k)$. We have the following observation.

\begin{observation}\label{obs:qprimepprime}
W.p. at least $1-1/n^{10}$, $||Q'_i|-|P'_i|\cdot s/m|\le \epsilon |P'_i|\cdot s/(50 m)$.  
\end{observation}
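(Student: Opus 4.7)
The plan is to apply a standard multiplicative Chernoff bound, since $|Q'_i|$ is a sum of independent Bernoulli random variables. Recall that each point of $P'_i$ is independently sampled with probability $s/m$, so $|Q'_i| \sim \mathrm{Bernoulli}(|P'_i|, s/m)$ with mean $\mu := |P'_i| \cdot s/m$. The deviation we want to control is $\delta \mu$ with $\delta = \epsilon/50$, and we are in Case 1 where $\mu \ge \epsilon s/(100k)$.

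First, I would invoke the standard Chernoff inequality: for $X$ a sum of independent $\{0,1\}$ random variables with mean $\mu$ and any $0 < \delta \le 1$,
\[
\Pr\bigl[\,|X - \mu| \ge \delta \mu\,\bigr] \;\le\; 2\exp\!\left(-\frac{\delta^2 \mu}{3}\right).
\]
Plugging in $\delta = \epsilon/50$ and the Case~1 lower bound $\mu \ge \epsilon s/(100k)$, the exponent becomes at least
\[
\frac{(\epsilon/50)^2 \cdot \epsilon s/(100k)}{3} \;=\; \Theta\!\left(\frac{\epsilon^{3} s}{k}\right).
\]

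Next, I would substitute $s = \Theta(k \log n / \epsilon^3)$ as specified in the single-ring setup of Section~\ref{sec:coresetconstructionkmediandisjoint}. This yields $\Theta(\epsilon^3 s / k) = \Theta(\log n)$, so by choosing the hidden constant in the definition of $s$ sufficiently large, the exponent can be made at least $10 \ln n$. Hence the deviation probability is at most $2 \exp(-10 \ln n) \le 1/n^{10}$, as desired. There is no real obstacle here; the key is simply that Case~1 gives a lower bound on $\mu$ that is large enough (of order $\log n$ after multiplying by $\delta^2$) to make the Chernoff tail polynomially small in $n$. This is exactly the reason the Case~1 / Case~2 split was introduced: only when $\mathbb{E}[|Q'_i|]$ is at least a certain fraction of $s/k$ do we get sharp concentration of $|Q'_i|$ around its mean, while the small-expectation case will have to be handled separately in what follows.
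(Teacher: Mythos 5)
Your proof is correct and follows essentially the same route as the paper: a multiplicative Chernoff bound with relative deviation $\epsilon/50$, the Case~1 lower bound $\mathbb{E}[|Q'_i|] \ge \epsilon s/(100k)$, and the substitution $s = \Theta(k\log n/\epsilon^3)$ to make the exponent $\Omega(\log n)$. No gaps.
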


\begin{proof}
Using the Chernoff bound, $Pr[||Q'_i|-|P'_i|\cdot s/m| > \epsilon |P'_i|\cdot s/(50 m)] \le exp(-\Theta(\epsilon^2\cdot |P'_i|\cdot s/m)) \le exp(-\Theta(\epsilon^2\cdot \epsilon (m/k) \cdot (s/m)))\le exp(-\Theta(\log n))\le 1/n^{10}$, for sufficiently large constant hidden in $\Theta(.)$ in the definition of $s$. 
\end{proof}

From the above observation and considering the fact that $|Q''_i|\ge |P'_i|\cdot s/m-1$, we have the following bound. 

\begin{observation}\label{obs:qprime-qdouble}
W.p. at least $1-1/n^{10}$, $|Q'_i|-|Q''_i|\le \epsilon |P'_i|\cdot s/(40 m)$.  
\end{observation}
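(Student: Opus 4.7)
The plan is to deduce this observation directly from Observation~\ref{obs:qprimepprime}, handling the two cases in the definition of $Q''_i$ separately. The only work is to absorb the additive floor loss of $1$ into the multiplicative bound using the lower bound on $|P'_i|$ that we get for free in Case~1.

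First I would split into cases. If $|Q'_i|\le |P'_i|\cdot s/m$, then by construction $Q''_i=Q'_i$, so $|Q'_i|-|Q''_i|=0$ and the bound holds deterministically. Otherwise $|Q'_i|>|P'_i|\cdot s/m$, in which case $|Q''_i|=\lfloor |P'_i|\cdot s/m\rfloor\ge |P'_i|\cdot s/m-1$. Invoking Observation~\ref{obs:qprimepprime}, w.p.\ at least $1-1/n^{10}$ we have $|Q'_i|\le |P'_i|\cdot s/m+\epsilon |P'_i|\cdot s/(50m)$, and therefore
\[
|Q'_i|-|Q''_i|\ \le\ \frac{\epsilon|P'_i|\,s}{50m}+1.
\]

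It remains to show that the additive $1$ can be absorbed into the slack between $1/40$ and $1/50$. Concretely, I would argue that $1\le \epsilon|P'_i|\,s/(200m)$, since then
\[
\frac{\epsilon|P'_i|\,s}{50m}+1\ \le\ \frac{\epsilon|P'_i|\,s}{50m}+\frac{\epsilon|P'_i|\,s}{200m}\ \le\ \frac{\epsilon|P'_i|\,s}{40m}.
\]
For this final inequality I would use the Case~1 hypothesis $\mathbb{E}[|Q'_i|]=|P'_i|\cdot s/m\ge \epsilon s/(100k)$, which gives $|P'_i|\ge \epsilon m/(100k)$, together with $s=\Theta(k\log n/\epsilon^3)$, so that
\[
\frac{\epsilon|P'_i|\,s}{200m}\ \ge\ \frac{\epsilon}{200m}\cdot\frac{\epsilon m}{100k}\cdot s\ =\ \frac{\epsilon^2 s}{20000\,k}\ =\ \Theta\!\left(\frac{\log n}{\epsilon}\right)\ \ge\ 1,
\]
where the last step uses the sufficiently large constant hidden in $s=\Theta(k\log n/\epsilon^3)$. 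This is really the only non-trivial step, and it will be the main (though minor) obstacle: ensuring that the constants chosen in the definition of $s$ and in Observation~\ref{obs:qprimepprime} are generous enough to close this arithmetic gap without losing the $1/n^{10}$ failure probability, which we inherit verbatim from Observation~\ref{obs:qprimepprime} since no additional randomness is introduced.
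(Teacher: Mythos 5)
Your proposal is correct and follows the same route as the paper: the paper derives this observation in one line from Observation~\ref{obs:qprimepprime} together with $|Q''_i|\ge |P'_i|\cdot s/m-1$, and you have simply filled in the arithmetic showing that the additive $1$ from the floor is absorbed by the slack between $1/50$ and $1/40$ using the Case~1 lower bound $|P'_i|\ge \epsilon m/(100k)$ and the magnitude of $s$. The case split and the observation that the size of $Q''_i$ is deterministic (so no new randomness enters) are both consistent with the paper's intent.
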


From the above two observations, we have the following observation. 

\begin{observation}\label{obs:p-qdouble}
W.p. at least $1-1/n^{9}$, $|P'_i|\cdot s/m-|Q''_i|\le \epsilon |P'_i|\cdot s/(20 m)$.  
\end{observation}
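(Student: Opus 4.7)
The plan is to derive the stated bound as an immediate consequence of the two preceding observations, composed via the decomposition
\[
|P'_i|\cdot s/m - |Q''_i| \;=\; \bigl(|P'_i|\cdot s/m - |Q'_i|\bigr) \;+\; \bigl(|Q'_i| - |Q''_i|\bigr),
\]
which is valid since $|Q''_i| \le |Q'_i|$ by construction (recall $Q''_i$ is either $Q'_i$ itself, in the case $|Q'_i|\le |P'_i|\cdot s/m$, or a subsample of $Q'_i$ of size $\lfloor |P'_i|\cdot s/m\rfloor$ otherwise).

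First, I would take a union bound over the failure events of Observations~\ref{obs:qprimepprime} and \ref{obs:qprime-qdouble}. Each fails with probability at most $1/n^{10}$, so with probability at least $1 - 2/n^{10} \ge 1 - 1/n^9$ (for all $n\ge 2$) both of the following hold simultaneously:
\[
|P'_i|\cdot s/m - |Q'_i| \;\le\; \bigl||Q'_i| - |P'_i|\cdot s/m\bigr| \;\le\; \epsilon\, |P'_i|\cdot s/(50m),
\]
\[
|Q'_i| - |Q''_i| \;\le\; \epsilon\, |P'_i|\cdot s/(40m).
\]

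Conditioned on both of these events, summing the two bounds yields
\[
|P'_i|\cdot s/m - |Q''_i| \;\le\; \epsilon\, |P'_i|\cdot s/m \cdot \left(\tfrac{1}{50} + \tfrac{1}{40}\right) \;=\; \tfrac{9}{200}\,\epsilon\, |P'_i|\cdot s/m \;<\; \epsilon\, |P'_i|\cdot s/(20m),
\]
which is exactly the claimed inequality. Thus the observation follows with probability at least $1 - 1/n^9$.

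There is no real obstacle here: the statement is a routine composition of the two preceding concentration bounds, and the only arithmetic check is that $1/50 + 1/40 = 9/200 < 1/20$. The reason the probability degrades from $1/n^{10}$ to $1/n^{9}$ is precisely to absorb the factor of $2$ from the union bound. Note also that Observation~\ref{obs:qprime-qdouble} is itself only nontrivial in the case $|Q'_i| > |P'_i|\cdot s/m$ (otherwise $Q''_i = Q'_i$ and the difference is zero), but the decomposition above works uniformly in either case, so no case analysis is needed in this proof.
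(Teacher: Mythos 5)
Your proposal is correct and matches the paper's (implicit) argument: the paper likewise obtains Observation~\ref{obs:p-qdouble} by combining Observations~\ref{obs:qprimepprime} and~\ref{obs:qprime-qdouble} via the same decomposition, with the probability loss from $1/n^{10}$ to $1/n^{9}$ absorbing the union bound and the constants satisfying $1/50+1/40<1/20$. Nothing further is needed.
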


Now, we give bound on the cost of the computed flow. Note that we route $m/s$ flow for each point in $Q''_i$ to $c_i$ whose total cost is $\sum_{p\in Q''_i} (m/s) \cdot d(p,c_i)$. To give bound on this cost we need the following lemma from \cite{chen2009coresets}. 

\begin{lemma}\label{lem:chen}
(Lemma 3.2. of \cite{chen2009coresets}) Let $T\ge 0$ and $\eta$ be fixed constants, and let $h(.)$ be a function defined on a set $V$ such that $\eta\le h(p)\le \eta+T$ for all $p\in V$. Let $U=\{p_1,\ldots,p_r\}$ be a set of $r$ samples drawn independently and uniformly from $V$, and let $\delta > 0$ be a parameter. If $r \ge (T^2/2\delta^2) \ln{(2/\lambda)}$, then $Pr[|\frac{h(V)}{|V|}-\frac{h(U)}{|U|}| \ge \delta] \le \lambda$, where $h(U)=\sum_{u\in U} h(u)$ and  $h(V)=\sum_{v\in V} h(v)$. 
\end{lemma}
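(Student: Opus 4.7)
The plan is to view this as a textbook application of Hoeffding's inequality for bounded i.i.d.\ random variables. Since the samples $p_1,\ldots,p_r$ are drawn independently and uniformly from $V$, the random variables $X_i := h(p_i)$ are i.i.d., each supported on the interval $[\eta,\eta+T]$ of width $T$, with common mean
\[
\mathbb{E}[X_i] \;=\; \frac{1}{|V|}\sum_{v\in V} h(v) \;=\; \frac{h(V)}{|V|}.
\]
The sample average is exactly $\frac{1}{r}\sum_{i=1}^r X_i = \frac{h(U)}{|U|}$, so the quantity we want to control is precisely the deviation of the empirical mean from the true mean.

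First I would invoke Hoeffding's inequality in its standard two-sided form: for i.i.d.\ random variables $X_1,\ldots,X_r$ taking values in $[a,b]$ with mean $\mu$,
\[
\Pr\!\left[\,\left|\frac{1}{r}\sum_{i=1}^r X_i - \mu\right| \ge \delta\,\right] \;\le\; 2\exp\!\left(-\frac{2r\delta^2}{(b-a)^2}\right).
\]
Applied with $b-a = T$ and $\mu = h(V)/|V|$, this yields
\[
\Pr\!\left[\,\left|\frac{h(U)}{|U|} - \frac{h(V)}{|V|}\right| \ge \delta\,\right] \;\le\; 2\exp\!\left(-\frac{2r\delta^2}{T^2}\right).
\]

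Then I would simply solve for when the right-hand side is at most $\lambda$. Requiring $2\exp(-2r\delta^2/T^2) \le \lambda$ is equivalent to $2r\delta^2/T^2 \ge \ln(2/\lambda)$, i.e.\ $r \ge (T^2/(2\delta^2))\ln(2/\lambda)$, which is exactly the hypothesis of the lemma. Hence the stated sample size is sufficient, and the bound follows.

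There is no real obstacle here: the only subtlety is that the lemma says "drawn independently and uniformly from $V$", which I am interpreting as sampling with replacement so that the $X_i$ are genuinely i.i.d.\ and Hoeffding applies directly. (If the samples were drawn without replacement, one could either use the Hoeffding--Serfling inequality for sampling without replacement, or appeal to the classical result that concentration under sampling without replacement is at least as strong as under sampling with replacement; the same bound on $r$ would still suffice.) The boundedness hypothesis $\eta \le h(p) \le \eta + T$ is used only through the width $T$ of the interval, and the additive shift $\eta$ drops out because both $h(U)/|U|$ and $h(V)/|V|$ are shifted identically.
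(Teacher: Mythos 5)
Your proof is correct: the lemma is exactly a two-sided Hoeffding bound for i.i.d.\ samples in an interval of width $T$, and your algebra recovers the stated sample-size threshold $r \ge (T^2/2\delta^2)\ln(2/\lambda)$ with the right constants; the paper itself gives no proof but imports the statement from Chen, whose argument is this same Hoeffding-type computation. Your parenthetical about sampling without replacement is also well taken, since in the places where the paper applies the lemma the sample is a uniform subset (without replacement), and the classical fact that without-replacement sampling concentrates at least as well as with-replacement sampling is precisely what makes the same bound valid there.
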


Fix any integer $r \in [1-\epsilon/20,1]\cdot  |P'_i|\cdot s/m$ and consider the event that $|Q''_i|=r$. Conditioned on this event $Q''_i$ is a set of $r$ samples drawn independently and uniformly from $P'_i$. We apply Lemma \ref{lem:chen} setting $T=2\mu'$, $V=P'_i$, $U=Q''_i$, $h(p)=d(p,c_i)$, $\delta=\epsilon \mu'/20$ and $\lambda=1/n^{10}$. Note that, $$r\ge (1-\epsilon/20) \cdot |P'_i|\cdot s/m\ge (1-\epsilon/20)\cdot  \frac{\epsilon m}{100 k}\cdot  \frac{s}{m}\ge \Theta(\log n/\epsilon^2)\ge (T^2/2\delta^2) \ln{(2/\lambda)}$$ The last inequality follows assuming a sufficiently large constant is hidden in $\Theta(.)$ in the definition of $s$. 

We obtain, w.p. at least $1-1/n^{10}$, 
\begin{align*}
& \bigg\rvert\frac{h(P'_i)}{|P'_i|}-\frac{h(Q''_i)}{|Q''_i|}\bigg\rvert \le \delta\\
\text{Or, } &  \bigg\rvert\frac{h(P'_i)\cdot |Q''_i|}{|P'_i|}-{h(Q''_i)}\bigg\rvert \le \delta |Q''_i|\\
\text{Or, } & h(Q''_i) \le \frac{h(P'_i)\cdot |Q''_i|}{|P'_i|}+ \delta \cdot (|P'_i|\cdot s/m)\\
 \text{Or, }& h(Q''_i)\cdot (m/s) \le \frac{h(P'_i)\cdot |Q''_i|}{|P'_i|}\cdot (m/s) + \epsilon |P'_i|\cdot  \mu'/20
\end{align*}

Note that $h(Q''_i)\cdot (m/s)$ is exactly the cost of flow routing for the points in $Q''_i$ to $c_i$. Taking union bound over all possible values of $r=|Q''_i|$, we obtain the above bound w.p. at least $1-1/n^{9}$. 

Now, we give a bound on the cost of flow routing from $w$ to $c_i$. The cost is $(|P'_i|-|Q''_i|\cdot m/s) \cdot d(c',c_i)$. Now, for any $p\in P'_i$, $d(c',c_i)\le d(p,c_i)+\mu'$. Averaging gives, $d(c',c_i)\le h(P'_i)/|P'_i|+\mu'$. Thus, the cost is at most,

\begin{align*}
    & (|P'_i|-|Q''_i|\cdot m/s)\cdot  ( h(P'_i)/|P'_i|+\mu')\\
    & \le h(P'_i)-\frac{h(P'_i)\cdot |Q''_i|}{|P'_i|}\cdot (m/s) + (|P'_i|-|Q''_i|\cdot m/s)\cdot \mu'\\
    & \le h(P'_i)-\frac{h(P'_i)\cdot |Q''_i|}{|P'_i|}\cdot (m/s) + (\epsilon |P'_i|/20) \mu'\\
    & = h(P'_i)-\frac{h(P'_i)\cdot |Q''_i|}{|P'_i|}\cdot (m/s) + \epsilon |P'_i| \mu'/20
\end{align*}

The second inequality follows from Observation \ref{obs:p-qdouble}. Next, we bound the third and the last type of cost, which corresponds to flow routing from points in $Q'_i\setminus Q''_i$ to $w$. This cost is at most,

\begin{align*}
    & \sum_{p \in (Q'_i\setminus Q''_i)} (m/s) \cdot d(p,c')\\
    & \le (|Q'_i|-|Q''_i|)\cdot (m/s)\cdot  \mu'\\
    & \le (\epsilon |P'_i|\cdot s/(40 m)) \cdot (m/s) \cdot \mu'\\
    & \le \epsilon |P'_i| \mu'/40
\end{align*}

The second inequality follows from Observation \ref{obs:qprime-qdouble}. Thus, in this case, the total cost is bounded by, 

\begin{align*}
& h(P'_i) + \epsilon |P'_i| \mu'/20 + \epsilon |P'_i| \mu'/20+\epsilon |P'_i| \mu'/40\\
& \le h(P'_i) + \epsilon |P'_i| \mu'/8.
\end{align*}

\paragraph{Case 2. $\mathbb{E}[|Q'_i|] < \epsilon s/(100 k)$.} Note that in this case, $|P'_i| < \epsilon m/(100 k)$. First, we have the following observation. 

\begin{observation}\label{obs:qprimebound}
W.p. at least $1-1/n^{10}$,  $|Q'_i| \le \epsilon s/(50 k)$.
\end{observation}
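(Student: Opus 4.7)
The plan is to prove this concentration bound as a direct application of the multiplicative Chernoff inequality, exploiting the fact that the stated bound on $|Q'_i|$ is larger than twice its expectation under the Case~2 hypothesis.

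First, I would recall that $Q'_i$ is the (random) subset of points of $P'_i$ that end up in the sample, so $|Q'_i|$ is the sum of $|P'_i|$ independent Bernoulli random variables each with parameter $s/m$; in particular $|Q'_i| \sim \mathrm{Binomial}(|P'_i|, s/m)$, whose mean is $\mu := |P'_i|\cdot s/m = \mathbb{E}[|Q'_i|]$. The hypothesis of Case~2 is precisely $\mu < \epsilon s/(100k)$. Setting the target threshold $T := \epsilon s/(50k)$, we have $T = 2\cdot \epsilon s/(100k) > 2\mu$, so writing $T = (1+\delta)\mu$ gives $\delta > 1$ (the case $\mu = 0$, i.e.\ $|P'_i|=0$, is trivial since then $|Q'_i| = 0 \le T$ deterministically).

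Next, I would apply the standard multiplicative Chernoff bound in the form $\Pr[X \ge (1+\delta)\mu] \le \exp(-\delta\mu/3)$, valid for $\delta \ge 1$ and sums of independent $[0,1]$-valued random variables. Substituting $(1+\delta)\mu = T$ yields $\delta\mu = T-\mu \ge T/2$, and hence
\[
\Pr\!\left[|Q'_i| \ge T\right] \;\le\; \exp\!\bigl(-(T-\mu)/3\bigr) \;\le\; \exp(-T/6).
\]
Finally, recalling that $s = \Theta(k\log n/\epsilon^3)$ with a sufficiently large hidden constant, $T/6 = \epsilon s/(300k) = \Theta(\log n/\epsilon^2) \ge 10\log n$, so the right-hand side is at most $1/n^{10}$, which is exactly the claimed bound.

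There is no real obstacle here: the result is a one-line Chernoff calculation, the only thing to verify carefully is that the constant in the definition of $s$ is large enough to force $T/6 \ge 10 \log n$, and that the $\delta \ge 1$ regime of Chernoff applies (guaranteed by the Case~2 hypothesis $\mu < T/2$). The same reasoning will carry over verbatim to the earlier Observation~\ref{obs:qprimepprime} in Case~1 with a two-sided Chernoff bound, so this observation is the ``easy'' analogue when the expectation is too small to give useful relative concentration.
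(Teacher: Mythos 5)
Your proof is correct and matches the paper's argument, which is exactly the same one-line Chernoff calculation: the paper writes $\Pr[|Q'_i| \ge \epsilon s/(50k)] \le \exp(-\Theta(\epsilon s/k)) \le 1/n^{10}$, relying on the Case~2 hypothesis $\mathbb{E}[|Q'_i|] < \epsilon s/(100k)$ to place the threshold in the large-deviation regime and on the constant in $s = \Theta(k\log n/\epsilon^3)$ being large enough. You have merely filled in the details (the binomial distribution of $|Q'_i|$ under the independent-sampling scheme, the $\delta \ge 1$ form of the multiplicative bound, and the arithmetic $T/6 = \Theta(\log n/\epsilon^2) \ge 10\log n$), all of which check out.
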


\begin{proof}
We use the Chernoff bound: $Pr[|Q'_i| \ge \epsilon s/(50 k)]\le exp(-\Theta(\epsilon s/k))\le exp(-\Theta(\log n))\le 1/n^{10}$. 
\end{proof}

The cost of flow routing from points in $Q''_i$ to $c_i$ is,

\begin{align*}
    \sum_{p\in Q''_i} (m/s) \cdot d(p,c_i) & \le  \sum_{p\in Q''_i} (d(p,c')+d(c',c_i))\cdot  (m/s)\\
    & \le |Q''_i| \cdot \mu' \cdot (m/s) + |Q''_i|\cdot (m/s)\cdot d(c',c_i)
\end{align*}

The cost of flow routing from $w$ to $c_i$ is,

\begin{align*}
(|P'_i|-|Q''_i|\cdot m/s) \cdot d(c',c_i) &\le \sum_{p\in P'_i} d(c',c_i) - |Q''_i|\cdot (m/s)\cdot  d(c',c_i)\\
& \le \sum_{p\in P'_i} (d(c',p)+d(p,c_i)) - |Q''_i|\cdot (m/s)\cdot d(c',c_i)\\
& \le |P'_i|\cdot \mu' + \sum_{p\in P'_i} d(p,c_i) - |Q''_i|\cdot (m/s)\cdot d(c',c_i)
\end{align*}

The cost of flow routing from points in $Q'_i\setminus Q''_i$ to $w$ is,

\begin{align*}
    & \sum_{p \in (Q'_i\setminus Q''_i)} (m/s)\cdot d(p,c')\\
    & \le |Q'_i|\cdot (m/s)\cdot \mu'
\end{align*}

The total cost in this case is at most, 

\begin{align*}
    & |Q''_i| \cdot\mu' \cdot(m/s) + |Q''_i|\cdot (m/s) \cdot d(c',c_i)+|P'_i|\cdot \mu' + \sum_{p\in P'_i} d(p,c_i) - |Q''_i|\cdot (m/s) \cdot d(c',c_i)\\&\qquad\qquad\qquad\qquad\qquad\qquad\qquad\qquad\qquad\qquad\qquad\qquad\qquad\qquad +|Q'_i|\cdot (m/s) \cdot \mu'\\
    & \le (\epsilon s/(50 k))\cdot \mu'\cdot  (m/s) + (\epsilon m/(100 k))\cdot  \mu'+ \sum_{p\in P'_i} d(p,c_i) + (\epsilon s/(50 k))\cdot  \mu'\cdot  (m/s)\\
    & \le (\epsilon  m/(20 k))\cdot  \mu' + \sum_{p\in P'_i} d(p,c_i)
\end{align*}

The first inequality follows from Observation \ref{obs:qprimebound} and by noting that $|Q''_i|\le |Q'_i|$. 

\paragraph{General Upper Bound on the Cost.} By merging the cost in both cases, we obtain the common upper bound, $\sum_{p\in P'_i} d(p,c_i)+(\epsilon |P'_i| \cdot \mu'/8)+(\epsilon  m/(20 k))\cdot  \mu'$. Summing over all the centers in $C$, we obtain,  

\begin{align*}
    f(X) \le f(\mathbb{1}) + \epsilon |P'| \cdot \mu'/8 + (\epsilon  m/20)\cdot  \mu'\le  f(\mathbb{1}) + \epsilon |P'|\cdot  \mu'/3.
\end{align*}

It is not hard to see that this bound holds w.p. at least $1-1/n^3$. This completes the proof of Lemma \ref{lem:fx-bounded-by-fex}.

\bibliographystyle{plainurl}
\bibliography{clustering-bib}

\end{document}